\newcommand{\me}{miforbes}
	\newenvironment{proof-sketch}{\medskip\noindent{\em Sketch of Proof.}\hspace*{1em}}{\qed\bigskip}
	\newenvironment{proof-attempt}{\medskip\noindent{\em Proof attempt.}\hspace*{1em}}{\bigskip}
\def\myproof@name{#1}\newif\ifqedhere\qedherefalse}%
\else \hfill{\tiny \qed\ (\myproof@name)}\vspace{1ex} \qedherefalse \fi}
	\global\let\tikz@ensure@dollar@catcode=\relax
	\newcommand{\ignore}[1]{}
	\newcommand{\demph}[1]{\textbf{#1}}
	\numberwithin{equation}{section}
	\declaretheoremstyle[bodyfont=\it,qed=$\qedsymbol$]{noproofstyle} 
	\declaretheoremstyle[bodyfont=\it,qed=$\lozenge$]{defstyle} 
	\declaretheoremstyle[qed=$\lozenge$]{rmkstyle} 
	\theoremstyle{plain}
	\declaretheorem[name=Theorem,numberlike=equation]{theorem}
	\declaretheorem[name=Theorem,numberlike=equation,style=noproofstyle]{theoremwp}
	\declaretheorem[name=Theorem,unnumbered]{theorem*}
	\declaretheorem[name=Theorem,unnumbered,style=noproofstyle]{theoremwp*}
	\declaretheorem[name=Lemma,numberlike=equation]{lemma}
	\declaretheorem[name=Lemma,numberlike=equation,style=noproofstyle]{lemmawp}
	\declaretheorem[name=Lemma,unnumbered]{lemma*}
	\declaretheorem[name=Lemma,unnumbered,style=noproofstyle]{lemmawp*}
	\declaretheorem[name=Corollary,numberlike=equation]{corollary}
	\declaretheorem[name=Corollary,numberlike=equation,style=noproofstyle]{corollarywp}
	\declaretheorem[name=Corollary,unnumbered]{corollary*}
	\declaretheorem[name=Corollary,unnumbered,style=noproofstyle]{corollarywp*}
	\declaretheorem[name=Proposition,numberlike=equation]{proposition}
	\declaretheorem[name=Proposition,unnumbered]{proposition*}
	\declaretheorem[name=Proposition,unnumbered,style=noproofstyle]{propositionwp*}
	\declaretheorem[name=Claim,numberlike=equation]{claim}
	\declaretheorem[name=Claim,numberlike=equation,style=noproofstyle]{claimwp}
	\declaretheorem[name=Claim,unnumbered]{claim*}
	\declaretheorem[name=Claim,unnumbered,style=noproofstyle]{claimwp*}
	\declaretheorem[name=Subclaim,numberlike=equation]{subclaim}
	\declaretheorem[name=Subclaim,unnumbered]{subclaim*}
	\declaretheorem[name=Subclaim,unnumbered,style=noproofstyle]{subclaimwp*}
	\declaretheorem[name=Theorem,numberlike=equation,style=noproofstyle]{theorem-cited}
	\declaretheorem[name=Fact,numberlike=equation,style=noproofstyle]{fact}
	\declaretheorem[name=Definition,style=defstyle,numberlike=equation]{definition}
	\declaretheorem[name=Definition,style=defstyle,unnumbered]{definition*}
	\declaretheorem[name=Conjecture,style=defstyle,unnumbered]{conjecture*}
	\declaretheorem[name=Construction,style=defstyle,numberlike=equation]{construction}
	\declaretheorem[name=Construction,style=defstyle,unnumbered]{construction*}
	\declaretheorem[name=Open Question,style=defstyle,numberlike=equation]{open}
	\declaretheorem[name=Open Question,style=defstyle,unnumbered]{open*}
	\declaretheorem[name=Remark,style=rmkstyle,numberlike=equation]{remark}
	\declaretheorem[name=Remark,style=rmkstyle,unnumbered]{remark*}
	\declaretheorem[name=Example,style=rmkstyle,numberlike=equation]{example}
	\declaretheorem[name=Example,style=rmkstyle,unnumbered]{example*}
	\declaretheorem[name=Notation,style=rmkstyle,numberlike=equation]{notation}
	\declaretheorem[name=Notation,style=rmkstyle,unnumbered]{notation*}
	\declaretheorem[name=Question,style=rmkstyle,unnumbered]{question*}
	\newcommand*{\subproofname}{Sub-Proof:}
	\newenvironment{subproof}[1][\subproofname]{\begin{proof}[#1]\renewcommand*{\qedsymbol}{\(\scalebox{.6}{$\square$}\)}}{\end{proof}}
	\newcommand{\la}{\langle}
	\newcommand{\ra}{\rangle}
	\newcommand{\ind}[1]{\mathbbm{1}_{#1}}
	\newcommand{\ceil}[1]{{\lceil{#1}\rceil}}			
	\newcommand{\nc}[1]{{\la #1\ra}}			
	\newcommand{\subsets}[1]{2^{#1}}
	\newcommand{\bits}{\ensuremath{\{0,1\}}}
	\DeclareMathOperator{\chara}{char}
	\DeclareMathOperator{\ideg}{ideg}
	\DeclareMathOperator{\perm}{perm}
	\DeclareMathOperator{\rank}{rank}
	\DeclareMathOperator{\spn}{span}
	\DeclareMathOperator{\supp}{Supp}
	\newcommand{\ellzero}[1]{{\ensuremath{|#1|_0}}}
	\newcommand{\ellone}[1]{{\ensuremath{|#1|_1}}}
	\newcommand{\ellinfty}[1]{{\ensuremath{|#1|_\infty}}}
	\newcommand{\coeff}[1]{{\mathrm{Coeff}}_{#1}}
	\DeclareMathOperator{\LM}{LM}
	\newcommand{\coeffs}[1]{{\mathbf{Coeff}}_{#1}}
	\newcommand{\evals}[1]{{\mathbf{Eval}}_{#1}}
	\newcommand{\pow}{\ensuremath{\bigwedge}}
	\newcommand{\sumpowc}{\sumpow{\O(1)}}
	\newcommand{\sumpowsum}{\ensuremath{\sum\pow\sum}\xspace}
	\newcommand{\sumpow}[1]{\ensuremath{\sum\pow\sum\prod^{#1}}\xspace}
	\newcommand{\sumprod}{\ensuremath{\sum\prod}\xspace}
	\newlang{\PIT}{PIT}
	\newcommand{\tsum}{{\textstyle\sum}}
	\newcommand{\e}{\mathrm{e}}
	\newcommand{\T}{\mathrm{tr}}
	\newcommand{\eps}{\epsilon}
	\renewcommand{\O}{\mathcal{O}}
	\newcommand{\Sn}{\mathfrak{S}}
	\newcommand{\eqdef}{:=}
	\newcommand{\defeq}{=:}
	\newcommand{\rv}{\mathsf}
	\renewcommand{\vec}[1]{\overline{#1}}
	\renewcommand{\E}{\mathbb{E}}			
	\newcommand{\F}{\mathbb{F}}
	\renewcommand{\K}{\mathbb{K}}
	\renewcommand{\R}{\mathbb{R}}
	\newcommand{\N}{\mathbb{N}}
	\newcommand{\cC}{\mathcal{C}}
	\newcommand{\cD}{\mathcal{D}}
	\newcommand{\cG}{\mathcal{G}}
	\newcommand{\cvG}{{\vec{\cG}}}
	\newcommand{\rX}{\rv{X}}
	\newcommand{\va}{{\vec{a}}\@ifnextchar{^}{\!\:}{}}
	\newcommand{\vb}{{\vec{b}}\@ifnextchar{^}{\!\:}{}}
	\newcommand{\vc}{{\vec{c}}\@ifnextchar{^}{\!\:}{}}
	\newcommand{\vd}{{\vec{d}}\@ifnextchar{^}{\!\:}{}}
	\newcommand{\ve}{{\vec{e}}\@ifnextchar{^}{\!\:}{}}
	\newcommand{\vf}{{\vec{f}}}
	\newcommand{\vg}{{\vec{g}}\@ifnextchar{^}{\!\:}{}}
	\newcommand{\vh}{{\vec{h}}\@ifnextchar{^}{\!\:}{}}
	\newcommand{\vi}{{\vec{i}}\@ifnextchar{^}{\!\:}{}}
	\newcommand{\vj}{{\vec{j}}\@ifnextchar{^}{\!\:}{}}
	\newcommand{\vk}{{\vec{k}}\@ifnextchar{^}{\!\:}{}}
	\newcommand{\vl}{{\vec{\ell}}\@ifnextchar{^}{\!\:}{}}
	\newcommand{\vm}{{\vec{m}}\@ifnextchar{^}{\!\:}{}}
	\newcommand{\vn}{{\vec{n}}\@ifnextchar{^}{\!\:}{}}
	\newcommand{\vo}{{\vec{o}}\@ifnextchar{^}{\!\:}{}}
	\newcommand{\vp}{{\vec{p}}\@ifnextchar{^}{\!\:}{}}
	\newcommand{\vq}{{\vec{q}}\@ifnextchar{^}{\!\:}{}}
	\newcommand{\vr}{{\vec{r}}\@ifnextchar{^}{\!\:}{}}
	\newcommand{\vs}{{\vec{s}}\@ifnextchar{^}{\!\:}{}}
	\newcommand{\vt}{{\vec{t}}\@ifnextchar{^}{\!\:}{}}
	\newcommand{\vu}{{\vec{u}}\@ifnextchar{^}{\!\:}{}}
	\newcommand{\vv}{{\vec{v}}\@ifnextchar{^}{\!\:}{}}
	\newcommand{\vw}{{\vec{w}}\@ifnextchar{^}{\!\:}{}}
	\newcommand{\vy}{{\vec{y}}\@ifnextchar{^}{\!\:}{}}
	\newcommand{\vx}{{\vec{x}}\@ifnextchar{^}{}{}}		
	\newcommand{\vz}{{\vec{z}}\@ifnextchar{^}{\!\:}{}}
	\newcommand{\vA}{{\vec{A}}\@ifnextchar{^}{\!\:}{}}
	\newcommand{\vB}{{\vec{B}}\@ifnextchar{^}{\!\:}{}}
	\newcommand{\vC}{{\vec{C}}\@ifnextchar{^}{\!\:}{}}
	\newcommand{\vD}{{\vec{D}}\@ifnextchar{^}{\!\:}{}}
	\newcommand{\vE}{{\vec{E}}\@ifnextchar{^}{\!\:}{}}
	\newcommand{\vF}{{\vec{F}}\@ifnextchar{^}{\!\:}{}}
	\newcommand{\vG}{{\vec{G}}\@ifnextchar{^}{\!\:}{}}
	\newcommand{\vH}{{\vec{H}}\@ifnextchar{^}{\!\:}{}}
	\newcommand{\vI}{{\vec{I}}\@ifnextchar{^}{\!\:}{}}
	\newcommand{\vJ}{{\vec{J}}\@ifnextchar{^}{\!\:}{}}
	\newcommand{\vK}{{\vec{K}}\@ifnextchar{^}{\!\:}{}}
	\newcommand{\vL}{{\vec{L}}\@ifnextchar{^}{\!\:}{}}
	\newcommand{\vM}{{\vec{M}}\@ifnextchar{^}{\!\:}{}}
	\newcommand{\vN}{{\vec{N}}\@ifnextchar{^}{\!\:}{}}
	\newcommand{\vO}{{\vec{O}}\@ifnextchar{^}{\!\:}{}}
	\newcommand{\vP}{{\vec{P}}\@ifnextchar{^}{\!\:}{}}
	\newcommand{\vQ}{{\vec{Q}}\@ifnextchar{^}{\!\:}{}}
	\newcommand{\vR}{{\vec{R}}\@ifnextchar{^}{\!\:}{}}
	\newcommand{\vS}{{\vec{S}}\@ifnextchar{^}{\!\:}{}}
	\newcommand{\vT}{{\vec{T}}\@ifnextchar{^}{\!\:}{}}
	\newcommand{\vU}{{\vec{U}}\@ifnextchar{^}{\!\:}{}}
	\newcommand{\vV}{{\vec{V}}\@ifnextchar{^}{\!\:}{}}
	\newcommand{\vW}{{\vec{W}}\@ifnextchar{^}{\!\:}{}}
	\newcommand{\vY}{{\vec{Y}}\@ifnextchar{^}{\!\:}{}}
	\newcommand{\vX}{{\vec{X}}\@ifnextchar{^}{}{}}		
	\newcommand{\vZ}{{\vec{Z}}\@ifnextchar{^}{\!\:}{}}
	\newcommand{\vno}{{\vec{1}}}
	\newcommand{\vnz}{{\vec{0}}}
	\newcommand{\vaa}{{\vec{\alpha}}}
	\newcommand{\vbb}{{\vec{\beta}}}
\newcommand{\shortECCC}[2]{\texttt{\href{http://eccc.hpi-web.de/report/\ifnumcomp{#1}{>}{93}{19}{20}#1/#2/}{eccc:TR#1-#2}}}
\newcommand{\parseECCC}[1]{
\StrSubstitute{#1}{TR}{}[\tmpstring]%
\IfSubStr{\tmpstring}{/}{ 
\StrBefore{\tmpstring}{/}[\ecccyear]%
\StrBehind{\tmpstring}{/}[\ecccreport]%
}{
\StrBefore{\tmpstring}{-}[\ecccyear]%
\StrBehind{\tmpstring}{-}[\ecccreport]%
}%
\shortECCC{\ecccyear}{\ecccreport}}
\newcommand{\LATERR}[2]{}
	\DeclareMathOperator{\AND}{AND}
	\DeclareMathOperator{\XOR}{XOR}
	\DeclareMathOperator{\size}{size}
	\newcommand{\aaaa}[1]{}
	\newcommand{\Pudlak}{Pudl{\'{a}}k\xspace}
	\newcommand{\Hrubes}{Hrube\v{s}\xspace}
	\newcommand{\Krajicek}{Kraj\'{i}\v{c}ek\xspace}
	\newcommand{\ml}{\operatorname{ml}}
	\renewcommand\paragraph{\@startsection{paragraph}{4}{\z@}%
		{1ex \@plus1ex \@minus.2ex}%
		{-1em}%
		{\normalfont\normalsize\bfseries}}
\declaretheorem[name=Goal,style=rmkstyle,numberlike=equation]{goal}}
	\newcommand{\nx}{{\neg x}}
	\newcommand{\ny}{{\neg y}}
	\newcommand{\vnx}{{\vec{\nx}}}
	\newcommand{\vny}{{\vec{\ny}}}
	\newtheorem{theorem}{Theorem}[section]
	\newtheorem*{theorem*}{Theorem}
	\newtheorem{theoremwp}{Theorem}
	\newtheorem*{theoremwp*}{Theorem}
	\newtheorem{lemma}[theorem]{Lemma}
	\newtheorem{lemmawp}[theorem]{Lemma}
	\newtheorem{subclaim}[theorem]{Sub claim}
	\newtheorem{definition}{Definition}[section]
	\newtheorem{proposition}[theorem]{Proposition}
	\newtheorem{corollary}[theorem]{Corollary}
	\newtheorem{corollarywp}[theorem]{Corollary}
	\newtheorem{construction}[theorem]{Construction}
	\newtheorem{fact}[theorem]{Fact}
	\newtheorem{remark}[theorem]{Remark}
	\newtheorem{assumption}[theorem]{Assumption}
	\newtheorem{goal}[theorem]{Goal}
	\newenvironment{example}{\QuadSpace\par\noindent{\bf Example}:}{\HalfSpace}
\def\I{{\mathbf{I}}}
\def\_{\,\,\,\,\,}
\def\RCD0#1{{\mbox{\rm R$_{#1}$(lin)}}}
\newcommand{\QuadSpace}{\smallskip}
\newcommand{\HalfSpace}{\medskip}
\newcommand{\GSV}[1]{\cvG^{\mathrm{\scriptscriptstyle SV}}_{#1}}
\newcommand{\GSVb}[1]{\cvG^{\mathrm{\scriptscriptstyle SV'}}_{#1}}
\DeclareMathOperator{\TM}{TM} 
\DeclareMathOperator{\TD}{TD}
\renewcommand{\TC}{\operatorname{TC}}
\DeclareMathOperator{\LD}{LD}
\DeclareMathOperator{\LC}{LC}
\newcommand{\nTM}[1]{\left|\TM(#1)\right|}
\newcommand{\nLM}[1]{\left|\LM(#1)\right|}
\newcommand{\LMp}[1]{\LM\left(#1\right)}
\newcommand{\lIPS}{\texorpdfstring{IPS$_{\text{LIN}}$}{IPS-LIN}\xspace}
\newcommand{\lbIPS}{\texorpdfstring{\ensuremath{\text{IPS}_{\text{LIN}'}}\xspace}{IPS-LIN'}}
\title{Proof Complexity Lower Bounds from Algebraic Circuit Complexity}
\author{%
	Michael A. Forbes~\thanks{Email: \texttt{miforbes@csail.mit.edu}. Department of Computer Science, Princeton University. Supported by the Princeton Center for Theoretical Computer Science.}
	\and
	Amir Shpilka~\thanks{Email: \texttt{shpilka@post.tau.ac.il}. Department of Computer Science, Tel Aviv University, Tel Aviv, Israel. The research leading to these results has received funding from the European Community's Seventh Framework Programme (FP7/2007-2013) under grant agreement number 257575.}
	\and
	Iddo Tzameret~\thanks{Email: \texttt{iddo.tzameret@rhul.ac.uk}. Department of Computer Science, Royal Holloway, University of London, UK.}
	\and
	Avi Wigderson~\thanks{Email: \texttt{avi@math.ias.edu}. Institute for Advanced Study, Princeton. This research was partially supported by NSF grant CCF-1412958.}
}
\begin{document}

\maketitle 

\begin{abstract}
	We give upper and lower bounds on the power of subsystems of the \emph{Ideal Proof System (IPS)}, the algebraic proof system recently proposed by Grochow and Pitassi~\cite{GrochowPitassi14}, where the circuits comprising the proof come from various restricted algebraic circuit classes. This mimics an established research direction in the boolean setting for subsystems of \emph{Extended Frege} proofs, where proof-lines are circuits from restricted boolean circuit classes. Except one, all of the subsystems considered in this paper can simulate the well-studied \emph{Nullstellensatz} proof system, and prior to this work there were no known lower bounds when measuring proof size by the algebraic complexity of the polynomials (except with respect to degree, or to sparsity).

	We give two general methods of converting certain algebraic lower bounds into proof complexity ones. Our methods require stronger notions of lower bounds, which lower bound a polynomial as well as an entire family of polynomials it defines. Our techniques are reminiscent of existing methods for converting boolean circuit lower bounds into related proof complexity results, such as \emph{feasible interpolation}. We obtain the relevant types of lower bounds for a variety of classes (\emph{sparse polynomials}, \emph{depth-3 powering formulas}, \emph{read-once oblivious algebraic branching programs}, and \emph{multilinear formulas}), and infer the relevant proof complexity results. We complement our lower bounds by giving short refutations of the previously-studied \emph{subset-sum} axiom using IPS subsystems, allowing us to conclude strict separations between some of these subsystems.

	Our first method is a \emph{functional lower bound}, a notion of Grigoriev and Razborov~\cite{GrigorievRazborov00}, which is a function $\hat{f}:\bits^n\to\F$ such that any polynomial $f$ agreeing with $\hat{f}$ on the boolean cube requires large algebraic circuit complexity.  For our classes of interest, we develop functional lower bounds where $\hat{f}(\vx)$ equals $\nicefrac{1}{p(\vx)}$ where $p$ is a constant-degree polynomial, which in turn yield corresponding IPS lower bounds for proving that $p$ is nonzero over the boolean cube. In particular, we show super-polynomial lower bounds for refuting variants of the subset-sum axiom in various IPS subsystems.

	Our second method is to give \emph{lower bounds for multiples}, that is, to give explicit polynomials whose all (nonzero) multiples require large algebraic circuit complexity. By extending known techniques, we are able to obtain such lower bounds for our classes of interest, which we then use to derive corresponding IPS lower bounds.  Such lower bounds for multiples are of independent interest, as they have tight connections with the algebraic hardness versus randomness paradigm.
\end{abstract}

\newpage

\tableofcontents

\newpage

\section{Introduction}

Propositional proof complexity aims to understand and analyze the computational resources required to prove propositional tautologies, in the same way that circuit complexity studies the resources required to compute boolean functions. A typical goal would be to establish, for a given proof system, super-polynomial lower bounds on the \emph{size} of any proof of some propositional tautology. The seminal work of Cook and Reckhow~\cite{CookReckhow79} showed that this goal relates quite directly to fundamental hardness questions in computational complexity such as the \NP\ vs.~\coNP\ question: establishing super-polynomial lower bounds for \emph{every} propositional proof system would separate \NP\ from \coNP\ (and thus also \P\ from \NP). We refer the reader to \Krajicek~\cite{Krajicek95} for more on this subject.

Propositional proof systems come in a large variety, as different ones capture different forms of reasoning, either reasoning used to actually prove theorems, or reasoning used by algorithmic techniques for different types of search problems (as failure of the algorithm to find the desired object constitutes a proof of its nonexistence). Much of the research in proof complexity deals with propositional proof systems originating from logic or geometry.  Logical proof systems include such systems as \emph{resolution} (whose variants are related to popular algorithms for automated theory proving and SAT solving), as well as the \emph{Frege} proof system (capturing the most common logic text-book systems) and its many subsystems. Geometric proof systems include \emph{cutting-plane proofs}, capturing reasoning used in algorithms for integer programming, as well as proof systems arising from systematic strategies for rounding linear- or semidefinite-programming such as the \emph{lift-and-project} or \emph{sum-of-squares} hierarchies. 

In this paper we focus on algebraic proof systems, in which propositional tautologies (or rather contradictions) are expressed as unsatisfiable systems of polynomial equations and algebraic tools are used to refute them. This study originates with the work of  Beame, Impagliazzo, \Krajicek, Pitassi and \Pudlak~\cite{BeameIKPP96}, who introduced the  Nullstellensatz refutation system (based on Hilbert's Nullstellensatz), followed by the Polynomial Calculus system of Clegg, Edmonds, and Impagliazzo~\cite{CleggEI96}, which is a ``dynamic'' version of Nullstellensatz. In both systems the main measures of proof size that have been studied are the \emph{degree} and \emph{sparsity} of the polynomials appearing in the proof. Substantial work has lead to a very good understanding of the power of these systems with respect to these measures (see for example \cite{BussIKPRS96,Razborov98,Grigoriev98,IPS99,BussGIP01,AlekhnovichRazborov01} and references therein). 

However, the above measures of degree and sparsity are rather rough measures of a complexity of a proof. As such, Grochow and Pitassi~\cite{GrochowPitassi14} have recently advocated measuring the complexity of such proofs by their algebraic circuit size and shown that the resulting proof system can polynomially simulate strong proof systems such as the Frege system.  This naturally leads to the question of establishing lower bounds for this stronger proof system, even for restricted classes of algebraic circuits.  

In this work we establish such lower bounds for previously studied restricted classes of algebraic circuits, and show that these lower bounds are interesting by providing non-trivial \emph{upper} bounds in these proof systems for refutations of interesting sets of polynomial equations.  This provides what are apparently the first examples of lower bounds on the algebraic circuit size of propositional proofs in the Ideal Proof System (IPS) framework of Grochow and Pitassi~\cite{GrochowPitassi14}.

We note that obtaining proof complexity lower bounds from circuit complexity lower bounds is an established tradition that takes many forms. Most prominent are the lower bounds for subsystems of the Frege proof system defined by low-depth boolean circuits, and lower bounds of \Pudlak~\cite{Pudlak97} on Resolution and Cutting Planes system using the so-called feasible interpolation method.  We refer the reader again to \Krajicek~\cite{Krajicek95} for more details. Our approach here for algebraic systems shares features with both of these approaches.

The rest of this introduction is arranged as follows. In \autoref{sec:Nullstellensatz} we give the necessary background in algebraic proof complexity, and explain the IPS system. In \autoref{sec:intro:circuits} we define the algebraic complexity classes that will underlie the subsystems of IPS we will study. In \autoref{sec:results} we state our results and explain our techniques, for both the algebraic and proof complexity worlds.

\subsection{Algebraic Proof Systems}\label{sec:Nullstellensatz}

We now describe the algebraic proof systems that are the subject of this paper.  If one has a set of polynomials (called \emph{axioms}) $f_1,\ldots,f_m\in\F[x_1,\ldots,x_n]$ over some field $\F$, then (the weak version of) Hilbert's Nullstellensatz shows that the system $f_1(\vx)=\cdots=f_m(\vx)=0$ is unsatisfiable (over the algebraic closure of $\F$) if and only if there are polynomials $g_1,\ldots,g_m\in\F[\vx]$ such that $\sum_j g_j(\vx)f_j(\vx)=1$ (as a formal identity), or equivalently, that 1 is in the ideal generated by the $\{f_j\}_j$.

Beame, Impagliazzo, \Krajicek, Pitassi, and \Pudlak~\cite{BeameIKPP96} suggested to treat these $\{g_j\}_j$ as a \emph{proof} of the unsatisfiability of this system of equations, called a \emph{Nullstellensatz refutation}.  This is in particular relevant for complexity theory as one can restrict attention to \emph{boolean} solutions to this system by adding the \emph{boolean axioms}, that is, adding the polynomials $\{x_i^2-x_i\}_{i=1}^n$ to the system.  As such, one can then naturally encode $\NP$-complete problems such as the satisfiability of 3CNF formulas as the satisfiability of a system of constant-degree polynomials, and a Nullstellensatz refutation is then an equation of the form $\sum_{j=1}^m g_j(\vx)f_j(\vx)+\sum_{i=1}^n h_i(\vx)(x_i^2-x_i)=1$ for $g_j,h_i\in\F[\vx]$.  This proof system is sound (only refuting unsatisfiable systems over $\bits^n$) and complete (refuting any unsatisfiable system, by Hilbert's Nullstellensatz). 

Given that the above proof system is sound and complete, it is then natural to ask what is its power to refute unsatisfiable systems of polynomial equations over $\bits^n$.  To understand this question one must define the notion of the \emph{size} of the above refutations.  Two popular notions are that of the \emph{degree}, and the \emph{sparsity} (number of monomials). One can then show (see for example Pitassi~\cite{Pitassi97}) that for any unsatisfiable system which includes the boolean axioms, there exist a refutation where the $g_j$ are multilinear and where the $h_i$ have degree at most $O(n+d)$, where each $f_j$ has degree at most $d$.  In particular, this implies that for any unsatisfiable system with $d=O(n)$ there is a refutation of degree $O(n)$ and involving at most $\exp(O(n))$ monomials.  This intuitively agrees with the fact that $\coNP$ is a subset of non-deterministic exponential time.

Building on the suggestion of Pitassi~\cite{Pitassi97} and various investigations into the power of strong algebraic proof systems (\cite{GrigorievHirsch03,RazTzameret08a,RazTzameret08b}), Grochow and Pitassi~\cite{GrochowPitassi14} have recently considered more \emph{succinct} descriptions of polynomials where one measures the size of a polynomial by the size of an algebraic circuit needed to compute it.  This is potentially much more powerful as there are polynomials such as the determinant which are of high degree and involve exponentially many monomials and yet can be computed by small algebraic circuits. They named the resulting system the \emph{Ideal Proof System (IPS)} which we now define.

\begin{definition}[Ideal Proof System (IPS), Grochow-Pitassi~\cite{GrochowPitassi14}]\label{def:orig-IPS}
	Let $f_1(\vx),\ldots,f_m(\vx)\in\F[x_1,\ldots,x_n]$ be a system of polynomials. An \demph{IPS refutation} for showing that the polynomials $\{f_j\}_j$ have no common solution in $\bits^n$ is an algebraic circuit $C(\vx,\vy,\vz)\in\F[\vx,y_1,\ldots,y_m,z_1,\ldots,z_n]$, such that
	\begin{enumerate}
		\item $C(\vx,\vnz,\vnz) = 0$.
		\item $C(\vx,f_1(\vx),\ldots,f_m(\vx),x_1^2-x_1,\ldots,x_n^2-x_n)=1$.
	\end{enumerate}
	The \demph{size} of the IPS refutation is the size of the circuit $C$. If $C$ is of individual degree $\le 1$ in each $y_j$ and $z_i$, then this is a \demph{linear} IPS refutation (called \emph{Hilbert} IPS by Grochow-Pitassi~\cite{GrochowPitassi14}), which we will abbreviate as \lIPS. If $C$ is of individual degree $\le 1$ only in the $y_j$ then we say this is a \lbIPS refutation. If $C$ comes from a restricted class of algebraic circuits $\cC$, then this is a called a $\cC$-IPS refutation, and further called a $\cC$-\lIPS refutation if $C$ is linear in $\vy,\vz$, and a $\cC$-\lbIPS refutation if $C$ is linear in $\vy$.
\end{definition}

Notice also that our definition here by default adds the equations $\{x_i^2-x_i\}_i$ to the system $\{f_j\}_j$.  For convenience we will often denote the equations $\{x_i^2-x_i\}_i$ as $\vx^2-\vx$. One need not add the equations $\vx^2-\vx$ to the system in general, but this is the most interesting regime for proof complexity and thus we adopt it as part of our definition.

The $\cC$-IPS system is sound for any $\cC$, and Hilbert's Nullstellensatz shows that $\cC$-\lIPS is complete for any complete class of algebraic circuits $\cC$ (that is, classes which can compute any polynomial, possibly requiring exponential complexity). We note that we will also consider non-complete classes such as multilinear-formulas (which can only compute \emph{multilinear} polynomials, but are complete for multilinear polynomials), where we will show that the multilinear-formula-\lIPS system is not complete for the language of all unsatisfiable sets of multilinear polynomials (\autoref{ex:multi-form:incomplete}), while the stronger multilinear-formula-\lbIPS version is complete (\autoref{res:multilin:simulate-sparse}).  However, for the standard conversion of unsatisfiable CNFs into polynomial systems of equations, the multilinear-formula-\lIPS system is complete (\autoref{thm:GrochowPitassi14}).

Grochow-Pitassi~\cite{GrochowPitassi14} proved the following theorem, showing that the IPS system has surprising power and that lower bounds on this system give rise to \emph{computational} lower bounds.

\begin{theoremwp}[Grochow-Pitassi~\cite{GrochowPitassi14}]\label{thm:GrochowPitassi14}
	Let $\varphi=C_1\wedge\cdots\wedge C_m$ be an unsatisfiable CNF on $n$-variables, and let $f_1,\ldots,f_m\in\F[x_1,\ldots,x_m]$ be its encoding as a polynomial system of equations. If there is a size-$s$ Frege proof (resp.\ Extended Frege) that $\{f_j\}_j,\{x_i^2-x_i\}_i$ is unsatisfiable, then there is a formula-\lIPS (resp.\ circuit-\lIPS) refutation of size $\poly(n,m,s)$ that is checkable in randomized $\poly(n,m,s)$ time.\footnote{We note that Grochow and Pitassi~\cite{GrochowPitassi14} proved this for Extended Frege and circuits, but essentially the same proof relates Frege and formula size.} 
	
	Further, $\{f_j\}_j,\{x_i^2-x_i\}_i$ has a \lIPS refutation, where the refutation uses multilinear polynomials in $\VNP$.  Thus, if every IPS refutation of $\{f_j\}_j,\{x_i^2-x_i\}_i$ requires formula (resp.\ circuit) size $\ge s$, then there is an explicit polynomial (that is, in $\VNP$) that requires size $\ge s$ algebraic formulas (resp.\ circuits).
\end{theoremwp}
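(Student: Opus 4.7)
The plan splits into two parts: (i) a size-preserving simulation showing that Frege proofs translate to short formula-\lIPS refutations (and Extended Frege proofs to short circuit-\lIPS refutations), and (ii) an exhibition of a $\VNP$-computable \lIPS refutation, which transfers any IPS lower bound to an algebraic-circuit lower bound on an explicit family.

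For (i), I would fix a standard arithmetization $\phi \mapsto \tilde{\phi}$ so that $\tilde{\phi}$ is a formula of size $O(|\phi|)$ whose value on $\va \in \bits^n$ equals the truth value of $\phi$ (extension axioms in Extended Frege become fresh gates, making the arithmetization a polynomial-size circuit rather than a formula). The main inductive claim, along the length of the Frege proof, is that for each derived line $\phi$ there is a polynomial $p_\phi(\vx, \vy, \vz)$, linear in $\vy, \vz$ and of size $\poly(n,m,s)$, satisfying $p_\phi(\vx, \vnz, \vnz) = 0$ and $p_\phi(\vx, f_1(\vx), \ldots, f_m(\vx), x_1^2 - x_1, \ldots, x_n^2 - x_n) = 1 - \tilde{\phi}(\vx)$. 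The base cases handle the hypotheses (taking $p = y_j$) and the Frege logical axioms (identically true, so $1 - \tilde{\phi}$ lies in the boolean ideal, derivable by a short fixed identity). The induction step translates each Frege inference rule into a small algebraic manipulation of the $p$'s; modus ponens is the key rule, handled via $1 - \tilde{\psi} = (1 - \tilde{\phi}) + \tilde{\phi}(1 - \tilde{\psi})$, which combines $p_\phi$ and $p_{\phi \to \psi}$ linearly over $\vx, \vy, \vz$. The final line being a contradiction yields $p_\bot$ certifying $1$ in the ideal, i.e., the desired \lIPS refutation. Randomized checkability is immediate: both defining identities of \autoref{def:orig-IPS} are polynomial-identity tests on $\poly$-size circuits, hence lie in $\mathsf{coRP}$ via Schwartz--Zippel.

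For (ii), I would invoke the construction alluded to just before the theorem: since $\{f_j\}_j \cup \{x_i^2 - x_i\}_i$ is unsatisfiable on $\bits^n$, for each $\va \in \bits^n$ there is a canonical $j(\va)$ with $f_{j(\va)}(\va) \neq 0$. Setting $g_{j(\va)}(\va) := 1/f_{j(\va)}(\va)$ and $g_{j'}(\va) := 0$ for $j' \neq j(\va)$, and taking the multilinear extension $g_j(\vx) = \sum_{\va \in \bits^n} g_j(\va) \prod_i x_i^{a_i}(1 - x_i)^{1 - a_i}$, each coefficient $g_j(\va)$ is polynomial-time computable from $\va$, so $g_j \in \VNP$. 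A similar construction gives multilinear $h_i \in \VNP$ absorbing the residues against the boolean axioms, so that $\sum_j g_j(\vx) f_j(\vx) + \sum_i h_i(\vx)(x_i^2 - x_i) = 1$ identically. Hence $C(\vx, \vy, \vz) := \sum_j g_j(\vx) y_j + \sum_i h_i(\vx) z_i$ is a \lIPS refutation whose $\vx$-coefficient polynomials all lie in $\VNP$.

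The lower-bound transfer is the contrapositive of (ii): if every IPS refutation has formula (resp.\ circuit) size $\ge s$, then the specific refutation from (ii) does too, and hence at least one of its $\VNP$ coefficient polynomials $\{g_j, h_i\}$ must have formula (resp.\ circuit) complexity $\ge s$, an explicit hard polynomial. The main obstacle is (i): executing the induction so that the $p_\phi$ remain linear in $\vy, \vz$ at each Frege rule while the formula/circuit size grows only polynomially in the proof length, and in the Extended Frege setting properly arithmetizing extension variables as fresh circuit gates, which is precisely what produces the formula-vs-circuit dichotomy between the two simulation claims.
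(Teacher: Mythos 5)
This theorem is a \texttt{theoremwp} imported verbatim from Grochow--Pitassi~\cite{GrochowPitassi14}; the present paper gives no proof of it, so there is no internal argument to compare against. Your sketch is essentially the standard proof of the cited result: arithmetize each proof line, maintain by induction a certificate $p_\phi$, linear in $\vy,\vz$, witnessing that $1-\tilde\phi$ lies in the ideal generated by the axioms, and separately exhibit the brute-force multilinear certificate to obtain the $\VNP$ upper bound and the lower-bound transfer. The randomized checkability via Schwartz--Zippel and the contrapositive for the hardness transfer are handled correctly.

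Three points need tightening. First, your modus ponens identity is wrong as written: $(1-\tilde\phi)+\tilde\phi(1-\tilde\psi)=1-\tilde\phi\tilde\psi$, not $1-\tilde\psi$. The identity you want is $1-\tilde\psi=(1-\tilde\psi)(1-\tilde\phi)+\tilde\phi(1-\tilde\psi)$, whose second summand equals $1-\widetilde{\phi\to\psi}$ under the usual arithmetization of implication; this still combines $p_\phi$ and $p_{\phi\to\psi}$ with coefficients depending only on $\vx$, so linearity in $\vy,\vz$ survives. Second, the size accounting in the formula case is not automatic: in a dag-like Frege proof a line may be reused many times, and since each certificate physically contains the certificates of the lines it depends on, the naive induction can blow up exponentially when certificates are formulas (it is fine for circuits). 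You must first pass to a tree-like (and balanced) Frege proof, a polynomial-overhead preprocessing step, to get the formula-\lIPS conclusion. Third, in part (ii) the assertion that the $h_i$ lie in $\VNP$ is not ``a similar construction'': it requires observing that $\sum_j g_j f_j-1$ vanishes on $\bits^n$ and that reducing a polynomial of constant individual degree with polynomial-time-computable coefficient function modulo $\vx^2-\vx$ produces quotients whose coefficient functions are again polynomial-time computable, whence $h_i\in\VNP$ by Valiant's criterion. None of these is a fatal gap, but each must be argued rather than asserted.
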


\begin{remark}\label{rmk:EF-degree}
	One point to note is that the transformation from Extended Frege to IPS refutations yields circuits of polynomial size but without any guarantee on their degree. In particular, such circuits can compute polynomials of exponential degree. In contrast, the conversion from Frege to IPS refutations yields polynomial sized algebraic formulas and those compute polynomials of polynomially bounded degree. This range of parameters, polynomials of polynomially bounded degree, is the more common setting studied in algebraic complexity.
\end{remark}

The fact that $\cC$-IPS refutations are efficiently checkable (with randomness) follows from the fact that we need to verify the polynomial identities stipulated by the definition.  That is, one needs to solve an instance of the \emph{polynomial identity testing (PIT)} problem for the class $\cC$: given a circuit from the class $\cC$ decide whether it computes the identically zero polynomial.  This problem is solvable in probabilistic polynomial time ($\BPP$) for general algebraic circuits, and there are various restricted classes for which deterministic algorithms are known (see \autoref{sec:PIT}). 

Motivated by the fact that PIT of non-commutative formulas\,\footnote{These are formulas over a set of non-commuting variables.} can be solved deterministically (\cite{RazShpilka05}) and admit
exponential-size lower bounds (\cite{Nisan91}), Li, Tzameret and Wang~\cite{LiTW15} have shown that IPS over \emph{non-commutative} polynomials (along with additional \emph{commutator} axioms) can simulate Frege (they also provided a quasipolynomial simulation of IPS over non-commutative formulas by Frege; see Li, Tzameret and Wang~\cite{LiTW15} for more details).

\begin{theoremwp}[Li, Tzameret and Wang~\cite{LiTW15}]\label{thm:LTW}
	Let $\varphi=C_1\wedge\cdots\wedge C_m$ be an unsatisfiable CNF on $n$-variables, and let $f_1,\ldots,f_m\in\F[x_1,\ldots,x_m]$ be its encoding as a polynomial system of equations. If there is a size-$s$ Frege proof that $\{f_j\}_j,\{x_i^2-x_i\}_i$ is unsatisfiable, then there is a non-commutative-IPS refutation of formula-size $\poly(n,m,s)$, where the commutator axioms $x_ix_j-x_jx_i$ are also included in the polynomial system being refuted. Further, this refutation is checkable in deterministic $\poly(n,m,s)$ time.
\end{theoremwp}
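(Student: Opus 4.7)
The plan is to simulate a size-$s$ Frege proof of $\neg\varphi$ line-by-line by a non-commutative IPS derivation of formula-size $\poly(n,m,s)$. First I would fix a canonical arithmetization of propositional formulas into elements of the free algebra $\F\langle x_1,\ldots,x_n\rangle$: a variable $x_i$ translates to itself, $\neg\phi$ to $1-\tr(\phi)$, $\phi\wedge\psi$ to $\tr(\phi)\cdot\tr(\psi)$ (in a fixed textual order), and $\phi\vee\psi$ to $1-(1-\tr(\phi))(1-\tr(\psi))$. Under this translation, if $\phi$ has a Frege-style formula of size $t$, then $\tr(\phi)$ has a non-commutative algebraic formula of size $O(t)$, and $\phi$ is a tautology iff $\tr(\phi)\equiv 1$ modulo the ideal generated by the boolean axioms $\{x_i^2-x_i\}$ together with the commutator axioms $\{x_ix_j-x_jx_i\}_{i\neq j}$.

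Next I would establish a small collection of basic derivation primitives in non-commutative \lbIPS: (i) for each Frege axiom schema, an explicit polynomial-size non-commutative formula witnessing that its translation equals $1$ modulo the boolean and commutator axioms; (ii) a simulation of modus ponens, which given IPS witnesses for $\tr(\phi)=1$ and $\tr(\phi\to\psi)=1$ produces a witness for $\tr(\psi)=1$ with only a linear blow-up; and (iii) a ``substitution/weakening'' primitive that lifts a derivation of a subformula to the context of a larger formula. Each of these amounts to exhibiting explicit coefficient polynomials that can be written as non-commutative formulas of polynomial size, essentially by hard-wiring the syntactic manipulation the Frege rule performs. Composing these primitives along the structure of the given Frege proof produces a non-commutative IPS derivation of $\tr(\neg\varphi)=1$, i.e., a refutation of $\tr(\varphi)$, hence of the translated clauses $f_1,\ldots,f_m$.

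The main obstacle, I expect, is handling the mismatch between commutative Frege reasoning and non-commutative algebra. Frege silently uses identities such as $\phi\wedge\psi\equiv\psi\wedge\phi$ and associative/distributive rearrangements that, after arithmetization, correspond to equalities $pq=qp$ or reorderings inside long products that are \emph{not} identities in the free algebra. To bridge this I would prove a ``commutator normalization'' lemma: for any two non-commutative formulas $p,q$ of size $s$, the polynomial $pq-qp$ admits a non-commutative \lbIPS derivation from the commutator axioms of formula-size $\poly(s)$, obtained by inductively sliding variables past each other one gate at a time. Once this normalization is available, every commutative Frege step is simulated by first performing the analogous non-commutative algebraic manipulation and then paying a $\poly(s)$ overhead in commutator axioms to fix up the variable order. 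This lemma, together with the primitives above, suffices for the whole simulation.

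Finally, the claim that the resulting refutation is deterministically checkable in time $\poly(n,m,s)$ follows immediately from the Raz--Shpilka deterministic PIT algorithm for non-commutative formulas \cite{RazShpilka05}: verifying the two identities $C(\vx,\vnz,\vnz)=0$ and $C(\vx,\vf,\vx^2-\vx,\vx\vx^{\mathrm{T}}-\vx^{\mathrm{T}}\vx)=1$ demanded by \autoref{def:orig-IPS} reduces to PIT for non-commutative formulas of size $\poly(n,m,s)$, which is solvable in deterministic polynomial time. This is the step that makes the non-commutative variant qualitatively stronger than the commutative IPS (where checking is only known in $\BPP$), and it is the payoff that motivates the whole construction.
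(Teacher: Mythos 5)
You should first note that this theorem is not proved in the paper at all: it is stated in a ``theorem-without-proof'' environment and explicitly attributed to Li, Tzameret and Wang~\cite{LiTW15}, serving only as background motivation for studying roABP- and non-commutative-IPS. There is therefore no in-paper argument to compare your proposal against; what you have written is a reconstruction of the external proof.

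As such a reconstruction, your outline captures the right high-level strategy (arithmetize, simulate Frege line-by-line, pay for commutativity with the commutator axioms, and get deterministic checkability from Raz--Shpilka~\cite{RazShpilka05}), and the last paragraph on checkability is exactly the point the paper emphasizes. But as a proof it has two real gaps. First, the ``commutator normalization'' lemma is the technical heart of~\cite{LiTW15} and cannot be waved through: for general non-commutative formulas $p,q$ of size $s$, expanding into monomials and sliding variables past each other gives exponentially many terms, so one must argue structurally on the formulas, and each local swap's witness must be expressed \emph{inside the context} of the surrounding formula; making the total witness size $\poly(s)$ requires a careful induction that you only gesture at. Second, ``composing these primitives along the structure of the given Frege proof'' hides a size issue: a Frege line may be used as a premise many times, and formulas (unlike circuits) cannot share subcomputations, so naive recursive substitution of per-line certificates can blow up exponentially in the depth of the proof DAG. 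One needs either to first make the Frege proof tree-like (with polynomial overhead) or to combine the per-line certificates by a telescoping/linear-combination argument rather than by nested substitution. Without addressing these two points the size bound $\poly(n,m,s)$ is not established.
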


The above results naturally motivate studying $\cC$-IPS for various restricted classes of algebraic circuits, as lower bounds for such proofs then intuitively correspond to restricted lower bounds for the Extended Frege proof system.  In particular, as exponential lower bounds are known for non-commutative formulas (\cite{Nisan91}), this possibly suggests that such methods could even attack the full Frege system itself.  

\subsection{Algebraic Circuit Classes}\label{sec:intro:circuits}

Having motivated $\cC$-IPS for restricted circuit classes $\cC$, we now give formal definitions of the algebraic circuit classes of interest to this paper, all of which were studied independently in algebraic complexity. Some of them capture the state-of-art in our ability to prove lower bounds and provide efficient deterministic identity tests, so it is natural to attempt to fit them into the proof complexity framework. We define each and briefly explain what we know about it. As the list is long, the reader may consider skipping to the results (\autoref{sec:results}), and refer to the definitions of these classes as they arise.

Algebraic circuits and formula (over a fixed chosen field) compute polynomials via addition and multiplication gates, starting from the input variables and constants from the field. For background on algebraic circuits in general and their complexity measures we refer the reader to the survey of Shpilka and Yehudayoff~\cite{SY10}. We next define the restricted circuit classes that we will be studying in this paper. 

\subsubsection{Low Depth Classes}

We start by defining what are the simplest and most restricted classes of algebraic circuits. The first class simply represents polynomials as a sum of monomials. This is also called the \emph{sparse representation} of the polynomial. Notationally we call this model $\sumprod$ formulas (to capture the fact that polynomials computed in the class are represented simply as sums of products), but we will more often call these polynomials ``sparse''.

\begin{definition}\label{def:sparse}
	The class $\cC=\sumprod$ compute polynomials in their \demph{sparse} representation, that is, as a sum of monomials. The graph of computation has two layers with an addition gate at the top and multiplication gates at the bottom. The \demph{size} of a $\sumprod$ circuit of a polynomial $f$ is the multiplication of the number of monomials in $f$, the number of variables, and the degree.
\end{definition}

This class of circuits is what is used in the Nullstellensatz proof system. In our terminology $\sumprod$-\lIPS is exactly the Nullstellensatz proof system.

Another restricted class of algebraic circuits is that of \emph{depth-$3$ powering formulas} (sometimes also called ``diagonal depth-$3$ circuits''). We will sometimes abbreviate this name as a ``$\sumpowsum$ formula'', where $\bigwedge$ denotes the powering operation. Specifically, polynomials that are efficiently computed by small formulas from this class can be represented as sum of powers of linear functions. This model appears implicitly in Shpilka~\cite{Shpilka02} and explicitly in the work of Saxena \cite{Saxena08}.

\begin{definition}\label{def:diagonal}
	The class of depth-$3$ powering formulas, denoted $\sumpowsum$, computes polynomials of the following form $$f(\vx)=\sum_{i=1}^{s} \ell_i(\vx)^{d_i},$$ where $\ell_i(\vx)$ are linear functions. The degree of this $\sumpowsum$ representation of $f$ is $\max_i\{d_i\}$ and its size  is $n\cdot \sum_{i=1}^{s}(d_i+1)$.
\end{definition}

One reason for considering this class of circuits is that it is a simple, but non-trivial model that is somewhat well-understood.  In particular, the partial derivative method of Nisan-Wigderson~\cite{NisanWigderson96} implies lower bounds for this model and efficient polynomial identity testing algorithms are known (\cite{Saxena08,AgrawalSS13,ForbesShpilka13a,ForbesShpilka13b,ForbesSS14}, as discussed further in \autoref{sec:PIT}).

We also consider a generalization of this model where we allow powering of low-degree polynomials.

\begin{definition}\label{def:sumpowlowdeg}
	The class $\sumpow{t}$ computes polynomials of the following form 
	\[
		f(\vx)=\sum_{i=1}^{s} f_i(\vx)^{d_i}
		\;,
	\]
	where the degree of the $f_i(\vx)$ is at most $t$. The size of this representation is $\binom{n+t}{t} \cdot \sum_{i=1}^{s}(d_i+1)$.
\end{definition}

We remark that the reason for defining the size this way is that we think of the $f_i$ as represented as sum of monomials (there are $\binom{n+t}{t}$ $n$-variate monomials of degree at most $t$) and the size captures the complexity of writing this as an algebraic formula. This model is the simplest that requires the method of \emph{shifted partial derivatives} of Kayal~\cite{Kayal12,GuptaKKS14} to establish lower bounds, and this has recently been generalized to obtain polynomial identity testing algorithms (\cite{Forbes15}, as discussed further in \autoref{sec:PIT}).

\subsubsection{Oblivious Algebraic Branching Programs}

Algebraic branching programs (ABPs) form a model whose computational power lies between that of algebraic circuits and algebraic formulas, and certain \emph{read-once} and \emph{oblivious} ABPs are a natural setting for studying the \emph{partial derivative matrix} lower bound technique of Nisan~\cite{Nisan91}.

\begin{definition}[Nisan~\cite{Nisan91}]\label{def:roABP}
	An \demph{algebraic branching program (ABP) with unrestricted weights} of \demph{depth} $D$ and \demph{width} $\le r$, on the variables $x_1,\ldots,x_n$, is a directed acyclic graph such that:
	\begin{itemize}
		\item The vertices are partitioned in $D+1$ layers $V_0,\ldots,V_D$, so that $V_0=\{s\}$ ($s$ is the source node), and $V_D=\{t\}$ ($t$ is the sink node). Further, each edge goes from $V_{i-1}$ to $V_{i}$ for some $0< i\le D$.
		\item $\max|V_i|\le r$.
		\item Each edge $e$ is weighted with a polynomial $f_e\in\F[\vx]$.
	\end{itemize}
	The \demph{(individual) degree} $d$ of the ABP is the maximum (individual) degree of the edge polynomials $f_e$. The \demph{size} of the ABP is the product $n\cdot r\cdot d\cdot D$, 

	Each $s$-$t$ path is said to compute the polynomial which is the product of the labels of its edges, and the algebraic branching program itself computes the sum over all $s$-$t$ paths of such polynomials.

	There are also the following restricted ABP variants.
	\begin{itemize}
		\item An algebraic branching program is said to be \demph{oblivious} if for every layer $\ell$, all the edge labels in that layer are univariate polynomials in a single variable $x_{i_\ell}$.
		\item An oblivious branching program is said to be a \demph{read-once} oblivious ABP (roABP) if each $x_i$ appears in the edge label of exactly one layer, so that $D=n$. That is, each $x_i$ appears in the edge labels in at exactly one layer. The layers thus define a \demph{variable order}, which will be $x_1<\cdots<x_n$ if not otherwise specified.
		\item An oblivious branching program is said to be a \demph{read-$k$} oblivious ABP if each variable $x_i$ appears in the edge labels of exactly $k$ layers, so that $D=kn$.
		\item An ABP is \textbf{non-commutative} if each $f_e$ is from the ring $\F\nc{\vx}$ of non-commuting variables and has $\deg f_e\le 1$, so that the ABP computes a non-commutative polynomial.
			\qedhere
	\end{itemize}
\end{definition}

Intuitively, roABPs are the algebraic analog of read-once boolean branching programs, the non-uniform model of the class $\RL$, which are well-studied in boolean complexity.  Algebraically, roABPs are also well-studied. In particular, roABPs are essentially equivalent to non-commutative ABPs (\cite{ForbesShpilka13b}), a model at least as strong as non-commutative formulas.  That is, as an roABP reads the variables in a fixed order (hence not using commutativity) it can be almost directly interpreted as a non-commutative ABP\@. Conversely, as non-commutative multiplication is ordered, one can interpret a non-commutative polynomial in a read-once fashion by (commutatively) exponentiating a variable to its index in a monomial.  For example, the non-commutative $xy-yx$ can be interpreted commutatively as $x^1y^2-y^1x^2=xy^2-x^2y$, and one can show that this conversion preserves the relevant ABP complexity (\cite{ForbesShpilka13b}). The study of non-commutative ABPs dates to Nisan~\cite{Nisan91}, who proved lower bounds for non-commutative ABPs (and thus also for roABPs, in any order). In a sequence of more recent papers, polynomial identity testing algorithms were devised for roABPs (\cite{RazShpilka05,ForbesShpilka12,ForbesShpilka13b,ForbesSS14,AgrawalGKS15}, see also \autoref{sec:PIT}). In terms of proof complexity, Tzameret~\cite{Tzameret11} studied a proof system with lines given by roABPs, and recently Li, Tzameret and Wang~\cite{LiTW15} (\autoref{thm:LTW}) showed that IPS over non-commutative formulas is essentially equivalent in power to the Frege proof system.  Due to the close connections between non-commutative ABPs and roABPs, this last result suggests the importance of proving lower bounds for roABP-IPS as a way of attacking lower bounds for the Frege proof system (although our work obtains roABP-\lIPS lower bounds without obtaining non-commutative-\lIPS lower bounds).

Finally, we mention that recently Anderson, Forbes, Saptharishi, Shpilka, and Volk~\cite{AndersonFSSV16} obtained exponential lower bounds for read-$k$ oblivious ABPs (when $k=o(\log n/\log\log n)$) as well as a slightly subexponential polynomial identity testing algorithm.

\subsubsection{Multilinear Formulas}

The last model that we consider is that of multilinear formulas. 

\begin{definition}[Multilinear formula]\label{def-ml-fmla}
	An algebraic formula is a \demph{multilinear formula} if the polynomial computed by \emph{each} gate of the formula is multilinear (as a formal polynomial, that is, as an element of $\,\mathbb{F}[x_1,\ldots,x_n]$). The \demph{product depth} is the maximum number of multiplication gates on any input-to-output path in the formula.
\end{definition}

Raz~\cite{Raz09,Raz06} proved quasi-polynomial lower bounds for multilinear formulas and separated multilinear formulas from multilinear circuits. Raz and Yehudayoff proved exponential lower bounds for small depth multilinear formulas \cite{RazYehudayoff09}. Only slightly sub-exponential polynomial identity testing algorithms are known for small-depth multilinear formulas (\cite{OliveiraSV15}).

\subsection{Our Results and Techniques}\label{sec:results}

We now briefly summarize our results and techniques, stating some results in less than full generality to more clearly convey the result.
We present the results in the order that we later prove them. We start by giving upper bounds for the IPS (\autoref{sec:results:upper}). We then describe our functional lower bounds and the \lIPS lower bounds they imply (\autoref{sec:results:funct}). Finally, we discuss lower bounds for multiples and state our lower bounds for IPS (\autoref{sec:intro:lbs-mult}).

\subsubsection{Upper Bounds for Proofs within Subclasses of IPS}\label{sec:results:upper}

Various previous works have studied restricted algebraic proof systems and shown non-trivial upper bounds.  The general simulation (\autoref{thm:GrochowPitassi14}) of Grochow and Pitassi~\cite{GrochowPitassi14} showed that the formula-IPS and circuit-IPS systems can simulate powerful proof systems such as Frege and Extended Frege, respectively. The work of Li, Tzameret and Wang~\cite{LiTW15} (\autoref{thm:LTW}) show that even non-commutative-formula-IPS can simulate Frege.  The work of Grigoriev and Hirsch~\cite{GrigorievHirsch03} showed that proofs manipulating depth-3 algebraic formulas can refute hard axioms such as the \emph{pigeonhole principle}, the \emph{subset-sum axiom}, and \emph{Tseitin tautologies}. The work of Raz and Tzameret~\cite{RazTzameret08a,RazTzameret08b} somewhat strengthened their results by restricting the proof to depth-3 \emph{multilinear} proofs (in a \emph{dynamic} system, see \autoref{sec:alg-proofs}).  

However, these upper bounds are for proof systems (IPS or otherwise) for which no proof lower bounds are known.  As such, in this work we also study upper bounds for restricted subsystems of IPS.  In particular, we compare linear-IPS versus the full IPS system, as well as showing that even for restricted $\cC$, $\cC$-IPS can refute interesting unsatisfiable systems of equations arising from $\NP$-complete problems (and we will obtain corresponding proof lower bounds for these $\cC$-IPS systems).

Our first upper bound is to show that linear-IPS can simulate the full IPS proof system when the axioms are computationally simple, which essentially resolves a question of Grochow and Pitassi~\cite[Open Question 1.13]{GrochowPitassi14}.

\begin{theoremwp*}[\autoref{res:h-ips_v_ips}]
	For $|\F|\ge\poly(d)$, if $f_1,\ldots,f_m\in\F[x_1,\ldots,x_n]$ are degree-$d$ polynomials computable by size-$s$ algebraic formulas (resp.\ circuits) and they have a size-$t$ formula-IPS (resp.\ circuit-IPS) refutation, then they also have a size-$\poly(d,s,t)$ formula-\lIPS (resp.\ circuit-\lIPS) refutation.
\end{theoremwp*}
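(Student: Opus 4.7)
The plan is to convert an IPS refutation $C(\vx,\vy,\vz)$ of size $t$ into a \lIPS refutation of size $\poly(d,s,t)$ by decomposing $C$ as a $\vy,\vz$-linear combination. The basic tool is the telescoping identity
\[
 C(\vx,\vy,\vz) \;=\; C(\vx,\vnz,\vnz) \;+\; \sum_{j=1}^m y_j\,A_j(\vx,\vy,\vz) \;+\; \sum_{i=1}^n z_i\,B_i(\vx,\vy,\vz),
\]
where $A_j := [C(\vx,y_1,\ldots,y_j,0,\ldots,0,\vnz) - C(\vx,y_1,\ldots,y_{j-1},0,\ldots,0,\vnz)]/y_j$ is a genuine polynomial (its numerator vanishes at $y_j=0$), and $B_i$ is defined analogously on the $\vz$-block. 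Substituting $\vy\mapsto\vf(\vx)$ and $\vz\mapsto\vx^2-\vx$, and using $C(\vx,\vnz,\vnz)=0$ and $C(\vx,\vf,\vx^2-\vx)=1$, produces cofactors $G_j(\vx):=A_j(\vx,\vf,\vnz)$ and $H_i(\vx):=B_i(\vx,\vf,\vx^2-\vx)$ in $\F[\vx]$ satisfying $\sum_j f_j G_j + \sum_i (x_i^2-x_i) H_i = 1$. Thus $C^{\mathrm{lin}}(\vx,\vy,\vz):=\sum_j y_j G_j(\vx) + \sum_i z_i H_i(\vx)$ is $\vy,\vz$-linear, vanishes at $(\vnz,\vnz)$, and evaluates to $1$ at the axioms, so it is a valid \lIPS refutation.

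The key step is to realize each $A_j$ as a small circuit/formula. Viewing $P(y_j):=C(\vx,y_1,\ldots,y_{j-1},y_j,0,\ldots,0,\vnz)$ as a polynomial in $y_j$ over $\F[\vx,y_1,\ldots,y_{j-1}]$, the polynomial $A_j=(P(y_j)-P(0))/y_j$ has degree at most $D-1$, where $D:=\deg_{y_j}C$. Choosing $D$ distinct nonzero field elements $\beta_1,\ldots,\beta_D\in\F$, afforded by the hypothesis $|\F|\ge\poly(d)$ (after a low-degree field extension if necessary), Lagrange interpolation yields
\[
 A_j(y_j) \;=\; \sum_{k=1}^{D}\frac{P(\beta_k)-P(0)}{\beta_k}\cdot L_k(y_j),
\]
which uses only division by the scalars $\beta_k$ and therefore works in every characteristic. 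Each $P(\beta_k)$ is a formula or circuit of size $O(t)$ obtained by plugging field elements into $C$, each Lagrange basis polynomial $L_k$ has size $O(D)$, and the final substitution $y_j\mapsto f_j(\vx)$ costs an additional $\poly(s)$ factor; hence $G_j$ has size $\poly(D,s,t)$ and $C^{\mathrm{lin}}$ has size $\poly(D,s,t,n,m)$.

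For the formula case the argument is complete: a formula of size $t$ computes a polynomial of degree at most $t$, so $D\le t$ and the bound collapses to $\poly(d,s,t)$. The main obstacle is the circuit case, where a size-$t$ circuit may compute a polynomial of degree $2^{\Omega(t)}$ in $y_j$, making a $D$-point Lagrange interpolation infeasibly large. To handle it, I would aim to exploit both the degree-$d$ bound on the axioms and the fact that the refutation specializes to the constant $1$: plausibly a homogenization-style argument (extracting $(\vy,\vz)$-homogeneous components of $C$) combined with a degree truncation restricts the effective degree contributing to $A_j$ after composition with $\vf$ and $\vx^2-\vx$ to something $\poly(d,t)$, matching both the field-size hypothesis $|\F|\ge\poly(d)$ and the target size bound $\poly(d,s,t)$.
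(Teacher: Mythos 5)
Your decomposition and your treatment of the formula case are correct and essentially identical to the paper's: telescoping over successive zeroings of the $\vy$- and $\vz$-blocks is the same as the paper's partition of the monomials $\vy^\va$ by their first nonzero coordinate, the partial substitution $\vy\mapsto\vf(\vx)$ inside the cofactors is exactly the paper's construction of $C'$, and your Lagrange interpolation in $y_j$ is precisely the paper's division lemma for formulas (\autoref{res:divide-by-y:black-box}). Since a size-$t$ formula has degree at most $t$ in $y_j$, that half of the argument is complete.

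The circuit case, however, is a genuine gap, and the repair you sketch would not work as stated. A size-$t$ circuit-IPS refutation really can have $\deg_{y_j}C=2^{\Omega(t)}$ (see \autoref{rmk:EF-degree}), and then the cofactors $G_j,H_i$ themselves have exponential degree; there is no truncation of them to degree $\poly(d,t)$ that preserves the exact identity $\sum_j f_jG_j+\sum_i(x_i^2-x_i)H_i=1$, because truncating $G_j$ perturbs $\sum_j f_jG_j$ by terms that need not cancel (Nullstellensatz certificates need not have low degree). The resolution is not to bound the degree at all but to observe that the quotient $(P(y_j)-P(0))/y_j$ is computable by a small circuit \emph{regardless} of its degree. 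This is the paper's \autoref{res:divide-by-y:white-box}, a Strassen-style gate transformation: each gate $v$ of $C$ is replaced by two gates computing $\coeff{\vx|y_j^0}(f_v)$ and the tail $\sum_{i\ge 1}\coeff{\vx|y_j^i}(f_v)\,y_j^{i-1}$, with explicit local recurrences at addition and multiplication gates; this costs only a constant factor in size and requires no bound on the degree or the field. Your "homogenization-style" instinct is pointing at exactly this construction, but you need to carry it out gate by gate rather than via degree truncation; substituting it for the interpolation step closes the circuit case.
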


This theorem is established by pushing the ``non-linear'' dependencies on the axioms into the IPS refutation itself, which is possible as the axioms are assumed to themselves be computable by small circuits.  We note that Grochow and Pitassi~\cite{GrochowPitassi14} showed such a conversion, but only for IPS refutations computable by sparse polynomials.  Also, we remark that this result holds even for circuits of unbounded degree, as opposed to just those of polynomial degree.

We then turn our attention to IPS involving only restricted classes of algebraic circuits, and show that they are complete proof systems.  This is clear for complete models of algebraic circuits such as sparse polynomials, depth-3 powering formulas\,\footnote{Showing that depth-3 powering formulas are complete (in large characteristic) can be seen from the fact that any multilinear monomial can be computed in this model, see for example Fischer~\cite{Fischer94}.} and roABPs.  The models of sparse-\lIPS and roABP-\lIPS can efficiently simulate the Nullstellensatz proof system measured in terms of number of monomials, as the former is equivalent to this system, and the latter follows as sparse polynomials have small roABPs.  Note that depth-3 powering formulas cannot efficiently compute sparse polynomials in general (\autoref{res:lbs-mult:sumpowsum}) so cannot efficiently simulate the Nullstellensatz system. For multilinear formulas, showing completeness (much less an efficient simulation of sparse-\lIPS) is more subtle as not every polynomial is multilinear, however the result can be obtained by a careful multilinearization. 

\begin{theoremwp*}[\autoref{ex:multi-form:incomplete}, \autoref{res:multilin:simulate-sparse}]
	The proof systems of sparse-\lIPS, $\sumpowsum$-\lIPS (in large characteristic fields), and roABP-\lIPS are complete proof systems (for systems of polynomials with no boolean solutions).  The multilinear-formula-\lIPS proof system is not complete, but the depth-2 multilinear-formula-\lbIPS proof system is complete (for multilinear axioms) and can polynomially simulate sparse-\lIPS (for low-degree axioms).
\end{theoremwp*}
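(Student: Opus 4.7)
The plan is to dispatch each sub-claim separately. For the completeness of sparse-\lIPS, $\sumpowsum$-\lIPS (in large characteristic), and roABP-\lIPS, I would build on the Hilbert Nullstellensatz: since $\{f_j\}_j$ has no common zero on the boolean cube, there exist $g_j,h_i\in\F[\vx]$ with $\sum_j g_j f_j+\sum_i h_i(x_i^2-x_i)=1$, and the canonical refutation $C(\vx,\vy,\vz)=\sum_j g_j(\vx)y_j+\sum_i h_i(\vx)z_i$ is automatically linear in $\vy,\vz$. The remaining task is merely to represent this $C$ in each target class: for sparse-\lIPS by expanding the $g_j,h_i$ as sums of monomials; for $\sumpowsum$-\lIPS in large characteristic by rewriting each monomial as a sum of powers of linear forms via Fischer's identity (needing $d!$ invertible); and for roABP-\lIPS by observing that a sparse polynomial with $s$ monomials admits an roABP of width $s$ via one parallel track per monomial, with parallel summation preserving polynomial width, under the variable order $\vx,\vy,\vz$.

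For the incompleteness of multilinear-formula-\lIPS, I would exhibit a single small non-multilinear axiom admitting no multilinear-in-$\vx$ refutation. A natural candidate is $f(x)=x^3-2$ over $\F_3$, unsatisfiable on $\bits$. A candidate refutation $C(x,y,z)$, multilinear in $x$ and linear in $y,z$, must after imposing $C(x,0,0)=0$ take the form $C=c_2 y+c_3 z+c_4 xy+c_5 xz+c_6 yz+c_7 xyz$. Substituting $y=x^3-2$ and $z=x^2-x$ produces a polynomial in $x$ of degree up to six; equating the resulting seven coefficients of $1,x,\ldots,x^6$ to those of the constant polynomial $1$ yields an overdetermined linear system in six unknowns. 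I would close this out by direct elimination: the top-degree coefficients force $c_7=c_6=c_4=0$, the middle coefficients then determine $c_3,c_5$, and the $x$-coefficient is ultimately inconsistent with the constant-term equation $-2 c_2=1$.

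For the completeness of depth-2 multilinear-formula-\lbIPS on multilinear axioms and its polynomial simulation of sparse-\lIPS for low-degree axioms, the central tool is the identity $x_i^k=q_k(z_i)+r_k(z_i)x_i$ obtained by iterating $x_i^2=z_i+x_i$, which expresses any power of $x_i$ as a polynomial in $x_i,z_i$ of $x_i$-degree one whose coefficients are polynomials in $z_i$. Given a sparse-\lIPS refutation $\sum_j g_j f_j+\sum_i h_i(x_i^2-x_i)=1$, I would multilinearize each $g_j$ in $\vx$ to obtain $\tilde g_j$ of no greater sparsity, so that $R:=\sum_j\tilde g_j f_j-1$ vanishes on $\bits^n$ and has individual $x_i$-degree at most $1+\deg_{x_i}(f_j)$. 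Applying the identity monomial-by-monomial rewrites $R$ as an element of the ideal $(z_1,\ldots,z_n)\subseteq\F[\vx,\vz]$, giving a decomposition $R=\sum_i h'_i(\vx,\vz)z_i$ with each $h'_i$ multilinear in $\vx$ (though possibly non-multilinear in $\vz$). The refutation $C=\sum_j\tilde g_j(\vx)y_j-\sum_i h'_i(\vx,\vz)z_i$ is then a sum of monomials, hence depth-2, multilinear in $\vx$ and linear in $\vy$, of size polynomial in the original refutation's size with a $2^{O(d)}$ blow-up from expanding monomials of support bounded by the axiom degree $d$; this remains polynomial for low-degree axioms. For the multilinear-axiom case, the individual $x_i$-degree of $R$ is at most $2$, so only the base identity $x_i^2=z_i+x_i$ is needed and no axiom-degree bound is required.

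The main technical obstacle I expect is closing the incompleteness argument rigorously: one must rule out \emph{every} candidate multilinear refutation rather than only the ``natural'' construction, which reduces to a finite linear-algebra computation whose infeasibility depends on a careful choice of field and axiom; the $(f,\F)=(x^3-2,\F_3)$ combination above is what forces the overdetermination at the constant-term coefficient.
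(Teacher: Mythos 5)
Your handling of the three complete systems, and of the multilinear-axiom case of the \lbIPS simulation, matches the paper's route (Nullstellensatz plus representability of arbitrary polynomials in each class for the former; the single-step identity $x_i^2=z_i+x_i$ expanded multilinearly, which is exactly the paper's \autoref{res:multilin:mon}, for the latter). Two substantive issues remain. First, your incompleteness witness uses the non-multilinear axiom $x^3-2$, and your coefficient elimination (which does close: $c_7=c_6=c_4=c_3=c_5=c_2=0$ against the constant-term equation $-2c_2=1$) in fact rules out \emph{every} multilinear $C(x,y,z)$ --- including the $yz$ and $xyz$ terms --- so it kills multilinear-formula-\lbIPS for this system as well. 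That is consistent with the theorem only because \lbIPS completeness is claimed solely for multilinear axioms, but it means your example does not witness the intended contrast between \lIPS and \lbIPS. The paper's \autoref{ex:multi-form:incomplete} instead takes the \emph{multilinear} system $xy+1,\,x^2-x,\,y^2-y$: the unique multilinear $f$ with $f=1/(xy+1)$ on $\bits^2$ is $1-\frac{1}{2}xy$, so $f\cdot(xy+1)$ contains $x^2y^2$, which multilinear $g\cdot(x^2-x)+h\cdot(y^2-y)$ cannot cancel --- yet this same system \emph{is} refutable once products of boolean axioms are permitted. To exhibit the separation the statement is really about, you need a multilinear axiom.

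Second, in the simulation for general low-degree (non-multilinear) axioms, the iterated identity $x_i^k=q_k(z_i)+r_k(z_i)x_i$ yields coefficients $q_k,r_k$ of $z_i$-degree roughly $k/2$; already $x_i^4=z_i^2+2x_iz_i+z_i+x_i$ forces a $z_i^2$. So for axioms of individual degree $\ge 3$ your $h_i'(\vx,\vz)$ is not multilinear in $\vz$, and $C=\sum_j\tilde g_j y_j-\sum_i h_i'z_i$ is not computed by a multilinear formula at all. The paper sidesteps this by keeping every product at individual degree at most $2$ (multilinear $\ml(g_j)$ times \emph{multilinear} $f_j$), so only the one-step identity is ever invoked and the expansions $(\vz+\vx)^{\va}$ with $\va\le\vno$ stay multilinear; correspondingly the simulation in \autoref{res:multilin:simulate-sparse} passes through \autoref{res:multilin:multilin-lbIPS}, whose hypotheses require multilinear axioms. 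Your argument is sound exactly in that regime and should be restricted to it.
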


However, we recall that multilinear-formula-\lIPS \emph{is} complete when refuting unsatisfiable CNF formulas (\autoref{thm:GrochowPitassi14}).

We next consider the equation $\sum_{i=1}^n \alpha_i x_i-\beta$ along with the boolean axioms $\{x_i^2-x_i\}_i$.  Deciding whether this system of equations is satisfiable is the $\NP$-complete \emph{subset-sum} problem, and as such we do not expect small refutations in general (unless $\NP=\coNP$). Indeed, Impagliazzo, \Pudlak, and Sgall~\cite{IPS99} (\autoref{thm:IPS99}) have shown lower bounds for refutations in the \emph{polynomial calculus} system (and thus also the Nullstellensatz system) even when $\vaa=\vno$. Specifically, they showed that such refutations require both $\Omega(n)$-degree and $\exp(\Omega(n))$-many monomials, matching the worst-case upper bounds for these complexity measures.  In the language of this paper, they gave $\exp(\Omega(n))$-size lower bounds for refuting this system in $\sum\prod$-\lIPS (which is equivalent to the Nullstellensatz proof system).  In contrast, we establish here $\poly(n)$-size refutations for $\vaa=\vno$ in the restricted proof systems of roABP-\lIPS and \emph{depth-3} multilinear-formula-\lIPS (despite the fact that multilinear-formula-\lIPS is not complete).

\begin{theoremwp*}[\autoref{res:ips-ubs:subset:roABP}, \autoref{res:ips-ubs:subset:mult-form}]
	Let $\F$ be a field of characteristic $\chara(\F)>n$. Then the system of polynomial equations $\sum_{i=1}^n x_i-\beta$, $\{x_i^2-x_i\}_{i=1}^n$ is unsatisfiable for $\beta\in\F\setminus\{0,\ldots,n\}$, and there are explicit $\poly(n)$-size refutations within roABP-\lIPS, as well as within depth-3 multilinear-formula-\lIPS.
\end{theoremwp*}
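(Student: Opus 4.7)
The plan is to construct an explicit \lIPS refutation by building a multilinear interpolation polynomial $U(\vx)$ that represents $(\sum_i x_i - \beta)^{-1}$ on the boolean cube, and then verify this single refutation has polynomial size both as a depth-$3$ multilinear formula and as an roABP. Unsatisfiability is immediate: on $\bits^n$ the sum $\sum_i x_i$ takes only values in $\{0,1,\ldots,n\}$, which are distinct in $\F$ (since $\chara(\F) > n$) and do not include $\beta$.

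I will let $U(\vx) \in \F[\vx]$ be the unique multilinear polynomial satisfying $U(\vx) \cdot (\sum_i x_i - \beta) \equiv 1$ on $\bits^n$; it exists by boolean-cube interpolation because $\sum_i x_i - \beta$ never vanishes on $\bits^n$. Since the interpolated function depends only on $|\vx|$, $U$ is symmetric and can be written $U(\vx) = \sum_{k=0}^n \alpha_k e_k(\vx)$, where $e_k$ is the $k$-th elementary symmetric polynomial and the scalars $\alpha_k$ solve the triangular (hence invertible) system $\sum_k \alpha_k \binom{m}{k} = \tfrac{1}{m-\beta}$ for $m = 0,\ldots,n$.

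The key step is to establish the polynomial identity
\[
U(\vx) \cdot \Bigl(\sum_i x_i - \beta\Bigr) - 1 = \sum_{i=1}^n (\partial_i U)(\vx) \cdot (x_i^2 - x_i),
\]
where $\partial_i U$ is the formal partial derivative in $x_i$, equivalently the coefficient of $x_i$ in $U$ viewed as a univariate in $x_i$, which is multilinear in $\vx_{-i}$. Both sides have individual degree at most $2$ in each $x_i$ and vanish on $\bits^n$; moreover the coefficient of $x_i^2$ on the left-hand side equals $\partial_i U$, because $U$ has $x_i$-degree at most $1$ and the factor $\sum_{i'} x_{i'} - \beta$ has $x_i$-coefficient $1$. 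By the uniqueness of the decomposition of any individual-degree-$\le 2$ polynomial as a multilinear part plus $\sum_i G_i(\vx_{-i})(x_i^2 - x_i)$, the identity follows. The candidate \lIPS refutation is then
\[
C(\vx, y, \vz) = U(\vx) \cdot y - \sum_{i=1}^n (\partial_i U)(\vx) \cdot z_i,
\]
which is linear in $y$ and in each $z_i$, multilinear in $\vx$, vanishes at $y=0$ and $\vz=\vnz$, and, by the identity, evaluates to $1$ under the IPS substitution $y = \sum_i x_i - \beta,\ z_i = x_i^2 - x_i$.

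For the complexity bounds I will apply the generating identity $\prod_{i=1}^n (1 + \omega x_i) = \sum_{k=0}^n \omega^k e_k(\vx)$. Choosing $n+1$ distinct elements $\omega_0,\ldots,\omega_n \in \F$ (available since $|\F|>n$) and inverting the Vandermonde matrix $(\omega_j^k)_{j,k}$ produces scalars $\gamma_j$ with $U(\vx) = \sum_{j=0}^n \gamma_j \prod_{i=1}^n (1+\omega_j x_i)$ and $(\partial_i U)(\vx) = \sum_{j=0}^n \gamma_j \omega_j \prod_{i'\ne i}(1+\omega_j x_{i'})$. Each inner product is over pairwise-disjoint linear forms, yielding a depth-$3$ multilinear subformula of size $O(n)$ as well as a width-$1$ roABP\@. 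Assembling the formulas for $U \cdot y$ and $(\partial_i U) \cdot z_i$ and summing over $i,j$ yields $C$ as a depth-$3$ multilinear formula of size $O(n^3)$ and as an roABP of width $O(n)$ (with $O(1)$ extra states per branch $j$ used to track which $z_i$ is currently replacing the corresponding $(1+\omega_j x_i)$-factor). The main technical obstacle I anticipate is the clean verification of the key identity and of the multilinearity of $C$ itself; the size estimates then follow by direct inspection.
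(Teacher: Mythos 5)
Your proposal is correct, and it takes a genuinely different route from the paper's. The paper constructs the certificate $g$ with $g\cdot(\sum_i x_i-\beta)\equiv 1 \mod \vx^2-\vx$ as the quotient $\frac{p(\sum_i x_i)-p(\beta)}{\sum_i x_i-\beta}$ for $p(t)=\prod_{k=0}^n(t-k)$, which is naturally a $\sumpowsum$ formula; it then invokes general multilinearization machinery (telescoping through the roABP matrices for the roABP case, and Saxena's duality followed by a term-by-term linearization trick for the depth-3 multilinear-formula case) to produce the coefficients of the boolean axioms. You instead build the multilinear witness $U$ directly by interpolation in the basis of elementary symmetric polynomials, obtain the product form $U=\sum_j\gamma_j\prod_i(1+\omega_j x_i)$ from the generating identity $\prod_i(1+\omega x_i)=\sum_k\omega^k e_k$ via Vandermonde inversion, and — most elegantly — identify the boolean-axiom coefficients in closed form as $h_i=-\partial_i U$ through the identity $U\cdot(\sum_i x_i-\beta)-1=\sum_i(\partial_i U)(x_i^2-x_i)$, which is verified by noting the difference of the two sides is multilinear and vanishes on the cube. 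Your route is more self-contained and avoids the paper's duality theorem entirely (needing only $n+1$ distinct field elements, which $\chara(\F)>n$ supplies), at the cost of being specific to this symmetric axiom; the paper's machinery is heavier but reusable for general roABP-computable axioms and for arbitrary coefficient vectors $\vaa$ (with size $\poly(|A|,n)$). One small caution: your appeal to ``uniqueness of the decomposition'' of an individual-degree-$2$ polynomial should be phrased for the case at hand (each monomial of $U\cdot(\sum_i x_i-\beta)$ squares at most one variable), or replaced by the cleaner observation that the difference of the two sides is multilinear and vanishes on $\bits^n$; and your roABP width is more safely stated as $O(n^2)$ (a sum of $O(n^2)$ width-$1$ programs), which is still comfortably $\poly(n)$ in any variable order.
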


This theorem is proven by noting that the polynomial $p(t)\eqdef\prod_{k=0}^n (t-k)$ vanishes on $\sum_i x_i$ modulo $\{x_i^2-x_i\}_{i=1}^n$, but $p(\beta)$ is a nonzero constant.  This implies that $\sum_i x_i-\beta$ divides $p(\sum_i x_i)-p(\beta)$.  Denoting the quotient by $f(\vx)$, it follows that $\frac{1}{-p(\beta)}\cdot f(\vx)\cdot (\sum_i x_i-\beta)\equiv 1\mod \{x_i^2-x_i\}_{i=1}^n$, which is nearly a linear-IPS refutation except for the complexity of establishing this relation over the boolean cube.  We show that the quotient $f$ is easily expressed as a depth-3 powering circuit.  Unfortunately, proving the above equivalence to 1 modulo the boolean cube is not possible in the depth-3 powering circuit model.  However, by moving to more powerful models (such as roABPs and multilinear formulas) we can give proofs of this multilinearization to 1 and thus give proper IPS refutations.

\subsubsection{Linear-IPS Lower Bounds via Functional Lower Bounds}\label{sec:results:funct}

Having demonstrated the power of various restricted classes of IPS refutations by refuting the subset-sum axiom, we now turn to lower bounds.  We give two paradigms for establishing lower bounds, the first of which we discus here, named a \emph{functional circuit lower bound}. This idea appeared in the work of Grigoriev and Razborov~\cite{GrigorievRazborov00} as well as in the recent work of Forbes, Kumar and Saptharishi~\cite{ForbesKS16}. We briefly motivate this type of lower bound as a topic of independent interest in algebraic circuit complexity, and then discuss the lower bounds we obtain and their implications to obtaining proof complexity lower bounds.

In boolean complexity, the primary object of interest are \emph{functions}.  Generalizing slightly, one can even seek to compute functions $f:\bits^n\to\F$ for some field $\F$.  In contrast, in algebraic complexity one seeks to compute \emph{polynomials} as elements of the ring $\F[x_1,\ldots,x_n]$. These two regimes are tied by the fact that every polynomial $f\in\F[\vx]$ induces a function $\hat{f}:\bits^n\to \F$ via the evaluation $\hat{f}:\vx\mapsto f(\vx)$.  That is, the polynomial $f$ \emph{functionally computes} the function $\hat{f}$.  As an example, $x^2-x$ functionally computes the zero function despite being a nonzero polynomial.

Traditional algebraic circuit lower bounds for the $n\times n$ permanent are lower bounds for computing $\perm_n$ as an element in the ring $\F[\{x_{i,j}\}_{1\le i,j\le n}]$. This is a strong notion of ``computing the permanent'', while one can consider the weaker notion of functionally computing the permanent, that is, a polynomial $f\in\F[\{x_{i,j}\}]$ such that $f=\perm_n$ over $\bits^{n\times n}$, where $f$ is not required to equal $\perm_n$ as a polynomial.  As $\perm_n:\bits^{n\times n}\to\F$ is $\#\P$-hard (for fields of large characteristic), assuming plausible conjectures (such as the polynomial hierarchy being infinite) it follows that \emph{any} polynomial $f$ which functionally computes $\perm_n$ must require large algebraic circuits.  Unconditionally obtaining such a result is what we term a \emph{functional lower bound}.

\begin{goal}[Functional Circuit Lower Bound (\cite{GrigorievRazborov00,ForbesKS16})]
	Obtain an explicit function $\hat{f}:\bits^n\to\F$ such that for any polynomial $f\in\F[x_1,\ldots,x_n]$ satisfying $f(\vx)=\hat{f}(\vx)$ for all $\vx\in\bits^n$, it must be that $f$ requires large algebraic circuits.
\end{goal}

Obtaining such a result is challenging, in part because one must lower bound \emph{all} polynomials agreeing with the function $\hat{f}$ (of which there are infinitely many).  Prior work (\cite{GrigorievKarpinski98,GrigorievRazborov00,KumarSaptharishi15}) has established functional lower bounds for functions when computing with polynomials over constant-sized finite fields, and the recent work of Forbes, Kumar and Saptharishi~\cite{ForbesKS16} has established some lower bounds for any field. 

While it is natural to hope that existing methods would yield such lower bounds, many lower bound techniques inherently use that algebraic computation is \emph{syntactic}.  In particular, techniques involving partial derivatives (which include the partial derivative method of Nisan-Wigderson~\cite{NisanWigderson96} and the shifted partial derivative method of Kayal~\cite{Kayal12,GuptaKKS14}) cannot as is yield functional lower bounds as 
knowing a polynomial on $\bits^n$ is not enough  to conclude information about its partial derivatives.

We now explain how functional lower bounds imply lower bounds for linear-IPS refutations in certain cases.  Suppose one considers refutations of the unsatisfiable polynomial system $f(\vx),\{x_i^2-x_i\}_{i=1}^n$.  A linear-IPS refutation would yield an equation of the form $g(\vx)\cdot f(\vx)+\sum_i h_i(\vx)\cdot (x_i^2-x_i)=1$ for some polynomials $g,h_i\in\F[\vx]$.  Viewing this equation modulo the boolean cube, we have that $g(\vx)\cdot f(\vx)\equiv 1\mod \{x_i^2-x_i\}_i$.  Equivalently, since $f(\vx)$ is unsatisfiable over $\bits^n$, we see that $g(\vx)=\nicefrac{1}{f(\vx)}$ for $\vx\in\bits^n$, as $f(\vx)$ is never zero so this fraction is well-defined.  It follows that \emph{if} the function $\vx\mapsto \nicefrac{1}{f(\vx)}$ induces a functional lower bound then $g(\vx)$ must require large complexity, yielding the desired linear-IPS lower bound.

Thus, it remains to instantiate this program.  While we are successful, we should note that this approach as is seems to only yield proof complexity lower bounds for systems with one non-boolean axiom and thus cannot encode polynomial systems where each equation depends on $O(1)$ variables (such as those naturally arising from 3CNFs).

Our starting point is to observe that the subset-sum axiom already induces a weak form of functional lower bound, where the complexity is measured by degree.

\begin{theoremwp*}[\autoref{res:subsetsum:deg:ge}]
	Let $\F$ be a field of a characteristic at least $\poly(n)$ and $\beta\notin\{0,\ldots,n\}$. Then $\sum_i x_i-\beta,\{x_i^2-x_i\}_i$ is unsatisfiable and any polynomial $f\in\F[x_1,\ldots,x_n]$ with $f(\vx)=\frac{1}{\sum_i x_i-\beta}$ for $\vx\in\bits^n$, satisfies $\deg f\ge n$.
\end{theoremwp*}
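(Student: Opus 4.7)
The plan is to reduce the degree lower bound to a computation about a particular multilinear polynomial, and then evaluate that computation via partial fractions.

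First, unsatisfiability is immediate: on $\bits^n$ the value of $\sum_i x_i$ lies in $\{0,1,\ldots,n\}$, so for $\beta \notin \{0,\ldots,n\}$ the equation $\sum_i x_i = \beta$ has no boolean solution.

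For the degree bound, the key reduction is as follows. Any polynomial $f \in \F[\vx]$ can be \emph{multilinearized} modulo the boolean axioms by replacing each $x_i^k$ with $x_i$ for $k \ge 1$; call the resulting polynomial $\tilde{f}$. This process does not increase total degree (each individual exponent is capped at $1$), and $\tilde f$ agrees with $f$ on $\bits^n$. Moreover, any function $\bits^n \to \F$ has a \emph{unique} multilinear representation. So if $f$ agrees with $\vx \mapsto 1/(\sum_i x_i - \beta)$ on $\bits^n$, then $\tilde f$ equals the unique multilinear polynomial $F \in \F[\vx]$ representing this function, and $\deg f \ge \deg \tilde f = \deg F$. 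Thus it suffices to prove $\deg F = n$.

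To show $\deg F = n$, I will compute the coefficient of $x_1 \cdots x_n$ in $F$ and show it is nonzero. By Möbius inversion over the boolean lattice,
\[
  [x_1 \cdots x_n]\, F \;=\; \sum_{S \subseteq [n]} (-1)^{n-|S|} F(\mathbf{1}_S) \;=\; \sum_{k=0}^{n} (-1)^{n-k} \binom{n}{k} \frac{1}{k - \beta},
\]
where the second equality uses that $F$ is symmetric and that $\sum_i (\mathbf{1}_S)_i = |S|$. To evaluate this sum, I will use partial fractions: the standard identity
\[
  \frac{n!}{\prod_{k=0}^n (t-k)} \;=\; \sum_{k=0}^{n} \frac{(-1)^{n-k}\binom{n}{k}}{t-k}
\]
(obtainable by computing residues at each pole $t = k$) specialized at $t = \beta$ gives
\[
  [x_1 \cdots x_n]\, F \;=\; \frac{-\,n!}{\prod_{k=0}^{n}(\beta - k)}.
\]
Since $\beta \notin \{0,1,\ldots,n\}$ the denominator is a nonzero element of $\F$, and since $\chara(\F) > n$ we have $n! \neq 0$ in $\F$. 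Hence this coefficient is nonzero, so $\deg F \ge n$, and consequently $\deg f \ge n$ as required.

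The only step requiring any care is the first reduction: that multilinearization of any $f$ agreeing with a function on $\bits^n$ produces \emph{the same} multilinear polynomial $F$, regardless of $f$, and that it never increases total degree. Both facts are routine but are the conceptual engine — once established, the entire lower bound rests on a purely arithmetic identity which the partial fractions expansion handles cleanly.
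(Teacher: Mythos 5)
Your proof is correct, but it takes a different route from the paper's main argument. The paper (\autoref{res:subsetsum:deg}) first symmetrizes to show the unique multilinear representative $F$ is a linear combination $\sum_{k=0}^d \gamma_k S_{n,k}$ of elementary symmetric polynomials, then computes the recurrence $(\sum_i x_i-\beta)S_{n,k}\equiv (k+1)S_{n,k+1}+(k-\beta)S_{n,k} \bmod \vx^2-\vx$, and derives a contradiction with $F\cdot(\sum_i x_i-\beta)\equiv 1$ if $d<n$: the top term $\gamma_d(d+1)S_{n,d+1}$ cannot be cancelled. You instead compute the top coefficient of $F$ directly by interpolation over the cube and evaluate the resulting alternating sum by partial fractions, getting $[x_1\cdots x_n]F=-n!/\prod_{k=0}^n(\beta-k)\ne 0$. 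This is essentially the paper's \emph{alternate} proof given in the appendix (\autoref{res:subsetsum:multlin}), which computes all coefficients of $F$ by the same interpolation-plus-residues identity. Your explicit computation buys more than the main proof---it also yields the tight sparsity bound $2^n$ and an explicit formula for $F$ as a combination of elementary symmetric polynomials---whereas the paper's main argument is somewhat softer, never determining the coefficients. Both approaches exploit the symmetry of $\vaa=\vno$ and both correctly reduce the general case to the multilinear one via the degree-non-increasing multilinearization map; your verification of the arithmetic identity (residues at the poles $t=k$, with $n!\ne 0$ guaranteed by $\chara(\F)>n$) matches the paper's subclaim in the appendix.
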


A lower bound of $\ceil{\frac{n}{2}}+1$ was previously established by Impagliazzo, \Pudlak, and Sgall~\cite{IPS99} (\autoref{thm:IPS99}), but the bound of `$n$' (which is tight) will be crucial for our results.

We then lift this result to obtain lower bounds for stronger models of algebraic complexity. In particular, by replacing ``$x_i$'' with ``$x_iy_i$'' we show that the function $\frac{1}{\sum_i x_iy_i-\beta}$ has maximal \emph{evaluation dimension} between $\vx$ and $\vy$, which is some measure of interaction between the variables in $\vx$ and those in $\vy$ (see \autoref{sec:eval-dim}).  This measure is essentially \emph{functional}, so that one can lower bound this measure by understanding the functional behavior of the polynomial on finite sets such as the boolean cube. Our lower bound for evaluation dimension follows by examining the above degree bound. Using known relations between this complexity measure and algebraic circuit classes, we can obtain lower bounds for depth-3 powering linear-IPS.

\begin{theoremwp*}[\autoref{res:lbs-fn:lbs-ips:fixed-order}]
	Let $\F$ be a field of characteristic $\ge\poly(n)$ and $\beta\notin\{0,\ldots,n\}$.  Then $\sum_{i=1}^n x_iy_i-\beta,\{x_i^2-x_i\}_i,\{y_i^2-y_i\}_i$ is unsatisfiable and any $\sumpowsum$-\lIPS refutation requires size $\ge\exp(\Omega(n))$.
\end{theoremwp*}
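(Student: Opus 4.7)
The plan is to translate the \lIPS-refutation lower bound into a functional complexity question about polynomials matching $1/(\sum_i x_iy_i - \beta)$ on the boolean cube, and then lower bound the depth-3 powering complexity of any such polynomial via the evaluation dimension method, using the subset-sum degree bound (\autoref{res:subsetsum:deg:ge}) as the key input.

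Unsatisfiability is immediate: on $\bits^{2n}$ each $x_iy_i$ is boolean, so $\sum_i x_iy_i\in\{0,1,\ldots,n\}$, which does not contain $\beta$.

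\emph{Step 1 (reduction to functional complexity).} Let $C(\vx,\vy,w,\vu,\vv)$ be a $\sumpowsum$-\lIPS refutation of size $s$, where $w$ stands in for the axiom $\sum_i x_iy_i-\beta$, and $\vu,\vv$ for the boolean axioms $\vx^2-\vx,\vy^2-\vy$. Define $g(\vx,\vy)\eqdef(\partial_w C)(\vx,\vy,0,\vnz,\vnz)$. Since $\partial_w(\ell^d)=d\,(\partial_w\ell)\,\ell^{d-1}$ and the substitution $w=0,\vu=\vnz,\vv=\vnz$ only specializes linear forms to linear forms in $\vx,\vy$, the polynomial $g$ has a $\sumpowsum$ formula of size at most $s$. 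Because $C$ is individual-degree at most $1$ in each of $w,\vu,\vv$, plugging in the axioms and reducing modulo the boolean axioms annihilates every monomial containing some $u_i$ or $v_i$, leaving
\[
g(\vx,\vy)\cdot\left(\sum_i x_iy_i-\beta\right)\equiv 1 \pmod{\vx^2-\vx,\vy^2-\vy}.
\]
In particular, $g(\va,\vb)=1/(\sum_i a_ib_i-\beta)$ for every $\va,\vb\in\bits^n$.

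\emph{Step 2 (evaluation dimension lower bound).} Fix $\va\in\bits^n$ and let $S\eqdef\supp(\va)$. The polynomial $g(\va,\vy)\in\F[\vy]$ agrees on $\bits^n$ with $\vb\mapsto 1/(\sum_{i\in S}b_i-\beta)$, a function depending only on $\vb|_S$. Its multilinearization $p_S$ hence lies in $\F[\{y_i:i\in S\}]$. Applying \autoref{res:subsetsum:deg:ge} in $|S|$ variables gives $\deg p_S\ge|S|$, so --- being multilinear in exactly $|S|$ variables --- the monomial $\prod_{i\in S}y_i$ appears in $p_S$ with nonzero coefficient. The family $\{p_S\}_{S\subseteq[n]}$ is then linearly independent by triangularity in the subset lattice: in any purported dependence $\sum_S c_Sp_S=0$, let $S_0$ be inclusion-maximal with $c_{S_0}\neq 0$; the monomial $\prod_{i\in S_0}y_i$ has nonzero coefficient in $p_{S_0}$, cannot appear in $p_S$ for $S\not\supseteq S_0$ (some $y_i$ with $i\in S_0\setminus S$ is absent from $p_S$), and cannot appear in $p_S$ for $S\supsetneq S_0$ (excluded by maximality), forcing $c_{S_0}=0$. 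Since multilinearization mod $\vy^2-\vy$ is $\F$-linear, the preimages $\{g(\va,\vy):\va\in\bits^n\}$ span a space of dimension at least $2^n$, so $\evalDim_{\vx|\vy}(g)\ge 2^n$.

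\emph{Step 3 (upper bound on $\evalDim$ for $\sumpowsum$).} For any summand $\ell^d$ with $\ell(\vx,\vy)=\ell^{\vx}(\vx)+\ell^{\vy}(\vy)+c$, fixing $\vx=\va$ yields $(\ell^{\vx}(\va)+c+\ell^{\vy}(\vy))^d\in\spn(1,\ell^{\vy},\ldots,(\ell^{\vy})^d)$, contributing at most $d+1$ to $\evalDim_{\vx|\vy}$. Summing across a $\sumpowsum$ formula $g=\sum_k\ell_k^{d_k}$ of size $2n\cdot\sum_k(d_k+1)$ gives $\evalDim_{\vx|\vy}(g)\le\sum_k(d_k+1)\le\size(g)/(2n)$. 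Combining with Step 2, $s\ge\size(g)\ge 2n\cdot 2^n=\exp(\Omega(n))$. The main obstacle is Step 2, namely lifting a degree lower bound for \emph{one} subset-sum reciprocal into a $2^n$-dimensional linear-independence statement for the family indexed by subsets; the inclusion-triangularity argument sidesteps a direct rank computation of the Cauchy-like matrix $[1/(|S\cap T|-\beta)]_{S,T}$, which would likely require more delicate analysis.
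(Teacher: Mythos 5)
Your proof is correct and follows essentially the same route as the paper's: the reduction from a \lIPS lower bound to a functional lower bound for $\nicefrac{1}{(\sum_i x_iy_i-\beta)}$ (\autoref{res:lbs-fn_lbs-ips:lIPS}), the $2^n$-dimensional independence of the restricted subset-sum reciprocals obtained from the tight degree bound (\autoref{res:lbs-fn:dim-eval}), and the fact that depth-3 powering formulas have evaluation dimension bounded by their size. The only cosmetic differences are that you extract $g$ via $\partial_w C$ rather than by partial evaluation of the placeholder, you argue linear independence by inclusion-maximality rather than by counting distinct leading monomials, and you bound the evaluation dimension of $\sumpowsum$ formulas directly rather than routing through roABPs.
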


The above axiom only gets maximal interaction between the variables across a \emph{fixed} partition of the variables.  By introducing auxiliary variables we can create such interactions in variables across \emph{any} partition of (some) of the variables.  By again invoking results showing such structure implies computational hardness we obtain more linear-IPS lower bounds.

\begin{theoremwp*}[\autoref{res:lbs-fn:lbs-ips:vary-order}]
	Let $\F$ be a field of characteristic $\ge\poly(n)$ and $\beta\notin\{0,\ldots,\binom{2n}{2}\}$. Then $\sum_{i<j} z_{i,j}x_ix_j-\beta,\{x_i^2-x_i\}_{i=1}^n,\{z_{i,j}^2-z_{i,j}\}_{i<j}$ is unsatisfiable, and any roABP-\lIPS refutation (in any variable order) requires $\exp(\Omega(n))$-size. Further, any multilinear-formula-IPS refutation requires $n^{\Omega(\log n)}$-size, and any product-depth-$d$ multilinear-formula-IPS refutation requires $n^{\Omega((\nicefrac{n}{\log n})^{1/d}/d^2)}$-size.
\end{theoremwp*}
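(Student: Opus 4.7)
Unsatisfiability follows by counting: over $\bits^{n+\binom{n}{2}}$ the sum $\sum_{i<j}z_{i,j}x_ix_j$ is a nonnegative integer at most $\binom{n}{2}<\binom{2n}{2}$, hence cannot equal $\beta$. Both lower bounds share a restriction argument reducing to \autoref{res:lbs-fn:lbs-ips:fixed-order}. For any perfect matching $M$ between two disjoint $n/2$-sized subsets $A,B\subseteq[n]$ of the $\vx$-indices, setting $z_{i,j}=1$ for $(i,j)\in M$ and $z_{i,j}=0$ elsewhere turns the axiom $\sum_{i<j}z_{i,j}x_ix_j-\beta$ into $\sum_{(i,j)\in M}x_ix_j-\beta$, which via the relabeling $y_k:=x_{M(k)}$ for $k\in A$ is exactly the bilinear subset-sum axiom of the previous theorem with $m=n/2$ pairs. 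In each circuit model boolean restriction does not grow size, and the axioms $z_{i,j}^2-z_{i,j}$ vanish at boolean inputs, so every IPS refutation restricts to an IPS refutation of the bilinear axiom of no larger size.

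For the roABP-\lIPS bound, given any roABP refutation in variable order $\pi$, identify the cut in $\pi$ at which exactly $\lfloor n/2\rfloor$ of the $\vx$-variables lie on each side and let $A,B$ denote the two halves. Restrict $\vz$ to a perfect matching between $A$ and $B$: the resulting roABP on the $\vx$-variables (in an order placing all of $A$ before all of $B$) has width no larger than the original's. The evaluation-dimension content of \autoref{res:lbs-fn:lbs-ips:fixed-order} --- that on the boolean cube, $1/(\sum_{k=1}^m x_ky_k-\beta)$ attains maximal evaluation dimension $2^m$ across the $(\vx,\vy)$ partition --- applied with $m=n/2$ forces the roABP width to be at least $2^{n/2}$, yielding $\exp(\Omega(n))$.

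For the multilinear-formula bounds, from a size-$s$ multilinear-formula-IPS refutation $C(\vx,\vz,y_0,\vy_1,\vy_2)$ extract $g^*(\vx,\vz):=C(\vx,\vz,1,0,0)$; since $C$ is multilinear in $y_0$ and $C(\vx,\vz,0,0,0)=0$, this $g^*$ has multilinear-formula size $\le s$ and agrees with $1/(\sum_{i<j}z_{i,j}x_ix_j-\beta)$ on $\bits^{n+\binom{n}{2}}$. To avoid the mismatch between the total variable count $N=n+\binom{n}{2}$ and the natural scaling parameter $n$, fix a ``pseudorandom'' graph $K$ on $[n]$ (for concreteness, a random graph, whose bipartite subgraphs across random balanced partitions have full rank $n/2-O(\log n)$ with high probability), restrict $\vz$ to its indicator, and work with the resulting multilinear polynomial $g^*_K(\vx)$ of formula size $\le s$ on $n$ variables. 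An extension of the restriction-to-matching argument underlying \autoref{res:lbs-fn:lbs-ips:fixed-order} to full-rank bipartite bilinear forms shows that for a random balanced partition $\pi$ of $[n]$ the partial-derivative matrix $M_\pi(g^*_K)$ has rank $\ge 2^{n/2-O(\log n)}$. Feeding this near-maximal rank into Raz's~\cite{Raz09} quasi-polynomial bound yields $s\ge n^{\Omega(\log n)}$, and into Raz-Yehudayoff's~\cite{RazYehudayoff09} small-depth bound yields $s\ge n^{\Omega((n/\log n)^{1/d}/d^2)}$.

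The main technical obstacle is extending \autoref{res:lbs-fn:lbs-ips:fixed-order} from the diagonal bilinear form $\sum_k x_k y_k$ to general full-rank bilinear forms $\vx_A^{\top}K_{AB}\vx_B$ on the boolean cube, as required to handle the bipartite-crossing structure induced by restricting $\vz$ to a pseudorandom $K$. This is addressed by combining the degree-lower-bound argument of the previous theorem with a rank-preserving interpolation over the boolean cube, ensuring the $2^{n/2-o(n)}$ evaluation-matrix rank needed to drive the Raz / Raz-Yehudayoff machinery to the claimed bounds. A secondary obstacle is that the theorem is stated for full (not linear) multilinear-formula-IPS; the $y_0$-coefficient extraction above resolves this directly without requiring an analogue of \autoref{res:h-ips_v_ips} preserving multilinearity.
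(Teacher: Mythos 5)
Your unsatisfiability argument and your roABP-\lIPS argument are essentially the paper's: extract the coefficient of $y$ from the linear refutation, locate the cut of the roABP's variable order that balances the $\vx$-variables, substitute for $\vz$ the indicator of a perfect matching across that cut, and invoke the fixed-partition functional lower bound (\autoref{res:lbs-fn:roABP}). The paper packages the same idea slightly differently --- it first proves a coefficient-dimension lower bound for $f_\vz$ over the fraction field $\F(\vz)$ (\autoref{res:lbs-fn:any-order:coeff-dim}), using \autoref{res:coeff-dim:fraction-field} to pass to the matching evaluation --- but the content is the same. Your extraction $g^*=C(\vx,\vz,1,\vnz,\vnz)$ for full multilinear-formula-IPS is also fine and matches \autoref{res:lbs-fn_lbs-ips:lbIPS}.

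The multilinear-formula half of your proposal, however, has a genuine gap, and it is exactly the step you flag as ``the main technical obstacle.'' By fixing a single pseudorandom graph $K$ and substituting $\vz\leftarrow\ind{K}$ \emph{before} choosing the partition, you turn the axiom into $1/(q_K(\vx)-\beta)$ where $q_K$ is a quadratic form that, for any balanced partition $(\vu,\vv)$, contains edges inside $\vu$, edges inside $\vv$, and cross edges of varying degrees. The evaluation-dimension argument of \autoref{res:lbs-fn:dim-eval} relies crucially on the fact that substituting $\vv\leftarrow\ind{S}$ produces exactly $1/(\sum_{i\in S}u_i-\beta)$, whose multilinearization has degree exactly $|S|$ by \autoref{res:subsetsum:deg} and hence leading monomial $\prod_{i\in S}u_i$, distinct for distinct $S$. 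With a fixed $K$ the same substitution instead produces $1/(\sum_i d_S(i)u_i+(\text{quadratic in }\vu)-\beta')$, and no degree or leading-monomial control is available for such functions; note that \autoref{res:subsetsum:deg} is proved by a symmetrization argument that genuinely needs all coefficients equal to $1$ and no quadratic part. Your ``rank-preserving interpolation'' is not a known lemma --- it is where the entire difficulty lies. The paper sidesteps this: it never fixes $\vz$. Raz--Yehudayoff's \autoref{thm:full-rank-lb} is applied to $f(\vx,\vz)$ with $\vz$ as auxiliary variables living in the fraction field, the partition ranges only over $\vx$, and for each partition one certifies rank $2^n$ over $\F(\vz)$ by choosing the $\vz$-evaluation to be the matching indicator \emph{for that particular partition} (\autoref{res:coeff-dim:fraction-field}); every partition thus reduces to the already-proven diagonal case $1/(\sum_k u_kv_k-\beta)$, and nothing about general bilinear forms is needed. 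A secondary issue: even granting your claimed rank of $2^{n/2-O(\log n)}$, \autoref{thm:full-rank-lb} as stated demands full rank $2^{n}$ for a $2n$-variable polynomial, so you would additionally need a robust version of the Raz / Raz--Yehudayoff bounds.
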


Note that our result for multilinear-formulas is not just for the linear-IPS system, but actually for the full multilinear-formula-IPS system. Thus, we show that even though roABP-\lIPS and depth-3 multilinear formula-\lbIPS can refute the subset-sum axiom in polynomial size, slight variants of this axiom do not have polynomial-size refutations.

\subsubsection{Lower Bounds for Multiples}\label{sec:intro:lbs-mult}

While the above paradigm can establish super-polynomial lower bounds for \emph{linear}-IPS, it does not seem able to establish lower bounds for the general IPS proof system over non-multilinear polynomials, even for restricted classes.  This is because such systems would induce equations such as $h(\vx) f(\vx)^2+g(\vx) f(\vx)\equiv 1 \mod \{x_i^2-x_i\}_{i=1}^n$, where we need to design a computationally simple axiom $f$ so that this equation implies at least one of $h$ or $g$ is of large complexity.  In a linear-IPS proof it must be that $h$ is zero, so that for any $\vx\in\bits^n$ we can solve for $g(\vx)$, that is, $g(\vx)=\nicefrac{1}{f(\vx)}$.  However, in general knowing $f(\vx)$ does not uniquely determine $g(\vx)$ or $h(\vx)$, which makes this approach significantly more complicated.  Further, even though we can efficiently simulate IPS by linear-IPS (\autoref{res:h-ips_v_ips}) in general, this simulation increases the complexity of the proof so that even if one started with a $\cC$-IPS proof for a restricted circuit class $\cC$ the resulting \lIPS proof may not be in $\cC$-\lIPS.

As such, we introduce a second paradigm, called \emph{lower bounds for multiples}, which can yield $\cC$-IPS lower bounds for various restricted classes $\cC$.  We begin by defining this question.

\begin{goal}[Lower Bounds for Multiples]
	Design an explicit polynomial $f(\vx)$ such that for any nonzero $g(\vx)$ we have that the multiple $g(\vx)f(\vx)$ is hard to compute.
\end{goal}

We now explain how such lower bounds yield IPS lower bounds.  Consider the system $f,\{x_i^2-x_i\}_i$ with a single non-boolean axiom.  An IPS refutation is a circuit $C(\vx,y,\vz)$ such that $C(\vx,0,\vnz)=0$ and $C(\vx,f,\vx^2-\vx)=1$, where (as mentioned) $\vx^2-\vx$ denotes $\{x_i^2-x_i\}_i$.  Expressing $C(\vx,f,\vx^2-\vx)$ as a univariate in $f$, we obtain that $\sum_{i\ge 1} C_i(\vx,\vx^2-\vx) f^i = 1-C(\vx,0,\vx^2-\vx)$ for some polynomials $C_i$.  For most natural measures of circuit complexity $1-C(\vx,0,\vx^2-\vx)$ has complexity roughly bounded by that of $C$ itself.  Thus, we see that a multiple of $f$ has a small circuit, as $\left(\sum_{i\ge 1} C_i(\vx,\vx^2-\vx) f^{i-1}\right)\cdot f=1-C(\vx,0,\vx^2-\vx)$, and one can use the properties of the IPS refutation to show this is in fact a \emph{nonzero} multiple.  Thus, if we can show that all nonzero multiples of $f$ require large circuits then we rule out a small IPS refutation.

We now turn to methods for obtaining polynomials with hard multiples. Intuitively if a polynomial $f$ is hard then so should small modifications such as $f^2+x_1f$, and this intuition is supported by the result of Kaltofen~\cite{Kaltofen89} which shows that if a polynomial has a small algebraic circuit then so do all of its factors.  As a consequence, if a polynomial requires super-polynomially large algebraic circuits then so do all of its multiples. However, Kaltofen's~\cite{Kaltofen89} result is about \emph{general} algebraic circuits, and there are very limited results about the complexity of factors of \emph{restricted} algebraic circuits (\cite{DvirSY09,Oliveira15}) so that obtaining polynomials for hard multiples via factorization results seems difficult.

However, note that lower bound for multiples has a different order of quantifiers than the factoring question.  That is, Kaltofen's~\cite{Kaltofen89} result speaks about the factors of \emph{any} small circuit, while the lower bound for multiples speaks about the multiples of a \emph{single} polynomial.  Thus, it seems plausible that existing methods could yield such explicit polynomials, and indeed we show this is the case.

We begin by noting that obtaining lower bounds for multiples is a natural instantiation of the algebraic \emph{hardness versus randomness} paradigm. In particular, Heintz-Schnorr~\cite{HeintzSchnorr80} and Agrawal~\cite{Agrawal05} showed that obtaining deterministic (black-box) polynomial identity testing algorithms implies lower bounds (see \autoref{sec:PIT} for more on PIT), and we strengthen that connection here to lower bounds for multiples.  We can actually instantiate this connection, and we use slight modifications of existing PIT algorithms to show that multiples of the determinant are hard in some models.

\begin{theoremwp*}[Informal Version of \autoref{res:generator_to_lbs-mult}, \autoref{res:lbs-mult:pit:det}]
	Let $\cC$ be a restricted class of $n$-variate algebraic circuits. Full derandomization of PIT algorithms for $\cC$ yields a (weakly) explicit polynomial whose nonzero multiples require $\exp(\Omega(n))$-size as $\cC$-circuits. 

	In particular, when $\cC$ is the class of sparse polynomials, depth-3 powering formulas, $\sumpowc$ formulas (in characteristic zero), or ``every-order'' roABPs, then all nonzero multiples of the $n\times n$ determinant are $\exp(\Omega(n))$-hard in these models.
\end{theoremwp*}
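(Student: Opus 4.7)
The plan is to proceed via the algebraic hardness-versus-randomness paradigm, strengthened to handle multiples rather than just a single hard polynomial. For the first part, a full derandomization of PIT for $\cC$ is equivalent to an explicit generator $\cG:\F^k\to\F^n$ (with $k$ much smaller than $n$) such that $C\circ\cG\ne 0$ for every nonzero $C\in\cC$ of size below the derandomization threshold. Given such a $\cG$, I would construct an explicit polynomial $f\in\F[\vx]$ that vanishes on the image of $\cG$, by dimension counting on the pullback: the $\F$-linear map $h\mapsto h\circ\cG$ sends $\F[\vx]_{\le D}$ into $\F[\vy]_{\le D\deg\cG}$, and as soon as $\binom{n+D}{D}$ exceeds $\binom{k+D\deg\cG}{D\deg\cG}$ the kernel is nontrivial. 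Taking the smallest such $D$ and solving a linear system by Gaussian elimination produces a weakly explicit $f$ with $f\circ\cG\equiv 0$.

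The key observation is that multiples inherit this vanishing: for any nonzero $g\in\F[\vx]$, we have $(g\cdot f)\circ\cG = g(\cG)\cdot f(\cG) \equiv 0$. So if a nonzero multiple $g\cdot f$ were computed by a small $\cC$-circuit, that circuit would witness a nonzero element of $\cC$ fooled by $\cG$, contradicting the generator property. Tuning $\cG$ so that it fools $\cC$-circuits of size $\exp(o(n))$ yields the claimed $\exp(\Omega(n))$ lower bound on the $\cC$-size of every nonzero multiple of $f$, proving the general statement.

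To upgrade this to the concrete statement about $\det_n$ for each of the listed classes, I would take the known PIT generators from prior work (Shpilka--Volkovich and Forbes--Shpilka style maps for $\sumpowsum$, $\sumpowc$, and roABPs in every order; Kronecker/prime-based maps for sparse polynomials) and \emph{precompose} them with a polynomial map whose image lies inside the singular-matrix variety in $\F^{n\times n}$. A natural choice is to force the last row of the matrix to be a fixed $\F$-linear combination of the preceding rows; this uses only $n(n-1)$ free entries and makes $\det_n$ vanish identically on the image. If the precomposed map still fools $\cC$, then Part~1 applied with $f=\det_n$ gives the desired hardness for every nonzero multiple.

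The main obstacle is exactly this precomposition step: the class-specific PIT generator must remain effective after restricting its $n^2$-variate output to a codimension-one subvariety. For sparse polynomials and every-order roABPs this is relatively painless, since these classes are essentially closed under affine substitutions (up to modest blowups in size), and the underlying hitting-set arguments compose cleanly. For $\sumpowsum$ and $\sumpowc$ (in characteristic zero) the closure is more delicate, and the exact choice of generator --- in particular, how its parameters interact with the affine substitution enforcing singularity --- has to be tracked carefully to preserve both the fooling property and the circuit-class structure. Carrying out this compatibility check class by class is the crux that turns the generic Part~1 argument into the specific determinantal statement of Part~2.
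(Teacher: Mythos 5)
Your first part is essentially the paper's argument (\autoref{res:generator_to_lbs-mult} together with the surrounding discussion): a generator $\cvG$ with seed length $\ell<n$ has algebraically dependent coordinates, a nonzero $h$ with $h\circ\cvG=0$ is found by solving an exponentially large linear system, and multiplicativity of composition shows every nonzero multiple of $h$ is also killed by $\cvG$ and hence lies outside $\cC$. No issues there.

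The second part, however, has a genuine gap, and it sits exactly where you placed the ``crux.'' Your plan is to take a known generator and force its output into the singular matrices by fixing the last row to be a \emph{fixed} linear combination of the preceding rows. The resulting map cannot be a generator for any of the listed classes, for a reason more basic than a delicate compatibility check: the nonzero linear form $L(X)=x_{n,j}-\sum_{i<n}c_i x_{i,j}$ vanishes identically on the image of your restriction, and $L$ is an $n$-sparse polynomial, a trivially small depth-3 powering formula, and a width-$2$ roABP in every order. So the composed map fails to hit a nonzero circuit of constant size in each class, and the contradiction in the Part~1 template (``a small circuit computing $g\cdot\det_n$ would be a nonzero member of $\cC$ fooled by the map'') never materializes. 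Restricting to a codimension-$n$ affine subspace is simply too destructive; and making the linear combination's coefficients into fresh seed variables (to parametrize a dense subset of the determinant hypersurface instead) changes the map enough that the off-the-shelf PIT analyses no longer apply and the whole argument would have to be redone.

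The idea you are missing is that the paper does not restrict an existing generator at all: it builds a \emph{new} generator (\autoref{const:SVb}) with the same abstract property that the known PIT analyses actually use --- namely, that its image contains all $\ell$-sparse vectors, and that a sum of constantly many copies is again of the same form --- but whose single-copy output is the rank-one outer product $z\cdot\ind{\omega_i,\Omega}(x)\ind{\omega_j,\Omega}(y)$. Summing $\ell$ copies gives an output matrix of rank at most $\ell$, so $\det_n$ vanishes on the image for all $\ell<n$ \emph{while the fooling property for the listed classes is inherited for free} from the prior analyses (\autoref{res:SVb-gen}); no nonzero size-$s$ circuit in these classes vanishes on the image because the analyses certify hitting from the $\ell$-sparse-vectors property alone. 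In short, the tension between ``kills the determinant'' and ``fools the class'' is resolved by re-engineering the generator's internal structure, not by intersecting its image with a linear subvariety.
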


The above statement shows that \emph{derandomization} implies \emph{hardness}.  We also partly address the converse direction by arguing (\autoref{sec:lbs-mult:hard-v-rand}) that hardness-to-randomness construction of Kabanets and Impagliazzo~\cite{KabanetsImpagliazzo04} only requires lower bounds for multiples to derandomize PIT\@.  Unfortunately, this direction is harder to instantiate for restricted classes as it requires lower bounds for classes with suitable closure properties.\footnote{Although, we note that one can instantiate this connection with depth-3 powering formulas (or even $\sumpowc$ formulas) using the lower bounds for multiples developed in this paper, building on the work of Forbes~\cite{Forbes15}.  However, the resulting PIT algorithms are worse than those developed by Forbes~\cite{Forbes15}.}

Unfortunately the above result is slightly unsatisfying from a proof complexity standpoint as the (exponential-size) lower bounds for the subclasses of IPS one can derive from the above result would involve the determinant polynomial as an axiom.  While the determinant is efficiently computable, it is not computable by the above restricted circuit classes (indeed, the above result proves that).  As such, this would not fit the real goal of proof complexity which seeks to show that there are statements whose proofs must be \emph{super-polynomial larger} than the length of the statement.  Thus, if we measure the size of the IPS proof and the axioms with respect to the same circuit measure, the lower bounds for multiples approach \emph{cannot} establish such super-polynomial lower bounds. 

However, we believe that lower bounds for multiples could lead, with further ideas, to proof complexity lower bounds in the conventional sense.  That is, it seems plausible that by adding \emph{extension variables} we can convert complicated axioms to simple, local axioms by tracing through the computation of that axiom.  That is, consider the axiom $xyzw$.  This can be equivalently written as $\{a-xy,b-zw,c-ab,c\}$, where this conversion is done by considering a natural algebraic circuit for $xyzw$, replacing each gate with a new variable, and adding an axiom ensuring the new variables respect the computation of the circuit.  While we are unable to understand the role of extension variables in this work, we aim to give as simple axioms as possible whose multiples are all hard as this may facilitate future work on extension variables.

We now discuss the lower bounds for multiples we obtain.\footnote{While we discussed functional lower bounds for multilinear formulas, this class is not interesting for the lower bounds for multiples question.  This is because a multiple of a multilinear polynomial may not be multilinear, and thus clearly cannot have a multilinear formula.}

\begin{theoremwp*}[Corollaries~\ref{res:lbs-mult:sumpowsum}, \ref{res:lbs-mult:sumpowt}, \ref{res:lbs-mult:sparse-LM}, \ref{prop:good-f-roabp}, and \ref{cor:r2abp-multiples}]
	We obtain the following lower bounds for multiples.

	\begin{itemize}
		\item All nonzero multiples of $x_1\cdots x_n$ require $\exp(\Omega(n))$-size as a depth-3 powering formula (over any field), or as a $\sumpowc$ formula (in characteristic zero).
		\item All nonzero multiples of $(x_1+1)\cdots (x_n+1)$ require $\exp(\Omega(n))$-many monomials.
		\item All nonzero multiples of $\prod_i (x_i+y_i)$ require $\exp(\Omega(n))$-width as an roABP in any variable order where $\vx$ precedes $\vy$.
		\item All nonzero multiples of $\prod_{i<j} (x_i+x_j)$ require $\exp(\Omega(n))$-width as an roABP in any variable order, as well as $\exp(\Omega(n))$-width as a read-twice oblivious ABP.
			\qedhere
	\end{itemize}
\end{theoremwp*}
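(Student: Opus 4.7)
The plan is to prove each item via a complexity measure $\mu$ that is small for the respective circuit class and large for any nonzero multiple of the designated polynomial $h$, with $h$ chosen to maximize its structural interaction with multiplication.

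For item~1, I would use the partial-derivative measure $\mu(f)=\dim\mathrm{span}\{\partial^\vb f\}$ (in its shifted version for $\sumpowc$ via Kayal's method). Since $\ell^d$ has at most $d+1$ distinct partials, this measure upper-bounds size in the class. The polynomial $x_1\cdots x_n$ has $\mu=2^n$, as its partials are $\{\prod_{i\notin S}x_i : S\subseteq[n]\}$. For a nonzero multiple $f=g\cdot x_1\cdots x_n$, I would first reduce to the case $g(\vnz)\ne 0$ by shifting $\vx\mapsto\vx-\vaa$ at a point where $g(\vaa)\ne 0$ (the shifted $x_1\cdots x_n$ becomes $\prod(x_i-\alpha_i)$, whose partial-derivative dimension is still $2^n$ after rescaling coordinates). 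In this case Leibniz's rule shows $\{\partial^{\mathbf{1}_S}f : S\subseteq[n]\}$ are $2^n$ linearly independent polynomials.

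For item~2, observe that $f=g\cdot\prod(x_i+1)$ vanishes on $\{-1,0\}^n\setminus\{\vnz\}$. Writing $f=\sum_i c_i\vx^{\va_i}$, the value of $\vx^{\va_i}$ at a point in $\{-1,0\}^n$ depends only on $\mathrm{supp}(\va_i)$ together with the parities of the nonzero exponents. Grouping coefficients by support and applying M\"obius inversion on the Boolean lattice, the vanishing condition forces the aggregate coefficient associated with each $S\subseteq[n]$ to equal $(-1)^{|S|}g(\vnz)$. Hence $f$ has a monomial of support exactly $S$ for every $S\subseteq[n]$, giving $\ellzero{f}\ge 2^n$; the $g(\vnz)=0$ case is handled by translating to a point $\vaa$ where $g(\vaa)\ne 0$ and rerunning the same argument on the translated slice.

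For items~3 and~4, I would use Nisan's coefficient-matrix rank: for any partition of the variables into prefix $\vu$ and suffix $\vv$, the roABP width is at least $\mathrm{rank}(M_f(\vu\mid\vv))$, where $M_f$ is indexed by $\vu$- and $\vv$-monomials. For $\prod(x_i+y_i)$ under $\vx\mid\vy$, $M_h$ contains a $2^n\times 2^n$ permutation submatrix indexed by complementary subsets. For $\prod_{i<j}(x_i+x_j)$ under any prefix/suffix partition $\vx_I\mid\vx_{\bar{I}}$, I would identify a $2^{\Omega(n)}\times 2^{\Omega(n)}$ full-rank submatrix indexed by subsets of the cross-edges $\{(i,j):i\in I,j\in\bar{I}\}$; the read-twice extension uses the same rich structure across the larger family of partitions induced by the read-twice model. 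The multiplicative lower bound then rests on a \emph{rank-monotonicity lemma}: for these structured $h$, $\mathrm{rank}(M_{gh})\ge\mathrm{rank}(M_h)$ for any nonzero $g$, proved by tracking the leading monomial of $g$ in a graded lex order and using that convolution by its nonzero leading block is injective on the chosen full-rank submatrix of $M_h$. The main obstacle will be this rank-monotonicity step: while heuristically multiplication cannot destroy the rich pattern of $M_h$, proving it uniformly across all variable orders (item~4) and across the partitions from read-twice ABPs requires carefully choosing a submatrix of $M_h$ robust under convolution, together with a delicate leading-term analysis.
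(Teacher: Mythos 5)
Your plan correctly identifies the overall shape of the argument (a measure that is small for the circuit class and large on every nonzero multiple of the chosen polynomial), and your item~2 argument in the case $g(\vnz)\ne 0$ is a correct and genuinely more elementary route than the paper's, which instead invokes Forbes' theorem that the trailing monomial of a full-support translate of an $s$-sparse polynomial involves at most $\lg s$ variables. But there are three concrete gaps. Most importantly, the ``rank-monotonicity lemma'' underlying items~3 and~4 is exactly the unproven crux, and as literally stated (coefficient-matrix rank of $g\cdot h$ at least that of $h$ for \emph{arbitrary} nonzero $g$) it is not the statement one should aim for, since coefficient rank is not monotone under multiplication. The paper's resolution is the \emph{leading diagonal} $\LD(f)$, the leading coefficient of $f(\vx\circ\vz,\vy\circ\vz)$ viewed in $\F[\vx,\vy][\vz]$: one proves $\dim\coeffs{\vx|\vy}(f)\ge\ellzero{\LD(f)}$ and that $\LD$ is multiplicative, so $\LD(g\cdot\prod_i(x_i+y_i))=\LD(g)\cdot\prod_i(x_i+y_i)$, and the whole problem reduces to the sparsity-of-multiples statement of item~2; no rank monotonicity is ever proved. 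For ``any order'' and read-twice, the paper then factors $\prod_{i<j}(x_i+x_j)$ over $\F(\vw)$ as $\prod_k(u_k+v_k+\beta_k)$ times a remainder absorbed into $g$, and cites the Anderson et al.\ structure theorem; your plan would still have to supply the leading-term injectivity uniformly over all partitions, which is the hard part.

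Second, your degenerate case in item~2 fails: translation does not preserve sparsity, and the grid argument cannot be ``rerun on the translated slice'' because it crucially uses that one grid value is $0$, so that a monomial's evaluation depends only on its support. On a grid avoiding $0$ the implication ``vanishes at every grid point but one $\Rightarrow$ at least $2^n$ monomials'' is simply false: $f=-2+3xy-x^2y^2$ vanishes on $\{-1,-2\}^2\setminus\{(-2,-2)\}$, is nonzero at $(-2,-2)$, and has only $3<4$ monomials. Third, in item~1 the ordinary partial-derivative argument breaks in positive characteristic (where $\partial^{\vb}\vx^{\va}$ can vanish for $\vb\le\va$), while the claim for $\sumpowsum$ is over any field; and the $\sumpowc$ case is not delivered by a Leibniz computation, since it is not clear that the shifted-partial measure of $g\cdot x_1\cdots x_n$ stays large for every nonzero $g$. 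The paper sidesteps both issues with the ``robust'' leading-monomial theorems (any size-$s$ $\sumpowsum$ or $\sumpow{t}$ formula has leading monomial involving $O(t\lg s)$ variables) combined with multiplicativity of leading monomials.
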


We now briefly explain our techniques for obtaining these lower bounds, focusing on the simplest case of depth-3 powering formulas.  It follows from the partial derivative method of Nisan and Wigderson~\cite{NisanWigderson94} (see Kayal~\cite{Kayal08}) that such formulas require exponential size to compute the monomial $x_1\ldots x_n$ \emph{exactly}.  Forbes and Shpilka~\cite{ForbesShpilka13a}, in giving a PIT algorithm for this class, showed that this lower bound can be \emph{scaled down} and \emph{made robust}.  That is, if one has a size-$s$ depth-3 powering formula, it follows that \emph{if} it computes a monomial $x_{i_1}\cdots x_{i_\ell}$ for distinct $i_j$ then $\ell\le O(\log s)$ (so the lower bound is scaled down).  One can then show that regardless of what this formula actually computes the \emph{leading} monomial $x_{i_1}^{a_{i_1}}\cdots x_{i_\ell}^{a_{i_\ell}}$ (for distinct $i_j$ and positive $a_{i_j}$) must have that $\ell\le O(\log s)$.  One then notes that leading monomials are \emph{multiplicative}.  Thus, for any nonzero $g$ the leading monomial of $g\cdot x_1\ldots x_n$ involves $n$ variables so that if $g\cdot x_1\ldots x_n$ is computed in size-$s$ then $n\le O(\log s)$, giving $s\ge\exp(\Omega(n))$ as desired.  One can then obtain the other lower bounds using the same idea, though for roABPs one needs to define a leading \emph{diagonal} (refining an argument of  Forbes-Shpilka~\cite{ForbesShpilka12}).

We now conclude our IPS lower bounds.

\begin{theoremwp*}[\autoref{res:lbs-ips:mult:sumpowsum}, \autoref{res:lbs-ips:mult:roABP}]
	We obtain the following lower bounds for subclasses of IPS.

	\begin{itemize}
		\item In characteristic zero, the system of polynomials $x_1\cdots x_n,x_1+\cdots+x_n-n,\{x_i^2-x_i\}_{i=1}^n$ is unsatisfiable, and any $\sumpowsum$-IPS refutation requires $\exp(\Omega(n))$-size.

		\item In characteristic $>n$, the system of polynomials, $\prod_{i<j}(x_i+x_j-1),x_1+\cdots+x_n-n,\{x_i^2-x_i\}_i$ is unsatisfiable, and any roABP-IPS refutation (in any variable order) must be of size $\exp(\Omega(n))$.
			\qedhere
	\end{itemize}
\end{theoremwp*}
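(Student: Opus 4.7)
The plan is to apply the multiples-extraction technique from \autoref{sec:intro:lbs-mult} to both items. Both systems are unsatisfiable because the boolean axioms together with $\sum_i x_i - n = 0$ force $\vx=\vno$, at which the ``hard'' axiom evaluates to $1 \ne 0$: in the first item, $(x_1\cdots x_n)|_{\vno}=1$, and in the second, each factor $x_i+x_j-1$ equals $1$ at $\vno$, so the product is $1$. Given any IPS refutation $C(\vx,y_1,y_2,\vz)$ of size $s$ in the relevant class, where $y_1$ represents the hard axiom $f$, $y_2$ represents $g(\vx) \eqdef \sum_i x_i - n$, and $\vz$ represents $\vx^2-\vx$, I would decompose $C(\vx,y_1,y_2,\vz) = y_1 \cdot D(\vx,y_1,y_2,\vz) + E(\vx,y_2,\vz)$ along the $y_1$-axis. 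The two defining IPS identities then yield
\[
	f(\vx) \cdot D(\vx, f, g, \vx^2-\vx) \;=\; 1 - E(\vx, g(\vx), \vx^2-\vx),
\]
and the condition $C(\vx,\vnz,\vnz)=0$ forces $E(\vx,0,\vnz)=0$. Since $g(\vno)=0$ and $\vno^2-\vno=\vnz$, the right-hand side evaluates to $1$ at $\vx=\vno$. Hence it is a \emph{nonzero} polynomial, and thereby a nonzero multiple of $f$.

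For the first item, $E$ is a $\sumpowsum$ formula in $(\vx,y_2,\vz)$ of size $O(s)$. Substituting $y_2 \mapsto g$ (linear in $\vx$) and $z_i \mapsto x_i^2 - x_i$ (degree $2$ in $\vx$) converts each linear form appearing in $E$ into a polynomial in $\vx$ of degree at most $2$, so $1 - E(\vx, g, \vx^2-\vx)$ is a $\sumpow{2}$ formula of size $O(s)$. Applying \autoref{res:lbs-mult:sumpowt}, which states that in characteristic zero every nonzero multiple of $x_1\cdots x_n$ requires $\exp(\Omega(n))$-size as a $\sumpow{\O(1)}$ formula, we conclude $s \ge \exp(\Omega(n))$.

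For the second item, $E$ is an roABP in $(\vx, y_2, \vz)$ of width $s$ in some fixed variable order. I would show that $1 - E(\vx, g, \vx^2-\vx)$ is an oblivious ABP on $\vx$ of polynomial width whose read count per variable is bounded by a small constant. The substitution $z_i \mapsto x_i^2-x_i$ replaces the univariate-in-$z_i$ layer with a univariate-in-$x_i$ layer of comparable width, adding a second read of $x_i$; the substitution $y_2 \mapsto g$ uses that the $y_2$-layer is linear in $y_2$, and can be implemented by expanding that single layer into a chain of width-$O(s)$ read-once sub-layers accumulating $\sum_j x_j - n$. After the invertible affine change $x_i \mapsto x_i - \nicefrac{1}{2}$ (available since $\chara(\F) > n \ge 2$), the polynomial $\prod_{i<j}(x_i+x_j-1)$ becomes $\prod_{i<j}(x_i+x_j)$, and (a mild extension of) \autoref{cor:r2abp-multiples} shows any nonzero multiple requires $\exp(\Omega(n))$-width in the resulting read-bounded oblivious ABP model, giving $s \ge \exp(\Omega(n))$. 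The main obstacle is precisely this bookkeeping: one must verify that combining the $y_2$- and $\vz$-substitutions produces an oblivious ABP whose read count per variable is covered by the available multiples lower bound and whose width grows only polynomially in $s$ and $n$, \emph{uniformly} over all variable orders that the original refutation might choose.
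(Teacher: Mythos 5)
Your extraction of a nonzero multiple of the hard axiom is exactly the paper's \autoref{res:lbs-mult:strategy} (your $E$ is the paper's $C_0 = C(\vx,0,\vz,\vw)$, and your evaluation at $\vx=\vno$ is the same nonzeroness argument), and your treatment of the first bullet is the paper's proof of \autoref{res:lbs-ips:mult:sumpowsum} essentially verbatim: substituting the linear axiom and the quadratic boolean axioms into a size-$s$ $\sumpowsum$ formula yields a $\poly(s)$-size $\sumpow{2}$ formula, and \autoref{res:lbs-mult:sumpowt} finishes.

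The second bullet has a genuine gap, which you partly flagged yourself. First, you assume ``the $y_2$-layer is linear in $y_2$''; this would hold for roABP-\lIPS, but the statement concerns roABP-IPS refutations, where the edge labels in the $y_2$-layer are univariates in $y_2$ of degree up to $s$. Second, and more importantly, your implementation of $y_2\mapsto g$ --- expanding the $y_2$-layer into a fresh chain of layers reading $x_1,\ldots,x_n$ --- inserts a second read of every $x_j$ at the position of that layer, and the substitution $z_j\mapsto x_j^2-x_j$ then contributes a third; so your construction lands in the read-\emph{thrice} oblivious model, which \autoref{cor:r2abp-multiples} as stated does not cover. (This is patchable: \autoref{thm:read-k-eval-dim} handles read-$k$ for constant $k$, so one could re-prove that corollary with $w^6$ in place of $w^4$, but you would need to say so.) The paper avoids the extra read entirely: it interpolates $C(\vx,0,y_2,\vw)=\sum_{i\le s}C_i(\vx,\vw)\,y_2^i$, observes that each $C_i$ is a $\poly(s)$-width roABP in the induced order and each $g^i$ is a $\poly(n,s)$-width roABP in \emph{any} order (\autoref{res:sumpowsum:roABP}), and then multiplies them as roABPs \emph{in the same order} (\autoref{fact:roABP:closure}), which merges the two reads of each $x_j$ into its single existing layer via a Kronecker product rather than creating a new layer. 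Only the $\vz$-substitution then adds a read, landing exactly in the read-twice model. Finally, your affine shift $x_i\mapsto x_i-\nicefrac{1}{2}$ is unnecessary: \autoref{cor:r2abp-multiples} is already stated for $\prod_{i<j}(x_i+x_j+\alpha_{i,j})$ with arbitrary constants $\alpha_{i,j}$, so $\alpha_{i,j}=-1$ is covered directly.
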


Note that the first result is a non-standard encoding of $1=\AND(x_1,\ldots,x_n)=0$. Similarly, the second is a non-standard encoding of $\AND(x_1,\ldots,x_n)=1$ yet $\XOR(x_i,x_j)=1$ for all $i,j$.

\subsection{Organization}

The rest of the paper is organized as follows. \autoref{sec:notation} contains the basic notation for the paper. In \autoref{sec:background} we give background from algebraic complexity, including several important complexity measures such as coefficient dimension and evaluation dimension (see \autoref{sec:coeff-dim} and \autoref{sec:eval-dim}). We present our upper bounds for IPS in \autoref{sec:h-ips:ubs}. In \autoref{sec:lbs-fn} we give our functional lower bounds and from them obtain lower bounds for \lIPS. \autoref{sec:lbs-mult} contains our lower bounds for multiples of polynomials and in \autoref{sec:ips-mult} we derive lower bounds for IPS using them.  In \autoref{sec:open-problems} we list some problems which were left open by this work.

In \autoref{sec:alg-proofs} we describe various other algebraic proof systems and their relations to IPS, such as the dynamic Polynomial Calculus of Clegg, Edmonds, and Impagliazzo~\cite{CleggEI96}, the ordered formula proofs of Tzameret~\cite{Tzameret11}, and the multilinear proofs of Raz and Tzameret~\cite{RazTzameret08a}. In \autoref{sec:appendix} we give an explicit description of a multilinear polynomial occurring in our IPS upper bounds.

\section{Notation}\label{sec:notation}

In this section we briefly describe notation used in this paper. We denote $[n]\eqdef\{1,\ldots,n\}$.  For a vector $\va\in\N^n$, we denote $\vx^\va\eqdef x_1^{a_1}\cdots x_n^{a_n}$ so that in particular $\vx^\vno=\prod_{i=1}^n x_i$. The (total) degree of a monomial $\vx^\va$, denoted $\deg\vx^\va$, is equal to $\ellone{\va}\eqdef\sum_i a_i$, and the individual degree, denoted $\ideg \vx^\va$, is equal to $\ellinfty{\va}\eqdef\max\{a_i\}_i$. A monomial $\vx^\va$ depends on $\ellzero{\va}\eqdef|\{i:a_i\ne 0\}|$ many variables. Degree and individual degree can be defined for a polynomial $f$, denoted $\deg f$ and $\ideg f$ respectively, by taking the maximum over all monomials with nonzero coefficients in $f$. We will sometimes compare vectors $\va$ and $\vb$ as ``$\va\le\vb$'', which is to be interpreted coordinate-wise.  We will use $\prec$ to denote a monomial order on $\F[\vx]$, see \autoref{sec:mon-ord}.

Polynomials will often be written out in their monomial expansion.  At various points we will need to extract coefficients from polynomials.  When ``taking the coefficient of $\vy^\vb$ in $f\in\F[\vx,\vy]$'' we mean that both $\vx$ and $\vy$ are treated as variables and thus the coefficient returned is a scalar in $\F$, and this will be denoted $\coeff{\smash{\vy^\vb}}(f)$.  However, when ``taking the coefficient of $\vy^{\vb}$ in $f\in\F[\vx][\vy]$'' we mean that $\vx$ is now part of the ring of scalars, so the coefficient will be an element of $\F[\vx]$, and this coefficient will be denoted $\coeff{\vx|\vy^\vb}(f)$.

For a vector $\va\in\N^n$ we denote $\va_{\le i}\in\N^i$ to be the restriction of $\va$ to the first $i$ coordinates. For a set $S \subseteq [n]$ we let $\overline{S}$ denote the complement set.  We will denote the size-$k$ subsets of $[n]$ by $\binom{[n]}{k}$.  We will use $\ml:\F[\vx]\to\F[\vx]$ to denote the multilinearization operator, defined by \autoref{fact:multilinearization}.  We will use $\vx^2-\vx$ to denote the set of equations $\{x_i^2 -x_i\}_i$. 

To present algorithms that are field independent, this paper works in a model of computation where field operations (such as addition, multiplication, inversion and zero-testing) over $\F$ can be computed at unit cost, see for example Forbes~\cite[Appendix A]{Forbes14}. We say that an algebraic circuit is \demph{$t$-explicit} if it can be constructed in $t$ steps in this unit-cost model.

\section{Algebraic Complexity Theory Background}\label{sec:background}

In this section we state some known facts regarding the algebraic circuit classes that we will be studying. We also give some important definitions that will be used later in the paper. 

\subsection{Polynomial Identity Testing}\label{sec:PIT}

In the \emph{polynomial identity testing (PIT)} problem, we are given an algebraic circuit computing some polynomial $f$, and we have to determine whether ``$f\equiv 0$''.  That is, we are asking whether $f$ is the zero polynomial in $\F[x_1,\ldots,x_n]$. By the Schwartz-Zippel-DeMillo-Lipton Lemma~\cite{Zippel79,Schwartz80,DeMilloLipton78}, if $0\ne f \in \F[\vx]$ is a polynomial of degree $\le d$ and $S\subseteq \F$, and $\vaa\in S^n$ is chosen uniformly at random, then $f(\vaa) =0$ with probability at most\,\footnote{Note that this is non-trivial only if $d < |S| \leq |\F|$, which in particular implies that $f$ is not the zero function.} $d/|S|$. Thus, given the circuit, we can perform these evaluations efficiently, giving an efficient randomized procedure for deciding whether ``$f\equiv 0$?''.  It is an important open problem to find a derandomization of this algorithm, that is, to find a {\em deterministic} procedure for PIT that runs in polynomial time (in the size of circuit).

Note that in the randomized algorithm of Schwartz-Zippel-DeMillo-Lipton we only use the circuit to compute the evaluation $f(\vaa)$.  Such algorithms are said to run in the {\em black-box} model. In contrast, an algorithm that can access the internal structure of the circuit runs in the {\em white-box} model. It is a folklore result that efficient deterministic black-box algorithms are equivalent to constructions of small \emph{hitting sets}. That is, a hitting set is set of inputs so that any nonzero circuit from the relevant class evaluates to nonzero on at least one of the inputs in the set. For more on PIT we refer to the survey of Shpilka and Yehudayoff~\cite{SY10}. 

A related notion to that of a hitting set is that of a \emph{generator}, which is essentially a low-dimensional curve whose image contains a hitting set.  The equivalence between hitting sets and generators can be found in the above mentioned survey.

\begin{definition}\label{defn:generator}
	Let $\cC\subseteq\F[x_1,\ldots,x_n]$ be a set of polynomials.  A polynomial $\cvG:\F^\ell\to\F^n$ is a \demph{generator for $\cC$} with \demph{seed length $\ell$} if for all $f\in\cC$,
	$f\equiv 0 \text{ iff } f\circ \cvG\equiv 0$. That is, $f(\vx)=0$ in $\F[\vx]$ iff $f(\cvG(\vy))=0$ in $\F[\vy]$.
\end{definition}

In words, a generator for a circuit class $\cC$ is a mapping $\cvG:\F^\ell \to \F^n$, such that for any nonzero polynomial $f$, computed by a circuit from $\cC$, it holds that the composition $f(\cvG)$ is nonzero as well. By considering the image of $\cvG$ on $S^\ell$, where $S \subseteq \F$ is of polynomial size, we obtain a hitting set for $\cC$.

We now list some existing work on derandomizing PIT for some of the classes of polynomials we study in this paper.

\paragraph{Sparse Polynomials:} There are many papers giving efficient black-box PIT algorithms for $\sumprod$ formulas. For example, Klivans and Spielman \cite{KlivansSpielman01} gave a hitting set of polynomial size.

\paragraph{Depth-$3$ Powering Formulas:} Saxena~\cite{Saxena08} gave a polynomial time white-box PIT algorithm and  Forbes, Shpilka, and Saptharishi~\cite{ForbesSS14} gave a $s^{O(\lg\lg s)}$-size hitting set for size-$s$ depth-$3$ powering formulas.

\paragraph{$\sumpowc$ Formulas:} Forbes~\cite{Forbes15} gave an $s^{O(\lg s)}$-size hitting set for size-$s$ $\sumpowc$ formulas (in large characteristic).

\paragraph{Read-once Oblivious ABPs:} Raz and Shpilka~\cite{RazShpilka05} gave a polynomial time white-box PIT algorithm. A long sequence of papers calumniated in the work of Agrawal, Gurjar, Korwar, and Saxena~\cite{AgrawalGKS15}, who gave a $s^{O(\lg s)}$-sized hitting set for size-$s$ roABPs.

\paragraph{Read-$k$ Oblivious ABPs:} Recently, Anderson, Forbes, Saptharishi, Shpilka and Volk~\cite{AndersonFSSV16} obtained a white-box PIT algorithm running in time $2^{\tilde{O}(n^{1-1/2^{k-1}})}$ for $n$-variate $\poly(n)$-sized read-$k$ oblivious ABPs.

\subsection{Coefficient Dimension and roABPs}\label{sec:coeff-dim}

This paper proves various lower bounds on roABPs using a complexity measures known as \emph{coefficient dimension}.  In this section, we define this measures and recall basic properties. Full proofs of these claims can be found for example in the thesis of Forbes~\cite{Forbes14}.

We first define the \emph{coefficient matrix} of a polynomial, called the ``partial derivative matrix'' in the prior work of Nisan~\cite{Nisan91} and Raz~\cite{Raz09}. This matrix is formed from a polynomial $f\in\F[\vx,\vy]$ by arranging its coefficients into a matrix.  That is, the coefficient matrix has rows indexed by monomials $\vx^\va$ in $\vx$, columns indexed by monomials $\vy^\vb$ in $\vy$, and the $(\vx^\va,\vy^\vb)$-entry is the coefficient of $\vx^\va\vy^\vb$ in the polynomial $f$.  We now define this matrix, recalling that $\coeff{\vx^\va\vy^\vb}(f)$ is the coefficient of $\vx^\va\vy^\vb$ in $f$.

\begin{definition}
	Consider $f\in\F[\vx,\vy]$.  Define the \demph{coefficient matrix of $f$} as the scalar matrix 
	\[
		(C_f)_{\va,\vb}
		\eqdef
		\coeff{\vx^\va\vy^\vb}(f)
		\;,
	\]
	where coefficients are taken in $\F[\vx,\vy]$, for $\ellone{\va},\ellone{\vb}\le \deg f$.  
\end{definition}

We now give the related definition of \emph{coefficient dimension}, which looks at the dimension of the row- and column-spaces of the coefficient matrix. Recall that $\coeff{\vx|\vy^\vb}(f)$ extracts the coefficient of $\vy^\vb$ in $f$ as a polynomial in $\F[\vx][\vy]$.

\begin{definition}\label{defn:coefficient-space}
	Let $\coeffs{\vx|\vy}:\F[\vx,\vy]\to\subsets{\F[\vx]}$ be the \demph{space of $\F[\vx][\vy]$ coefficients}, defined by
	\[
		\coeffs{\vx|\vy}(f)
		\eqdef
		\left\{
			\coeff{\vx|\vy^\vb}(f)
		\right\}_{\vb\in\N^n}
		\;,
	\]
	where coefficients of $f$ are taken in $\F[\vx][\vy]$.

	Similarly, define $\coeffs{\vy|\vx}:\F[\vx,\vy]\to\subsets{\F[\vy]}$ by taking coefficients in $\F[\vy][\vx]$. 
\end{definition}

The following basic lemma shows that the rank of the coefficient matrix equals the coefficient dimension, which follows from simple linear algebra.

\begin{lemmawp}[Nisan~\cite{Nisan91}]\label{res:y-dim_eq-x-dim}
	Consider $f\in\F[\vx,\vy]$.  Then the rank of the coefficient matrix $C_f$ obeys
	\[
		\rank C_f
		=
		\dim\coeffs{\vx|\vy}(f)=\dim\coeffs{\vy|\vx}(f)
		\;.
		\qedhere
	\]
\end{lemmawp}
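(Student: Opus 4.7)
The plan is to show that each of the two coefficient spaces corresponds naturally to either the column space or the row space of $C_f$ (viewed as sets of polynomials rather than vectors), and then invoke the equality of row and column rank.

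First I would unpack the definitions. By construction, the entry $(C_f)_{\va,\vb}$ equals the scalar $\coeff{\vx^\va\vy^\vb}(f)$. Fix $\vb$; then the $\vb$-th column of $C_f$ is the sequence of scalars $\{\coeff{\vx^\va\vy^\vb}(f)\}_\va$. On the other hand,
\[
	\coeff{\vx|\vy^\vb}(f)
	=
	\sum_{\va} \coeff{\vx^\va\vy^\vb}(f)\,\vx^\va
	\in \F[\vx]
	\;,
\]
which is the polynomial whose coefficient vector (in the monomial basis $\{\vx^\va\}$) is exactly the $\vb$-th column of $C_f$. Since the linear map sending a polynomial in $\F[\vx]$ to its coefficient vector in the $\{\vx^\va\}$ basis is an $\F$-linear isomorphism, the span of $\coeffs{\vx|\vy}(f) = \{\coeff{\vx|\vy^\vb}(f)\}_\vb$ in $\F[\vx]$ has the same $\F$-dimension as the span of the columns of $C_f$ in $\F^{\{\va\}}$. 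Thus
\[
	\dim \coeffs{\vx|\vy}(f)
	\;=\;
	\operatorname{col-rank}(C_f)
	\;.
\]

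By the same argument, applied with the roles of $\vx$ and $\vy$ swapped (equivalently, by transposing $C_f$), the space $\coeffs{\vy|\vx}(f)$ corresponds to the rows of $C_f$, giving
\[
	\dim \coeffs{\vy|\vx}(f)
	\;=\;
	\operatorname{row-rank}(C_f)
	\;.
\]
Finally I would invoke the standard fact from linear algebra that row rank equals column rank, both of which equal $\rank C_f$, yielding the claimed chain of equalities.

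There is no real obstacle here: the statement is essentially a linear-algebra tautology once the correspondence between a polynomial and its coefficient vector in a fixed monomial basis is made explicit. The only mildly subtle point is keeping track of which ring the coefficients live in ($\F$ versus $\F[\vx]$ versus $\F[\vy]$) when reading the definitions of $C_f$ and of $\coeffs{\vx|\vy}$, $\coeffs{\vy|\vx}$; once that bookkeeping is pinned down the proof is a one-line invocation of row-rank$=$column-rank.
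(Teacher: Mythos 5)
Your proof is correct and is exactly the standard linear-algebra argument intended here: the paper states this lemma without proof (citing Nisan and noting it follows from simple linear algebra, with details in Forbes's thesis), and your identification of $\coeffs{\vx|\vy}(f)$ with the column space and $\coeffs{\vy|\vx}(f)$ with the row space of $C_f$, followed by row-rank equals column-rank, is precisely that argument.
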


Thus, the ordering of the partition ($(\vx,\vy)$ versus $(\vy,\vx)$) does not matter in terms of the resulting dimension. The above matrix-rank formulation of coefficient dimension can be rephrased in terms of low-rank decompositions.

\begin{lemmawp}\label{res:nisan_dim-eq-width}
	Let $f\in\F[\vx,\vy]$. Then $\dim\coeffs{\vx|\vy}(f)$ equals the minimum $r$ such that there are $\vg\in\F[\vx]^r$ and $\vh\in\F[\vy]^r$ such that $f$ can be written as $f(\vx,\vy)=\sum_{i=1}^r g_i(\vx)h_i(\vy)$.
\end{lemmawp}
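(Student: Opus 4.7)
The plan is to prove the two inequalities separately, each by an essentially linear-algebraic argument, using the already-established \autoref{res:y-dim_eq-x-dim} only implicitly (we just work directly with $\coeffs{\vx|\vy}$).

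For the ``$\le$'' direction I would assume that $f$ admits a decomposition $f(\vx,\vy)=\sum_{i=1}^r g_i(\vx) h_i(\vy)$ of size $r$, and show that $\dim\coeffs{\vx|\vy}(f)\le r$. The idea is to extract the $\vy^\vb$ coefficient of both sides in the ring $\F[\vx][\vy]$:
\[
\coeff{\vx|\vy^\vb}(f)
= \sum_{i=1}^r g_i(\vx)\cdot \coeff{\vy^\vb}(h_i),
\]
so every element of $\coeffs{\vx|\vy}(f)$ lies in the $\F$-span of $g_1,\ldots,g_r\in\F[\vx]$, giving the bound on the dimension.

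For the ``$\ge$'' direction, set $d\eqdef \dim\coeffs{\vx|\vy}(f)$ and pick an $\F$-basis $g_1,\ldots,g_d\in\F[\vx]$ of the finite-dimensional space $\spn_\F\coeffs{\vx|\vy}(f)$ (finite-dimensional because $f$ has finitely many monomials). Then for each $\vb$ there exist scalars $\alpha_{\vb,1},\ldots,\alpha_{\vb,d}\in\F$ with $\coeff{\vx|\vy^\vb}(f)=\sum_{i=1}^d \alpha_{\vb,i}\, g_i(\vx)$. Expanding $f$ in its monomial basis in $\vy$ and swapping the order of summation,
\[
f(\vx,\vy)=\sum_{\vb}\coeff{\vx|\vy^\vb}(f)\cdot \vy^\vb=\sum_{i=1}^d g_i(\vx)\Big(\sum_{\vb}\alpha_{\vb,i}\,\vy^\vb\Big)=\sum_{i=1}^d g_i(\vx) h_i(\vy),
\]
where $h_i(\vy)\eqdef \sum_{\vb}\alpha_{\vb,i}\vy^\vb\in\F[\vy]$. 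This exhibits a decomposition of width $d$, proving the minimum is at most $d$.

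There is no real obstacle; the only point requiring a moment's care is that $\coeffs{\vx|\vy}(f)$ spans a finite-dimensional subspace of $\F[\vx]$ (so that a basis exists and is finite), which is immediate since $f$ is a polynomial and hence has only finitely many nonzero coefficients $\coeff{\vx|\vy^\vb}(f)$. Combined with \autoref{res:y-dim_eq-x-dim}, this also makes the decomposition width symmetric in $\vx,\vy$, as expected.
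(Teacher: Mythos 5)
Your proof is correct. Note that the paper states this lemma without proof (it is a standard fact, with full proofs deferred to the cited thesis of Forbes), so there is no in-paper argument to compare against; your two-direction argument is exactly the standard one, amounting to the observation that a decomposition $f=\sum_{i=1}^r g_i(\vx)h_i(\vy)$ is the same thing as a rank-$r$ factorization of the coefficient matrix $C_f$ from \autoref{res:y-dim_eq-x-dim}. Both steps are sound, including the finiteness point you flag, so nothing further is needed.
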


We now state a convenient normal form for roABPs (see for example Forbes~\cite[Corollary 4.4.2]{Forbes14}).

\begin{lemmawp}\label{res:roABP-normal-form}
	A polynomial $f\in\F[x_1,\ldots,x_n]$ is computed by width-$r$ roABP iff there exist matrices $A_i(x_{i})\in\F[x_{i}]^{r\times r}$ of (individual) degree $\le \deg f$ such that $f=(\prod_{i=1}^n A_i(x_{i}))_{1,1}$. Further, this equivalence preserves explicitness of the roABPs up to $\poly(n,r,\deg f)$-factors.
\end{lemmawp}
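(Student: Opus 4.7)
The plan is to prove the two directions of the equivalence separately, using the fact that the $(1,1)$-entry of a product of matrices naturally expresses a sum over paths in a layered graph, and then to verify that the degree bound and explicitness claims can be achieved by a small post-processing step.

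For the forward direction ($\Rightarrow$), I would start from a width-$r$ roABP for $f$ with layers $V_0=\{s\},V_1,\ldots,V_{n-1},V_n=\{t\}$, where layer $i$ reads variable $x_i$. By padding each layer with isolated dummy vertices, I may assume $|V_i|=r$ for all $i$ and that $s$ and $t$ each correspond to index $1$ of their layers. Define $A_i(x_i)\in\F[x_i]^{r\times r}$ by letting $(A_i)_{u,v}$ be the edge polynomial from $u\in V_{i-1}$ to $v\in V_i$ (zero if no such edge exists). Then expanding $(\prod_{i=1}^n A_i(x_i))_{1,1}$ as a sum over index sequences reproduces the sum-over-paths definition of the roABP, giving $(\prod_i A_i)_{1,1}=f$.

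The main obstacle in the forward direction is ensuring individual degree $\le\deg f$: a priori, the edge polynomials could have degree equal to the ABP's degree $d$, which can exceed $\deg f$. I would fix this by replacing each $A_i$ with $\tilde A_i\eqdef A_i\bmod x_i^{\deg f+1}$. Writing $A_i=\tilde A_i+x_i^{\deg f+1}R_i(x_i)$ and expanding the product, any term in $\prod_i A_i$ that uses at least one $R_i$-factor contributes a polynomial divisible by $x_i^{\deg f+1}$, hence has $x_i$-degree strictly greater than $\deg f$. Since $f=(\prod_i A_i)_{1,1}$ satisfies $\ideg_{x_i}f\le\deg f$, all such ``remainder'' contributions to the $(1,1)$ entry must cancel, and we conclude $(\prod_i\tilde A_i)_{1,1}=f$, with $\deg\tilde A_i\le\deg f$ as required.

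For the backward direction ($\Leftarrow$), given $A_i(x_i)\in\F[x_i]^{r\times r}$ with $\deg A_i\le\deg f$ and $(\prod_i A_i)_{1,1}=f$, I would construct an roABP by reversing the construction above: take layers $V_0=\{s\}$, $V_1,\ldots,V_{n-1}$ each of size $r$, and $V_n=\{t\}$, with edges $s\to j\in V_1$ labeled $(A_1)_{1,j}$, edges $j\in V_{i-1}\to k\in V_i$ labeled $(A_i)_{j,k}$ for $2\le i\le n-1$, and edges $j\in V_{n-1}\to t$ labeled $(A_n)_{j,1}$. The sum over $s$-$t$ paths of products of edge labels is exactly the expansion of $(\prod_i A_i)_{1,1}=f$, and each edge label is a univariate polynomial of degree $\le\deg f$ in the appropriate single variable, so this is a width-$r$ roABP of the required degree.

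Finally, for the explicitness claim, both directions are straightforward: the forward construction extracts $O(nr^2)$ edge polynomials from the roABP and reduces each one modulo $x_i^{\deg f+1}$, which costs $\poly(n,r,\deg f)$; the backward construction simply lays out the $r\times r$ matrices as edge labels. Hence the correspondence between representations can be computed in either direction in time $\poly(n,r,\deg f)$, preserving explicitness up to the stated factors.
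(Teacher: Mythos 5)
The paper never proves this lemma---it is stated without proof and cited to Forbes's thesis \cite[Corollary 4.4.2]{Forbes14}---so there is no in-paper argument to compare against, and your proof is the standard one: the layered-graph/matrix-product correspondence in both directions, with padding of the source and sink layers to width $r$. The one genuinely delicate point, that the given roABP's edge polynomials may have degree exceeding $\deg f$, is handled correctly by your truncation $\tilde A_i \eqdef A_i \bmod x_i^{\deg f+1}$: every contribution involving a remainder factor is divisible by some $x_i^{\deg f+1}$, so its monomials all have some individual degree above $\deg f$ and hence are disjoint from those of $f$ and of $(\prod_i \tilde A_i)_{1,1}$, forcing them to cancel among themselves.
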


By splitting an roABP into such variable-disjoint inner-products one can obtain a lower bound for roABP width via coefficient dimension.  In fact, this complexity measure \emph{characterizes} roABP width.

\begin{lemmawp}\label{res:roABP-width_eq_dim-coeffs}
	Let $f\in\F[x_1,\ldots,x_n]$ be a polynomial. If $f$ is computed by a width-$r$ roABP then $r \ge \max_i\dim\coeffs{\vx_{\le i}|\vx_{>i}}(f)$. Further, $f$ is computable width-$\left(\max_i\dim\coeffs{\vx_{\le i}|\vx_{>i}}(f)\right)$ roABP.
\end{lemmawp}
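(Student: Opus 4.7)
The plan is to prove both directions of the characterization using the normal form from Lemma \ref{res:roABP-normal-form} together with the decomposition characterization of coefficient dimension in Lemma \ref{res:nisan_dim-eq-width}.

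For the lower bound direction, I would first invoke Lemma \ref{res:roABP-normal-form} to assume $f = \left(\prod_{i=1}^n A_i(x_i)\right)_{1,1}$ for matrices $A_i(x_i) \in \F[x_i]^{r \times r}$. Fix any cut $i$ and split the product at position $i$ by inserting a sum over the intermediate index:
\[
f(\vx) = \sum_{k=1}^r \underbrace{\Bigl(\prod_{j \le i} A_j(x_j)\Bigr)_{1,k}}_{\in \F[\vx_{\le i}]} \cdot \underbrace{\Bigl(\prod_{j > i} A_j(x_j)\Bigr)_{k,1}}_{\in \F[\vx_{>i}]} \;.
\]
This exhibits $f$ as a sum of $r$ products of the form $g_k(\vx_{\le i}) h_k(\vx_{>i})$, so by Lemma \ref{res:nisan_dim-eq-width} we get $\dim \coeffs{\vx_{\le i}|\vx_{>i}}(f) \le r$. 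Taking the maximum over $i$ gives the bound.

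For the upper bound direction, let $r = \max_i \dim \coeffs{\vx_{\le i}|\vx_{>i}}(f)$. By Lemma \ref{res:y-dim_eq-x-dim}, the same bound holds for $W_i \eqdef \coeffs{\vx_{>i}|\vx_{\le i}}(f) \subseteq \F[\vx_{>i}]$. The main step is to show that these spaces fit together in a read-once fashion: I would prove that for any $p \in W_i$, if we expand $p = \sum_j p_j(\vx_{>i+1}) x_{i+1}^j$, then each $p_j$ lies in $W_{i+1}$. This follows since every generator of $W_i$ has the form $q_{\va}(\vx_{>i}) = \coeff{\vx_{\le i}|\vx_{\le i}^\va}(f)$, and extracting the $x_{i+1}^j$ coefficient yields $q_{(\va,j)} \in W_{i+1}$; linearity extends this to all of $W_i$.

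Given this closure property, I would choose bases $(g_1^{(i)}, \ldots, g_{r_i}^{(i)})$ for each $W_i$ with $r_i \le r$, and expand each basis element of $W_i$ in terms of the basis of $W_{i+1}$ with coefficients in $\F[x_{i+1}]$:
\[
g_k^{(i)}(\vx_{>i}) = \sum_{\ell=1}^{r_{i+1}} M^{(i)}_{k,\ell}(x_{i+1})\, g_\ell^{(i+1)}(\vx_{>i+1}) \;,
\]
which defines matrices $M^{(i)}(x_{i+1}) \in \F[x_{i+1}]^{r_i \times r_{i+1}}$. Since $W_0 = \F\cdot f$ (take $g_1^{(0)} = f$) and $W_n \subseteq \F$, iteratively unfolding this recurrence expresses $f$ as a single matrix product $\prod_{i=1}^{n} M^{(i-1)}(x_i)$ applied to appropriate boundary vectors. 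Padding each matrix with zero rows/columns to size $r \times r$ and absorbing the boundary vectors into the first and last matrices produces the desired normal-form roABP of width $r$, and by Lemma \ref{res:roABP-normal-form} this corresponds to a width-$r$ roABP computing $f$.

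The main conceptual obstacle is verifying the closure property $p_j \in W_{i+1}$ for $p \in W_i$, which is really what justifies the read-once structure; once this is in place, the matrix construction is bookkeeping. A minor subtlety is handling the boundary spaces $W_0$ and $W_n$ correctly so that the matrix product really picks out $f$ in a $(1,1)$ entry, but this is standard and handled by the padding step.
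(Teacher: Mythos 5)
Your proof is correct: the lower bound by slicing the normal-form matrix product at layer $i$ and invoking \autoref{res:nisan_dim-eq-width}, and the upper bound by building the matrices from nested coefficient spaces via the closure property $W_i \to W_{i+1}$, is exactly the standard argument. The paper itself states this lemma without proof, deferring to Forbes's thesis~\cite{Forbes14}, and your argument matches that reference's approach, so there is nothing further to reconcile.
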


Using this complexity measure it is rather straightforward to prove the following closure properties of roABPs.

\begin{fact}\label{fact:roABP:closure}
	If $f,g\in\F[\vx]$ are computable by width-$r$ and width-$s$ roABPs respectively, then
	\begin{itemize}
		\item $f+g$ is computable by a width-$(r+s)$ roABP\@.
		\item $f\cdot g$ is computable by a width-$(rs)$ roABP\@.
	\end{itemize}

	\noindent Further, roABPs are also closed under the follow operations.
	\begin{itemize}
		\item If $f(\vx,\vy)\in\F[\vx,\vy]$ is computable by a width-$r$ roABP in some variable order then the partial substitution $f(\vx,\vaa)$, for $\vaa\in\F^{|\vy|}$, is computable by a width-$r$ roABP in the induced order on $\vx$, where the degree of this roABP is bounded by the degree of the roABP for $f$.
		\item If $f(z_1,\ldots,z_n)$ is computable by a width-$r$ roABP in variable order $z_1<\cdots<z_n$, then $f(x_1y_1,\ldots,x_ny_n)$ is computable by a $\poly(r,\ideg f)$-width roABP in variable order $x_1<y_1<\cdots<x_n<y_n$.
	\end{itemize}
	Further, these operations preserve the explicitness of the roABPs up to polynomial factors in all relevant parameters.
\end{fact}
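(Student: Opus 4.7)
The plan is to leverage the matrix normal form of \autoref{res:roABP-normal-form}: a width-$r$ roABP for $h \in \F[\vx]$ in a fixed variable order corresponds to matrices $A_i(x_{\pi(i)}) \in \F[x_{\pi(i)}]^{r\times r}$ with $h = v^T (\prod_i A_i) w$ for some $v, w \in \F^r$ (these boundary vectors can be absorbed into the first and last matrices to return to the $(1,1)$-entry form). Each of the four claimed closure properties then reduces to an elementary linear-algebraic manipulation of these matrices, with explicitness preserved because each manipulation is a $\poly$-time syntactic operation on the matrix data.

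For sums, assume $f$ and $g$ share a variable order (otherwise the claim is vacuous), and form the block-diagonal matrices $C_i = \mathrm{diag}(A_i, B_i)$ of size $(r+s)\times(r+s)$ together with stacked boundary vectors $v = (v_f;v_g)$ and $w = (w_f;w_g)$; then $v^T(\prod_i C_i)w = v_f^T(\prod_i A_i) w_f + v_g^T(\prod_i B_i) w_g = f+g$, using width $r+s$. For products, use Kronecker products $C_i = A_i \otimes B_i$ of size $rs \times rs$ with $v = v_f \otimes v_g$ and $w = w_f \otimes w_g$; the mixed-product identity $(A\otimes B)(A'\otimes B') = (AA')\otimes(BB')$ gives $v^T(\prod_i C_i)w = (v_f^T(\prod_i A_i)w_f)(v_g^T(\prod_i B_i)w_g) = fg$ of width $rs$. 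Partial substitution $f(\vx,\vaa)$ is even easier: each $\vy$-variable lives on exactly one layer, so one evaluates the corresponding matrix at the scalar entry of $\vaa$ to obtain a scalar matrix, which is then absorbed into an adjacent $\vx$-layer matrix. This preserves width and does not increase any individual $\vx$-degree.

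For the $z_i \mapsto x_iy_i$ substitution, let $d = \ideg f$ and expand $A_i(z_i) = \sum_{m=0}^d A_{i,m} z_i^m$. Define
\[
X_i(x_i) = \begin{pmatrix} I_r & x_i I_r & \cdots & x_i^d I_r \end{pmatrix}, \qquad Y_i(y_i) = \begin{pmatrix} A_{i,0} \\ y_i A_{i,1} \\ \vdots \\ y_i^d A_{i,d} \end{pmatrix},
\]
of sizes $r \times r(d+1)$ and $r(d+1) \times r$ respectively, so that $X_i(x_i)Y_i(y_i) = \sum_{m=0}^d x_i^m y_i^m A_{i,m} = A_i(x_iy_i)$. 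Then
\[
f(x_1y_1,\ldots,x_ny_n) = v^T \prod_{i=1}^n X_i(x_i)\,Y_i(y_i)\,w
\]
realizes the desired polynomial as an roABP in the order $x_1 < y_1 < \cdots < x_n < y_n$, whose layer widths alternate between $r$ and $r(d+1)$, giving maximum width $\poly(r,\ideg f)$ as claimed.

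The main (and essentially only) step that deserves care is the $z_i \mapsto x_iy_i$ case: one must verify that interleaving the rectangular $X_i, Y_i$ produces a well-defined roABP chain in the claimed variable order, and that the width bound depends polynomially on $\ideg f$ rather than $\deg f$. Everything else is routine block-matrix arithmetic, and the construction of each $A_{i,m}$, $C_i$, $X_i$, and $Y_i$ from the given matrix representations of $f$ and $g$ takes polynomial time in $n$, $r$, $s$, and $d$, yielding the stated preservation of explicitness.
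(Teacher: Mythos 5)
Your constructions are all correct, and since the paper states this result as a Fact without proof (deferring to Forbes's thesis), there is no line-by-line argument to match; but the route the paper gestures at is different from yours. The sentence preceding the Fact says the closure properties follow ``using this complexity measure,'' i.e., coefficient dimension: by \autoref{res:roABP-width_eq_dim-coeffs}, roABP width in a fixed order is characterized by $\max_i\dim\coeffs{\vx_{\le i}|\vx_{>i}}(f)$, and addition, multiplication, and substitution then reduce to subadditivity, submultiplicativity, and monotonicity of matrix rank under these operations. You instead manipulate the matrix normal form of \autoref{res:roABP-normal-form} directly (block-diagonal sums, Kronecker products, evaluation-and-absorption, and a rectangular factorization for the $z_i\mapsto x_iy_i$ substitution). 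The two routes are essentially interchangeable here---the coefficient-dimension characterization is itself proved via the matrix form---but your syntactic version has the advantage of making the explicitness claims immediate, since each step is a concrete $\poly$-time transformation of the matrix data, whereas the dimension route certifies the width bounds without directly exhibiting the new roABP.

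The one point you flag but do not actually discharge is the dependence on $\ideg f$ rather than $\deg f$ in the last item. The normal form of \autoref{res:roABP-normal-form} only guarantees matrices $A_i(z_i)$ of degree $\le\deg f$, so your width bound $r(d+1)$ a priori reads $\poly(r,\deg f)$. To take $d=\ideg f$ you need a truncation step: writing $A_i(z_i)=\sum_m A_{i,m}z_i^m$, the coefficient of $\vz^\va$ in $f$ equals $\bigl(\prod_i A_{i,a_i}\bigr)_{1,1}$, which vanishes whenever some $a_i>\ideg_{z_i}f$; hence replacing each $A_i$ by its truncation to degree $\le\ideg_{z_i}f$ alters only monomials whose coefficient in $f$ is already zero, and so computes the same polynomial. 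With that observation in place, your interleaved $X_iY_i$ chain is well-formed and gives width $r(\ideg f+1)$ as claimed.
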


We now state the extension of these techniques which yield lower bounds for read-$k$ oblivious ABPs, as recently obtained by Anderson, Forbes, Saptharishi, Shpilka and Volk~\cite{AndersonFSSV16}.

\begin{theoremwp}[{\cite{AndersonFSSV16}}]\label{thm:read-k-eval-dim}
	Let $f \in \F[x_1, \ldots, x_n]$ be a polynomial computed by a width-$w$ read-$k$ oblivious ABP\@. Then there exists a partition $\vx=(\vu,\vv,\vw)$ such that
	\begin{enumerate}
		\item $|\vu|, |\vv| \ge n/k^{O(k)}$.
		\item $|\vw| \le n/10$.
		\item $\dim_{\F(\vw)} \coeffs{\vu|\vv}(f_\vw) \le w^{2k}$, where $f_\vw$ is $f$ as a polynomial in $\F(\vw)[\vu,\vv]$.
			\qedhere
	\end{enumerate}
\end{theoremwp}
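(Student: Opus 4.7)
The plan is to identify $2k$ cut positions $0 < c_1 < c_2 < \cdots < c_{2k} < kn$ in the sequence of layer indices, which splits the $kn$ layers into $2k+1$ consecutive chunks. Color the chunks alternately as ``$\vu$-chunks'' (odd-indexed) and ``$\vv$-chunks'' (even-indexed), let $\vu$ be the set of variables $x_i$ all of whose $k$ layer-occurrences lie in $\vu$-chunks, define $\vv$ analogously, and place the remaining variables into $\vw$. It then suffices to choose the cuts so that $|\vw| \le n/10$ and $|\vu|, |\vv| \ge n/k^{O(k)}$.

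Granting such a partition, here is how to bound $\dim_{\F(\vw)} \coeffs{\vu|\vv}(f_{\vw})$. After substituting scalars for $\vw$, every $\vu$-variable is read only during a $\vu$-chunk and every $\vv$-variable only during a $\vv$-chunk, so by the normal-form statement of Lemma~\ref{res:roABP-normal-form} the polynomial $f_{\vw}$ can be written as an entry of a product $M_1(\vu) M_2(\vv) M_3(\vu) \cdots M_{2k+1}(\vu)$ of $w\times w$ matrices whose entries alternate between depending only on $\vu$ and only on $\vv$. Inserting the $w$ possible intermediate states at each of the $2k$ block boundaries expands this into a sum $f_{\vw} = \sum_{\vs \in [w]^{2k}} g_{\vs}(\vu)\, h_{\vs}(\vv)$, and Lemma~\ref{res:nisan_dim-eq-width} then immediately gives $\dim_{\F(\vw)} \coeffs{\vu|\vv}(f_{\vw}) \le w^{2k}$.

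The combinatorial heart of the argument is choosing the cut positions so that both size bounds hold. My plan is to first refine the layer sequence into $L = \Theta(k)$ equal-sized ``fine'' chunks and associate to each variable $x_i$ the signature $T_i \subseteq [L]$ of fine chunks containing an occurrence of $x_i$, with $|T_i| \le k$. Since there are only $\binom{L}{k} = k^{O(k)}$ possible signatures, pigeonhole yields two disjoint ``popular'' signatures $T^*, T^{**}$, each of size $\le k$ and each shared by at least $n/k^{O(k)}$ variables; choosing the $2k$ coarse cuts to align with fine-chunk boundaries so that $T^*$ lies entirely within odd coarse chunks and $T^{**}$ entirely within even ones forces the $T^*$-variables into $\vu$ and the $T^{**}$-variables into $\vv$, yielding the required lower bounds on $|\vu|, |\vv|$.

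The main obstacle is enforcing $|\vw| \le n/10$ simultaneously with the above. A purely random choice of cuts gives only $|\vu|+|\vv| \ge n \cdot 2^{1-k}$ in expectation, which is too weak for large $k$, while a pure signature-pigeonhole argument does not directly control $|\vw|$ at all. I expect this to be resolved by combining the signature pigeonhole with an averaging step over random shifts of the cut positions within a sufficiently fine grid: since each variable's $k$ occurrences contribute at most $2k$ ``boundary events'' under such a shift, the expected number of straddling variables can be driven below $n/10$, and by a union bound the popular-signature structure is preserved for a positive fraction of the perturbed cuts, so a single choice achieves both bounds at once. Reconciling these two opposing steps---the discrete signature pigeonhole and the continuous perturbation---is the technical hurdle I would concentrate on.
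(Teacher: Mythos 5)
First, note that the paper offers no proof of this statement: it is imported as a black box from Anderson, Forbes, Saptharishi, Shpilka and Volk~\cite{AndersonFSSV16}, so your attempt must be judged against the argument there. The first half of your plan is correct and is indeed how the cited proof concludes: once the $kn$ layers are grouped into at most $2k+1$ consecutive blocks that alternately read only variables of $\vu\cup\vw$ and only variables of $\vv\cup\vw$, the polynomial $f_\vw$ is the $(1,1)$-entry of a product $M_1(\vu)M_2(\vv)\cdots M_{2k+1}(\vu)$ of $w\times w$ matrices over $\F(\vw)$, and expanding over the $w^{2k}$ tuples of intermediate states together with \autoref{res:nisan_dim-eq-width} gives $\dim_{\F(\vw)}\coeffs{\vu|\vv}(f_\vw)\le w^{2k}$.

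The genuine gap is in the combinatorial half, and it is not a matter of ``reconciling'' your two steps: the mechanism you propose for $|\vw|\le n/10$ cannot work. In your scheme a variable is classified only if \emph{all} $k$ of its occurrences land in coarse chunks of the same colour. The variables you lose are therefore not those straddling a cut, but those whose occurrences sit in the interiors of two differently coloured chunks; no shift of the cut positions within a fine grid reclassifies such a variable, so the averaging step attacks the wrong obstruction. Concretely, for $k=2$ take $L=2k+1=5$ equal fine chunks and a read-twice sequence in which, for each of the $\binom{5}{2}=10$ pairs of chunks, exactly $n/10$ variables have one occurrence in each chunk of the pair (each chunk then carries exactly $4\cdot\frac{n}{10}=\frac{2n}{5}$ occurrences, so this is realizable, and it can be made invariant under the shifts you describe). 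A colouring with class sizes $(1,4)$ classifies $60\%$ of the variables but puts none of them in the minority class, and a colouring with sizes $(2,3)$ classifies only $\bigl(\binom{2}{2}+\binom{3}{2}\bigr)/\binom{5}{2}=40\%$ of them, so every choice violates either $|\vu|,|\vv|\ge n/k^{O(k)}$ or $|\vw|\le n/10$. Enlarging $L$ does not help, since the adversarial sequence is chosen after $L$ is fixed and the required separation must then happen \emph{inside} your fine chunks. (A smaller issue: even granting the framework, the existence of a second popular signature disjoint from the first needs an argument, e.g.\ that a single fine chunk can occur in at most $kn/L$ signatures.)

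The actual content of the combinatorial lemma in \cite{AndersonFSSV16} is precisely that the cut points, and the assignment of the resulting blocks to $\vu$ and $\vv$, can be chosen \emph{adaptively} from the read sequence itself; this is done by a recursive argument on $k$ in which the sets $\vu$ and $\vv$ shrink by a $\mathrm{poly}(k)$ factor at each level (whence $k^{O(k)}$) while the variables discarded into $\vw$ are kept below a total budget of $n/10$. For instance, if every variable is read once in the first half of the layers and once in the second half, the right choice is of the form $\vu=I\cap J$ and $\vw=I\mathbin{\triangle}J$ for a suitable short interval $I$ of the first-half order and $J$ of the second-half order---a construction that no fixed equal-length chunking followed by a colouring can produce. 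Until you supply an argument of this adaptive kind, the partition whose existence the theorem asserts has not been constructed.
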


\subsection{Evaluation Dimension}\label{sec:eval-dim}

While coefficient dimension measures the size of a polynomial $f(\vx,\vy)$ by taking all coefficients in $\vy$, \emph{evaluation dimension} is a complexity measure due to Saptharishi~\cite{Saptharishi12} that measures the size by taking all possible evaluations in $\vy$ over the field.  This measure will be important for our applications as one can restrict such evaluations to the boolean cube and obtain circuit lower bounds for computing $f(\vx,\vy)$ as a \emph{polynomial} via its induced \emph{function} on the boolean cube. We begin with the definition.

\begin{definition}[Saptharishi~\cite{Saptharishi12}]\label{defn:evaluation-space}
	Let $S\subseteq \F$. Let $\evals{\vx|\vy,S}:\F[\vx,\vy]\to\subsets{\F[\vx]}$ be the \demph{space of $\F[\vx][\vy]$ evaluations over $S$}, defined by
	\[
		\evals{\vx|\vy,S}(f(\vx,\vy))
		\eqdef
		\left\{
			f(\vx,\vbb)
		\right\}_{\vbb\in S^{|\vy|}}
		\;.
	\]
	Define $\evals{\vx|\vy}:\F[\vx,\vy]\to\subsets{\F[\vx]}$ to be $\evals{\vx|\vy,S}$ when $S=\F$.

	Similarly, define $\evals{\vy|\vx,S}:\F[\vx,\vy]\to\subsets{\F[\vy]}$ by replacing $\vx$ with all possible evaluations $\vaa\in S^{|\vx|}$, and likewise define $\evals{\vy|\vx}:\F[\vx,\vy]\to\subsets{\F[\vy]}$.
\end{definition}

The equivalence between evaluation dimension and coefficient dimension was shown by Forbes-Shpilka~\cite{ForbesShpilka13b} by appealing to interpolation.

\begin{lemmawp}[Forbes-Shpilka~\cite{ForbesShpilka13b}]\label{res:evals_eq-coeffs}
	Let $f\in\F[\vx,\vy]$.  For any $S\subseteq\F$ we have that $\evals{\vx|\vy,S}(f)\subseteq\spn \coeffs{\vx|\vy}(f)$ so that $\dim \evals{\vx|\vy,S}(f)\le \dim \coeffs{\vx|\vy}(f)$. In particular, if $|S|>\ideg f$ then $\dim\evals{\vx|\vy,S}(f)=\dim\coeffs{\vx|\vy}(f)$.
\end{lemmawp}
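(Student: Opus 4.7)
The plan is to prove both inclusions by expanding $f$ in the monomial basis in $\vy$ and then using polynomial interpolation for the converse.

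For the first containment, I would write $f(\vx,\vy) = \sum_{\vb} c_\vb(\vx)\, \vy^\vb$ where $c_\vb(\vx) \in \F[\vx]$ (so $\coeffs{\vx|\vy}(f) = \{c_\vb(\vx)\}_\vb$). Then for any $\vbb \in S^{|\vy|}$,
\[
    f(\vx,\vbb) = \sum_\vb c_\vb(\vx)\, \vbb^\vb,
\]
which is a scalar linear combination (with scalars $\vbb^\vb \in \F$) of elements of $\coeffs{\vx|\vy}(f)$. Thus $\evals{\vx|\vy,S}(f) \subseteq \spn \coeffs{\vx|\vy}(f)$ and the dimension inequality is immediate.

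For the reverse containment when $|S| > d := \ideg f$, the idea is to invert the evaluation process by Lagrange interpolation. Pick distinct $s_0, \ldots, s_d \in S$. For a univariate polynomial of degree at most $d$, each coefficient is a fixed $\F$-linear combination of evaluations at $s_0, \ldots, s_d$ (by invertibility of the Vandermonde matrix on $s_0, \ldots, s_d$). Applying this variable by variable to each $y_j$ in turn, I would show that for every exponent vector $\vb$ with $\ellinfty{\vb} \le d$, there exist scalars $\{\lambda_{\vb, \vs}\}_{\vs \in \{s_0,\ldots,s_d\}^{|\vy|}} \subseteq \F$ such that
\[
    c_\vb(\vx) = \sum_{\vs \in \{s_0,\ldots,s_d\}^{|\vy|}} \lambda_{\vb,\vs}\, f(\vx, \vs).
\]
Since each $f(\vx,\vs)$ lies in $\evals{\vx|\vy,S}(f)$, this gives $\coeffs{\vx|\vy}(f) \subseteq \spn \evals{\vx|\vy,S}(f)$, which together with the first containment yields equality of spans and hence of dimensions.

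There is no real obstacle here; the proof is entirely a routine application of monomial expansion and multivariate Lagrange interpolation. The only care needed is in asserting that iterating univariate interpolation across the $|\vy|$ coordinates really does produce the desired scalars (this follows formally from tensor-product invertibility of the Vandermonde, or equivalently by induction on $|\vy|$, first interpolating out $y_1$, then $y_2$, and so on, at each stage producing polynomials with individual degree at most $d$ in the remaining $y$-variables so the hypothesis $|S| > d$ continues to apply).
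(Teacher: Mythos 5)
Your proposal is correct. Note that the paper states this lemma without proof (it is a \emph{lemmawp}, deferring to Forbes--Shpilka~\cite{ForbesShpilka13b}), and your argument --- the forward containment by expanding $f$ in the $\vy$-monomial basis, and the reverse containment via tensor-product Vandermonde/Lagrange interpolation on a grid $\{s_0,\ldots,s_d\}^{|\vy|}\subseteq S^{|\vy|}$ when $|S|>\ideg f$ --- is precisely the standard proof from that reference; the care you flag about iterating univariate interpolation coordinate-by-coordinate is handled exactly as you describe.
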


While evaluation dimension and coefficient dimension are equivalent when the field is large enough, when restricting our attention to inputs from the boolean cube this equivalence no longer holds (in particular, we have to consider all polynomials that obtain the same values on the boolean cube and not just one polynomial), but evaluation dimension will be still be helpful as it will always lower bound coefficient dimension.

\subsection{Multilinear Polynomials and Multilinear Formulas}

We now turn to multilinear polynomials and classes that respect multilinearity such as multilinear formulas. We first state some well-known facts about multilinear polynomials.

\begin{fact}\label{fact:multilinearization}
	For any two multilinear polynomials $f,g\in\F[x_1,\ldots,x_n]$, $f=g$ as polynomials iff they agree on the boolean cube $\bits^n$.  That is, $f=g$ iff  $f|_{\bits^n}=g|_{\bits^n}$.

	Further, there is a \demph{multilinearization} map $\ml:\F[\vx]\to\F[\vx]$ such that for any $f,g\in\F[\vx]$,
	\begin{enumerate}
		\item $\ml(f)$ is multilinear.
		\item $f$ and $\ml(f)$ agree on the boolean cube, that is, $f|_{\bits^n}=\ml(f)|_{\bits^n}$.
		\item $\deg \ml(f)\le \deg f$.
		\item $\ml(fg)=\ml(\ml(f)\ml(g))$.
		\item $\ml$ is linear, so that for any $\alpha,\beta\in \F$, $\ml(\alpha f+\beta g)=\alpha \ml(f)+\beta\ml(g)$.
		\item $\ml(x_1^{a_1}\cdots x_n^{a_n})=\prod_i x_i^{\max\{a_i,1\}}$.
		\item If $f$ is the sum of at most $s$ monomials ($s$-sparse) then so is $\ml(f)$.
	\end{enumerate}
	Also, if $\hat{f}$ is a function $\bits^n\to\F$ that only depends on the coordinates in $S\subseteq[n]$, then the unique multilinear polynomial $f$ agreeing with $\hat{f}$ on $\bits^n$ is a polynomial only in $\{x_i\}_{i\in S}$.

	One can also extend the multilinearization map $\ml:\F[\vx]\to\F[\vx]$ to matrices $\ml:\F[\vx]^{r\times r}\to\F[\vx]^{r\times r}$ by applying the map entry-wise, and the above properties still hold.
\end{fact}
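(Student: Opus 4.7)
The plan is to define $\ml$ explicitly on monomials and extend by linearity, then establish Part 1 (uniqueness of a multilinear polynomial from its values on $\bits^n$) as a standalone linear-algebra fact, and finally invoke Part 1 to collapse the remaining identities --- especially the partial multiplicativity (iv) --- to mere evaluation checks on the cube.

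Concretely, I would define $\ml$ on a monomial by sending $x_1^{a_1}\cdots x_n^{a_n}$ to the product of those $x_i$ for which $a_i\ge 1$ (so positive exponents collapse to $1$, zero exponents remain absent), and extend $\F$-linearly to $\F[\vx]$. From this definition I can immediately read off linearity (v), the explicit formula (vi), the multilinearity of the image (i), the degree nonincrease (iii), and sparsity preservation (vii) --- the last because each input monomial maps to a single multilinear monomial, so the number of distinct monomials appearing in $\ml(f)$ can only drop relative to $f$. Agreement on $\bits^n$ (property (ii)) I would verify monomial by monomial using $b^a = b$ for $b\in\bits$ and $a\ge 1$, then extend by linearity to all of $\F[\vx]$.

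For Part 1, I would use dimension counting: the space of multilinear polynomials in $n$ variables has $\F$-dimension $2^n$, and the evaluation map to $\F^{\bits^n}$ is surjective via the Lagrange basis $L_\vb(\vx) \eqdef \prod_i (b_i x_i + (1-b_i)(1-x_i))$ indexed by $\vb\in\bits^n$; equal source and target dimensions then force the evaluation map to be an isomorphism, whence uniqueness. (Alternatively one can induct on $n$, splitting a multilinear $f$ as $f_0(\vx')+x_n f_1(\vx')$ with $\vx' = (x_1,\ldots,x_{n-1})$, and noting $f_0 = f|_{x_n=0}$ and $f_0+f_1 = f|_{x_n=1}$.)

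With Part 1 in hand, property (iv) reduces to a one-line check: both $\ml(fg)$ and $\ml(\ml(f)\ml(g))$ are multilinear by (i), and by (ii) they both agree pointwise with $\vb \mapsto f(\vb)g(\vb)$ on $\bits^n$, so Part 1 forces them equal as polynomials. The claim about $\hat{f}:\bits^n\to\F$ depending only on coordinates in $S\subseteq[n]$ follows by the same dimension count restricted to the $2^{|S|}$-dimensional space of multilinear polynomials in $\{x_i\}_{i\in S}$ alone, which evaluates isomorphically onto $\F^{\bits^S}$; uniqueness within all of $\F[\vx]$ is then Part 1 applied again. The matrix extension is automatic, since every property was derived entry-wise and both matrix addition and matrix multiplication interact correctly with $\ml$ thanks to (v) and (iv). I do not expect any genuine obstacle --- the real content is simply that $\ml$ is a well-defined $\F$-linear retraction of $\F[\vx]$ onto its multilinear subspace --- and the only organizational trick is to prove Part 1 before (iv) so that no direct expansion of $fg$ is ever needed.
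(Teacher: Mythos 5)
Your proof is correct and complete; note that the paper states this as a \emph{Fact} with no proof at all (it is standard background), so there is nothing to compare against --- your route (define $\ml$ on monomials, extend linearly, prove the uniqueness statement by a dimension count with the Lagrange basis, then reduce item 4 and the matrix extension to evaluation checks on $\bits^n$) is the canonical one. One small point worth flagging: the paper's item 6 reads $\ml(x_1^{a_1}\cdots x_n^{a_n})=\prod_i x_i^{\max\{a_i,1\}}$, which is evidently a typo for $\prod_i x_i^{\min\{a_i,1\}}$ (otherwise $\ml(1)=x_1\cdots x_n$, contradicting item 2); your definition silently uses the correct $\min$ version, which is the one the rest of the paper relies on.
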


Throughout the rest of this paper `$\ml$' will denote the multilinearization operator. Raz~\cite{Raz09,Raz06} gave lower bounds for multilinear formulas using the above notion of coefficient dimension, and Raz-Yehudayoff~\cite{RazYehudayoff08,RazYehudayoff09} gave simplifications and extensions to constant-depth multilinear formulas.

\begin{theoremwp}[Raz-Yehudayoff~\cite{Raz09,RazYehudayoff09}]\label{thm:full-rank-lb}
	Let $f\in\F[x_1,\ldots,x_{2n},\vz]$ be a multilinear polynomial in the set of variables $\vx$ and auxiliary variables $\vz$.  Let $f_\vz$ denote the polynomial $f$ in the ring $\F[\vz][\vx]$.  Suppose that for any partition $\vx=(\vu,\vv)$ with $|\vu|=|\vv|=n$ that
	\[
		\dim_{\F(\vz)} \coeffs{\vu|\vv} f_\vz \ge 2^n
		\;.
	\]
	Then $f$ requires $\ge n^{\Omega(\log n)}$-size to be computed as a multilinear formula, and for $d=o(\nicefrac{\log n}{\log\log n})$, $f$ requires $n^{\Omega((\nicefrac{n}{\log n})^{\nicefrac{1}{d}}/d^2)}$-size to be computed as a multilinear formula of product-depth-$d$.
\end{theoremwp}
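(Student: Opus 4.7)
The plan is to follow the random-partition strategy of Raz~\cite{Raz09} and Raz-Yehudayoff~\cite{RazYehudayoff09}, working over the function field $\F(\vz)$ so that the auxiliary variables are treated as scalars and every gate of the formula computes a polynomial in $\F(\vz)[\vx]$. Under this reduction, the hypothesis becomes: $f_\vz \in \F(\vz)[\vx]$ is a multilinear polynomial in $2n$ variables whose partial derivative matrix has rank $\geq 2^n$ for every balanced partition of $\vx$, and we wish to show $f_\vz$ requires a large multilinear formula over $\F(\vz)$.

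First I would recall that multilinearity is a strong \emph{syntactic} constraint: at every multiplication gate $v = v_1 \cdot v_2$ of a multilinear formula, the variable sets of the subformulas rooted at $v_1$ and $v_2$ must be disjoint (else the product would not be multilinear). Consequently, if $v$ computes $g_1(\vy_1) \cdot g_2(\vy_2)$ with disjoint $\vy_1, \vy_2 \subseteq \vx$, then under any partition $\vx = (\vu,\vv)$ the coefficient matrix of $g_1 g_2$ factors as a tensor (Kronecker) product $M_{g_1} \otimes M_{g_2}$, so its rank is $\rank(M_{g_1}) \cdot \rank(M_{g_2}) \leq 2^{\min(|\vy_1 \cap \vu|, |\vy_1 \cap \vv|) + \min(|\vy_2 \cap \vu|, |\vy_2 \cap \vv|)}$. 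Under a uniformly random balanced partition, a single variable set $\vy_i$ of size close to $n$ will, with positive probability, be split in a fairly \emph{unbalanced} way, forcing the rank at $v$ well below the maximum $2^n$.

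Second, I would invoke the structural lemma that any multilinear formula of size $s$ can be converted, without size blow-up, into one in which along every root-to-leaf path the subformulas have a geometric decrease in the number of variables; in particular one can locate a ``central'' multiplication gate whose subformula contains between $n/3$ and $2n/3$ of the $\vx$-variables. Iterating this cut one obtains $\approx \log s$ nested central gates. For a random balanced partition of $\vx$, a careful union bound shows that with positive probability each of these central gates has its two variable-disjoint subformulas split in an unbalanced manner, causing the rank contributed by that subformula to drop multiplicatively by $2^{-\Omega(\log^2 s)}$ relative to the trivial bound $2^n$. Expanding $f$ as a sum over subformulas and summing the ranks yields $\rank M_{f_\vz} \leq s \cdot 2^{n - \Omega(\log^2 s)}$; demanding this to be $\geq 2^n$ forces $s \geq n^{\Omega(\log n)}$.

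Finally, for the product-depth-$d$ bound, I would use the Raz-Yehudayoff refinement: when depth is bounded, the formula admits a finer hierarchical decomposition in which each of the $d$ levels yields a random-partition rank drop of roughly $2^{-(n/\log n)^{1/d}/d^2}$, composing across levels to give the claimed size bound. The main obstacle throughout is the probabilistic analysis of the random balanced partition, and in particular proving that with positive probability \emph{all} the relevant central gates are simultaneously split in an unbalanced fashion; this requires delicate entropy-style estimates and a careful choice of the hierarchical decomposition so that the events controlling different gates are sufficiently independent for a union bound to succeed.
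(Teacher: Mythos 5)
A point of order first: the paper does not prove this statement at all --- it is stated as a cited theorem (in the paper's ``without proof'' theorem environment) imported from Raz and Raz--Yehudayoff, so there is no internal proof to compare yours against; what you have written is a reconstruction of the original argument. Your high-level plan is indeed the standard one: pass to $\F(\vz)$ (this step is fine, since a formula multilinear in $\vx,\vz$ stays multilinear in $\vx$ once the $\vz$-leaves are read as scalars of $\F(\vz)$), use syntactic multilinearity so that product gates multiply variable-disjoint subformulas, use multiplicativity of coefficient-matrix rank across variable-disjoint products, decompose the formula into a sum of ``log-products,'' and finish with a random balanced partition plus a union bound.

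However, the quantitative core as written does not work. From your claimed bound $\rank \le s\cdot 2^{\,n-\Omega(\log^2 s)}$, demanding rank $\ge 2^n$ gives $\log s \ge \Omega(\log^2 s)$, which forces $s=O(1)$ rather than $s\ge n^{\Omega(\log n)}$ --- a sign the bookkeeping is off. In the actual argument the per-term rank deficiency is not $\Theta(\log^2 s)$: the structural lemma writes $f$ as a sum of at most $s$ (or $\poly(s)$) terms, each a product of $k=\Theta(\log n)$ variable-disjoint factors whose sizes decrease geometrically but remain polynomially large, $n^{\Omega(1)}$ (this is precisely why the cutting is stopped after $\Theta(\log n)$ steps instead of being driven down to constant-size pieces --- the per-factor imbalance estimate needs large factors); for a single term, the probability that \emph{all} $k$ factors are split nearly evenly under a random balanced partition is $n^{-\Omega(\log n)}$, and otherwise some factor has imbalance $n^{\Omega(1)}$, giving that term rank at most $2^{\,n-n^{\Omega(1)}}$, a deficiency far exceeding $\log s$. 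The bound $s\ge n^{\Omega(\log n)}$ then emerges from the union bound over the at most $s$ terms: exactly when $s \le n^{\epsilon\log n}$ does the union bound succeed, producing one partition on which $\rank \le s\cdot 2^{\,n-n^{\Omega(1)}} < 2^n$ and contradicting the full-rank hypothesis. Relatedly, no independence \emph{across} terms is needed for the union bound; the (near-)independence you invoke is needed only \emph{within} a term, across its disjoint factors, to obtain the $n^{-\Omega(\log n)}$ estimate. Your sketch of the product-depth-$d$ case is directionally right (bounded depth forces products with many more, roughly $n^{\Omega(1/d)}$, disjoint factors, hence a much smaller per-term probability and the stronger bound), but it needs the same repair of where the deficiency and the final size bound actually come from.
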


\subsection{Depth-3 Powering Formulas}

In this section we review facts about depth-3 powering formulas.  We begin with the \emph{duality trick} of Saxena~\cite{Saxena08}, which shows that one can convert a power of a linear form to a sum of products of univariate polynomials.

\begin{theoremwp}[Saxena's Duality Trick~\cite{ShpilkaWigderson01,Saxena08,ForbesGS13}]\label{res:sumpowsum:duality}
	Let $n\ge 1$, and $d\ge 0$. If $|\F|\ge nd+1$, then there are $\poly(n,d)$-explicit univariates $f_{i,j}\in\F[x_i]$ such that
	\[
		(x_1+\cdots+x_n)^d=\sum_{i=1}^s f_{i,1}(x_1)\cdots f_{i,n}(x_n)
		\;,
	\]
	where $\deg f_{i,j}\le d$ and $s=(nd+1)(d+1)$.
\end{theoremwp}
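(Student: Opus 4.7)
The plan is to realize $(x_1+\cdots+x_n)^d$ as a single coefficient (in an auxiliary indeterminate $t$) of a polynomial $H(t,\vx)$ that already factors as a product of univariates in each $x_i$, and then to convert ``extracting a coefficient'' into a scalar combination of evaluations in $t$ via Lagrange interpolation, an operation that preserves the product-of-univariates structure.

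Concretely, I would take
\[
	H(t,\vx) \eqdef \prod_{i=1}^n p_i(t,x_i)
	\spaced{\text{with}}
	p_i(t,x_i) \eqdef \sum_{j=0}^d \tfrac{(tx_i)^j}{j!}\in\F[t,x_i]\,,
\]
and verify (i) each $p_i$ has degree $\le d$ in $x_i$ and $\le d$ in $t$, so $\deg_t H\le nd$; (ii) $d!\cdot[t^d]H(t,\vx)=(x_1+\cdots+x_n)^d$. Property (ii) unfolds as follows: the contribution to $[t^d]H$ is $\sum_{\vj\ge\vnz,\,|\vj|=d}\prod_i x_i^{j_i}/j_i!$, where the truncation $j_i\le d$ built into $p_i$ is automatic from $|\vj|=d$, and this sum equals $(1/d!)\sum_{|\vj|=d}\binom{d}{\vj}\prod_i x_i^{j_i}=(x_1+\cdots+x_n)^d/d!$ by the multinomial theorem.

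Next, since $H$ has $t$-degree at most $nd$ and $|\F|\ge nd+1$, I can choose distinct $\beta_0,\ldots,\beta_{nd}\in\F$, and standard Lagrange interpolation supplies scalars $\alpha_k\in\F$ with $[t^d]H(t,\vx)=\sum_k\alpha_k H(\beta_k,\vx)$. Each evaluation $H(\beta_k,\vx)=\prod_i p_i(\beta_k,x_i)$ is a product of univariates in $x_i$ of degree $\le d$; setting $f_{k,i}(x_i)\eqdef p_i(\beta_k,x_i)$ for $i\ge 2$ and absorbing the scalar $d!\,\alpha_k$ into $f_{k,1}$ yields $(x_1+\cdots+x_n)^d=\sum_k\prod_i f_{k,i}(x_i)$ with $nd+1$ summands. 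All constructions (the coefficients of $p_i$, the Lagrange weights $\alpha_k$, and the evaluations at $\beta_k$) are computable in time polynomial in $n$ and $d$, establishing the explicitness claim.

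The main obstacle is that the above manipulations use the $1/j!$ coefficients inside $p_i$ and a final division by $d!$, which demand $\chara(\F)>d$, whereas $|\F|\ge nd+1$ alone allows fields such as $\F=\F_{2^m}$ with very small characteristic. To make the argument characteristic-free one replaces $p_i$ with a division-free surrogate such as $p_i(t,x_i)\eqdef(1+tx_i)^d$, whose $t^d$-coefficient is $\sum_{|\vj|=d}\prod_i\binom{d}{j_i}x_i^{j_i}$ rather than the desired multinomial expansion, and then corrects the discrepancy between $\prod_i\binom{d}{j_i}$ and $\binom{d}{\vj}$ by a second Lagrange interpolation in an extra auxiliary variable of $t$-degree $\le d$; this is the plausible source of the factor $(d+1)$ in the stated count $s=(nd+1)(d+1)$. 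Modulo this adjustment, everything else is a routine interpolation calculation, and the cited works of Shpilka--Wigderson, Saxena, and Forbes--Gupta--Shpilka take exactly this shape.
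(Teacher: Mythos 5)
Your first construction is correct as far as it goes, but it only reproves Saxena's original version of the statement: the truncated-exponential factors $\sum_{j\le d}(tx_i)^j/j!$ and the final division by $d!$ require $\chara(\F)=0$ or $\chara(\F)>d$, and they give $s=nd+1$ summands. The theorem as quoted here (the paper states it without proof) is deliberately the characteristic-free version --- the only hypothesis is $|\F|\ge nd+1$, which is satisfied for instance by $\F=\F_{2^m}$ --- and the paper attributes the parameters $s=(nd+1)(d+1)$ to Forbes--Gupta--Shpilka precisely because Saxena's argument does not apply over small characteristic; the extra factor $(d+1)$ is the price of that extension. So the entire burden of the claim rests on your final paragraph, and that is where the gap lies.

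The proposed fix is not a proof. Replacing the truncated exponential by $(1+tx_i)^d$ assigns to the monomial $\vx^{\vj}$ (with $j_1+\cdots+j_n=d$) the coefficient $\prod_i\binom{d}{j_i}$, whereas you need $\binom{d}{j_1,\ldots,j_n}$, and ``correcting the discrepancy by a second Lagrange interpolation'' has no mechanism behind it: interpolation only forms scalar combinations of evaluations, so any expression $\sum_k\alpha_k\prod_i g_i(\beta_k,x_i)$ gives $\vx^{\vj}$ the coefficient $\sum_k\alpha_k\prod_i c_{i,j_i,k}$, and you have not exhibited bivariate factors whose coefficient extraction produces the multinomial pattern --- indeed your surrogate $(1+tx_i)^d$ contains no second auxiliary variable to interpolate over. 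The mismatch is also not a benign normalization: over $\F_{2^m}$ with $n=d=3$, the coefficient of $x_1x_2x_3$ in $(x_1+x_2+x_3)^3$ is $6\equiv 0$, while in $[t^3]\prod_{i=1}^3(1+tx_i)^3$ it is $\binom{3}{1}^3=27\equiv 1$, so the surrogate has strictly larger support than the target and no monomial-independent correction can repair it. Producing a product of univariates (in the $x_i$, with auxiliary variables) whose appropriate coefficient equals $(x_1+\cdots+x_n)^d$ over an arbitrary large enough field is exactly the content of the cited Forbes--Gupta--Shpilka argument, and that idea is missing from your proposal.
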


The original proof of Saxena~\cite{Saxena08} only worked over fields of large enough characteristic, and gave $s=nd+1$. A similar version of this trick also appeared in Shpilka-Wigderson~\cite{ShpilkaWigderson01}. The parameters we use here are from the proof of Forbes, Gupta, and Shpilka~\cite{ForbesGS13}, which has the advantage of working over any large enough field.

Noting that the product $f_{i,1}(x_1)\cdots f_{i,n}(x_n)$ trivially has a width-1 roABP (in any variable order), it follows that $(x_1+\cdots+x_n)^d$ has a $\poly(n,d)$-width roABP over a large enough field.  Thus, size-$s$ $\sumpowsum$ formulas have $\poly(s)$-size roABPs over large enough fields by appealing to closure properties of roABPs (\autoref{fact:roABP:closure}).  As it turns out, this result also holds over any field as Forbes-Shpilka~\cite{ForbesShpilka13b} adapted Saxena's~\cite{Saxena08} duality to work over any field.  Their version works over any field, but loses the above clean form (sum of product of univariates).  

\begin{theoremwp}[Forbes-Shpilka~\cite{ForbesShpilka13b}]\label{res:sumpowsum:roABP}
	Let $f\in\F[\vx]$ be expressed as $f(\vx)=\sum_{i=1}^s (\alpha_{i,0}+\alpha_{i,1}x_i+\cdots+\alpha_{i,n}x_n)^{d_i}$.  Then $f$ is computable by a $\poly(r,n)$-explicit width-$r$ roABP of degree $\max_i\{d_i\}$, in any variable order, where $r=\sum_i (d_i+1)$.
\end{theoremwp}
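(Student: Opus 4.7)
The plan is to reduce to the single-term case and handle each summand $L_i^{d_i}$ individually, then combine using the closure of roABPs under addition from \autoref{fact:roABP:closure}. Since the stated width $r = \sum_i (d_i+1)$ is the sum of the widths $d_i+1$, and the stated degree is the maximum of the $d_i$, this reduction is lossless, so it suffices to prove: for any affine form $L(\vx) = \alpha_0 + \alpha_1 x_1 + \cdots + \alpha_n x_n$ and any $d \ge 0$, the polynomial $L^d$ has an explicit width-$(d+1)$ roABP of degree $d$ in any variable order.

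First I would verify the width bound via coefficient dimension. Fix an arbitrary variable order; by renaming assume it is $x_1 < \cdots < x_n$. For any split position $i$, write $L = L_1(\vx_{\le i}) + L_2(\vx_{> i})$ where $L_1 = \alpha_0 + \sum_{j \le i}\alpha_j x_j$ and $L_2 = \sum_{j > i}\alpha_j x_j$. Then the binomial expansion
\[
    L^d = \sum_{k=0}^{d} \binom{d}{k}\, L_1(\vx_{\le i})^k \cdot L_2(\vx_{> i})^{d-k}
\]
expresses $L^d$ as a sum of $d+1$ products of a polynomial in $\vx_{\le i}$ and a polynomial in $\vx_{> i}$, so \autoref{res:nisan_dim-eq-width} gives $\dim \coeffs{\vx_{\le i}|\vx_{> i}}(L^d) \le d+1$. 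Since this holds for every split $i$, \autoref{res:roABP-width_eq_dim-coeffs} yields a width-$(d+1)$ roABP in the chosen order.

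To upgrade this to an \emph{explicit} construction (which the existence-only statement \autoref{res:roABP-width_eq_dim-coeffs} does not directly provide), I would exhibit the transition matrices in closed form via \autoref{res:roABP-normal-form}. Define partial sums $S_k \eqdef \alpha_0 + \sum_{l \le k} \alpha_l x_l$, so $S_0 = \alpha_0$ and $S_n = L$. Using $S_k = S_{k-1} + \alpha_k x_k$, we have
\[
    S_k^j = \sum_{i=0}^{j} \binom{j}{i} S_{k-1}^i (\alpha_k x_k)^{j-i}
    \;.
\]
Thus the $(d+1)\times(d+1)$ matrices $A_k(x_k)$ with entries $(A_k)_{i,j} = \binom{j}{i}(\alpha_k x_k)^{j-i}$ for $i \le j$ (and $0$ otherwise) satisfy $(\prod_{k=1}^n A_k(x_k))_{i,j} = S_0^{\,?}\cdot\ldots$; carrying out the telescoping carefully with initial state $(1,S_0,S_0^2,\ldots,S_0^d)$ (which is absorbed into a first constant matrix) shows the top-right entry of $\prod_k A_k(x_k)$ equals $S_n^d = L^d$. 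Each entry is a univariate of degree at most $d$ in $x_k$ and is computable in $\poly(n,d)$ time, giving $\poly(n,d)$-explicitness. For an arbitrary variable order, one simply relabels the partial sums to follow the prescribed order; the construction is otherwise unchanged.

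Finally, summing these per-term roABPs via \autoref{fact:roABP:closure} produces a width-$r$, degree-$(\max_i d_i)$ roABP for $f$ in the chosen order, with explicitness preserved up to polynomial factors. The only step requiring care is the explicit matrix construction, since the existing lemmas package width as a dimension bound; once one writes out the transition matrices above, characteristic is irrelevant because the identity $S_k^j = \sum_i \binom{j}{i} S_{k-1}^i (\alpha_k x_k)^{j-i}$ is a formal identity valid over any field (vanishing binomial coefficients simply kill some terms without affecting correctness), which is precisely why this argument succeeds where the duality trick of \autoref{res:sumpowsum:duality} needs a large field.
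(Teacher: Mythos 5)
Your proposal is correct and follows essentially the same route the paper sketches right after the theorem statement: split the linear form at each cut, apply the binomial theorem to get $d+1$ variable-disjoint products, bound $\dim\coeffs{\vx_{\le i}|\vx_{>i}}(L^d)$ by $d+1$ via \autoref{res:nisan_dim-eq-width} and \autoref{res:roABP-width_eq_dim-coeffs}, and sum over the $s$ linear forms using \autoref{fact:roABP:closure}. Your additional explicit transition matrices $(A_k)_{i,j}=\binom{j}{i}(\alpha_k x_k)^{j-i}$ (which multiply as $A(t)A(s)=A(t+s)$, so the product with the constant matrix for $\alpha_0$ absorbed indeed yields $L^d$ in a corner entry) correctly supply the explicitness that the paper defers to the cited reference, and your observation that the identity is characteristic-free is exactly why this statement, unlike \autoref{res:sumpowsum:duality}, holds over any field.
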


One way to see this claim is to observe that for any variable partition, a linear function can be expressed as the sum of two variable-disjoint linear functions $\ell(\vx_1,\vx_2)=\ell_1(\vx_1)+\ell_2(\vx_2)$.  By the binomial theorem, the $d$-th power of this expression is a summation of $d+1$ variable-disjoint products, which implies a coefficient dimension upper bound of $d+1$ (\autoref{res:nisan_dim-eq-width}) and thus also an roABP-width upper bound (\autoref{res:roABP-width_eq_dim-coeffs}).  One can then sum over the linear forms.

While this simulation suffices for obtaining roABP upper bounds, we will also want the clean form obtained via duality for application to multilinear-formula IPS proofs of the subset-sum axiom (\autoref{res:ips-ubs:subset:mult-form}).

\subsection{Monomial Orders}\label{sec:mon-ord}

We recall here the definition and properties of a \emph{monomial order}, following Cox, Little and O'Shea~\cite{CoxLittleOShea07}.  We first fix the definition of a \emph{monomial} in our context.

\begin{definition}
	A \demph{monomial} in $\F[x_1,\ldots,x_n]$ is a polynomial of the form $\vx^\va=x_1^{a_1}\cdots x_n^{a_n}$ for $\va\in\N^n$.
\end{definition}

We will sometimes abuse notation and associate a monomial $\vx^\va$ with its exponent vector $\va$, so that we can extend this order to the exponent vectors. Note that in this definition ``$1$'' is a monomial, and that scalar multiples of monomials such as $2x$ are not considered monomials. We now define a monomial order, which will be total order on monomials with certain natural properties.

\begin{definition}
	A \demph{monomial ordering} is a total order $\prec$ on the monomials in $\F[\vx]$ such that
	\begin{itemize}
		\item For all $\va\in\N^n\setminus\{\vnz\}$, $1\prec \vx^{\va}$.
		\item For all $\va,\vb,\vc\in\N^n$, $\vx^\va\prec\vx^\vb$ implies $\vx^{\va+\vc}\prec\vx^{\vb+\vc}$.
	\end{itemize}

	For nonzero $f\in\F[\vx]$, the \demph{leading monomial of $f$ (with respect to a monomial order $\prec$)}, denoted $\LM(f)$, is the largest monomial in $\supp(f)\eqdef\{\vx^\va:\coeff{\vx^\va}(f)\ne 0\}$ with respect to the monomial order $\prec$. The \demph{trailing monomial of $f$}, denoted $\TM(f)$, is defined analogously to be the smallest monomial in $\supp(f)$. The zero polynomial has neither leading nor trailing monomial.

	For nonzero $f\in\F[\vx]$, the \demph{leading (resp.\ trailing) coefficient of $f$}, denoted $\LC(f)$ (resp.\ $\TC(f)$), is $\coeff{\vx^\va}(f)$ where $\vx^\va=\LM(f)$ (resp.\ $\vx^\va=\TM(f)$).
\end{definition}

Henceforth in this paper we will assume $\F[\vx]$ is equipped with some monomial order $\prec$.  The results in this paper will hold for \emph{any} monomial order.  However, for concreteness, one can consider the lexicographic ordering on monomials, which is easily seen to be a monomial ordering (see also Cox, Little and O'Shea~\cite{CoxLittleOShea07}).  

We begin with a simple lemma about how taking leading or trailing monomials (or coefficients) is homomorphic with respect to multiplication.

\begin{lemma}\label{res:hom_LM-TM_mult}
	Let $f,g\in\F[\vx]$ be nonzero polynomials. Then the leading monomial and trailing monomials and coefficients are homomorphic with respect to multiplication, that is, $\LM(fg)=\LM(f)\LM(g)$ and $\TM(fg)=\TM(f)\TM(g)$, as well as $\LC(fg)=\LC(f)\LC(g)$ and $\TC(fg)=\TC(f)\TC(g)$.
\end{lemma}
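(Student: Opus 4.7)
The plan is to prove all four identities in parallel by the same monomial-bookkeeping argument, handling the leading case explicitly and then invoking symmetry for the trailing case. Write $f=\sum_{\va}c_\va\vx^\va$ and $g=\sum_{\vb}d_\vb\vx^\vb$, and set $\va^\star$ and $\vb^\star$ so that $\LM(f)=\vx^{\va^\star}$ and $\LM(g)=\vx^{\vb^\star}$, with $c_{\va^\star}=\LC(f)$ and $d_{\vb^\star}=\LC(g)$. Expanding, $fg=\sum_{\vc}\bigl(\sum_{\va+\vb=\vc}c_\va d_\vb\bigr)\vx^{\vc}$, so I need to check two things: (i) no $\vx^\vc$ with $\vc\succ\va^\star+\vb^\star$ has a nonzero coefficient in $fg$, and (ii) the coefficient of $\vx^{\va^\star+\vb^\star}$ in $fg$ is nonzero and equals $c_{\va^\star}d_{\vb^\star}$.

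For (i), take any pair $(\va,\vb)$ with $c_\va d_\vb\neq 0$. Since $c_\va,d_\vb$ are nonzero, the definition of leading monomial gives $\vx^\va\preceq\vx^{\va^\star}$ and $\vx^\vb\preceq\vx^{\vb^\star}$. Apply the second axiom of a monomial order (adding $\vb$ to both sides, then $\va^\star$ to both sides, using transitivity of $\preceq$) to conclude $\vx^{\va+\vb}\preceq\vx^{\va^\star+\vb^\star}$. This rules out any strictly larger monomial appearing.

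For (ii), I want to show the only contribution to the coefficient of $\vx^{\va^\star+\vb^\star}$ in $fg$ comes from the pair $(\va^\star,\vb^\star)$. Suppose $(\va,\vb)$ with $c_\va d_\vb\neq 0$ satisfies $\va+\vb=\va^\star+\vb^\star$. If $\vx^\va\prec\vx^{\va^\star}$, then by the monomial-order axioms $\vx^{\va+\vb}\prec\vx^{\va^\star+\vb}\preceq\vx^{\va^\star+\vb^\star}$, contradicting equality of the exponent vectors; the same argument rules out $\vx^\vb\prec\vx^{\vb^\star}$. Hence $\va=\va^\star$ and $\vb=\vb^\star$, so the coefficient in question is exactly $c_{\va^\star}d_{\vb^\star}=\LC(f)\LC(g)$, which is nonzero because $\F[\vx]$ is a domain. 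Combined with (i), this yields both $\LM(fg)=\LM(f)\LM(g)$ and $\LC(fg)=\LC(f)\LC(g)$.

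The trailing claim follows the same template. Let $\va^\star,\vb^\star$ now denote the exponents of $\TM(f),\TM(g)$, so $\vx^\va\succeq\vx^{\va^\star}$ and $\vx^\vb\succeq\vx^{\vb^\star}$ whenever $c_\va d_\vb\neq 0$. The same two-step chain (using axiom 2 applied to the reverse order $\succeq$, which is again compatible with addition) gives $\vx^{\va+\vb}\succeq\vx^{\va^\star+\vb^\star}$, so no smaller monomial occurs, and the only contribution at exponent $\va^\star+\vb^\star$ is $\TC(f)\TC(g)\neq 0$. There is no real obstacle here beyond being careful that the strict inequalities propagate through additive shifts: the second monomial-order axiom is exactly the tool that lets a strict inequality on one factor become a strict inequality on the product of exponents, which is the one place the argument could go wrong if stated sloppily.
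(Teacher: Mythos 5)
Your proof is correct and follows essentially the same route as the paper's: expand the product, use the second monomial-order axiom (compatibility with adding exponents) to show every contributing monomial is $\preceq \LM(f)\LM(g)$, and identify the coefficient there as $\LC(f)\LC(g)\neq 0$, with the trailing case by symmetry. The only cosmetic difference is that you argue uniqueness of the pair $(\va^\star,\vb^\star)$ contributing at the top exponent, while the paper groups the cross terms and notes they are strictly smaller; these are the same argument.
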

\begin{proof}
	We do the proof for leading monomials and coefficients, the claim for trailing monomials and coefficients is symmetric.

	Let $f(\vx)=\sum_\va \alpha_\va \vx^\va$ and $g(\vx)=\sum_\vb \beta_\vb \vx^\vb$.  Isolating the leading monomials,
	\begin{align*}
		f(\vx)=&\LC(f)\cdot \LM(f)+\sum_{\vx^\va\prec\LM(f)} \alpha_\va \vx^\va,&
		g(\vx)=&\LC(g)\cdot \LM(g)+\sum_{\vx^\vb\prec\LM(g)} \beta_\vb \vx^\vb,
	\end{align*}
	with $\LC(f)=\alpha_{\LM(f)}$ and $\LC(g)=\beta_{\LM(g)}$ being nonzero.  Thus,
	\begin{multline*}
		f(\vx)g(\vx)=\LC(f)\LC(g) \cdot \LM(f)\LM(g)
		+\LC(f)\LM(f)\left(\sum_{\vx^\vb\prec\LM(g)} \beta_\vb \vx^\vb\right)\\
		+\LC(g)\LM(g)\left(\sum_{\vx^\va\prec\LM(f)} \alpha_\va \vx^\va\right)
		+\left(\sum_{\vx^\va\prec\LM(f)} \alpha_\va \vx^\va\right)\left(\sum_{\vx^\vb\prec\LM(g)} \beta_\vb \vx^\vb\right).
	\end{multline*}
	Using that $\vx^\va\vx^\vb\prec \LM(f) \LM(g)$ whenever $\vx^\va\prec\LM(f)$ or $\vx^\vb\prec\LM(g)$ due to the definition of a monomial order, we have that $\LM(f)\LM(g)$ is indeed the maximal monomial in the above expression with nonzero coefficient, and as its coefficient is $\LC(f)\LC(g)$.
\end{proof}

We now recall the well-known fact that for any set of polynomials the dimension of their span in $\F[\vx]$ is equal to the number of distinct leading or trailing monomials in their span. 

\begin{lemmawp}\label{res:dim-eq-num-TM-spn}
	Let $S\subseteq\F[\vx]$ be a set of polynomials.  Then $\dim \spn S=\nLM{\spn S}=\nTM{\spn S}$.  In particular, $\dim \spn S\ge \nLM{S},\nTM{S}$.
\end{lemmawp}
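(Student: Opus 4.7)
The plan is to exhibit an explicit basis for $\spn S$ indexed by $\LM(\spn S \setminus \{0\})$, which immediately gives $\dim \spn S = \nLM{\spn S}$; the argument for $\nTM{\spn S}$ is analogous. This is a Gaussian-elimination style reduction familiar from Gr\"obner basis theory.

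Set $V = \spn S$ and $L = \LM(V \setminus \{0\})$. For each $m \in L$ I would pick a witness $f_m \in V$ with $\LM(f_m) = m$. To establish linear independence of $\{f_m\}_{m \in L}$, consider a finite vanishing combination $\sum_{m \in F} c_m f_m = 0$ with all $c_m \ne 0$ and let $m^\star = \max_\prec F$. For each $m \in F$ with $m \ne m^\star$ we have $\LM(f_m) = m \prec m^\star$, so $m^\star \notin \supp(f_m)$. The $m^\star$-coefficient of $\sum_m c_m f_m$ therefore equals $c_{m^\star} \LC(f_{m^\star}) \ne 0$, contradicting the vanishing; so $\dim V \ge |L|$. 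For the reverse inequality I would show spanning by a reduction: given nonzero $g \in V$, let $m_1 = \LM(g) \in L$ and set $g_1 = g - (\LC(g)/\LC(f_{m_1})) f_{m_1}$, so $g_1 = 0$ or $\LM(g_1) \prec m_1$. Iterating produces a $\prec$-descending sequence of leading monomials; since any monomial order is a well-order on the monomials of $\F[\vx]$ (Dickson's lemma), the process terminates and exhibits $g$ as a finite $\F$-linear combination of the $f_m$. Hence $\dim V = |L| = \nLM{V}$.

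The identity $\dim V = \nTM{V}$ follows by the symmetric argument with witnesses $h_m$ satisfying $\TM(h_m) = m$: linear independence uses the $\prec$-smallest trailing monomial in a vanishing combination, while spanning reduces a nonzero $g \in V$ via $\TM(g)$. The resulting sequence of trailing monomials is strictly $\prec$-increasing rather than descending, but termination is immediate in the finite-dimensional regime relevant to this paper, since all reductions stay within the (finite) union of supports of a finite generating set of $V$. Finally, $S \subseteq \spn S$ gives $\LM(S) \subseteq \LM(\spn S)$ and similarly for $\TM$, yielding $\dim \spn S \ge \nLM{S}, \nTM{S}$.

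The step to handle carefully is termination of the reduction procedure: for leading monomials this is the well-foundedness of $\prec$ on $\N^n$, while for trailing monomials the strictly increasing chain of trailing monomials is not itself bounded by well-foundedness, but is bounded by the ambient finite monomial set coming from finite-dimensionality of $V$. All other steps (linear independence, the coefficient-extraction argument, and the containment $S \subseteq \spn S$) are essentially bookkeeping.
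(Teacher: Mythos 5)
Your proof is correct. Note that the paper gives no proof of this lemma at all: it is stated in the paper's no-proof style as a standard linear-algebra fact (cf.\ the Cox--Little--O'Shea reference for monomial orders), so there is no in-paper argument to compare against; your Gaussian-elimination-style reduction is the standard way to prove it. Both halves of your argument are sound: the coefficient-extraction step shows the witnesses $\{f_m\}$ (resp.\ $\{h_m\}$) are linearly independent, and the reduction shows they span, with termination in the leading-monomial case coming from the fact that every monomial order is a well-order. Your caution about the trailing-monomial reduction is exactly right, and your fix works: when $V=\spn S$ is finite-dimensional, every element of $V$ has support inside the finite set $M$ given by the union of supports of a finite spanning set, so the strictly $\prec$-increasing sequence of trailing monomials lies in $M$ and the process stops after at most $|M|$ steps.

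The only loose end is that the lemma is stated for arbitrary $S$, so $\spn S$ may be infinite-dimensional, and there your trailing-monomial spanning argument as written does not terminate. This is harmless and easily patched: if $\nTM{\spn S}$ were finite while $\dim\spn S$ exceeded it, pass to the span $W$ of finitely many linearly independent elements of $\spn S$ and apply your finite-dimensional argument to $W$, using $\TM(W\setminus\{0\})\subseteq\TM(\spn S\setminus\{0\})$, to get a contradiction. Alternatively, observe that the ``in particular'' inequality $\dim\spn S\ge\nLM{S},\nTM{S}$ --- which is all the paper ever uses (e.g.\ in the proofs of \autoref{res:lbs-fn:dim-eval} and \autoref{res:diagonal:coeff-dim-lb}) --- already follows from your linear-independence step alone, with no reduction procedure needed.
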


\section{Upper Bounds for Linear-IPS}\label{sec:h-ips:ubs}

While the primary focus of this work is on \emph{lower bounds} for restricted classes of the IPS proof system, we begin by discussing \emph{upper bounds} to demonstrate that these restricted classes can prove the unsatisfiability of non-trivial systems of polynomials equations. In particular we go beyond existing work on upper bounds (\cite{GrigorievHirsch03,RazTzameret08a,RazTzameret08b,GrochowPitassi14,LiTW15}) and place interesting refutations in IPS subsystems where we will also prove lower bounds, as such upper bounds demonstrate the non-triviality of our lower bounds.

We begin by discussing the power of the linear-IPS proof system. While one of the most novel features of IPS proofs is their consideration of non-linear certificates, we show that in powerful enough models of algebraic computation, linear-IPS proofs can efficiently simulate general IPS proofs, essentially answering an open question of Grochow and Pitassi~\cite{GrochowPitassi14}. A special case of this result was obtained by Grochow and Pitassi~\cite{GrochowPitassi14}, where they showed that \lIPS can simulate $\sumprod$-IPS\@.  We then consider the \emph{subset-sum} axioms, previously considered by Impagliazzo, \Pudlak, and Sgall~\cite{IPS99}, and show that they can be refuted in polynomial size by the $\cC$-\lIPS proof system where $\cC$ is either the class of roABPs, or the class of multilinear formulas.

\subsection{Simulating IPS Proofs with Linear-IPS}\label{sec:h-ips:ips=h-ips}

We show here that general IPS proofs can be efficiently simulated by linear-IPS, assuming that the axioms to be refuted are described by small algebraic circuits.  Grochow and Pitassi~\cite{GrochowPitassi14} showed that whenever the IPS proof computes \emph{sparse} polynomials, one can simulate it by linear-IPS using (possibly non-sparse) algebraic circuits.  We give here a simulation of IPS when the proofs use general algebraic circuits.

To give our simulation, we will need to show that if a small circuit $f(\vx,y)$ is divisible by $y$, then the quotient $\nicefrac{f(\vx,y)}{y}$ also has a small circuit. Such a result clearly follows from Strassen's~\cite{Strassen73} elimination of divisions in general, but we give two constructions for the quotient which tailor Strassen's~\cite{Strassen73} technique to optimize certain parameters.  

The first construction assumes that $f$ has degree bounded by $d$, and produces a circuit for the quotient whose size depends polynomially on $d$.  This construction is efficient when $f$ is computed by a formula or branching program (so that $d$ is bounded by the size of $f$).  In particular, this construction will preserve the depth of $f$ in computing the quotient, and as such we only present it for formulas.  The construction proceeds via interpolation to decompose $f(\vx,y)=\sum_i f_i(\vx)y^i$ into its constituent parts $\{f_i(\vx)\}_i$ and then directly constructs $\nicefrac{f(\vx,y)}{y}=\sum_i f_i(\vx)y^{i-1}$.

\begin{lemma}\label{res:divide-by-y:black-box}
	Let $\F$ be a field with $|\F|\ge d+1$.  Let $f(\vx,y)\in\F[x_1,\ldots,x_n,y]$ be a degree $\le d$ polynomial expressible as $f(\vx,y)=\sum_{0\le i\le d} f_i(\vx)y^i$ for $f_i\in\F[\vx]$.  Assume $f$ is computable by a size-$s$ depth-$D$ formula. Then for $a\ge 1$ one can compute 
	\[
		\sum_{i=a}^d f_i(\vx) y^{i-a}
		\;,
	\]
	by a $\poly(s,a,d)$-size depth-$(D+2)$ formula.  Further, given $d$ and the formula for $f$, the resulting formula is $\poly(s,a,d)$-explicit. In particular, if $y^a|f(\vx,y)$ then the quotient $\nicefrac{f(\vx,y)}{y^a}$ has a formula of these parameters.
\end{lemma}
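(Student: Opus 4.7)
The plan is to use polynomial interpolation in the variable $y$ to extract each $f_i(\vx)$ as a linear combination of evaluations of $f$, and then reassemble the desired quotient. Since $|\F| \ge d+1$, we may fix $d+1$ distinct field elements $\alpha_0,\ldots,\alpha_d \in \F$. Writing $f(\vx,y) = \sum_{i=0}^{d} f_i(\vx) y^i$, the evaluations $f(\vx,\alpha_0),\ldots,f(\vx,\alpha_d)$ are related to $f_0(\vx),\ldots,f_d(\vx)$ by a fixed, invertible Vandermonde transformation depending only on $\alpha_0,\ldots,\alpha_d$. Hence there exist constants $\beta_{i,j} \in \F$, computable from $\alpha_0,\ldots,\alpha_d$ in $\poly(d)$ time, such that
\[
    f_i(\vx) = \sum_{j=0}^{d} \beta_{i,j}\, f(\vx,\alpha_j)
    \qquad \text{for } 0 \le i \le d.
\]

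Substituting this into the target expression and swapping the order of summation gives
\[
    \sum_{i=a}^{d} f_i(\vx)\, y^{i-a}
    \;=\; \sum_{j=0}^{d} f(\vx,\alpha_j) \cdot g_j(y),
    \qquad
    g_j(y) \eqdef \sum_{i=a}^{d} \beta_{i,j}\, y^{i-a}.
\]
Each $g_j$ is a fixed univariate polynomial of degree at most $d-a$, so it has a depth-$2$ ($\sum\prod$) formula of size $\poly(d)$. Each $f(\vx,\alpha_j)$ is obtained from the given formula for $f$ by substituting the constant $\alpha_j$ for $y$, which preserves depth $D$ and size $s$. Taking the outer sum of products of these $d+1$ subformulas increases the depth by $2$ (one multiplication layer for $f(\vx,\alpha_j) \cdot g_j(y)$ plus one addition layer to sum over $j$), giving a depth-$(D+2)$ formula of total size $\poly(s,d,a)$.

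Explicitness is straightforward: the coefficients $\beta_{i,j}$ are obtained by inverting a Vandermonde system of size $(d+1) \times (d+1)$, which can be done in $\poly(d)$ time, and the rest of the construction is a fixed assembly of the input formula's $d+1$ substituted copies together with the explicit $g_j$. The final clause, that $y^a \mid f(\vx,y)$ implies this formula computes the quotient $f(\vx,y)/y^a$, is immediate from the expansion $f(\vx,y) = \sum_{i} f_i(\vx) y^i$, since divisibility forces $f_0 = \cdots = f_{a-1} = 0$ and hence $f(\vx,y)/y^a = \sum_{i=a}^d f_i(\vx) y^{i-a}$. There is no real obstacle here; the only point to watch is that the field is large enough for interpolation, which is exactly the hypothesis $|\F| \ge d+1$.
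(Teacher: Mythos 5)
Your proof is correct and is essentially the same as the paper's: both use interpolation in $y$ over $d+1$ points to express each $f_i(\vx)$ as a fixed linear combination of the evaluations $f(\vx,\alpha_j)$, then reassemble $\sum_{i\ge a} f_i(\vx)y^{i-a}$ as a depth-$(D+2)$ formula. Your regrouping of the double sum into the univariate factors $g_j(y)$ is only a cosmetic repackaging of the paper's expression.
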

\begin{proof}
	Express $f(\vx,y)\in\F[\vx][y]$ by $f(\vx,y)=\sum_{0\le i\le d} f_i(\vx)y^i$. As $|\F|\ge 1+\deg_y f$, by interpolation there are $\poly(d)$-explicit constants $\alpha_{i,j},\beta_j\in\F$ such that
	\[
		f_i(\vx)=\sum_{j=0}^d \alpha_{i,j}f(\vx,\beta_j)
		\;.
	\]
	It then follows that
	\[
		\sum_{i=a}^d f_i(\vx) y^{i-a}
		=\sum_{i=a}^d \left(\sum_{j=0}^d \alpha_{i,j}f(\vx,\beta_j)\right) y^{i-a}
		=\sum_{i=a}^d \sum_{j=0}^d \alpha_{i,j}f(\vx,\beta_j) y^{i-a}
		\;,
	\]
	which is clearly a formula of the appropriate size, depth, and explicitness. The claim about the quotient $\nicefrac{f(\vx,y)}{y^a}$ follows from seeing that if the quotient is a polynomial then $\nicefrac{f(\vx,y)}{y^a}=\sum_{i=a}^d f_i(\vx) y^{i-a}$.
\end{proof}

The above construction suffices in the typical regime of algebraic complexity where the circuits compute polynomials whose degree is polynomially-related to their circuit size.  However, the simulation of Extended Frege by general IPS proved by Grochow-Pitassi~\cite{GrochowPitassi14} (\autoref{thm:GrochowPitassi14}) yields IPS refutations with circuits of possibly exponential degree (see also \autoref{rmk:EF-degree}).  As such, this motivates the search for an efficient division lemma in this regime.  We now provide such a lemma, which is a variant of Strassen's~\cite{Strassen73} homogenization technique for efficiently computing the low-degree homogeneous components of an unbounded degree circuit.  As weaker models of computation (such as formulas and branching programs) cannot compute polynomials of degree exponential in their size, we only present this lemma for circuits.

\begin{lemma}\label{res:divide-by-y:white-box}
	Let $f(\vx,y)\in\F[x_1,\ldots,x_n,y]$ be a polynomial expressible as $f(\vx,y)=\sum_i f_i(\vx)y^i$ for $f_i\in\F[\vx]$, and assume $f$ is computable by a size-$s$ circuit. Then for $a\ge 1$ there is an $O(a^2s)$-size circuit with outputs gates computing
	\[
		f_0(\vx),\ldots,f_{a-1}(\vx),\sum_{i\ge a} f_i(\vx) y^{i-a}
		\;.
	\]
	Further, given $a$ and the circuit for $f$, the resulting circuit is $\poly(s,a)$-explicit.  In particular, if $y^a|f(\vx,y)$ then the quotient $\nicefrac{f(\vx,y)}{y^a}$ has a circuit of these parameters.
\end{lemma}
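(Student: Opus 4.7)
The plan is to mimic Strassen's~\cite{Strassen73} homogenization technique, but taking homogeneous components with respect to the variable $y$ only (rather than with respect to total degree). The key invariant I would maintain is: for each gate $v$ of the original size-$s$ circuit computing a polynomial $p_v(\vx,y) \in \F[\vx][y]$, the new circuit has $a+1$ distinguished gates computing
\[
p_{v,0}(\vx),\, p_{v,1}(\vx),\, \ldots,\, p_{v,a-1}(\vx),\, \tilde{p}_v(\vx,y)
\]
such that $p_v(\vx,y) = \sum_{j=0}^{a-1} p_{v,j}(\vx)\, y^j + y^a \cdot \tilde{p}_v(\vx,y)$, where $p_{v,j}$ is the coefficient of $y^j$ in $p_v$ (as an element of $\F[\vx][y]$) and $\tilde{p}_v = \sum_{i \ge a} p_{v,i}(\vx) y^{i-a}$ is the ``tail.''

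Next I would specify how each gate type propagates these quantities. For input gates this is direct: variables $x_i$ and constants put all their weight in $p_{v,0}$; the gate $y$ puts weight $1$ in $p_{v,1}$ (or in $\tilde p_v$ when $a=1$); everything else is $0$. For sum gates $v = u+w$, the invariant is linear, so $p_{v,j} = p_{u,j} + p_{w,j}$ and $\tilde p_v = \tilde p_u + \tilde p_w$, costing $a+1$ new addition gates. For product gates $v = u \cdot w$ I would expand
\[
\Bigl(\sum_{i=0}^{a-1} p_{u,i}\, y^i + y^a \tilde p_u\Bigr)\Bigl(\sum_{j=0}^{a-1} p_{w,j}\, y^j + y^a \tilde p_w\Bigr),
\]
and then read off the coefficient of $y^k$ for $0 \le k \le a-1$ as $p_{v,k} = \sum_{i+j=k} p_{u,i} p_{w,j}$ (cost $O(a^2)$ multiplications and additions in total across all $k$), and the tail via the direct formula
\[
\tilde p_v = \sum_{\substack{i+j \ge a \\ 0 \le i,j \le a-1}} p_{u,i}\, p_{w,j}\, y^{i+j-a} \;+\; \tilde p_u \sum_{j=0}^{a-1} p_{w,j}\, y^j \;+\; \tilde p_w \sum_{i=0}^{a-1} p_{u,i}\, y^i \;+\; y^a\, \tilde p_u\, \tilde p_w,
\]
which comes from $y^a \tilde p_v = p_u p_w - \sum_{k=0}^{a-1} p_{v,k} y^k$ after substituting the invariant into both factors and cancelling the low-order terms already accounted for. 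The small monomials in $y$ appearing in the tail formula are computed once and reused. Each product gate thus contributes $O(a^2)$ new arithmetic gates, giving total size $O(a^2 s)$.

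Finally, I would conclude by taking $v = v^\star$ to be the output gate of the original circuit: by the invariant, $p_{v^\star,j} = f_j$ for $0 \le j \le a-1$, and $\tilde p_{v^\star} = \sum_{i \ge a} f_i(\vx) y^{i-a}$, which are exactly the required outputs. The ``in particular'' clause follows because when $y^a \mid f$ we have $f_0 = \cdots = f_{a-1} = 0$, so $\tilde p_{v^\star} = f/y^a$. Explicitness follows immediately because the transformation is local at each gate and is driven only by the gate type of the input circuit and the fixed parameter $a$, so it can be carried out in $\poly(s,a)$ time.

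The main obstacle is the bookkeeping at product gates: unlike the sum case, the tail $\tilde p_v$ cannot be read off just from $\tilde p_u, \tilde p_w$ but genuinely requires the cross-terms between low-degree parts of one factor and the tail of the other, together with the ``overflow'' cross-terms among the low-degree parts themselves. Verifying that all these terms combine correctly into $\tilde p_v$ (and that no stray negative powers of $y$ appear) is where care is required; the $O(a^2)$ blow-up per gate is intrinsic to this step since there are $\Theta(a^2)$ relevant $(i,j)$ pairs.
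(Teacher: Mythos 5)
Your proposal is correct and follows essentially the same route as the paper's proof: maintain, for each gate, the $a$ low-order coefficients in $y$ together with a single ``tail'' gate for $\sum_{i\ge a}(\cdot)y^{i-a}$, propagate these componentwise through additions and via the convolution plus overflow/cross-term formula through multiplications, and read off the outputs at the top gate. Your tail formula at product gates matches the paper's rule for $\Psi_{(v,a)}$ term for term, and the $O(a^2s)$ size bound and explicitness argument are the same.
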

\begin{proof}
	The proof proceeds by viewing the computation in the ring $\F[\vx][y]$, and splitting each gate in the circuit for $f$ into its coefficients in terms of $y$.  However, to avoid a dependence on the degree, we only split out the coefficients of $y^0,y^1,\ldots,y^{a-1}$, and then group together the rest of the coefficients together.  That is, for a polynomial $g(\vx,y)=\sum_{i\ge 0} g_i(\vx)y^i$, we can split this into $g=\sum_{0\le i<a} g_i(\vx)y^i+\left(\sum_{i\ge a} g_i(\vx)y^{i-a}\right)y^a$ to obtain the constituent parts $g_0(\vx),\ldots,g_{a-1}(\vx),\sum_{i\ge a}g_i(\vx)y^{i-a}$. We can then locally update this split by appropriately keeping track of how addition and multiplication affects this grouping of coefficients.  We note that we can assume without loss of generality that the circuit for $f$ has fan-in 2, as this only increases the size of the circuit by a constant factor (measuring the size of the circuit in number of edges) and simplifies the construction.

	\uline{construction:} Let $\Phi$ denote the circuit for $f$. For a gate $v$ in $\Phi$, denote $\Phi_v$ to be the configuration of $v$ in $\Phi$ and let $f_v$ to be the polynomial computed by the gate $v$.  We will define the new circuit $\Psi$, which will be defined by the gates $\{(v,i): v\in\Phi, 0\le i\le a\}$ and the wiring between them, as follows.
	\begin{itemize}
		\item $\Phi_v\in\F$: $\Psi_{(v,0)}\eqdef \Phi_v$, $\Psi_{(v,i)}\eqdef 0$ for $i\ge 1$.
		\item $\Phi_v=x_i$: $\Psi_{(v,0)}\eqdef x_i$, $\Psi_{(v,i)}\eqdef 0$ for $i\ge 1$.
		\item $\Phi_v=y$: $\Psi_{(v,1)}\eqdef 1$, $\Psi_{(v,i)}\eqdef 0$ for $i\ne 1$.
		\item $\Phi_v=\Phi_u+\Phi_w$: $\Psi_{(v,i)}\eqdef \Psi_{(u,i)}+\Psi_{(w,i)}$, all $i$.
		\item $\Phi_v=\Phi_u\times \Phi_w$, $0\le i<a$:
			\[
				\Psi_{(v,i)}\eqdef \sum_{0\le j\le i}\Psi_{(u,j)}\times\Psi_{(w,i-j)}
				\;.
			\]
		\item $\Phi_v=\Phi_u\times \Phi_w$, $i=a$:
			\begin{align*}
				\Psi_{(v,a)}
				&\eqdef \sum_{\ell=a}^{2(a-1)}y^{\ell-a}\sum_{\substack{i+j=\ell\\0\le i,j< a}}\Psi_{(u,i)}\times\Psi_{(w,j)}
					+ \sum_{0\le i<a} \Psi_{(u,i)}\times\Psi_{(w,a)}\times y^i\\
				&\hspace{1in}+ \sum_{0\le j<a} \Psi_{(u,a)}\times\Psi_{(w,j)}\times y^j
					+\Psi_{(u,a)}\times\Psi_{(w,a)}\times y^a
				\;.
			\end{align*}
	\end{itemize}

	\uline{complexity:} Split the gates in $\Psi$ into two types, those gates $(v,i)$ where $i=a$ and $v$ is a multiplication gate in $\Phi$, and then the rest.  For the former type, $\Psi_{(v,a)}$ is computable by a size-$O(a^2)$ circuit in its children, and there are at most $s$ such gates.  For the latter type, $\Psi_{(v,i)}$ is computable by a size-$O(a)$ circuit in its children, and there are at most $O(as)$ such gates.  As such, the total size is $O(a^2s)$.

	\uline{correctness:} We now establish correctness as a subclaim.  For a gate $(v,i)$ in $\Psi$, let $g_{(v,i)}$ denote the polynomial that it computes.  

	\begin{subclaim}
		For each gate $v$ in $\Phi$, for $0\le i<a$ we have that $g_{(v,i)}=\coeff{\vx|y^i}(f_v)$ and for $i=a$ we have that $g_{(v,a)}=\sum_{i\ge a} \coeff{\vx|y^i}(f_v)y^{i-a}$.  In particular, $f_v=\sum_{i=0}^a g_{(v,i)}y^i$.
	\end{subclaim}
	\begin{subproof}
		Note that the second part of the claim follows from the first. We now establish the first part by induction on the gates of the circuit.

		\begin{itemize}
			\item $\Phi_v\in\F$: By construction, $g_{(v,0)}=f_v=\coeff{\vx|y^0}(f_v)$, and for $i\ge1$, $g_{(v,i)}=0=\coeff{\vx|y^i}(f_v)$.
			\item $\Phi_v=x_i$: By construction, $g_{(v,0)}=f_v=\coeff{\vx|y^0}(f_v)$, and for $i\ge1$, $g_{(v,i)}=0=\coeff{\vx|y^i}(f_v)$.
			\item $\Phi_v=y$: By construction, $g_{(v,1)}=1=\coeff{\vx|y^1}(f_v)$, and for $i\ne 1$, $g_{(v,i)}=0=\coeff{\vx|y^i}(f_v)$.
			\item $\Phi_v=\Phi_u+\Phi_w$: 
				\begin{align*}
					g_{(v,i)}
					&=g_{(u,i)}+g_{(w,i)}\\
					&=\coeff{\vx|y^i}(f_u)+\coeff{\vx|y^i}(f_w)\\
					&=\coeff{\vx|y^i}(f_u+f_w)
					=\coeff{\vx|y^i}(f_v)
					\;.
				\end{align*}
			\item $\Phi_v=\Phi_u\times\Phi_w$, $0\le i<a$: 
				\begin{align*}
					g_{(v,i)}
					&=\sum_{0\le j\le i} g_{(u,j)}\cdot g_{(w,i-j)}\\
					&=\sum_{0\le j\le i} \coeff{\vx|y^{j}}(f_u) \cdot \coeff{\vx|y^{i-j}}(f_w)\\
					&=\coeff{\vx|y^i}(f_u\cdot f_w)
					=\coeff{\vx|y^i}(f_v)
					\;.
				\end{align*}
			\item $\Phi_v=\Phi_u\times\Phi_w$, $i=a$: 
				\begin{align*}
					g_{(v,a)}
					&=\sum_{\ell=a}^{2(a-1)}y^{\ell-a}\sum_{\substack{i+j=\ell\\0\le i,j< a}}g_{(u,i)}\cdot g_{(w,j)}
						+ \sum_{0\le i<a} g_{(u,i)}\cdot g_{(w,a)}\cdot y^i\\
					&\hspace{1in}+ \sum_{0\le j<a} g_{(u,a)}\cdot g_{(w,j)}\cdot y^j
						+g_{(u,a)}\cdot g_{(w,a)}\cdot y^a\\
					&=\sum_{\ell=a}^{2(a-1)}y^{-a}\sum_{\substack{i+j=\ell\\0\le i,j< a}}\coeff{\vx|y^i}(f_u)y^i\cdot \coeff{\vx|y^j}(f_w)y^j\\
					&\hspace{1in}+ \sum_{0\le i<a} \coeff{\vx|y^i}(f_u)\cdot \left (\sum_{j\ge a}\coeff{\vx|y^j}(f_w)y^{j-a}\right)\cdot y^i\\
					&\hspace{1in}+ \sum_{0\le j<a} \left(\sum_{i\ge a}\coeff{\vx|y^i}(f_u)y^{i-a}\right)\cdot \coeff{\vx|y^j}(f_w)\cdot y^j\\
					&\hspace{1in}+\left(\sum_{i\ge a}\coeff{\vx|y^i}(f_u)y^{i-a}\right)\cdot \left (\sum_{j\ge a}\coeff{\vx|y^j}(f_w)y^{j-a}\right)\cdot y^a\\
					&=\sum_{\substack{i+j\ge a\\0\le i,j<a}}\coeff{\vx|y^i}(f_u)\coeff{\vx|y^j}(f_w) y^{i+j-a}
						+\sum_{\substack{0\le i<a\\ j\ge a}}\coeff{\vx|y^i}(f_u)\coeff{\vx|y^j}(f_w) y^{i+j-a}\\
					&\hspace{.2in}+\sum_{\substack{i\ge a\\ 0\le j<a}}\coeff{\vx|y^i}(f_u)\coeff{\vx|y^j}(f_w) y^{i+j-a}
						+\sum_{i,j\ge a}\coeff{\vx|y^i}(f_u)\coeff{\vx|y^j}(f_w) y^{i+j-a}\\
					&=\sum_{i+j\ge a}\coeff{\vx|y^i}(f_u)\coeff{\vx|y^j}(f_w)\cdot y^{i+j-a}\\
					&=\sum_{\ell\ge a}\coeff{\vx|y^\ell}(f_u \cdot f_w)\cdot y^{\ell-a}\\
					&=\sum_{\ell\ge a}\coeff{\vx|y^\ell}(f_v)\cdot y^{\ell-a}
					\;.
					\qedhere
				\end{align*}
		\end{itemize}
	\end{subproof}

	The correctness then follows by examining $v_\mathrm{out}$, the output gate of $\Phi$, so that $f_{v_\mathrm{out}}=f$.  The gates $(v_\mathrm{out},0),\ldots,(v_\mathrm{out},a)$ are then outputs of $\Psi$ and by the above subclaim have the desired functionality.

	\uline{quotient:} The claim about the quotient $\nicefrac{f(\vx,y)}{y^a}$ follows from seeing that if the quotient is a polynomial then $\nicefrac{f(\vx,y)}{y^a}=\sum_{i\ge a} f_i(\vx) y^{i-a}$ which is one of the outputs of the constructed circuit.
\end{proof}

We now give our simulation of general IPS by linear-IPS\@. In the below set of axioms we do not separate out the boolean axioms from the rest, as this simplifies notation.

\begin{proposition}\label{res:h-ips_v_ips}
	Let $f_1,\ldots,f_m\in\F[x_1,\ldots,x_n]$ be unsatisfiable polynomials with an IPS refutation $C\in\F[\vx,y_1,\ldots,y_m]$.  Then $f_1,\ldots,f_m$ have a linear-IPS refutation $C'\in\F[\vx,\vy]$ under the following conditions.
	\begin{enumerate}
		\item Suppose $f_1,\ldots,f_m,C$ are computed by size-$s$ formulas, have degree at most $d$, and $|\F|\ge d+1$. Then $C'$ is computable by a $\poly(s,d,m)$-size formula of depth-$O(D)$, and $C'$ is $\poly(s,d,m)$-explicit given $d$ and the formulas for $f_1,\ldots,f_m,C$.
			\label{res:h-ips_v_ips:formula}
		\item Suppose $f_1,\ldots,f_m,C$ are computed by size-$s$ circuits. Then $C'$ is computable by a $\poly(s,m)$-size circuit, and $C'$ is $\poly(s,m)$-explicit given the circuits for $f_1,\ldots,f_m,C$.
			\label{res:h-ips_v_ips:circuit}
	\end{enumerate}
\end{proposition}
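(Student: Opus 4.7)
The plan is to convert the given IPS refutation $C(\vx,\vy)$ into a $\vy$-linear refutation by a telescoping decomposition along the substitution path from $\vy=\vnz$ to $\vy=(f_1,\ldots,f_m)$. The two defining identities $C(\vx,\vnz)=0$ and $C(\vx,f_1,\ldots,f_m)=1$ give
\[
 1 \;=\; \sum_{j=1}^m \bigl[\,C(\vx,f_1,\ldots,f_{j-1},f_j,0,\ldots,0)\;-\;C(\vx,f_1,\ldots,f_{j-1},0,\ldots,0)\,\bigr].
\]
For each $j$ set $E_j(\vx,y_j)\eqdef C(\vx,f_1,\ldots,f_{j-1},y_j,0,\ldots,0)-C(\vx,f_1,\ldots,f_{j-1},0,\ldots,0)$. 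Since $E_j(\vx,0)=0$, $E_j$ is divisible by $y_j$, so I write $E_j(\vx,y_j)=y_j\cdot D_j(\vx,y_j)$ and set $g_j(\vx)\eqdef D_j(\vx,f_j(\vx))$. The telescoping identity then reads $\sum_j g_j(\vx)f_j(\vx)=1$, so
\[
 C'(\vx,\vy)\;\eqdef\;\sum_{j=1}^m g_j(\vx)\,y_j
\]
is by construction linear in each $y_j$ and satisfies $C'(\vx,\vnz)=0$ together with $C'(\vx,f_1,\ldots,f_m)=1$; it is therefore a linear-IPS refutation.

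The substantive step is computing the quotient $D_j=E_j/y_j$ in the same algebraic class as the input with only polynomial blow-up. I would invoke the division lemmas proved earlier in the section. In the formula case, $E_j$ is a formula of size $O(s^2)$ (via simultaneous substitution of the $f_i$'s into $C$) and depth $O(D)$, with $\deg_{y_j}E_j\le d$ and $|\F|\ge d+1$, so \autoref{res:divide-by-y:black-box} produces $D_j$ as a $\poly(s,d)$-size, depth-$O(D)$ formula. In the circuit case, where no bound on the total degree is available, I would instead apply the Strassen-style \autoref{res:divide-by-y:white-box} with $a=1$ to $E_j$ (itself of circuit size $O(sm)$), yielding $D_j$ as a circuit of size $O(sm)$ without any field or degree hypothesis.

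Finally, I would assemble $g_j(\vx)=D_j(\vx,f_j(\vx))$ by plugging the formula/circuit for $f_j$ into the input wire $y_j$ of $D_j$, and then form $C'=\sum_{j=1}^m g_j(\vx)y_j$ by a (balanced) summation. Summing per-$j$ costs yields a $\poly(s,d,m)$-size formula of depth $O(D)$ (the balanced summation tree contributes an additive $O(\log m)$) in case \eqref{res:h-ips_v_ips:formula} and a $\poly(s,m)$-size circuit in case \eqref{res:h-ips_v_ips:circuit}, and explicitness is preserved by both the composition and the division lemmas. The real obstacle is the division step: the telescoping is elementary, but to keep both the size and the circuit class intact—particularly when $C$ may have polynomially bounded size but exponentially high degree, as can arise when IPS refutations are obtained via the Extended Frege simulation—the quotient must be extracted without interpolation, which is precisely what the Strassen-style white-box lemma supplies.
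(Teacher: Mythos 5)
Your proposal is correct and follows essentially the same route as the paper: decompose the refutation into a telescoping sum of differences each divisible by a single $y_j$, extract the quotients via the interpolation-based division lemma for formulas (\autoref{res:divide-by-y:black-box}) and the Strassen-style white-box lemma (\autoref{res:divide-by-y:white-box}) for unbounded-degree circuits, and substitute the axioms to obtain $C'=\sum_j g_j(\vx)y_j$. The only cosmetic difference is that the paper telescopes by zeroing the $y$-variables one at a time and substitutes $\vf$ only after dividing, whereas you substitute $f_1,\ldots,f_{j-1}$ before dividing; since $\deg_{y_j}E_j\le\deg_{y_j}C\le d$, the interpolation still only needs $|\F|\ge d+1$, so this reordering is immaterial.
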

\begin{proof}
	Express $C(\vx,\vy)$ as a polynomial in $\F[\vx][\vy]$, so that $C(\vx,\vy)=\sum_{\va>\vnz} C_\va(\vx)\vy^\va$, where we use that $C(\vx,\vnz)=0$ to see that we can restrict $\va$ to $\va>\vnz$.  Partitioning the $\va\in\N^n$ based on the index of their first nonzero value, and denoting $\va_{<i}$ for the first $i-1$ coordinates of $\va$, we obtain
	\begin{align*}
		C(\vx,\vy)
		&=\sum_{\va>\vnz} C_\va(\vx)\vy^\va\\
		&=\sum_{i=1}^n\; \sum_{\substack{\va: \va_{<i}=\vnz,\\ a_i>0}} C_\va(\vx)\vy^\va
		\;.
		\intertext{Now define $C_i(\vx,\vy)\eqdef\sum_{\substack{\va: \va_{<i}=\vnz,\\ a_i>0}} C_\va(\vx)\vy^{\va-\ve_i}$, where $\ve_i$ is the $i$-th standard basis vector. Note that this is a valid polynomial as in this summation we assume $a_i>0$ so that $\va-\ve_i\ge \vnz$. Thus,}
		C(\vx,\vy)
		&=\sum_{i=1}^n C_i(\vx,\vy)y_i
		\;.
	\end{align*}

	We now define $C'(\vx,\vy)\eqdef\sum_{i=1}^n C_i(\vx,\vf(\vx))y_i$ and claim it is the desired linear-IPS refutation, where note that we have only \emph{partially} substituted in the $f_i$ for the $y_i$.  First, observe that it is a valid refutation, as $C'(\vx,\vnz)=\sum_{i=1}^n C_i(\vx,\vf(\vx))\cdot 0=0$, and $C'(\vx,\vf(\vx))=\sum_{i=1}^n C_i(\vx,\vf(\vx))f_i(\vx)=C(\vx,\vf(\vx))=1$ via the above equation and using that $C$ is a valid IPS refutation.

	We now argue that $C'$ can be efficiently computed in the two above regimes.
	
	\uline{\eqref{res:h-ips_v_ips:formula}:} Up to constant-loss in the depth and polynomial-loss in the size, for bounding the complexity of $C'$ it suffices to bound the complexity of each $C_i(\vx,\vf(\vx))$.  First, note that
	\begin{align*}
		C_i(\vx,\vy)y_i
		=\sum_{\substack{\va: \va_{<i}=\vnz,\\ a_i>0}} C_\va(\vx)\vy^\va
		=C(\vx,\vnz,y_i,\vy_{> i})-C(\vx,\vnz,0,\vy_{> i})
		\;,
	\end{align*}
	where each ``$\vnz$'' here is a vector of $i-1$ zeros.  Clearly each of $C(\vx,\vnz,y_i,\vy_{> i})$ and $C(\vx,\vnz,0,\vy_{> i})$ have formula size and depth bounded by that of $C$. From our division lemma for formulas (\autoref{res:divide-by-y:black-box}) it follows that $C_i(\vx,\vy)=\frac{1}{y_i}(C(\vx,\vnz,y_i,\vy_{> i})-C(\vx,\vnz,0,\vy_{> i}))$ has a $\poly(s,d)$-size depth-$O(D)$ formula of the desired explicitness (as $\deg_y \left(C(\vx,\vnz,y_i,\vy_{> i})-C(\vx,\vnz,0,\vy_{> i})\right)\le \deg_y C\le d\le |\F|-1$, so that $\F$ is large enough).  We replace $\vy\leftarrow \vf(\vx)$ to obtain $C_i(\vx,\vf(\vx))$ of the desired size and explicitness, using that the $f_j$ themselves have small-depth formulas.

	\uline{\eqref{res:h-ips_v_ips:circuit}:} This follows as in \eqref{res:h-ips_v_ips:formula}, using now the division lemma for circuits (\autoref{res:divide-by-y:white-box}).
\end{proof}

Grochow and Pitassi~\cite[Open Question 1.13]{GrochowPitassi14} asked whether one can relate the complexity of IPS and linear-IPS, as they only established such relations for simulating $\sumprod$-IPS by (general) linear-IPS\@.  Our above result essentially answers this question for general formulas and circuits, at least under the assumption that the unsatisfiable polynomial system $f_1=\cdots=f_m=0$ can be written using small algebraic formulas or circuits.  This is a reasonable assumption as it is the most common regime for proof complexity.  However, the above result does not fully close the question of Grochow-Pitassi~\cite{GrochowPitassi14} with respect to simulating $\cC$-IPS by $\cD$-\lIPS for various restricted subclasses $\cC,\cD$ of algebraic computation.  That is, for such a simulation our result requires $\cD$ to at the very least contain $\cC$ composed with the axioms $f_1,\ldots,f_m$, and the when applying this to the models considered in this paper (sparse polynomials, depth-3 powering formulas, roABPs, multilinear formulas) this seems to non-negligibly increase the complexity of the algebraic reasoning.

\subsection{Multilinearizing roABP-\lIPS}\label{sec:multilinearization:roABP}

We now exhibit instances where one can efficiently \emph{prove} that a polynomial equals its multilinearization modulo the boolean axioms. That is, for a polynomial $f$ computed by a small circuit we wish to prove that $f\equiv \ml(f)\mod \vx^2-\vx$ by expressing $f-\ml(f)=\sum_i h_i \cdot (x_i^2-x_i)$ so that the $h_i$ also have small circuits.  Such a result will simplify the search for linear-IPS refutations by allowing us to focus on the non-boolean axioms.  That is, if we are able to find a refutation of $\vf,\vx^2-\vx$ given by 
\[
	\sum_j g_j f_j \equiv 1 \mod \vx^2-\vx
	\;,
\]
where the $g_j$ have small circuits, multilinearization results of the above form guarantee that there are $h_i$ so that
\[
	\sum_j g_j f_j +\sum_i h_i\cdot (x_i^2-x_i)=1
	\;,
\]
which is a proper linear-IPS refutation.

We establish such a multilinearization result when $f$ is an roABP in this section, and consider when $f$ the product of a low-degree multilinear polynomial and a multilinear formula in the next section.  We will use these multilinearization results in our construction of IPS refutations of the subset-sum axiom (\autoref{sec:subset:ub}).

We begin by noting that multilinearization for these two circuit classes is rather special, as these classes straddle the conflicting requirements of neither being \emph{too weak} nor \emph{too strong}.  That is, some circuit classes are simply too weak to compute their multilinearizations.  An example is the class of depth-3 powering formulas, where $(x_1+\cdots+x_n)^n$ has a small $\sumpowsum$ formula, but its multilinearization has the leading term $n!x_1\cdots x_n$ and thus requires exponential size as a $\sumpowsum$ formula (by appealing to \autoref{res:lbs-mult:LM:sumpowsum}). On the other hand, some circuit classes are too strong to admit efficient multilinearization (under plausible complexity assumptions).  That is, consider an $n \times n$ symbolic matrix $X$ where $(X)_{i,j}=x_{i,j}$ and the polynomial $f(X,\vy)\eqdef (x_{1,1} y_1+\cdots +x_{1,n}y_n)\cdots (x_{n,1} y_1+\cdots +x_{n,n}y_n)$, which is clearly a simple depth-3 ($\prod\sum\prod$) circuit.  Viewing this polynomial in $\F[X][\vy]$, one sees that $\coeff{X|y_1\cdots y_n} f=\perm(X)$, where $\perm(X)$ is the $n\times n$ permanent.  Viewing $\ml(f)$, the multilinearization of $f$, in $\F[X][\vy]$ one sees that $\ml(f)$ is of degree $n$ and its degree $n$ component is the coefficient of $y_1\cdots y_n$ in $\ml(f)$, which is still $\perm(X)$.  Hence, by interpolation, one can extract this degree $n$ part and thus can compute a circuit for $\perm(X)$ given a circuit for $\ml(f)$.  Since we believe $\perm(X)$ does not have small algebraic circuits it follows that the multilinearization of $f$ does not have small circuits. These examples show that efficient multilinearization is a somewhat special phenomenon.

We now give our result for multilinearizing roABPs, where we multilinearize variable by variable via telescoping.  

\begin{proposition}\label{res:multilin:roABP}
	Let $f\in\F[x_1,\ldots,x_n]$ be computable by a width-$r$ roABP in variable order $x_1<\cdots<x_n$, so that $f(\vx)=\left(\prod_{i=1}^n A_i(x_i)\right)_{1,1}$ where $A_i\in\F[x_i]^{r\times r}$ have $\deg A_i\le d$.  Then $\ml(f)$ has a $\poly(r,n,d)$-explicit width-$r$ roABP in variable order $x_1<\cdots<x_n$, and there are $\poly(r,n,d)$-explicit width-$r$ roABPs $h_1,\ldots,h_n\in\F[\vx]$ in variable order $x_1<\cdots<x_n$ such that
	\[
		f(\vx)=\ml(f)+\sum_{j=1}^n h_j\cdot (x_j^2-x_j)
		\;.
	\]
	Further, $\ideg h_j\le \ideg f$ and the individual degree of the roABP for $\ml(f)$ is $\le 1$.
\end{proposition}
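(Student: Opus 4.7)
The strategy is to multilinearize one variable at a time by reducing each matrix $A_i(x_i)$ modulo $x_i^2-x_i$, then use a standard telescoping product identity to separate out the contributions of the boolean axioms. First, for each $i$, perform entrywise univariate division to write $A_i(x_i) = \tilde A_i(x_i) + B_i(x_i)(x_i^2-x_i)$, where $\tilde A_i(x_i) = A_i(0) + (A_i(1)-A_i(0))x_i$ has individual degree $\le 1$ and $B_i(x_i)\in\F[x_i]^{r\times r}$ has degree $\le d-2$; both are $\poly(r,d)$-explicit. Define the matrix product $P(\vx) := \bigl(\prod_{i=1}^n \tilde A_i(x_i)\bigr)_{1,1}$. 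Then $P$ is computed by a width-$r$ roABP of individual degree $\le 1$ in the order $x_1<\cdots<x_n$, hence is multilinear; since $\tilde A_i$ agrees with $A_i$ on $\bits$, the polynomial $P$ agrees with $f$ on $\bits^n$, so the uniqueness clause of \autoref{fact:multilinearization} identifies $P$ with $\ml(f)$.

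Next, apply the telescoping identity (valid in any ring):
\[
  \prod_{i=1}^n A_i \;-\; \prod_{i=1}^n \tilde A_i \;=\; \sum_{j=1}^n \Bigl(\prod_{i<j} A_i\Bigr)(A_j-\tilde A_j)\Bigl(\prod_{i>j}\tilde A_i\Bigr).
\]
Substituting $A_j-\tilde A_j = B_j(x_j)(x_j^2-x_j)$ and pulling the scalar $(x_j^2-x_j)$ out of the matrix product, this becomes
\[
  \prod_i A_i \;-\; \prod_i \tilde A_i \;=\; \sum_{j=1}^n (x_j^2-x_j)\cdot M_j(\vx), \qquad M_j \;:=\; \Bigl(\prod_{i<j} A_i\Bigr) B_j(x_j) \Bigl(\prod_{i>j}\tilde A_i\Bigr).
\]
Taking the $(1,1)$-entry of both sides yields the desired identity $f(\vx) = \ml(f)(\vx) + \sum_j h_j(\vx)(x_j^2-x_j)$ with $h_j := (M_j)_{1,1}$.

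It remains to verify the properties of the $h_j$. Each factor of $M_j$ is an $r\times r$ matrix in a single variable, and the factors appear in the order $x_1<\cdots<x_n$, so the product is already in the normal form of \autoref{res:roABP-normal-form}, giving a width-$r$ roABP for $h_j$ in this variable order. The individual degree in $x_i$ is $\le d$ for $i<j$ (from $A_i$), $\le d-2$ for $i=j$ (from $B_j$), and $\le 1$ for $i>j$ (from $\tilde A_i$), so $\ideg h_j\le d$, which meets the claimed bound $\ideg h_j\le \ideg f$ after assuming without loss of generality that the original matrices witness $d = \ideg f$. The only subtle point is the preservation of width: this works precisely because $\tilde A_i$ and $B_i$ live in the same variable $x_i$ as $A_i$, so swapping an $A_i$ factor for $\tilde A_i$ or $B_i$ preserves the read-once structure without repeating a variable across layers. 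Explicitness is $\poly(n,r,d)$ throughout, as the only additional computation is division of univariate matrix entries by $x^2-x$.
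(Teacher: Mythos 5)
Your proof is correct and follows essentially the same route as the paper's: decompose each $A_i$ entrywise as its multilinear part plus $B_i(x_i)\cdot(x_i^2-x_i)$, telescope the matrix product, and read off the $(1,1)$-entry. The only cosmetic differences are the direction of the telescoping (you keep the original $A_i$'s on the left and the multilinearized ones on the right, the paper does the reverse) and that you identify $P$ with $\ml(f)$ directly via agreement on the cube rather than from the final identity; both are fine.
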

\begin{proof}
	We apply the multilinearization map $\ml:\F[\vx]\to\F[\vx]$ to matrices $\ml:\F[\vx]^{r\times r}\to\F[\vx]^{r\times r}$ by applying the map entry-wise (\autoref{fact:multilinearization}).  It follows then that $A_i(x_i)-\ml(A_i(x_i))\equiv 0\mod x_i^2-x_i$, so that $A_i(x_i)-\ml(A_i(x_i))=B(x_i)\cdot (x_i^2-x_i)$ for some $B_i(x_i)\in\F[x_i]^{r\times r}$ where $\ideg B_i(x_i)\le \ideg A_i(x_i)$. Now define $\ml_{\le i}$ be the map which multilinearizes the first $i$ variables and leaves the others intact, so that $\ml_{\le 0}$ is the identity map and $\ml_{\le n}=\ml$.  Telescoping,
	\begin{align*}
		\prod_{i=1}^n A_i(x_i)
		&=\ml_{<1}\left(\prod_{i=1}^n A_i(x_i)\right)\\
		&=\ml_{\le n}\left(\prod_{i=1}^n A_i(x_i)\right)+\sum_{j=1}^{n} \left[\ml_{<j}\left(\prod_{i=1}^n A_i(x_i)\right)-\ml_{\le j}\left(\prod_{i=1}^n A_i(x_i)\right)\right]\\
		\intertext{using that the identity $\ml(gh)=\ml(\ml(g)\ml(h))$ (\autoref{fact:multilinearization}) naturally extends from scalars to matrices, and also to partial-multilinearization (by viewing $\ml_{\le i}$ as multilinearization in $\F[\vx_{>i}][\vx_{\le i}]$),}
		&=\ml_{\le n}\left(\prod_{i=1}^n \ml_{\le n} A_i(x_i)\right)+\sum_{j=1}^{n} \left[\ml_{<j}\left(\prod_{i< j} \ml_{<j}(A_i(x_i))\prod_{i\ge j} A_i(x_i)\right)\right.\\
		&\hspace{2in}\left.-\ml_{\le j}\left(\prod_{i\le j} \ml_{\le j}(A_i(x_i))\prod_{i>j} A_i(x_i)\right)\right]\\
		\intertext{dropping the outside $\ml_{<j}$ and $\ml_{\le j}$ as the inside polynomials are now multilinear in the appropriate variables, and replacing them with $\ml$ as appropriate,}
		&=\prod_{i=1}^n \ml(A_i(x_i))+\sum_{j=1}^{n} \left[\prod_{i<j} \ml(A_i(x_i))\prod_{i\ge j} A_i(x_i)\right.\\
		&\hspace{2in}\left.-\prod_{i\le j} \ml(A_i(x_i))\prod_{i>j} A_i(x_i)\right]\\
		&=\prod_{i=1}^n \ml(A_i(x_i))+\sum_{j=1}^{n} \left[\prod_{i<j} \ml(A_i(x_i))\cdot \Big(A_j(x_j)-\ml(A_j(x_j))\Big)\cdot\prod_{i>j} A_i(x_i)\right]\\
		&=\prod_{i=1}^n \ml(A_i(x_i))+\sum_{j=1}^{n} \left[\prod_{i<j} \ml(A_i(x_i))\cdot B_j(x_j)\cdot \prod_{i>j} A_i(x_i)\cdot (x_j^2-x_j)\right]
		\;.
	\end{align*}
	Taking the $(1,1)$-entry in the above yields that
	\begin{align*}
		f(\vx)
		&=\left(\prod_{i=1}^n A_i(x_i)\right)_{1,1}\\
		&=\left(\prod_{i=1}^n \ml(A_i(x_i))\right)_{1,1}+\sum_{j=1}^{n} \left(\prod_{i<j} \ml(A_i(x_i))\cdot B_j(x_j)\cdot \prod_{i>j} A_i(x_i)\right)_{1,1}\cdot (x_j^2-x_j)
		\;.
	\end{align*}
	Thus, define 
	\[
		\hat{f}\eqdef \left(\prod_{i=1}^n \ml(A_i(x_i))\right)_{1,1}
		\;,
		\qquad
		h_j \eqdef \left(\prod_{i<j} \ml(A_i(x_i))\cdot B_j(x_j)\cdot\prod_{i>j} A_i(x_i)\right)_{1,1}
		\;.
	\]
	It follows by construction that $\hat{f}$ and each $h_j$ are computable by width-$r$ roABPs of the desired explicitness in the correct variable order. Further, $\ideg h_j\le \ideg f$ and $\hat{f}$ has individual degree $\le 1$.  Thus, the above yields that $f=\hat{f}+\sum_j h_j\cdot (x_j^2-x_j)$, from which it follows that $\ml(f)=\hat{f}$, as desired.
\end{proof}

We now conclude that in designing an roABP-\lIPS refutation $\sum_j g_j\cdot f_j+\sum_i h_i\cdot (x_i^2-x_i)$ of $f_1(\vx),\ldots,f_m(\vx),\vx^2-\vx$, it suffices to bound the complexity of the $g_j$'s.

\begin{proposition}\label{res:multilin:roABP-lIPS}
	Let $f_1,\ldots,f_m\in\F[x_1,\ldots,x_n]$ be unsatisfiable polynomials over $\vx\in\bits^n$ computable by width-$s$ roABPs in variable order $x_1<\cdots<x_n$.  Suppose that there are $g_j\in\F[\vx]$ such that
	\[
		\sum_{i=1}^m g_j(\vx)f_j(\vx)\equiv 1\mod\vx^2-\vx
		\;,
	\]
	where the $g_j$ have width-$r$ roABPs in the variable order $x_1<\cdots<x_n$.  Then there is an roABP-\lIPS refutation $C(\vx,\vy,\vz)$ of individual degree at most $1+\ideg \vf$ and computable in width-$\poly(s,r,n,m)$ in any variable order where $x_1<\cdots<x_n$.  Furthermore, this refutation is $\poly(s,r,\ideg \vg,\ideg \vf,n,m)$-explicit given the roABPs for $f_j$ and $g_j$.
\end{proposition}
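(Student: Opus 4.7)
The plan is to realize the refutation in the natural form $C(\vx,\vy,\vz) \eqdef \sum_{j=1}^m g_j(\vx)\, y_j - \sum_{i=1}^n h_i(\vx)\, z_i$, where the $g_j$ are the polynomials given by hypothesis and the $h_i$ are produced by multilinearizing the ``boolean-cube residue'' of $\sum_j g_j f_j$ via \autoref{res:multilin:roABP}. This $C$ is manifestly linear in $\vy,\vz$ and satisfies $C(\vx,\vnz,\vnz)=0$ by inspection, so the only substantive tasks are (i) producing the $h_i$ and proving $C(\vx,\vf(\vx),\vx^2-\vx)=1$, and (ii) bounding the roABP width, individual degree, and explicitness of $C$ under any variable order refining $x_1<\cdots<x_n$.

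For (i), set $p(\vx) \eqdef \sum_{j=1}^m g_j(\vx) f_j(\vx) - 1$. By the roABP closure properties in \autoref{fact:roABP:closure}, $p$ is computable by a width-$\poly(r,s,m)$ roABP in the order $x_1<\cdots<x_n$, with explicitness $\poly(s,r,\ideg \vg,\ideg \vf,n,m)$. Apply \autoref{res:multilin:roABP} to $p$, yielding $\poly$-explicit width-$\poly(r,s,m)$ roABPs for $\ml(p)$ and for $h_1,\dots,h_n$ such that $p = \ml(p) + \sum_{i=1}^n h_i(x_i^2-x_i)$. By hypothesis $p$ vanishes on $\bits^n$, and $\ml(p)$ is multilinear and agrees with $p$ on $\bits^n$, so by \autoref{fact:multilinearization} we have $\ml(p) \equiv 0$. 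Therefore $\sum_j g_j f_j = 1 + \sum_i h_i(x_i^2-x_i)$, which is exactly the identity $C(\vx,\vf(\vx),\vx^2-\vx)=1$.

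For (ii), each summand $g_j(\vx) y_j$ (respectively $h_i(\vx) z_i$) is a product of an roABP in $\vx$ (in the order $x_1<\cdots<x_n$) with a single variable $y_j$ (respectively $z_i$); since a single variable can be read at any position of an roABP, each such summand is a width-$\poly(r,s,m)$ roABP in any ordering of $\vx\cup\vy\cup\vz$ that refines $x_1<\cdots<x_n$. Summing over $j\in[m]$ and $i\in[n]$ using \autoref{fact:roABP:closure} yields an roABP for $C$ of width $\poly(s,r,n,m)$ in any such order. The individual-degree bound in $\vy,\vz$ is immediate from linearity, while the bound on the $\vx$-individual-degree of the $h_i$ follows from the analogous bound in \autoref{res:multilin:roABP} applied to $p$. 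Explicitness is inherited from each step and is $\poly(s,r,\ideg\vg,\ideg\vf,n,m)$ overall.

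The main obstacle is conceptual rather than technical: one must recognize that the key gap between the hypothesis (an identity modulo $\vx^2-\vx$) and the conclusion (an on-the-nose linear-IPS identity) is exactly bridged by a multilinearization lemma that preserves the roABP class, which is precisely what \autoref{res:multilin:roABP} provides. Everything else is careful bookkeeping via roABP closure.
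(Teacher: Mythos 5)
Your construction is essentially the paper's: both reduce the problem to the multilinearization lemma for roABPs (\autoref{res:multilin:roABP}) plus roABP closure, and both output a refutation of the form $\sum_j g_j y_j-\sum_i h_i z_i$. The only organizational difference is that you multilinearize the single aggregate polynomial $p=\sum_j g_jf_j-1$, whereas the paper multilinearizes each product $g_jf_j$ separately and sums the resulting cofactors; these are interchangeable for the width and explicitness bounds.

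There is, however, one concrete omission: the claimed individual-degree bound $1+\ideg\vf$ is not established by your argument. The paper's first step is to replace each $g_j$ by $\ml(g_j)$ (again via \autoref{res:multilin:roABP}), which preserves the hypothesis $\sum_j g_jf_j\equiv 1\bmod \vx^2-\vx$ and keeps the width at $r$; only then is each $g_jf_j$ of individual degree at most $1+\ideg f_j$, so that the lemma yields cofactors $h_{j,i}$ (and coefficients $g_j$ of the $y_j$) with individual degree at most $1+\ideg\vf$. In your version the $h_i$ coming from $p$ only satisfy $\ideg h_i\le\ideg p\le\ideg\vg+\ideg\vf$, and the $g_j$ appearing in $C$ retain individual degree $\ideg\vg$, neither of which matches the stated bound. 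The fix is exactly that one extra preprocessing step; the rest of your bookkeeping for (i) and (ii) goes through as in the paper.
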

\begin{proof}
	We begin by noting that we can multilinearize the $g_j$, so that $\sum_{i=1}^m \ml(g_j(\vx))f_j(\vx)\equiv 1\mod\vx^2-\vx$, and that $\ml(g_j)$ are $\poly(r,\ideg \vg,n,m)$-explicit multilinear roABPs of width-$r$ by \autoref{res:multilin:roABP}.  Thus, we assume going forward that the $g_j$ are multilinear.

	As $g_j,f_j$ are computable by roABPs, their product $g_jf_j$ is computable by a width-$rs$ roABP in the variable order $x_1<\cdots<x_n$ (\autoref{fact:roABP:closure}) with individual degree at most $1+\ideg f_j$ (\autoref{res:roABP-normal-form}). Thus, by the above multilinearization (\autoref{res:multilin:roABP}), there are $h_{j,i}\in\F[\vx]$ such that
	\[
		g_j(\vx)f_j(\vx)=\ml(g_jf_j)+\sum_{i=1}^n h_{j,i}(\vx)\cdot (x_i^2-x_i)
		\;.
	\]
	where the $h_{j,i}$ are computable by width-$rs$ roABPs of individual degree at most $1+\ideg f_j$. We now define
	\[
		C(\vx,\vy,\vz)\eqdef \sum_{j=1}^m g_j(\vx)y_j - \sum_{i=1}^n\left(\sum_{j=1}^m h_{j,i}(\vx)\right) z_i
		\;.
	\]
	By the closure operations of roABPs (\autoref{fact:roABP:closure}) it follows that $C$ is computable the appropriately-sized roABPs in the desired variable orders, has the desired individual degree, and that $C$ has the desired explicitness.

	We now show that this is a valid IPS refutation. Observe that $C(\vx,\vnz,\vnz)=0$ and that
	\begin{align}
		C(\vx,\vf,\vx^2-\vx)
		&=\sum_{j=1}^m g_j(\vx)f_j(\vx) - \sum_{i=1}^n\left(\sum_{j=1}^m h_{j,i}(\vx)\right) (x_i^2-x_i)\\
		&=\sum_{j=1}^m \left( g_j(\vx)f_j(\vx) - \sum_{i=1}^n h_{j,i}(\vx) (x_i^2-x_i)\right)\\
		&=\sum_{j=1}^m \ml(g_jf_j)\\
		\intertext{as $\sum_{i=1}^m g_j(\vx)f_j(\vx)\equiv 1\mod \vx^2-\vx$ we have that 
			\[
				\sum_{j=1}^m \ml(g_jf_j)
				=
				\ml\left(\sum_{i=1}^m g_j(\vx)f_j(\vx)\right)=1
				\;,
			\]
			where we appealed to linearity of multilinearization (\autoref{fact:multilinearization}), so that
		}
		C(\vx,\vf,\vx^2-\vx)
		&=\sum_{j=1}^m \ml(g_jf_j)
		=1
		\;,
	\end{align}
	as desired.
\end{proof}

\subsection{Multilinear-Formula-\lbIPS}\label{sec:multilinearization:mult-form}

We now turn to proving that $g\cdot f\equiv \ml(g\cdot f) \mod \vx^2-\vx$ when $f$ is low-degree and $g$ is a multilinear formula.  This multilinearization can be used to complete our construction of multilinear-formula-IPS refutations of subset-sum axiom (\autoref{sec:subset:ub}), though our actual construction will multilinearize more directly (\autoref{res:ips-ubs:subset:mult-form}).  

More importantly, the multilinearization we establish here shows that multilinear-formula-IPS can efficiently simulate sparse-\lIPS (when the axioms are low-degree and multilinear).  Such a simulation holds intuitively, as multilinear formulas can efficiently compute any sparse (multilinear) polynomial, and as we work over the boolean cube we are morally working with multilinear polynomials.  While this intuition suggests that such a simulation follows immediately, this intuition is false. Specifically, while sparse-\lIPS is a complete refutation system for any system of unsatisfiable polynomials over the boolean cube, multilinear-formula-\lIPS is \emph{incomplete} as seen by the following example (though, by \autoref{thm:GrochowPitassi14}, multilinear-formula-\lIPS \emph{is} complete for refuting unsatisfiable CNFs).  

\begin{example}\label{ex:multi-form:incomplete}
	Consider the unsatisfiable system of equations $xy+1,x^2-x,y^2-y$.  A multilinear linear-IPS proof is a tuple of multilinear polynomials $(f,g,h)\in\F[x,y]$ such that $f\cdot (xy+1)+g\cdot (x^2-x)+h\cdot (y^2-y)=1$.  In particular, $f(x,y)=\frac{1}{xy+1}$ for $x,y\in\bits$, which implies by interpolation over the boolean cube that $f(x,y)=1\cdot (1-x)(1-y)+1\cdot (1-x)y+1\cdot x(1-y)+\frac{1}{2} \cdot xy=1-\frac{1}{2}\cdot xy$.  Thus $f\cdot (xy+1)$ contains the monomial $x^2y^2$.  However, as $g,h$ are multilinear we see that $x^2y^2$ cannot appear in $g\cdot (x^2-x)+h\cdot (y^2-y)-1$, so that the equality $f\cdot (xy+1)+g\cdot (x^2-x)+h\cdot (y^2-y)=1$ cannot hold. Thus, $xy+1,x^2-x,y^2-y$ has no linear-IPS refutation only using multilinear polynomials.
\end{example}

Put another way, the above example shows that in a linear-IPS refutation $\sum_j g_j f_j+\sum_i h_i\cdot (x_i^2-x_i)=1$, while one can multilinearize the $g_j$ (with a possible increase in complexity) and still retain a refutation, one cannot multilinearize the $h_i$ in general.

As such, to simulate sparse-\lIPS (a complete proof system) we must resort to using the more general \lbIPS over multilinear formulas, where recall that the \lbIPS refutation system considers refutes $\vf,\vx^2-\vx$ with a polynomial $C(\vx,\vy,\vz)$ where $C(\vx,\vnz,\vnz)=0$, $C(\vx,\vf,\vx^2-\vx)=1$, with the added restriction that when viewing $C\in\F[\vx,\vz][\vy]$ that the degree of $C$ with respect to the $\vy$-variables is at most 1, that is, $\deg_\vy C\le 1$. In fact, we in fact establish such a simulation using the subclass of refutations of the form $C(\vx,\vy,\vz)=\sum_j g_j y_j+C'(\vx,\vz)$ where $C'(\vx,\vnz)=0$.  Note that such refutations are only linear in the \emph{non}-boolean axioms, which allows us to circumvent \autoref{ex:multi-form:incomplete}.

We now show how to prove $g\cdot f\equiv \ml(g\cdot f) \mod \vx^2-\vx$ when $f$ is low-degree and $g$ is a multilinear formula, where the proof is supplied by the equality $g\cdot f-\ml(g\cdot f)=C(\vx,\vx^2-\vx)$ where $C(\vx,\vnz)=0$, and where we seek to show that $C$ has a small multilinear formula.  We begin with the special case of $f$ and $g$ being the same monomial.

\begin{lemmawp}\label{res:multilin:mon}
	Let $\vx^\vno=\prod_{i=1}^n x_i$. Then,
	\[
		(\vx^{\vno})^2-\vx^{\vno}
		=C(\vx,\vx^2-\vx)
		\;,
	\]
	where $C(\vx,\vz)\in\F[\vx,z_1,\ldots,z_n]$ is defined by
	\[
		C(\vx,\vz)
		\eqdef 
		(\vz+\vx)^\vno-\vx^\vno
		=\sum_{\vnz<\va\le \vno} \vz^\va\vx^{\vno-\va}
		\;,
	\]
	so that $C(\vx,\vnz)=0$.
\end{lemmawp}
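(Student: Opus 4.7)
The proof is essentially a direct substitution and expansion, so the plan is very short. The plan is to first verify the two equivalent descriptions of $C(\vx,\vz)$, then check the vanishing condition $C(\vx,\vnz)=0$, and finally perform the substitution $\vz=\vx^2-\vx$.

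First I would expand the product form $(\vz+\vx)^\vno=\prod_{i=1}^n(z_i+x_i)$ via distributivity. Each term in the expansion corresponds to a choice, for every coordinate $i$, of either $z_i$ or $x_i$; indexing these choices by the characteristic vector $\va\in\{0,1\}^n$ of coordinates where $z_i$ is picked, one gets $(\vz+\vx)^\vno=\sum_{\va\le\vno}\vz^\va\vx^{\vno-\va}$. Separating the $\va=\vnz$ term, which equals $\vx^\vno$, gives $(\vz+\vx)^\vno-\vx^\vno=\sum_{\vnz<\va\le\vno}\vz^\va\vx^{\vno-\va}$, confirming the two stated forms for $C(\vx,\vz)$. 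The fact that $C(\vx,\vnz)=0$ is then immediate from the sum form, since every remaining term has $\va>\vnz$ and thus contains some factor of $z_i$.

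Next I would substitute $\vz=\vx^2-\vx$ into the product form. Since addition is coordinate-wise, $\vz+\vx=\vx^2-\vx+\vx=\vx^2$, so
\[
C(\vx,\vx^2-\vx)=(\vx^2)^\vno-\vx^\vno=\prod_{i=1}^n x_i^2-\prod_{i=1}^n x_i=\left(\vx^\vno\right)^2-\vx^\vno,
\]
which is exactly the claimed identity.

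There is no real obstacle; the only thing to watch is the notational convention that $(\vx^2-\vx)+\vx$ is to be interpreted coordinate-wise and that $(\vx^2)^\vno=(\vx^\vno)^2$, both of which are routine. Hence the whole argument is one expansion plus one substitution.
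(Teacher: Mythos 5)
Your proof is correct; the paper states this as a lemma without proof precisely because it is the routine expansion-and-substitution you carry out, and your verification (expanding $(\vz+\vx)^\vno$ by distributivity, noting the vanishing at $\vz=\vnz$, and substituting $\vz=\vx^2-\vx$ to get $(\vx^2)^\vno=(\vx^\vno)^2$) is exactly the intended argument.
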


Note that the first expression for $C$ is a $\poly(n)$-sized depth-3 expression, while the second is an $\exp(n)$-sized depth-2 expression.  This difference will, going forward, show that we can multilinearize efficiently in depth-3, but can only efficiently multilinearize \emph{low-degree} monomials in depth-2. We now give an IPS proof for showing how a monomial times a multilinear formula equals its multilinearization.

\begin{lemma}\label{res:multilin:mon_formula}
	Let $g\in\F[x_1,\ldots,x_n,y_1,\ldots,y_d]$ be a multilinear polynomial. Then there is a $C\in\F[\vx,\vy,z_1,\ldots,z_d]$ such that
	\[
		g(\vx,\vy)\cdot \vy^\vno-\ml(g(\vx,\vy)\cdot \vy^\vno)
		=C(\vx,\vy,\vy^2-\vy)
		\;,
	\]
	and $C(\vx,\vy,\vnz)=0$. 
	
	If $g$ is $t$-sparse, then $C$ is computable as a $\poly(t,n,2^d)$-size depth-2 multilinear formula (which is $\poly(t,n,2^d)$-explicit given the computation for $g$), as well as being computable by a $\poly(t,n,d)$-size depth-3 multilinear formula with a $+$-output-gate (which is $\poly(t,n,d)$-explicit given the computation for $g$).  If $g$ is computable by a size-$t$ depth-$D$ multilinear formula, then $C$ is computable by a $\poly(t,2^d)$-size depth-$(D+2)$ multilinear formula with a $+$-output-gate (which is $\poly(t,2^d)$-explicit given the computation for $g$).
\end{lemma}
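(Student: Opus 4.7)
The plan is to derive an explicit formula for $C$ by expanding $g(\vx,\vy) \cdot \vy^\vno$ along the $\vy$-monomials of $g$. Since $g$ is multilinear in $\vy$, I can write $g(\vx,\vy) = \sum_{S \subseteq [d]} g_S(\vx) \vy^S$, so that
\[
	g(\vx,\vy) \cdot \vy^\vno = \sum_S g_S(\vx) \prod_{i \notin S} y_i \prod_{i \in S} y_i^2
	\qquad\text{and}\qquad
	\ml(g \cdot \vy^\vno) = g(\vx,\vno)\, \vy^\vno,
\]
using that $\ml(y_i^2) = y_i$. Subtracting and applying $y_i^2 = y_i + z_i$ in the difference yields the candidate
\[
	C(\vx,\vy,\vz) \eqdef \sum_{S \subseteq [d]} g_S(\vx) \prod_{i \notin S} y_i \left[\prod_{i \in S}(y_i + z_i) - \prod_{i \in S} y_i\right].
\]
The bracket vanishes termwise at $\vz = \vnz$, giving $C(\vx,\vy,\vnz) = 0$, and $C$ is multilinear in $(\vx,\vy,\vz)$ because every variable occurs in at most one factor of each summand.

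For the two $t$-sparse bounds I would package this same $C$ in two different ways. For the depth-$2$ formula I would expand $\prod_{i \in S}(y_i + z_i) - \prod_{i \in S} y_i = \sum_{\emptyset \ne T \subseteq S} \vy^{S \setminus T} \vz^T$ and collect terms by $T$ to get $C = \sum_{\emptyset \ne T \subseteq [d]} \vz^T \vy^{[d] \setminus T} \sum_{S \supseteq T} g_S(\vx)$; since each of the $t$ monomials of $g$ contributes at most $2^d$ monomials to this expansion, one obtains a depth-$2$ formula of size $\poly(t,n,2^d)$. For the depth-$3$ formula I would keep the factored form and split $C = A - B$ with $A = \sum_S g_S(\vx) \prod_{i \notin S} y_i \prod_{i \in S}(y_i + z_i)$ and $B = g(\vx,\vno)\, \vy^\vno$. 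Writing $g$ as its $t$ monomials $\sum_j c_j \vx^{\va_j} \vy^{S_j}$ and distributing each monomial through the outer products turns both $A$ and $B$ into $\sum\prod\sum$ expressions of size $\poly(t,n,d)$, which combine under a single $+$-output-gate into a depth-$3$ multilinear formula.

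For the depth-$(D+2)$ formula when $g$ is a size-$t$ depth-$D$ multilinear formula I would switch tactics and telescope in $\vy$: multilinearity in each $y_i$ gives
\[
	g(\vx,\vy) - g(\vx,\vno) = \sum_{i=1}^d (y_i - 1)\bigl(g(\vx,\vno_{<i},1,\vy_{>i}) - g(\vx,\vno_{<i},0,\vy_{>i})\bigr).
\]
Multiplying both sides by $\vy^\vno$, the factor $(y_i-1)y_i = y_i^2 - y_i$ is identified with $z_i$, yielding
\[
	C(\vx,\vy,\vz) = \sum_{i=1}^d z_i\, \vy^{[d] \setminus \{i\}}\bigl(g(\vx,\vno_{<i},1,\vy_{>i}) - g(\vx,\vno_{<i},0,\vy_{>i})\bigr).
\]
Each $g$-evaluation is a depth-$D$ multilinear formula of size at most $t$, since partial substitution of constants preserves both size and multilinearity. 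Distributing the inner subtraction through the outer sum produces a single top-level $\Sigma$-gate of $2d$ summands, each a $\prod$-gate over $d$ leaves times one depth-$D$ formula (depth $D+1$), for total depth $D + 2$ and size $\poly(t,d)$.

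The only nontrivial checks are that the three expressions define the same polynomial (the first two follow directly from the $\vy$-monomial expansion of the first paragraph, and the third from the standard telescoping decomposition combined with the identity $(y_i-1)y_i = z_i$) and that each packaging genuinely yields a multilinear formula --- which is immediate in the first two by inspection, and for the telescoped one because partial substitutions of constants into a multilinear formula do not reintroduce the substituted variables.
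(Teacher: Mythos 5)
Your candidate $C$ from the first paragraph is exactly the paper's (your index set $S$ plays the role of the exponent vector $\va$ there), and your depth-$2$ and depth-$3$ packagings for $t$-sparse $g$ match the paper's argument. The problem is the third construction. The polynomial
\[
	C_3(\vx,\vy,\vz)\eqdef\sum_{i=1}^d z_i\,\prod_{j\ne i}y_j\cdot\bigl(g(\vx,\vno_{<i},1,\vy_{>i})-g(\vx,\vno_{<i},0,\vy_{>i})\bigr)
\]
does satisfy $C_3(\vx,\vy,\vy^2-\vy)=g\cdot\vy^\vno-\ml(g\cdot\vy^\vno)$ and $C_3(\vx,\vy,\vnz)=0$, but it is in general \emph{not a multilinear polynomial}, so it cannot be computed by a multilinear formula at all: the $i$-th summand multiplies $\prod_{j\ne i}y_j$, which contains every $y_j$ with $j>i$, against $g(\vx,\vno_{<i},1,\vy_{>i})-g(\vx,\vno_{<i},0,\vy_{>i})$, which also depends on those same $y_j$. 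Concretely, for $g=y_1y_2$ one gets $C_3=z_1y_2^2+z_2y_1$. For the same reason your closing claim that the three expressions define the same polynomial is false: they agree only after the substitution $\vz\leftarrow\vy^2-\vy$, not as polynomials in $(\vx,\vy,\vz)$ (for $g=y_1y_2$ the first expression gives $y_1z_2+z_1y_2+z_1z_2$).

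The repair, which is what the paper does, is to keep the factored form of $C$ from your first paragraph and instead extract each coefficient $g_S(\vx)$ --- a polynomial in $\vx$ only --- as a small multilinear formula by interpolation over the $\vy$-cube: there are explicit scalars $\alpha_{S,\vbb}$ with $g_S(\vx)=\sum_{\vbb\in\bits^d}\alpha_{S,\vbb}\,g(\vx,\vbb)$. Each $g(\vx,\vbb)$ is a depth-$D$ multilinear formula of size $\le t$ in $\vx$ alone, so $g_S$ has a depth-$(D+1)$ multilinear formula of size $\poly(t,2^d)$, and substituting into
\[
	C=\sum_{S}g_S(\vx)\prod_{i\notin S}y_i\prod_{i\in S}(y_i+z_i)-\sum_{S}g_S(\vx)\prod_{i=1}^d y_i
\]
now yields genuinely variable-disjoint products and hence the claimed $\poly(t,2^d)$-size depth-$(D+2)$ multilinear formula with a $+$-output-gate.
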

\begin{proof}
	\uline{defining $C$:} Express $g$ as $g=\sum_{\vnz\le\va\le\vno} g_\va(\vx)\vy^\va$ in the ring $\F[\vx][\vy]$, so that each $g_\va$ is multilinear. Then
	\begin{align*}
		g(\vx,\vy)\cdot \vy^\vno-\ml(g(\vx,\vy)\cdot \vy^\vno)
		&=\sum_{\va} g_\va(\vx)\vy^\va \cdot \vy^\vno-\ml\left(\sum_{\va} g_\va(\vx)\vy^\va\cdot \vy^\vno\right)\\
		&=\sum_{\va} g_\va(\vx)\left(\vy^{\va+\vno}-\vy^{\vno}\right)
		=\sum_{\va} g_\va(\vx)\vy^{\vno-\va}\left((\vy^\va)^2-\vy^{\va}\right)
		\;,
	\intertext{and appealing to \autoref{res:multilin:mon},}
		&=\sum_{\va} g_\va(\vx)\vy^{\vno-\va}\left(((\vy^2-\vy)+\vy)^\va-\vy^\va\right)\\
		&=C(\vx,\vy,\vy^2-\vy)
		\;,
	\end{align*}
	where we define
	\[
		C(\vx,\vy,\vz)\eqdef \sum_{\va} g_\va(\vx)\vy^{\vno-\va}\left((\vz+\vy)^\va-\vy^{\va}\right)
		\;.
	\]
	As $(\vz+\vy)^\va=\vy^{\va}$ under $\vz\leftarrow\vnz$ we have that $C(\vx,\vy,\vnz)=0$.
	
	\uline{$g$ is $t$-sparse:} As $g$ is $t$-sparse and multilinear, so are each $g_\va$, so that $g_\va=\sum_{i=1}^t \alpha_{\va,i}\vx^{\vc_{\va,i}}$, and thus
	\begin{align*}
		C(\vx,\vy,\vz)
		&=\sum_\va \sum_{i=1}^t \alpha_{\va,i}\vx^{\vc_{\va,i}}\vy^{\vno-\va}\left((\vz+\vy)^\va-\vy^{\va}\right)\\
		&=\sum_\va \sum_{i=1}^t \alpha_{\va,i}\vx^{\vc_{\va,i}}\vy^{\vno-\va}(\vz+\vy)^\va
			-\sum_\va \sum_{i=1}^t \alpha_{\va,i}\vx^{\vc_{\va,i}}\vy^\vno\;,
		\intertext{where this is clearly an explicit depth-3 ($\sum\prod\sum$) multilinear formula (as $\vy^{\vno-\va}$ is variable-disjoint from $(\vz+\vy)^\va$), and the size of this formula is $\poly(n,t,d)$ as there are at most $t$ such $\va$ where $g_\va\ne 0$ as $g$ is $t$-sparse. Continuing the expansion, appealing to \autoref{res:multilin:mon},}
		&=\sum_\va \sum_{i=1}^t \alpha_{\va,i}\vx^{\vc_{\va,i}}\vy^{\vno-\va}\sum_{0\le \vb\le \va}\vz^{\vb}\vy^{\va-\vb}
			-\sum_\va \sum_{i=1}^t \alpha_{\va,i}\vx^{\vc_{\va,i}}\vy^\vno\\
		&=\sum_\va \sum_{i=1}^t \alpha_{\va,i}\vx^{\vc_{\va,i}}\sum_{\vnz<\vb\le \va}\vz^{\vb}\vy^{\va-\vb}\\
		&=\sum_\va \sum_{i=1}^t \sum_{\vnz<\vb\le \va}\alpha_{\va,i}\vx^{\vc_{\va,i}}\vz^{\vb}\vy^{\va-\vb}
		\;,
	\end{align*}
	which is then easily seen to be explicit and $\poly(n,t,2^d)$-sparse appealing to the above logic.

	\uline{$g$ a multilinear formula:} By interpolation, it follows that for each $\va$ there are $\poly(2^d)$-explicit constants $\vaa_{\va,\vbb}$ such that $g_\va(\vx)=\sum_{\vbb\in\bits^d} \alpha_{\va,\vbb}g(\vx,\vbb)$.  From this it follows that $g_\va$ is computable by a depth $D+1$ multilinear formula of size $\poly(t,2^d)$.  Expanding the definition of $C$ we get that
	\begin{align*}
		C(\vx,\vy,\vz)
		&=\sum_{\va} g_\va(\vx)\vy^{\vno-\va}\left((\vz+\vy)^\va-\vy^{\va}\right)\\
		&=\sum_{\va} g_\va(\vx)\vy^{\vno-\va}(\vz+\vy)^\va
			-\sum_{\va} g_\va(\vx)\vy^\vno\\
		&=\sum_{\va} \sum_{\vbb\in\bits^d} \alpha_{\va,\vbb}g(\vx,\vbb)\vy^{\vno-\va}(\vz+\vy)^\va
			-\sum_{\va} \sum_{\vbb\in\bits^d} \alpha_{\va,\vbb}g(\vx,\vbb)\vy^\vno
		\;,
	\end{align*}
	which is clearly an explicit depth $D+2$ multilinear formula of size $\poly(t,2^d)$, as $D\ge 1$ so that the computation of the $z_i+y_i$ is parallelized with computing $g(\vx,\vbb)$, and we absorb the subtraction into the overall top-gate of addition.
\end{proof}

We can then straightforwardly extend this to multilinearizing the product of a low-degree sparse multilinear polynomial and a multilinear formula, as we can use that multilinearization is linear.

\begin{corollarywp}\label{res:multilin:sparse_formula}
	Let $g\in\F[x_1,\ldots,x_n]$ be a multilinear polynomial and $f\in\F[\vx]$ a $s$-sparse multilinear polynomial of degree $\le d$. Then there is a $C\in\F[\vx,z_1,\ldots,z_n]$ such that
	\[
		g\cdot f-\ml(g\cdot f)
		=C(\vx,\vx^2-\vx)
		\;,
	\]
	and $C(\vx,\vnz)=0$. 
	
	If $g$ is $t$-sparse, then $C$ is computable as a $\poly(s,t,n,2^d)$-size depth-2 multilinear formula (which is $\poly(s,t,n,2^d)$-explicit given the computations for $f,g$), as well as being computable by a $\poly(s,t,n,d)$-size depth-3 multilinear formula with a $+$-output-gate (which is $\poly(s,t,n,d)$-explicit given the computations for $f,g$).  If $g$ is computable by a size-$t$ depth-$D$ multilinear formula, then $C$ is computable by a $\poly(s,t,2^d)$-size depth-$(D+2)$ multilinear formula with a $+$-output-gate (which is $\poly(s,t,2^d)$-explicit given the computations for $f,g$).
\end{corollarywp}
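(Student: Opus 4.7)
The plan is to reduce to Lemma \ref{res:multilin:mon_formula} by decomposing $f$ into its monomials and exploiting the linearity of multilinearization.

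First, I would write $f$ in its sparse representation as $f = \sum_{k=1}^{s} \alpha_k \vx^{\va_k}$, where each $\va_k \in \bits^n$ (since $f$ is multilinear) and $\ellone{\va_k}\le d$. Let $S_k = \supp(\va_k) \subseteq [n]$, so $|S_k|\le d$. Since multilinearization is linear (\autoref{fact:multilinearization}),
\[
g\cdot f - \ml(g\cdot f) = \sum_{k=1}^{s} \alpha_k\bigl(g\cdot \vx^{\va_k} - \ml(g\cdot \vx^{\va_k})\bigr).
\]
For each $k$, I would apply \autoref{res:multilin:mon_formula} with the roles of the ``$\vx$'' and ``$\vy$'' variables relabeled so that the $\vy$-variables in the lemma correspond to $\{x_i\}_{i\in S_k}$ (of which there are at most $d$) and the $\vx$-variables of the lemma correspond to the remaining variables $\{x_i\}_{i\notin S_k}$. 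This yields a polynomial $C_k(\vx, \vz|_{S_k})$ with $C_k(\vx, \vnz)=0$ and
\[
g\cdot \vx^{\va_k} - \ml(g\cdot \vx^{\va_k}) = C_k(\vx, (\vx^2-\vx)|_{S_k}).
\]
I would then define $C(\vx,\vz) \eqdef \sum_{k=1}^s \alpha_k C_k(\vx,\vz|_{S_k})$, which satisfies $C(\vx,\vnz)=0$ and $g\cdot f - \ml(g\cdot f) = C(\vx,\vx^2-\vx)$ as required.

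It remains to bound the complexity of $C$, which follows by summing $s$ copies of the bounds in \autoref{res:multilin:mon_formula}. Specifically, when $g$ is $t$-sparse, each $C_k$ is a depth-2 multilinear formula of size $\poly(t,n,2^{|S_k|}) = \poly(t,n,2^d)$, so $C$ has size $\poly(s,t,n,2^d)$ and depth 2 (summing depth-2 formulas with $+$-top gates keeps the depth at 2). Similarly each $C_k$ has a depth-3 multilinear formula with $+$-output-gate of size $\poly(t,n,d)$, so merging the $+$-output-gates yields a depth-3 formula of size $\poly(s,t,n,d)$ for $C$. When $g$ is a size-$t$ depth-$D$ multilinear formula, each $C_k$ is a depth-$(D+2)$ multilinear formula with $+$-output-gate of size $\poly(t,2^{|S_k|}) = \poly(t,2^d)$, and merging top gates produces a size-$\poly(s,t,2^d)$ depth-$(D+2)$ multilinear formula for $C$. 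Explicitness is preserved throughout, as each $C_k$ is explicitly given by the corresponding instance of \autoref{res:multilin:mon_formula} once we have explicit computations for $f$ (to enumerate its monomials) and for $g$.

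The only real subtlety is ensuring the outputs of the lemma can be combined without breaking multilinearity or increasing depth beyond what is claimed; this is handled by the fact that the $C_k$'s for distinct $k$ share only the $\vx$-variables (and the $\vz$-variables, but multilinearity in $\vz$ is not required) and that each $C_k$ either is depth-2 or has a $+$-output-gate, permitting a single summation layer to combine them. No other step is genuinely hard.
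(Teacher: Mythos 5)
Your proposal is correct and matches the paper's intended argument: the corollary is stated without proof precisely because it follows from Lemma~\ref{res:multilin:mon_formula} by decomposing the $s$-sparse $f$ into its (at most degree-$d$) monomials, invoking the lemma with the $\vy$-variables playing the role of each monomial's support, and summing via linearity of $\ml$, exactly as you describe. Your handling of the complexity bookkeeping (merging the $+$-output-gates, preserving multilinearity under the top summation) is also the right justification for the stated size and depth bounds.
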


We now show how to derive multilinear-formula-\lbIPS refutations for $\vf,\vx^2-\vx$ from equations of the form $\sum_j g_jf_j\equiv 1\mod\vx^2-\vx$.

\begin{corollary}\label{res:multilin:multilin-lbIPS}
	Let $f_1,\ldots,f_m\in\F[x_1,\ldots,x_n]$ be degree at most $d$ multilinear $s$-sparse polynomials which are unsatisfiable over $\vx\in\bits^n$.  Suppose that there are multilinear $g_j\in\F[\vx]$ such that
	\[
		\sum_{i=1}^m g_j(\vx)f_j(\vx)\equiv 1\mod\vx^2-\vx
		\;.
	\]
	Then there is a multilinear-formula-\lbIPS refutation $C(\vx,\vy,\vz)$ such that
	\begin{itemize}
		\item If the $g_j$ are $t$-sparse, then $C$ is computable by a $\poly(s,t,n,m,2^d)$-size depth-2 multilinear formula (which is $\poly(s,t,n,m,2^d)$-explicit given the computations for the $f_j,g_j$), as well as being computable by a $\poly(s,t,n,m,d)$-size depth-3 multilinear formula (which is $\poly(s,t,n,m,d)$-explicit given the computations for $f_j,g_j$).
		\item If the $g_j$ are computable by size-$t$ depth-$D$ multilinear formula, then $C$ is computable by a $\poly(s,t,m,2^d)$-size depth-$(D+2)$ multilinear formula (which is $\poly(s,t,m,2^d)$-explicit given the computations for $f_j,g_j$).
	\end{itemize}
\end{corollary}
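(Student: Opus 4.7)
\begin{proof-sketch}
The plan is to assemble the refutation directly from the pieces already supplied by \autoref{res:multilin:sparse_formula}, using the hypothesis $\sum_j g_j f_j \equiv 1 \bmod \vx^2-\vx$ only through its multilinear consequence. For each $j\in[m]$, apply \autoref{res:multilin:sparse_formula} to the pair $(g_j,f_j)$: since $g_j$ is multilinear and $f_j$ is multilinear, $s$-sparse, and of degree $\le d$, we obtain a polynomial $C_j(\vx,\vz)$ with $C_j(\vx,\vnz)=0$ and
\[
g_j(\vx)f_j(\vx)-\ml(g_j(\vx)f_j(\vx)) = C_j(\vx,\vx^2-\vx),
\]
together with multilinear formulas for $C_j$ whose size and depth match the two cases of the corollary.

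Next, define
\[
C(\vx,\vy,\vz) \eqdef \sum_{j=1}^m g_j(\vx)\, y_j \;-\; \sum_{j=1}^m C_j(\vx,\vz).
\]
This is linear in $\vy$ by construction, so it belongs to \lbIPS. To verify it is a valid refutation, first observe $C(\vx,\vnz,\vnz) = 0 - \sum_j C_j(\vx,\vnz) = 0$. For the second condition, substitute $y_j\leftarrow f_j(\vx)$ and $z_i\leftarrow x_i^2-x_i$:
\[
C(\vx,\vf,\vx^2-\vx) = \sum_{j=1}^m g_j f_j - \sum_{j=1}^m \bigl(g_jf_j-\ml(g_jf_j)\bigr) = \sum_{j=1}^m \ml(g_jf_j).
\]
By linearity of $\ml$ (\autoref{fact:multilinearization}) and the hypothesis $\sum_j g_jf_j\equiv 1\bmod \vx^2-\vx$,
\[
\sum_{j=1}^m \ml(g_jf_j) = \ml\!\left(\sum_{j=1}^m g_jf_j\right) = \ml(1) = 1.
\]

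Finally, I would account for the complexity. Each summand $g_j(\vx)y_j$ is multilinear (the $y_j$ is a fresh variable) and has essentially the formula complexity of $g_j$ plus an extra multiplication gate. Each $C_j(\vx,\vz)$ is a multilinear formula with the parameters given by \autoref{res:multilin:sparse_formula} applied to the pair $(g_j,f_j)$. Summing the $2m$ multilinear subformulas under a single top $+$-gate yields $C$ with only a factor-$m$ blow-up, preserving multilinearity since the $y_j$'s are distinct fresh variables and the $C_j$'s are already multilinear in $(\vx,\vz)$; the depth either stays at $2$, $3$, or $D+2$ depending on the case. Explicitness of $C$ follows from the explicitness of each $C_j$ given the formulas for $f_j,g_j$. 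There is no real obstacle here beyond careful bookkeeping, because \autoref{res:multilin:sparse_formula} already isolates the only nontrivial step—namely producing a small $C_j$ that certifies multilinearization.
\end{proof-sketch}
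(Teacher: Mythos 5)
Your proposal is correct and matches the paper's own proof essentially step for step: both apply \autoref{res:multilin:sparse_formula} to each pair $(g_j,f_j)$, define $C(\vx,\vy,\vz)=\sum_j g_j(\vx)y_j-\sum_j C_j(\vx,\vz)$, verify the refutation via linearity of $\ml$, and bound the size and depth by collapsing the top addition gates. No gaps; the bookkeeping you defer is exactly what the paper carries out.
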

\begin{proof}
	\uline{construction:}
	By the above multilinearization (\autoref{res:multilin:sparse_formula}), there are $C_j\in\F[\vx,\vz]$ such that
	\[
		g_j(\vx)f_j(\vx)=\ml(g_jf_j)+C_j(\vx,\vx^2-\vx)
		\;.
	\]
	where $C_j(\vx,\vnz)=0$. We now define
	\[
		C(\vx,\vy,\vz)\eqdef \sum_{j=1}^m \left (g_j(\vx)y_j - C_j(\vx,\vz)\right)
		\;.
	\]
	We now show that this is a valid IPS refutation. Observe that $C(\vx,\vnz,\vnz)=0$ and that
	\begin{align*}
		C(\vx,\vf,\vx^2-\vx)
		&=\sum_{j=1}^m \left (g_j(\vx)f_j(\vx) - C_j(\vx,\vx^2-\vx)\right)\\\
		&=\sum_{j=1}^m \ml(g_jf_j)\\
		\intertext{as $\sum_{i=1}^m g_j(\vx)f_j(\vx)\equiv 1\mod \vx^2-\vx$ we have that 
			\[
				\sum_{j=1}^m \ml(g_jf_j)
				=
				\ml\left(\sum_{i=1}^m g_j(\vx)f_j(\vx)\right)=1
				\;,
			\]
			where we appealed to linearity of multilinearization (\autoref{fact:multilinearization}), so that
		}
		C(\vx,\vf,\vx^2-\vx)
		&=\sum_{j=1}^m \ml(g_jf_j)
		=1
		\;.
	\end{align*}

	\uline{complexity:}
	The claim now follows from appealing to \autoref{res:multilin:sparse_formula} for bounding the complexity of the $C_j$.  That is, if the $g_j$ are $t$-sparse then $\sum_{j=1}^m g_j(\vx)y_j$ is $tm$-sparse and thus computable by a $\poly(t,m)$-size depth-2 multilinear formula with $+$-output-gate. As each $C_j$ is computable by a $\poly(s,t,n,2^d)$-size depth-2 or $\poly(s,t,n,d)$-size depth-3 multilinear formula (each having a $+$-output-gate), it follows that $C(\vx,\vy,\vz)\eqdef \sum_{j=1}^m \left (g_j(\vx)y_j - C_j(\vx,\vz)\right)$ can be explicitly computed by a $\poly(s,t,n,m,2^d)$-size depth-2 or $\poly(s,t,n,m,d)$-size depth-3 multilinear formula (collapsing addition gates into a single level).

	If the $g_j$ are computable by size-$t$ depth-$D$ multilinear formula then $\sum_{j=1}^m g_j(\vx)y_j$ is computable by size $\poly(m,t)$-size depth-$(D+2)$ multilinear formula (with a $+$-output-gate), and each $C_j$ is computable by a $\poly(s,t,2^d)$-size depth-$(D+2)$ multilinear formula with a $+$-output-gate, from which it follows that $C(\vx,\vy,\vz)\eqdef \sum_{j=1}^m \left (g_j(\vx)y_j - C_j(\vx,\vz)\right)$ can be explicitly computed by a $\poly(s,t,m,2^d)$-size depth-$(D+2)$ multilinear formula by collapsing addition gates.
\end{proof}

We now conclude by showing that multilinear-formula-\lbIPS can efficiently simulate sparse-\lIPS when the axioms are low-degree. As this latter system is complete, this shows the former is as well. That is, we allow the refutation to depend non-linearly on the boolean axioms, but only linearly on the other axioms.

\begin{corollary}\label{res:multilin:simulate-sparse}
	Let $f_1,\ldots,f_m\in\F[x_1,\ldots,x_n]$ be degree at most $d$ $s$-sparse polynomials unsatisfiable over $\vx\in\bits^n$.  Suppose they have an $\sumprod$-\lIPS refutation, that is, that there are $t$-sparse polynomials $g_1,\ldots,g_m,h_1,\ldots,h_n\in\F[\vx]$ such that $\sum_{j=1}^m g_j f_j+\sum_{i=1}^n h_i \cdot (x_i^2-x_i)=1$.  Then $\vf,\vx^2-\vx$ can be refuted by a depth-2 multilinear-formula-\lbIPS proof of $\poly(s,t,n,m,2^d)$-size, or by a depth-3 multilinear-formula-\lbIPS proof of $\poly(s,t,n,m,d)$-size.
\end{corollary}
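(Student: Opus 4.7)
The plan is to reduce this to \autoref{res:multilin:multilin-lbIPS}, which already constructs a multilinear-formula-\lbIPS refutation from an equation of the form $\sum_j g_j f_j \equiv 1 \mod \vx^2-\vx$ with multilinear $g_j$'s. The hypothesis here gives a genuine polynomial identity $\sum_j g_j f_j + \sum_i h_i(x_i^2-x_i) = 1$, which in particular implies the modular identity $\sum_j g_j f_j \equiv 1 \mod \vx^2-\vx$, so the $h_i$'s essentially disappear from view and we only need to worry about the $g_j$'s.

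The first step is to replace each $g_j$ by its multilinearization $\ml(g_j)$. Since multilinearization is the identity on the boolean cube, the congruence $\sum_j \ml(g_j) f_j \equiv 1 \mod \vx^2-\vx$ continues to hold. Crucially, \autoref{fact:multilinearization} guarantees that $\ml$ preserves sparsity, so each $\ml(g_j)$ is still $t$-sparse (and of course multilinear). Thus we have arrived at exactly the hypothesis of \autoref{res:multilin:multilin-lbIPS} with multilinear $t$-sparse $g_j$'s and $s$-sparse degree-$d$ multilinear axioms $f_j$.

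The second step is simply to invoke \autoref{res:multilin:multilin-lbIPS} with these parameters. Its two size bounds translate directly into the two claimed bounds of the corollary: the depth-2 multilinear-formula-\lbIPS refutation of size $\poly(s,t,n,m,2^d)$ and the depth-3 one of size $\poly(s,t,n,m,d)$, with the appropriate explicitness.

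There is no real obstacle: the whole work has already been done in \autoref{res:multilin:mon}, \autoref{res:multilin:mon_formula}, \autoref{res:multilin:sparse_formula}, and \autoref{res:multilin:multilin-lbIPS}. The only point worth emphasizing in the write-up is why we may freely discard the $h_i$ terms -- namely, because \lbIPS allows arbitrary (non-linear, non-multilinear) dependence on the boolean-axiom placeholders $\vz$, the multilinearization machinery absorbs all the ``junk'' produced by forcing the $g_j f_j$ products to be multilinear into those $\vz$-coordinates, which is exactly the reason \autoref{ex:multi-form:incomplete} forces us into \lbIPS rather than \lIPS in the first place.
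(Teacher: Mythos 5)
Your proposal is correct and is essentially the paper's own proof: the paper likewise observes that the sparse-\lIPS identity gives $\sum_j \ml(g_j) f_j \equiv 1 \bmod \vx^2-\vx$ with the $\ml(g_j)$ multilinear and still $t$-sparse, and then invokes \autoref{res:multilin:multilin-lbIPS} to get both the depth-2 and depth-3 bounds. No meaningful difference in approach.
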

\begin{proof}
	This follows from \autoref{res:multilin:multilin-lbIPS} by noting that $\sum_j \ml(g_j)f_j\equiv 1\mod \vx^2-\vx$, and that the $\ml(g_j)$ are multilinear and $t$-sparse.
\end{proof}

\subsection{Refutations of the Subset-Sum Axiom}\label{sec:subset:ub}

We now give efficient IPS refutations of the subset-sum axiom, where these IPS refutations can be even placed in the restricted roABP-\lIPS or multilinear-formula-\lIPS subclasses.  That is, we give such refutations for whenever the polynomial $\sum_i \alpha_ix_i-\beta$ is unsatisfiable over the boolean cube $\bits^n$, where the size of the refutation is polynomial in the size of the set $A\eqdef\{\sum_i \alpha_ix_i: \vx\in\bits^n\}$.  A motivating example is when $\vaa=\vno$ so that $A=\{0,\ldots,n\}$.

To construct our refutations, we first show that there is an efficiently computable polynomial $f$ such that $f(\vx)\cdot (\sum_i \alpha_ix_i-\beta)\equiv 1\mod\vx^2-\vx$.  This will be done by considering the univariate polynomial $p(t)\eqdef\prod_{\alpha\in A}(t-\alpha)$. Using that for any univariate $p(x)$ that $x-y$ divides $p(x)-p(y)$, we see that $p(\sum_i\alpha_ix_i)-p(\beta)$ is a multiple of $\sum_i \alpha_i x_i-\beta$.  As $\sum_i \alpha_ix_i-\beta$ is unsatisfiable it must be that $\beta\notin A$. This implies that $p(\sum_i\alpha_ix_i)\equiv 0\mod\vx^2-\vx$ while $p(\beta)\ne 0$.  Consequently, $p(\sum_i\alpha_ix_i)-p(\beta)$ is equivalent to a nonzero constant modulo $\vx^2-\vx$, yielding the Nullstellensatz refutation
\[
	\frac{1}{-p(\beta)}\cdot \frac{p(\sum_i\alpha_ix_i)-p(\beta)}{\sum_i \alpha_i x_i-\beta}\cdot (\tsum_i \alpha_i x_i-\beta)\equiv 1 \mod\vx^2-\vx
	\;.
\]
By understanding the quotient $\frac{p(\sum_i\alpha_ix_i)-p(\beta)}{\sum_i \alpha_i x_i-\beta}$ we see that it can be efficiently computed by a small $\sumpowsum$ formula and thus an roABP, so that using our multilinearization result for roABPs (\autoref{res:multilin:roABP}) this yields the desired roABP-\lIPS refutation.  However, this does not yield the desired multilinear-formula-\lIPS refutation.  For this, we need to (over a large field) convert the above quotient to a sum of products of univariates using duality (\autoref{res:sumpowsum:duality}).  We can then multilinearize this to a sum of products of \emph{linear} univariates, which is a depth-3 multilinear formula.  By appealing to our proof-of-multilinearization result for multilinear formula (\autoref{res:multilin:multilin-lbIPS}) one obtains a multilinear-formula-\lbIPS refutation, and we give a direct proof which actually yields the desired multilinear-formula-\lIPS refutation.

We briefly remark that for the special case of $\vaa=\vno$, one can explicitly describe the unique multilinear polynomial $f$ such that $f(\vx)(\sum_i x_i-\beta)\equiv 1\mod\vx^2-\vx$.  This description (\autoref{res:subsetsum:multlin}) shows that $f$ is a linear combination of elementary symmetric polynomials, which implies the desired complexity upper bounds for this case via known bounds on the complexity of elementary symmetric polynomials (\cite{NisanWigderson96}).  However, this proof strategy is more technical and thus we pursue the more conceptual outline given above to bound the complexity of $f$ for general $A$.

\begin{proposition}\label{res:ips-ubs:subset}
	Let $\vaa\in\F^n$, $\beta\in\F$ and $A\eqdef\{\sum_{i=1}^n \alpha_i x_i : \vx\in\bits^n\}$ be so that $\beta\notin A$. Then there is a multilinear polynomial $f\in\F[\vx]$ such that
	\[
		f(\vx)\cdot \left(\tsum_i \alpha_i x_i-\beta\right)\equiv 1\mod  \vx^2-\vx
		\;.
	\]
	For any $|\F|$, $f$ is computable by a $\poly(|A|,n)$-explicit $\poly(|A|,n)$-width roABP of individual degree $\le 1$.

	If $|\F|\ge\poly(|A|,n)$, then $f$ is computable as a sum of product of linear univariates (and hence a depth-3 multilinear formula)
	\[
		f(\vx)=\sum_{i=1}^s f_{i,1}(x_1)\cdots f_{i,n}(x_n)
		\;,
	\]
	where each $f_{i,j}\in\F[x_i]$ has $\deg f_{i,j}\le 1$, $s\le \poly(|A|,n)$, and this expression is $\poly(|A|,n)$-explicit.
\end{proposition}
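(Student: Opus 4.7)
\bigskip
\noindent\emph{Proof plan.}\hspace*{1em}
The plan is to follow the road map sketched in the paragraph preceding the proposition. Let $p(t) \eqdef \prod_{\alpha \in A}(t-\alpha) \in \F[t]$, a univariate of degree exactly $|A|$. Define the two-variable polynomial
\[
  q(t,s) \eqdef \frac{p(t)-p(s)}{t-s} \in \F[t,s],
\]
which is a genuine polynomial since $(t-s) \mid (p(t)-p(s))$ for any univariate $p$. Specializing $s \leftarrow \beta$ and $t \leftarrow \sum_i \alpha_i x_i$ yields the identity
\[
  q\!\left(\tsum_i \alpha_i x_i,\, \beta\right) \cdot \left(\tsum_i \alpha_i x_i - \beta\right) \;=\; p\!\left(\tsum_i \alpha_i x_i\right) - p(\beta)
\]
in $\F[\vx]$. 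For any $\vx \in \bits^n$ we have $\sum_i \alpha_i x_i \in A$, so $p(\sum_i \alpha_i x_i)$ vanishes on $\bits^n$ and thus $p(\sum_i \alpha_i x_i) \equiv 0 \mod \vx^2-\vx$. On the other hand, $\beta \notin A$ forces $p(\beta) \ne 0$. Setting $f_0(\vx) \eqdef -\tfrac{1}{p(\beta)} \cdot q(\sum_i \alpha_i x_i, \beta)$ and then $f \eqdef \ml(f_0)$, the above identity reduced modulo $\vx^2-\vx$ gives $f \cdot (\sum_i \alpha_i x_i - \beta) \equiv 1 \mod \vx^2-\vx$, as desired (using that multilinearization preserves the evaluation on $\bits^n$, and the unique multilinear representative of the constant $1$ is $1$).

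For the roABP upper bound, I would observe that $q(t,\beta)$ is a univariate in $t$ of degree $|A|-1$, so $f_0$ is a polynomial in the single linear form $\ell(\vx) \eqdef \sum_i \alpha_i x_i$ of degree $\le |A|-1$. Expanding this univariate in the monomial basis writes $f_0(\vx)$ as a $\sumpowsum$ formula of size $\poly(|A|,n)$: a sum of powers of the linear form $\ell$ (one can alternatively reshift by constants if desired, but a raw expansion already suffices). Invoking \autoref{res:sumpowsum:roABP} yields a $\poly(|A|,n)$-width roABP for $f_0$ over any field, in any variable order, and then applying the roABP multilinearization result \autoref{res:multilin:roABP} produces a $\poly(|A|,n)$-width, $\poly(|A|,n)$-explicit roABP computing $f = \ml(f_0)$, now of individual degree $\le 1$.

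For the large-field statement, I would apply Saxena's duality \autoref{res:sumpowsum:duality} to each power $\ell^d$ appearing in the expansion of $f_0$: over $|\F| \ge \poly(|A|,n)$ this writes each such power as a sum of $\poly(|A|,n)$ products of univariates $f_{i,1}(x_1)\cdots f_{i,n}(x_n)$ of individual degree $\le |A|$, and all of this is $\poly(|A|,n)$-explicit. Summing over the (at most $|A|$) powers appearing in $f_0$ gives an explicit depth-$3$ representation $f_0 = \sum_i g_{i,1}(x_1)\cdots g_{i,n}(x_n)$ of the claimed size. Finally I would multilinearize term by term: since for each term the factors $g_{i,j}(x_j)$ act on pairwise disjoint variable sets, $\ml$ distributes through the product, so $\ml(f_0) = \sum_i \ml(g_{i,1})(x_1)\cdots \ml(g_{i,n})(x_n)$, and each $\ml(g_{i,j})(x_j) = g_{i,j}(0) + (g_{i,j}(1)-g_{i,j}(0))\,x_j$ is a linear univariate, obtainable explicitly from the description of $g_{i,j}$. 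Setting $f_{i,j} \eqdef \ml(g_{i,j})$ yields the sum-of-products-of-linear-univariates description required.

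The only step with any subtlety is the explicitness bookkeeping and the verification that $\ml$ commutes with the variable-disjoint product in the duality expansion; both are routine given the properties listed in \autoref{fact:multilinearization}. The rest of the argument is a direct assembly of the polynomial identity $p(\ell(\vx)) - p(\beta) = q(\ell(\vx),\beta)\cdot(\ell(\vx)-\beta)$ together with the already-proved simulation tools \autoref{res:sumpowsum:duality}, \autoref{res:sumpowsum:roABP}, and \autoref{res:multilin:roABP}.
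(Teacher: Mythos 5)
Your proposal is correct and follows essentially the same route as the paper: the identity $p(\ell(\vx))-p(\beta)=q(\ell(\vx),\beta)\cdot(\ell(\vx)-\beta)$ with $p(t)=\prod_{\alpha\in A}(t-\alpha)$, expansion of the quotient as a $\sumpowsum$ formula in the single linear form $\ell$, then \autoref{res:sumpowsum:roABP} plus \autoref{res:multilin:roABP} for the roABP case and \autoref{res:sumpowsum:duality} plus term-by-term multilinearization of variable-disjoint products for the depth-3 case. The only item you fold into ``routine bookkeeping'' that the paper spells out is that $A$ itself (and hence the coefficients of $p$) must be computed from $\vaa$ in $\poly(|A|,n)$ rather than $2^n$ steps, which is done by the dynamic program $A_{j+1}\subseteq A_j\cup(A_j+\alpha_{j+1})$.
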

\begin{proof}
	\uline{computing $A$:} We first note that $A$ can be computed from $\vaa$ in $\poly(|A|,n)$-steps (as opposed to the naive $2^n$ steps).  That is, define $A_j\eqdef\{\sum_{i=1}^j \alpha_i x_i : x_1,\ldots,x_j\in\bits\}$, so that $A_0=\emptyset$ and $A_n=A$.  It follows that for all $j$, we have that $A_j\subseteq A$ and thus $|A_j|\le |A|$, and that $A_{j+1}\subseteq A_j\cup (A_j+\alpha_j)$.  It follows that we can compute $A_{j+1}$ from $A_j$ in $\poly(|A|)$ time, so that $A=A_n$ can be computed in $\poly(|A|,n)$-time.
	
	\uline{defining $f$:} 
	Define $p(t)\in\F[t]$ by $p\eqdef\prod_{\alpha\in A}(t-\alpha)$, so that $p(A)=0$ and $p(\beta)\ne 0$.  Express $p(t)$ in its monomial representation as $p(t)=\sum_{k=0}^{|A|} \gamma_k t^k$, where the $\gamma_k$ can be computed in $\poly(|A|)$ time from $\vaa$ by computing $A$ as above.  Then observe that
	\begin{align*}
		p(t)-p(\beta)
		&=\left(\sum_{k=0}^{|A|} \gamma_k\frac{t^k-\beta^k}{t-\beta}\right)(t-\beta)\\
		&=\left(\sum_{k=0}^{|A|} \gamma_k\sum_{j=0}^{k-1} t^{j}\beta^{(k-1)-j}\right)(t-\beta)\\
		&=\left(\sum_{j=0}^{|A|-1}\left(\sum_{k=j+1}^{|A|} \gamma_k\beta^{(k-1)-j} \right)t^j\right)(t-\beta)
		\;.
	\end{align*}
	Thus, plugging in $t\leftarrow \sum_i \alpha_i x_i$, we can define the polynomial $g(\vx)\in\F[\vx]$ by
	\begin{align}
		g(\vx)
		&\eqdef\frac{p(\sum_i \alpha_i x_i)-p(\beta)}{\sum_i \alpha_i x_i -\beta}\nonumber\\
		&=\sum_{j=0}^{|A|-1}\left(\sum_{k=j+1}^{|A|} \gamma_k\beta^{(k-1)-j} \right) \left(\sum_i \alpha_i x_i\right)^j\label{eq:subset}
		\;.
	\end{align}
	Hence,
	\begin{align*}
		g(\vx) (\tsum_i \alpha_ix_i-\beta)
		&=p(\tsum_i \alpha_i x_i)-p(\beta)
		\;.
	\end{align*}
	For any $\vx\in\bits^n$ we have that $\sum_i \alpha_i x_i\in A$.  As $p(A)=0$ it follows that $p(\sum_i \alpha_i x_i)=0$ for all $\vx\in\bits^n$. This implies that $p(\sum_i \alpha_i x_i)\equiv 0 \mod \vx^2-\vx$, yielding
	\begin{align*}
		g(\vx) (\tsum_i \alpha_ix_i-\beta)
		&\equiv -p(\beta)\mod\vx^2-\vx
		\;.
	\end{align*}
	As $-p(\beta)\in\F\setminus\{0\}$, we have that
	\[
		\frac{1}{-p(\beta)}\cdot g(\vx) \cdot \left(\tsum_i \alpha_ix_i-\beta\right)\equiv 1 \mod \vx^2-\vx
		\;.
	\]
	We now simply multilinearize, and thus define the multilinear polynomial 
	\[
		f(\vx)\eqdef \ml\left(\frac{1}{-p(\beta)}\cdot g(\vx)\right)
		\;.
	\]
	First, we see that this has the desired form, using the interaction of multilinearization and multiplication (\autoref{fact:multilinearization}).
	\begin{align*}
		1
		&=\ml\left(\frac{1}{-p(\beta)} g(\vx) \cdot (\tsum_i \alpha_ix_i-\beta)\right)\\
		&=\ml\left(\ml\left(\frac{1}{-p(\beta)}\cdot g(\vx)\right)\ml(\tsum_i \alpha_ix_i-\beta)\right)\\
		&=\ml\Big(f \cdot \ml(\tsum_i \alpha_ix_i-\beta)\Big)\\
		&=\ml\Big(f\cdot (\tsum_i \alpha_ix_i-\beta)\Big)
	\end{align*}
	Thus, $f\cdot (\tsum_i \alpha_ix_i-\beta)\equiv 1\mod\vx^2-\vx$ as desired.

	\uline{computing $f$ as an roABP:}  By \autoref{eq:subset} we see that $g(\vx)$ is computable by a $\poly(|A|,n)$-size $\sumpowsum$-formula, and by the efficient simulation of $\sumpowsum$-formula by roABPs (\autoref{res:sumpowsum:roABP}) $g(\vx)$ and thus $\frac{1}{-p(\beta)}\cdot g(\vx)$ are computable by $\poly(|A|,n)$-width roABPs of $\poly(|A|,n)$-degree. Noting that roABPs can be efficiently multilinearized (\autoref{res:multilin:roABP}) we see that $f=\ml(\frac{1}{-p(\beta)}\cdot g(\vx))$ can thus be computed by such an roABP also, where the individual degree of this roABP is at most 1. Finally, note that each of these steps has the required explicitness.

	\uline{computing $f$ via duality:} We apply duality (\autoref{res:sumpowsum:duality}) to see that over large enough fields there are univariates $g_{j,\ell,i}$ of degree at most $|A|$, where
	\begin{align*}
		g(\vx)
		&=\sum_{j=0}^{|A|-1}\left(\sum_{k=j+1}^{|A|} \gamma_k\beta^{(k-1)-j} \right) \left(\sum_i \alpha_i x_i\right)^j\\
		&=\sum_{j=0}^{|A|-1}\left(\sum_{k=j+1}^{|A|} \gamma_k\beta^{(k-1)-j} \right) \sum_{\ell=1}^{(nj+1)(j+1)}g_{j,\ell,1}(x_1)\cdots g_{j,\ell,n}(x_n)\\
		\intertext{Absorbing the constant $\left(\sum_{k=j+1}^{|A|} \gamma_k\beta^{(k-1)-j} \right)$ into these univariates and re-indexing,}
		&=\sum_{i=1}^{s} g_{i,1}(x_1)\cdots g_{i,n}(x_n)
	\end{align*}
	for some univariates $g_{i,j}$, where $s\le |A|(n|A|+1)(|A|+1)=\poly(|A|,n)$. 

	We now obtain $f$ by multilinearizing the above expression, again appealing to multilinearization (\autoref{fact:multilinearization}).
	\begin{align*}
		f
		&=\ml\left(\frac{1}{-p(\beta)} g(\vx)\right)\\
		&=\ml\left(\frac{1}{-p(\beta)} \sum_{i=1}^{s} g_{i,1}(x_1)\cdots g_{i,n}(x_n)\right)\\
		\intertext{absorbing the constant $\nicefrac{1}{-p(\beta)}$ and renaming,}
		&=\ml\left(\sum_{i=1}^{s} g'_{i,1}(x_1)\cdots g'_{i,n}(x_n)\right)\\
		&=\ml\left(\sum_{i=1}^{s} \ml(g'_{i,1}(x_1))\cdots \ml(g'_{i,n}(x_n))\right)\\
		\intertext{defining $f_{i,j}(x_j)\eqdef \ml(g_{i,j}(x_j))$, so that $\deg f_{i,j}\le 1$,}
		&=\ml\left(\sum_{i=1}^{s} f_{i,1}(x_1)\cdots f_{i,n}(x_n)\right)\\
		\intertext{and we can drop the outside $\ml$ as this expression is now multilinear,}
		&=\sum_{i=1}^{s} f_{i,1}(x_1)\cdots f_{i,n}(x_n)
		\;,
	\end{align*}
	showing that $f$ is computable as desired, noting that this expression has the desired explicitness.
\end{proof}

Note that computing $f$ via duality also implies an roABP for $f$, but only in large enough fields $|\F|\ge\poly(|A|,n)$.  Of course, the field must at least have $|\F|\ge |A|$, but by using the field-independent conversion of $\sumpowsum$ to roABP (\autoref{res:sumpowsum:roABP}) this shows that $\F$ need not be any larger than $A$ for the refutation to be efficient.

The above shows that one can give an ``IPS proof'' $g(\vx)(\sum_i \alpha_ix_i-\beta)+\sum_i h_i(\vx)(x_i^2-x_i)=1$, where $g$ is efficiently computable.  However, this is not yet an IPS proof as it does not bound the complexity of the $h_i$. We now extend this to an actual IPS proof by using the above multilinearization results for roABPs (\autoref{res:multilin:roABP-lIPS}), leveraging that $\sum_i \alpha_ix_i-\beta$ is computable by an roABP in any variable order (and that the above result works in any variable order).

\begin{corollarywp}\label{res:ips-ubs:subset:roABP}
	Let $\vaa\in\F^n$, $\beta\in\F$ and $A\eqdef\{\sum_{i=1}^n \alpha_i x_i : \vx\in\bits^n\}$ be so that $\beta\notin A$.  Then $\sum_i \alpha_ix_i-\beta,\vx^2-\vx$ has a $\poly(|A|,n)$-explicit roABP-\lIPS refutation of individual degree $2$ computable in width-$\poly(|A|,n)$ in any variable order.
\end{corollarywp}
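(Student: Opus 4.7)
The plan is to combine the two preceding results mechanically. \autoref{res:ips-ubs:subset} already furnishes a multilinear polynomial $f(\vx)$ with
\[
  f(\vx)\cdot\bigl(\tsum_i \alpha_i x_i - \beta\bigr)\equiv 1 \bmod \vx^2-\vx,
\]
computed by a $\poly(|A|,n)$-explicit width-$\poly(|A|,n)$ roABP of individual degree at most $1$. \autoref{res:multilin:roABP-lIPS} is precisely the tool that upgrades such a ``mod $\vx^2-\vx$ certificate'' into a genuine roABP-\lIPS refutation by explicitly producing the small-width roABPs for the multilinearization coefficients attached to the boolean axioms.

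To instantiate \autoref{res:multilin:roABP-lIPS}, I would take $m=1$, let the single non-boolean axiom be $f_1\eqdef \sum_i \alpha_i x_i - \beta$, and let the multiplier be $g_1\eqdef f$ from \autoref{res:ips-ubs:subset}. The unsatisfiability hypothesis on $f_1$ over $\bits^n$ is exactly the assumption $\beta\notin A$. The axiom $f_1$ is a linear form, hence computable by a width-$O(1)$ roABP in \emph{any} variable order. The polynomial $f$ in \autoref{res:ips-ubs:subset} is obtained via \autoref{res:sumpowsum:roABP}, which is explicitly stated to work in \emph{any} variable order, so $f$ is also a $\poly(|A|,n)$-width roABP in whichever order we choose. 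The key identity $g_1\cdot f_1\equiv 1\bmod \vx^2-\vx$ holds by construction.

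Applying \autoref{res:multilin:roABP-lIPS} then yields an roABP-\lIPS refutation $C(\vx,\vy,\vz)$ of individual degree at most $1+\ideg(f_1)=2$, of width $\poly(|A|,n)$ in any variable order compatible with the $\vx$-order chosen for the input roABPs; because both $f_1$ and $f$ are available as poly-width roABPs in \emph{every} $\vx$-order, we obtain such a refutation of width $\poly(|A|,n)$ in \emph{every} variable order. The explicitness bounds in the two propositions compose to give the claimed $\poly(|A|,n)$-explicitness of $C$.

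There is no genuine obstacle in this proof: the substantive work was done in constructing $f$ (\autoref{res:ips-ubs:subset}) and in the roABP multilinearization lemma (\autoref{res:multilin:roABP-lIPS}). The only bookkeeping point is verifying that the ``any variable order'' guarantee propagates from \autoref{res:sumpowsum:roABP} through \autoref{res:ips-ubs:subset} to the final statement, which it does since the sole $\vx$-order dependence introduced by \autoref{res:multilin:roABP-lIPS} is inherited from the orders of the input roABPs, and those can be chosen freely.
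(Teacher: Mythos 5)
Your proposal is correct and is essentially the paper's own argument: the corollary is obtained exactly by feeding the multilinear multiplier $f$ from \autoref{res:ips-ubs:subset} (with the linear axiom $\sum_i \alpha_i x_i-\beta$, which has a constant-width roABP in every order) into \autoref{res:multilin:roABP-lIPS} with $m=1$, giving individual degree $1+\ideg(\sum_i\alpha_ix_i-\beta)=2$ and width $\poly(|A|,n)$ in any variable order. Your bookkeeping remark about propagating the ``any variable order'' guarantee is the same point the paper makes just before stating the corollary.
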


Note that while the above results give a small $\sumpowsum$ formula $g$ such that $g\cdot (\sum_i\alpha_i x_i-\beta)\equiv -p(\beta)\mod\vx^2-\vx$ for nonzero scalar $-p(\beta)$, this does not translate to a $\sumpowsum$-IPS refutation as $\sumpowsum$ formulas cannot be multilinearized efficiently (see the discussion in \autoref{sec:multilinearization:roABP}).

We now turn to refuting the subset-sum axioms by multilinear-formula-\lIPS (which is not itself a complete proof system, but will suffice here).  While one can use the multilinearization techniques for multilinear-formula-\lbIPS of \autoref{res:multilin:multilin-lbIPS} it gives slightly worse results due to its generality, so we directly multilinearize the refutations we built above using that the subset-sum axiom is linear.

\begin{proposition}\label{res:ips-ubs:subset:mult-form}
	Let $\vaa\in\F^n$, $\beta\in\F$ and $A\eqdef\{\sum_{i=1}^n \alpha_i x_i : \vx\in\bits^n\}$ be so that $\beta\notin A$.  If $|\F|\ge \poly(|A|,n)$, then $\sum_i \alpha_ix_i-\beta,\vx^2-\vx$ has a $\poly(|A|,n)$-explicit $\poly(|A|,n)$-size depth-3 multilinear-formula-\lIPS refutation.
\end{proposition}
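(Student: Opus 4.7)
The plan is to build directly on \autoref{res:ips-ubs:subset}, which already constructs a multilinear $f\in\F[\vx]$ with $f(\vx)\cdot(\sum_i\alpha_ix_i-\beta)\equiv 1\mod\vx^2-\vx$ and, over large enough fields, realises $f$ as a $\poly(|A|,n)$-explicit sum of products of linear univariates $f(\vx)=\sum_{i=1}^s\prod_{j=1}^n f_{i,j}(x_j)$ with $\deg f_{i,j}\le 1$. What remains is to convert this congruence into a genuine multilinear-formula-\lIPS refutation $C(\vx,y,\vz)=f(\vx)\,y-\sum_k h_k(\vx)\,z_k$ in which each coefficient polynomial $h_k$ is itself a small depth-3 multilinear formula.

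First I would expand $f(\vx)\cdot(\sum_k\alpha_kx_k-\beta)$ by distributing $\sum_k\alpha_kx_k-\beta$ through each product $\prod_j f_{i,j}(x_j)$. The $-\beta f(\vx)$ contribution stays a sum of products of linear univariates. In each cross term $\alpha_k x_k\prod_j f_{i,j}(x_j)$, exactly one univariate factor, $x_kf_{i,k}(x_k)$, becomes a degree-$2$ univariate in a single variable $x_k$; every other factor remains linear and in a different variable. I would then write each such factor in its unique decomposition
\[
x_k\,f_{i,k}(x_k)\;=\;\ell_{i,k}(x_k)\;+\;q_{i,k}\cdot(x_k^2-x_k),
\]
where $\ell_{i,k}=\ml(x_kf_{i,k}(x_k))$ is a linear univariate and $q_{i,k}\in\F$ is simply the leading coefficient of $f_{i,k}$. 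Substituting this splits each summand cleanly into a multilinear part (a product of $n$ linear univariates in distinct variables) plus a term of the form (product of linear univariates in the other $n-1$ variables) $\cdot\,(x_k^2-x_k)$.

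Since $f(\vx)(\sum_k\alpha_kx_k-\beta)\equiv 1\mod\vx^2-\vx$, and since multilinearization is linear and agrees on the boolean cube (\autoref{fact:multilinearization}), the multilinear part of the above expansion must equal the scalar $1$ as a polynomial. Collecting the remaining terms by which boolean axiom $x_k^2-x_k$ they multiply yields explicit polynomials
\[
h_k(\vx)\;\eqdef\;\sum_{i=1}^s\alpha_k\,q_{i,k}\prod_{j\ne k}f_{i,j}(x_j),
\]
each a sum of $s=\poly(|A|,n)$ products of at most $n-1$ linear univariates, hence a $\poly(|A|,n)$-size depth-3 multilinear formula with the inherited explicitness. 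Packaging $C(\vx,y,\vz)\eqdef f(\vx)\,y-\sum_k h_k(\vx)\,z_k$, linearity in $y,\vz$ is immediate, $C(\vx,0,\vnz)=0$ by inspection, and the identity above gives $C(\vx,\sum_k\alpha_kx_k-\beta,\vx^2-\vx)=1$; summing the depth-3 multilinear formulas for $f(\vx)\,y$ and for each $h_k(\vx)\,z_k$ under a common top addition gate keeps $C$ a depth-3 multilinear formula of size $\poly(|A|,n)$.

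The main obstacle I anticipate is essentially bookkeeping: one must ensure the degree-$2$ deviation inside each expanded summand is localised to a single variable $x_k$, so that after factoring out $x_k^2-x_k$ the remainder is still a product of linear univariates in distinct variables and therefore multilinear. This is precisely why it is crucial that \autoref{res:ips-ubs:subset} supplies $f$ as a sum of products of \emph{linear} (and not merely low-degree) univariates, and in turn why the hypothesis $|\F|\ge\poly(|A|,n)$ is needed, as it is inherited from the duality-trick representation of the refutation polynomial.
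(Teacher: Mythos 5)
Your proposal is correct and follows essentially the same route as the paper's proof: it invokes \autoref{res:ips-ubs:subset} for the dual (sum of products of linear univariates) representation of $f$, distributes the linear axiom through each product, and splits the single quadratic factor via $x_kf_{i,k}(x_k)=\ml(x_kf_{i,k}(x_k))+(f_{i,k}(1)-f_{i,k}(0))(x_k^2-x_k)$, which is exactly the paper's identity $p(x)x=p(1)x+(p(1)-p(0))(x^2-x)$. The resulting $h_k$ and the packaging into $C(\vx,y,\vz)=f(\vx)y-\sum_k h_k(\vx)z_k$ match the paper's construction.
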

\begin{proof}
	By \autoref{res:ips-ubs:subset}, there is a multilinear polynomial $f\in\F[\vx]$ such that $f(\vx)\cdot \left(\tsum_i \alpha_i x_i-\beta\right)\equiv 1\mod  \vx^2-\vx$, and $f$ is explicitly given as
	\[
		f(\vx)=\sum_{i=1}^s f_{i,1}(x_1)\cdots f_{i,n}(x_n)
		\;,
	\]
	where each $f_{i,j}\in\F[x_i]$ has $\deg f_{i,j}\le 1$ and $s\le \poly(|A|,n)$. 

	We now efficiently prove that $f(\vx)\cdot (\sum_{i=1}^n \alpha_i x_i-\beta)$ is equal to its multilinearization (which is $1$) modulo the boolean cube. The key step is that for a linear function $p(x)=\gamma x+\delta$ we have that $(\gamma x+\delta)x=(\gamma+\delta)x+\gamma(x^2-x)=p(1)x+(p(1)-p(0))(x^2-x)$.

	Thus,
	\begin{align*}
		f(\vx)&\cdot \left(\tsum_i\alpha_i x_i-\beta\right)\\
		&=\left( \sum_{i=1}^s f_{i,1}(x_1)\cdots f_{i,n}(x_n) \right)\cdot (\tsum_i\alpha_i x_i-\beta)\\
		&=\sum_{i=1}^s -\beta f_{i,1}(x_1)\cdots f_{i,n}(x_n)\\
		&\hspace{.5in}+\sum_{i=1}^s\sum_{j=1}^n \alpha_j \prod_{k\ne j} f_{i,k}(x_k) \cdot \Big(f_{i,j}(1)x_j + (f_{i,j}(1)-f_{i,j}(0))(x_j^2-x_j)\Big)\\
		&=\sum_{i=1}^s -\beta f_{i,1}(x_1)\cdots f_{i,n}(x_n)
		+\sum_{i=1}^s\sum_{j=1}^n \alpha_j \prod_{k\ne j} f_{i,k}(x_k) \cdot f_{i,j}(1)x_j \\
		&\hspace{.5in}+\sum_{i=1}^s\sum_{j=1}^n \alpha_j \prod_{k\ne j} f_{i,k}(x_k) \cdot (f_{i,j}(1)-f_{i,j}(0))\cdot (x_j^2-x_j)
		\intertext{absorbing constants and renaming, using $j=0$ to incorporate the above term involving $\beta$,}
		&=\sum_{i=1}^s\sum_{j=0}^n \prod_{k=1}^n f_{i,j,k}(x_k)+\sum_{j=1}^n \left(\sum_{i=1}^s\prod_{k=1}^n h_{i,j,k}(x_k)\right)(x_j^2-x_j)
		\intertext{where each $f_{i,j,k}$ and $h_{i,j,k}$ are linear univariates. As $f(\vx)\cdot (\sum_{i=1}^n \alpha_i x_i-\beta)\equiv 1\mod \vx^2-\vx$ it follows that $\sum_i \sum_j \prod_k f_{i,j,k}(x_k)\equiv 1\mod\vx^2-\vx$, but as each  $f_{i,j,k}$ is linear it must actually be that $\sum_i \sum_j\prod_k f_{i,j,k}(x_k)=1$, so that,}
		&=1+\sum_{j=1}^n \left(\sum_{i=1}^s\prod_{k=1}^n h_{i,j,k}(x_k)\right)(x_j^2-x_j)
		\;.
	\end{align*}
	Define $C(\vx,y,\vz)\eqdef f(\vx)y-\sum_{j=1}^n h_j(\vx) z_j$, where $h_j(\vx)\eqdef \sum_{i=1}^s\prod_{k=1}^n h_{i,j,k}(x_k)$.  It follows that $C(\vx,0,\vnz)=0$ and that $C(\vx,\sum_i\alpha_ix_i-\beta,\vx^2-\vx)=1$, so that $C$ is a linear-IPS refutation.  Further, as each $f,h_j$ is computable as a sum of products of linear univariates, these are depth-3 multilinear formulas. By distributing the multiplication of the variables $y,z_1,\ldots,z_n$ to the bottom multiplication gates, we see that $C$ itself has a depth-3 multilinear formula of the desired complexity.
\end{proof}

\section{Lower Bounds for Linear-IPS via Functional Lower Bounds}\label{sec:lbs-fn}

In this section we give \emph{functional} circuit lower bounds for various measures of algebraic complexity, such as degree, sparsity, roABPs and multilinear formulas.  That is, while algebraic complexity typically treats a polynomial $f\in\F[x_1,\ldots,x_n]$ as a \emph{syntactic} object, one can also see that it defines a function on the boolean cube $\hat{f}:\bits^n\to\F$.  If this function is particularly complicated then one would expect that this implies that the polynomial $f$ requires large algebraic circuits. In this section we obtain such lower bounds, showing that for \emph{any} polynomial $f$ (not necessarily multilinear) that agrees with a certain function on the boolean cube must in fact have large algebraic complexity.  

Our lower bounds will proceed by first showing that the complexity of $f$ is an upper bound for the complexity of its multilinearization $\ml(f)$.  While such a statement is known to be false for general circuits (under plausible assumptions, see \autoref{sec:multilinearization:roABP}), such efficient multilinearization can be shown for the particular restricted models of computation we consider.  In particular, this multilinearization is easy for degree and sparsity, for multilinear formulas $f$ is already multilinear, and for roABPs this is seen in \autoref{sec:multilinearization:roABP}.  As then $\ml(f)$ is uniquely defined by the function $\hat{f}$ (\autoref{fact:multilinearization}), we then only need to lower bound the complexity of $\ml(f)$ using standard techniques.  We remark that the actual presentation will not follow the above exactly, as for roABPs and multilinear formulas it is just as easy to just work directly with the underlying lower bound technique.

We then observe that by deriving such lower bounds for carefully crafted functions $\hat{f}:\bits^n\to\F$ one can easily obtain linear-IPS lower bounds for the above circuit classes. That is, consider the system of equations $f(\vx),\vx^2-\vx$, where $f(\vx)$ is chosen so this system is unsatisfiable, hence $f(\vx)\ne0$ for all $\vx\in\bits^n$.  Any linear-IPS refutation yields an equation $g(\vx)\cdot f(\vx)+\sum_i h_i(\vx)(x_i^2-x_i)=1$, which implies that $g(\vx)=\nicefrac{1}{f(\vx)}$ for all $\vx\in\bits^n$ (that this system is unsatisfiable allows us to avoid division by zero). It follows that the polynomial $g(\vx)$ agrees with the function $\hat{f}(\vx)\eqdef \nicefrac{1}{f(\vx)}$ on the boolean cube. If the function $\hat{f}$ has a functional lower bound then this implies $g$ must have large complexity, giving the desired lower bound for the linear-IPS refutation.

The section proceeds as follows.  We begin by detailing the above strategy for converting functional lower bounds into lower bounds for linear-IPS\@. We then derive a tight functional lower bound of $n$ for the degree of $\nicefrac{1}{\left(\sum_i x_i+1\right)}$. We then extend this via random restrictions to a functional lower bound of $\exp(\Omega(n))$ on the sparsity of $\nicefrac{1}{\left(\sum_i x_i+1\right)}$.  We can then lift this degree bound to a functional lower bound of $2^n$ on the evaluation dimension of $\nicefrac{1}{\left(\sum_i x_iy_i+1\right)}$ in the $\vx|\vy$ partition, which we then symmetrize to obtain a functional lower bound on the evaluation dimension in any partition of the related function $\nicefrac{1}{\big(\sum_{i<j} z_{i,j}x_ix_j+1\big)}$.  In each case, the resulting linear-IPS lower bounds are immediate via the known relations of these measures to circuit complexity classes (\autoref{sec:background}).

\subsection{The Strategy}\label{sec:lbs-fn:strategy}

We give here the key lemma detailing the general reduction from linear-IPS lower bounds to functional lower bounds.

\begin{lemma}\label{res:lbs-fn_lbs-ips:lIPS}
	Let $\cC\subseteq\F[x_1,\ldots,x_n]$ be a set of polynomials closed under partial evaluation. Let $f\in\F[\vx]$, where the system $f(\vx),\vx^2-\vx$ is unsatisfiable.  Suppose that for all $g\in\F[\vx]$ with
	\[
		g(\vx)=\frac{1}{f(\vx)},\qquad \forall \vx\in\bits^n\;,
	\]
	that $g\notin \cC$.  Then $f(\vx),\vx^2-\vx$ does not have $\cC$-\lIPS refutations.
\end{lemma}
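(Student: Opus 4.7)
The plan is to prove the contrapositive by directly unpacking what a $\cC$-\lIPS refutation must look like. Suppose such a refutation $C(\vx,y,\vz) \in \cC$ exists, where by \lIPS the polynomial $C$ has individual degree $\le 1$ in $y$ and in each $z_i$. Expanding $C$ as a polynomial in the ``proof variables'' $y,\vz$ with coefficients in $\F[\vx]$, the linearity constraint means
\[
C(\vx,y,\vz) = C_0(\vx) + g(\vx)\, y + \sum_{i=1}^n h_i(\vx)\, z_i + (\text{cross terms linear in each of }y,z_1,\dots,z_n)
\]
for some $C_0, g, h_i \in \F[\vx]$. The first IPS condition $C(\vx,\vnz,\vnz)=0$ forces $C_0 \equiv 0$. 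Setting $y = f(\vx)$ and $z_i = x_i^2 - x_i$ and using the second IPS condition $C(\vx,f,\vx^2-\vx)=1$ then yields
\[
g(\vx)\, f(\vx) + \sum_{i=1}^n h_i(\vx)\,(x_i^2-x_i) + (\text{cross terms in }f\cdot(\vx^2-\vx)) = 1.
\]
Reducing modulo the ideal $\langle x_i^2-x_i\rangle_i$ (equivalently, evaluating at any $\vx \in \bits^n$) the terms involving the $z_i$ vanish, leaving $g(\vx) f(\vx) = 1$ for all $\vx \in \bits^n$. Unsatisfiability ensures $f(\vx)\ne 0$ on the cube, so $g(\vx) = 1/f(\vx)$ pointwise on $\bits^n$.

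It remains to show $g \in \cC$, which is where closure under partial evaluation enters. The polynomial $g(\vx)$ can be extracted from $C$ by a single partial evaluation: namely $g(\vx) = C(\vx,1,\vnz) - C(\vx,0,\vnz) = C(\vx,1,\vnz)$, using $C(\vx,\vnz,\vnz)=0$. (Alternatively, one can apply two partial evaluations to isolate the coefficient of $y$.) Since $\cC$ is assumed closed under partial evaluation, $g \in \cC$. This contradicts the hypothesis that no polynomial in $\cC$ agrees with $1/f$ on $\bits^n$, completing the proof.

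The argument is essentially bookkeeping and the only subtle point is making sure the extraction of $g$ stays inside $\cC$; this is the reason for the closure-under-partial-evaluation hypothesis, and it is the unique place where a property of $\cC$ is used. No step should present a real obstacle, and in particular the argument does not need $\cC$ to be closed under addition or any more elaborate operation.
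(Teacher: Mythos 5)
Your proof is correct and follows essentially the same route as the paper's: extract $g$ as the coefficient of $y$ via the single partial evaluation $C(\vx,1,\vnz)$ (using $C(\vx,0,\vnz)=0$ and closure under partial evaluation), reduce modulo $\vx^2-\vx$ to conclude $g=\nicefrac{1}{f}$ on $\bits^n$, and contradict the functional lower bound. The only difference is that you work with the literal ``individual degree $\le 1$'' form of a \lIPS refutation and verify that the constant term and the cross terms vanish, whereas the paper writes the refutation directly as $g(\vx)\,y+\sum_i h_i(\vx)\,z_i$; this is a harmless elaboration of the same argument.
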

\begin{proof}
	Suppose for contradiction that $f(\vx),\vx^2-\vx$ has the $\cC$-\lIPS refutation $C(\vx,y,\vz)=g(\vx)\cdot y+\sum_i h_i(\vx) \cdot z_i$ where $C(\vx,f,\vx^2-\vx)=1$ (and clearly $C(\vx,0,\vnz)=0$). As $g=C(\vx,1,\vnz)$, it follows that $g\in\cC$ from the closure properties we assumed of $\cC$. Thus,
	\begin{align*}
		1&=C(\vx,f,\vx^2-\vx)\\
		&=g(\vx)\cdot f(\vx)+\sum_i h_i(\vx) (x_i^2-x_i)\\
		&\equiv g(\vx)\cdot f(\vx)\mod\vx^2-\vx
		\;.
	\end{align*}
	Thus, for any $\vx\in\bits^n$, as $f(\vx)\ne 0$,
	\[
		g(\vx)=\nicefrac{1}{f(\vx)}
		\;.
	\]
	However, this yields the desired contradiction, as this contradicts the assumed functional lower bound for $\nicefrac{1}{f}$.
\end{proof}

We now note that the lower bound strategy of using functional lower bounds actually produces lower bounds for \lbIPS (and even for the full IPS system if we have multilinear polynomials), and not just \lIPS.  This is because we work modulo the boolean axioms, so that any non-linear dependence on these axioms vanishes, only leaving a linear dependence on the remaining axioms.  This slightly stronger lower bound is most interesting for multilinear-formulas, where the \lIPS version is not complete in general (\autoref{ex:multi-form:incomplete}) (though it is still interesting due to its short refutations of the subset-sum axiom (\autoref{res:ips-ubs:subset:mult-form})), while the \lbIPS version is complete (\autoref{res:multilin:simulate-sparse}).

\begin{lemma}\label{res:lbs-fn_lbs-ips:lbIPS}
	Let $\cC\subseteq\F[x_1,\ldots,x_n]$ be a set of polynomials closed under evaluation, and let $\cD$ be the set of differences of $\cC$, that is, $\cD\eqdef \{p(\vx)-q(\vx): p,q\in\cC\}$. Let $f\in\F[\vx]$, where the system $f(\vx),\vx^2-\vx$ is unsatisfiable.  Suppose that for all $g\in\F[\vx]$ with
	\[
		g(\vx)=\frac{1}{f(\vx)},\qquad \forall \vx\in\bits^n\;,
	\]
	that $g\notin \cD$.  Then $f(\vx),\vx^2-\vx$ does not have $\cC$-\lbIPS refutations.

	Furthermore, if $\cC$ (and thus $\cD$) are a set of multilinear polynomials, then $f(\vx),\vx^2-\vx$ does not have $\cC$-IPS refutations.
\end{lemma}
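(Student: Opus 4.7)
The plan is to imitate the proof of \autoref{res:lbs-fn_lbs-ips:lIPS}, but to be more careful about how we extract the coefficient of the axiom variable $y$, since in \lbIPS refutations this coefficient is itself a polynomial in $\vx,\vz$ rather than just in $\vx$. Suppose toward contradiction that $C(\vx,y,\vz)\in\cC$ is a $\cC$-\lbIPS refutation of $f(\vx),\vx^2-\vx$. By the linearity of $C$ in $y$ we can decompose
\[
    C(\vx,y,\vz) \;=\; g(\vx,\vz)\cdot y + C'(\vx,\vz),
\]
with $C'(\vx,\vnz)=0$ forced by $C(\vx,\vnz,\vnz)=0$. Substituting $y\leftarrow f(\vx)$, $\vz\leftarrow\vx^2-\vx$ and reducing modulo $\vx^2-\vx$ (so that every monomial in $\vz$ of positive degree vanishes, in particular killing $C'$) gives $g(\vx,\vnz)\cdot f(\vx)\equiv 1\mod \vx^2-\vx$. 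Since the system is unsatisfiable, $f(\vx)\ne 0$ on $\bits^n$, so $g(\vx,\vnz)=\nicefrac{1}{f(\vx)}$ for every $\vx\in\bits^n$.

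The key step is then to observe that $g(\vx,\vnz)$ lies in $\cD$. Because $C$ is linear in $y$, we can recover the coefficient of $y$ by the first-difference identity $g(\vx,\vz)=C(\vx,1,\vz)-C(\vx,0,\vz)$, and in particular
\[
    g(\vx,\vnz) \;=\; C(\vx,1,\vnz) - C(\vx,0,\vnz).
\]
Both $C(\vx,1,\vnz)$ and $C(\vx,0,\vnz)$ are obtained from $C\in\cC$ by evaluating some of its variables at field constants, so by the assumed closure of $\cC$ under evaluation they both belong to $\cC$. Hence their difference $g(\vx,\vnz)$ lies in $\cD$, contradicting the functional lower bound hypothesis. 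Thus no $\cC$-\lbIPS refutation exists.

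For the furthermore clause, the point to make is that when $\cC$ consists entirely of multilinear polynomials, every $C(\vx,y,\vz)\in\cC$ is automatically linear in the variable $y$ (multilinearity implies individual degree at most $1$ in each variable). Consequently, any $\cC$-IPS refutation is automatically a $\cC$-\lbIPS refutation, and the conclusion of the first part applies verbatim. I do not expect any genuine obstacle here; the main subtlety is simply to keep track of the fact that closure under \emph{partial} evaluation (as opposed to only under full substitution by polynomials) is what lets us pass from a refutation $C$ to the two polynomials $C(\vx,1,\vnz),C(\vx,0,\vnz)\in\cC$ whose difference is the object we need to lower-bound.
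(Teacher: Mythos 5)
Your proof is correct and follows essentially the same route as the paper: decompose $C=C_1(\vx,\vz)y+C_0(\vx,\vz)$, observe $C_0(\vx,\vnz)=0$, reduce modulo $\vx^2-\vx$ to get $C_1(\vx,\vnz)\cdot f\equiv 1$, and place $C_1(\vx,\vnz)=C(\vx,1,\vnz)-C(\vx,0,\vnz)$ in $\cD$ via closure under evaluation; the multilinear case is likewise handled by noting $\deg_y C\le 1$. No gaps.
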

\begin{proof}
	Suppose for contradiction that $f(\vx),\vx^2-\vx$ has the $\cC$-\lbIPS refutation $C(\vx,y,\vz)$.  That $\deg_{y} C(\vx,y,\vz)\le 1$ implies there is the decomposition $C(\vx,y,\vz)=C_1(\vx,\vz)y+C_0(\vx,\vz)$.  As $C_1(\vx,\vnz)=C(\vx,1,\vnz)-C(\vx,0,\vnz)$, the assumed closure properties imply that $C_1(\vx,\vnz)\in\cD$. By the definition of an IPS refutation, we have that $0=C(\vx,0,\vnz)=C_1(\vx,\vnz)\cdot 0+C_0(\vx,\vnz)$, so that $C_0(\vx,\vnz)=0$. By using the definition of an IPS refutation again, we have that $1=C(\vx,f,\vx^2-\vx)=C_1(\vx,\vx^2-\vx)\cdot f+C_0(\vx,\vx^2-\vx)$, so that modulo the boolean axioms,
	\begin{align*}
		1&=\textstyle C_1(\vx,\vx^2-\vx)\cdot f+C_0(\vx,\vx^2-\vx) \\
		&\equiv \textstyle C_1(\vx,\vnz)\cdot f+C_0(\vx,\vnz) \mod \vx^2-\vx\\
		\intertext{using that $C_0(\vx,\vnz)=0$,}
		&\equiv \textstyle C_1(\vx,\vnz)\cdot f \mod \vx^2-\vx
		\;.
	\end{align*}
	Thus, for every $\vx\in\bits^n$ we have that $C_1(\vx,\vnz)=\nicefrac{1}{f(\vx)}$ so that by the assumed functional lower bound $C_1(\vx,\vnz)\notin \cD$, yielding the desired contradiction to the above $C_1(\vx,\vnz)\in\cD$.

	Now suppose that $\cC$ is a set of multilinear polynomials.  Any $\cC$-IPS refutation $C(\vx,y,\vz)$ of $f(\vx),\vx^2-\vx$ thus must have $\deg_y C\le 1$ as $C$ is multilinear, so that $C$ is actually a $\cC$-\lbIPS refutation, thus the above lower bound applies.
\end{proof}

\subsection{Degree of a Polynomial}\label{sec:lbs-fn:deg}

We now turn to obtaining functional lower bounds, and deriving the corresponding linear-IPS lower bounds. We begin with a particularly weak form of algebraic complexity, the degree of a polynomial.  While it is trivial to obtain such bounds in some cases (as any polynomial that agrees with the AND function on the boolean cube $\bits^n$ must have degree $\ge n$), for our applications to proof complexity we will need such degree bounds for functions defined by $\hat{f}(\vx)=\nicefrac{1}{f(\vx)}$ for simple polynomials $f(\vx)$.

We show that any multilinear polynomial agreeing with $\nicefrac{1}{f(\vx)}$, where $f(\vx)$ is the subset-sum axiom $\sum_i x_i-\beta$, must have the maximal degree $n$.  We note that a degree lower bound of $\ceil{\nicefrac{n}{2}}+1$ was established by Impagliazzo, \Pudlak, and Sgall~\cite{IPS99} (\autoref{thm:IPS99}).  They actually established this degree bound\,\footnote{The degree lower bound of Impagliazzo, \Pudlak, and Sgall~\cite{IPS99} (\autoref{thm:IPS99}) actually holds for the (dynamic) polynomial calculus proof system (see section \autoref{sec:alg-proofs}), while we only consider the (static) Nullstellensatz proof system here.  Note that for polynomial calculus Impagliazzo, \Pudlak, and Sgall~\cite{IPS99} also showed a matching upper bound of $\ceil{\nicefrac{n}{2}}+1$ for $\vaa=\vno$.}  when $f(\vx)=\sum_i \alpha_i x_i-\beta$ for \emph{any} $\vaa$, while we only consider $\vaa=\vno$ here.  However, we need the tight bound of $n$ here as it will be used crucially in the proof of \autoref{res:lbs-fn:dim-eval}. 

\begin{proposition}\label{res:subsetsum:deg}
	Let $n\ge 1$ and $\F$ be a field with $\chara(\F)>n$.  Suppose that $\beta\in \F\setminus\{0,\ldots,n\}$. Let $f\in\F[x_1,\ldots,x_n]$ be a multilinear polynomial such that
	\[
		f(\vx)\left(\sum_i x_i-\beta\right)=1 \mod \vx^2-\vx
		\;.
	\]
	Then $\deg f=n$.
\end{proposition}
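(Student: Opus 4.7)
The plan is to show that the top coefficient of the multilinear polynomial $f$, namely the coefficient of $x_1 \cdots x_n$, is forced to be nonzero, which (together with the trivial upper bound from multilinearity) yields $\deg f = n$.

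First, I would write $f$ in its multilinear expansion $f = \sum_{S \subseteq [n]} c_S \prod_{i \in S} x_i$, and use Möbius inversion (equivalently, discrete inclusion–exclusion) on the boolean cube to recover the top coefficient from evaluations. Denoting by $\mathbbm{1}_S \in \bits^n$ the characteristic vector of $S$, this gives
\[
	c_{[n]} = \sum_{S \subseteq [n]} (-1)^{n-|S|} f(\mathbbm{1}_S).
\]
Next, since the hypothesis $f(\vx)\bigl(\sum_i x_i - \beta\bigr) \equiv 1 \pmod{\vx^2-\vx}$ means $f$ and $\tfrac{1}{\sum_i x_i - \beta}$ agree on $\bits^n$ (this is well-defined precisely because $\beta \notin \{0,\ldots,n\}$, so the denominator never vanishes on the cube), grouping by $|S| = k$ yields
\[
	c_{[n]} = \sum_{k=0}^{n} \binom{n}{k} \frac{(-1)^{n-k}}{k - \beta}.
\]

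The key step is then to evaluate this sum in closed form via the classical partial-fractions identity
\[
	\sum_{k=0}^{n} \binom{n}{k} \frac{(-1)^k}{t + k} \;=\; \frac{n!}{t(t+1)\cdots(t+n)},
\]
which can be verified by partial-fraction expansion of the right-hand side (or by a quick induction on $n$). Substituting $t = -\beta$ and rearranging signs gives
\[
	c_{[n]} \;=\; \frac{(-1)^{n}\, n!}{\prod_{k=0}^{n}(k - \beta)}.
\]
Finally, the hypothesis $\beta \notin \{0,\ldots,n\}$ makes the denominator a nonzero element of $\F$, and $\chara(\F) > n$ ensures $n! \ne 0$ in $\F$, so $c_{[n]} \ne 0$ and hence $\deg f = n$, as desired.

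I do not expect a serious obstacle: the main technical ingredient is the partial-fractions identity, which is standard, and the rest is bookkeeping. The subtlety to keep track of is simply ensuring that both conditions (the characteristic bound and $\beta$ avoiding $\{0,\ldots,n\}$) are used in the right places — the former to keep $n!$ nonzero, the latter to make both the division ``$\tfrac{1}{\sum x_i - \beta}$'' over $\bits^n$ and the denominator $\prod_{k=0}^n(k-\beta)$ meaningful and nonzero.
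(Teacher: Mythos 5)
Your proof is correct, but it takes a different route from the one the paper gives for this proposition. The paper's argument first symmetrizes $f$ (using uniqueness of multilinear extensions) to write $f=\sum_{k\le d}\gamma_k S_{n,k}$ in the elementary symmetric basis, then computes that $(\sum_i x_i-\beta)S_{n,k}\equiv (k+1)S_{n,k+1}+(k-\beta)S_{n,k} \bmod \vx^2-\vx$, so if $d<n$ the product has multilinear degree $d+1>0$, contradicting that it multilinearizes to $1$. You instead extract the top coefficient directly: M\"obius inversion over the cube gives $c_{[n]}=\sum_{S}(-1)^{n-|S|}f(\ind{S})=\sum_{k=0}^n\binom{n}{k}\frac{(-1)^{n-k}}{k-\beta}$, and the standard partial-fractions identity (which, after clearing denominators, is an integer polynomial identity and hence valid over any $\F$ with $\chara(\F)>n$) evaluates this to $\frac{(-1)^n n!}{\prod_{k=0}^n(k-\beta)}\ne 0$. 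Your hypotheses are used exactly where you say: $\beta\notin\{0,\ldots,n\}$ keeps the cube evaluations and the denominator nonzero, and $\chara(\F)>n$ keeps $n!$ nonzero. In fact your computation is essentially the one the paper carries out in \autoref{res:subsetsum:multlin} (specialized to the coefficient of $x_1\cdots x_n$, whose value there is $-n!/\prod_{j=0}^n(\beta-j)$, matching yours), and the paper explicitly notes that this yields an alternate proof of the degree bound. What each approach buys: yours is shorter and gives the exact leading coefficient for free, while the paper's recursion in the symmetric basis exposes how multiplication by the axiom shifts degree, which is the structural fact it reuses (and it sidesteps any closed-form summation). Both give the exact value $\deg f=n$, which is what the later evaluation-dimension argument needs.
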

\begin{proof}
	\uline{$\le n$:} This is clear as $f$ is multilinear.

	\uline{$\ge n$:} Begin by observing that as $\beta\notin\{0,\ldots,n\}$, this implies that $\sum_i x_i-\beta$ is never zero on the boolean cube, so that the above functional equation implies that for $\vx\in\bits^n$ the expression
	\begin{align*}
		f(\vx)= \frac{1}{\sum_i x_i-\beta}
		\;,
	\end{align*}
	is well defined.

	Now observe that this implies that $f$ is a symmetric polynomial.  That is, define the multilinear polynomial $g$ by symmetrizing $f$,
	\begin{align*}
		g(x_1,\ldots,x_n)\eqdef \frac{1}{n!}\sum_{\sigma\in\Sn_n} f(x_{\sigma(1)},\ldots,x_{\sigma(n)})
		\;,
	\end{align*}
	where $\Sn_n$ is the symmetric group on $n$ symbols.  Then we see that $f$ and $g$ agree on $\vx\in\bits^n$, as
	\begin{align*}
		g(\vx)
		&=\frac{1}{n!}\sum_{\sigma\in\Sn_n} f(x_{\sigma(1)},\ldots,x_{\sigma(n)})\\
		&=\frac{1}{n!}\sum_{\sigma\in\Sn_n} \frac{1}{\sum_i x_{\sigma(i)}-\beta}
		=\frac{1}{n!}\sum_{\sigma\in\Sn_n} \frac{1}{\sum_i x_i-\beta}\\
		&=\frac{1}{n!}\cdot n!\cdot \frac{1}{\sum_i x_i-\beta}
		=\frac{1}{\sum_i x_i-\beta}=f(\vx)
		\;.
	\end{align*}
	It follows then that $g=f$ as polynomials, since they are multilinear and agree on the boolean cube (\autoref{fact:multilinearization}).  As $g$ is clearly symmetric, so is $f$. Thus $f$ can be expressed as $f=\sum_{k=0}^d \gamma_k S_{n,k}(\vx)$, where $d\eqdef\deg f$, $S_{n,k}\eqdef \sum_{S\subseteq\binom{[n]}{k}}\prod_{i\in S} x_i$ is the $k$-th elementary symmetric polynomial, and $\gamma_k\in\F$ are scalars with $\gamma_d\ne 0$.

	Now observe that for $k<n$, we can understand the action of multiplying $S_{n,k}$ by $\sum_i x_i-\beta$.
	\begin{align*}
		\textstyle(\sum_i x_i-\beta) S_{n,k}(\vx)
		&=\sum_{S\in\binom{[n]}{k}} \textstyle(\sum_i x_i-\beta) \prod_{j\in S} x_j\\
		&=\sum_{S\in\binom{[n]}{k}} \textstyle\left(\sum_{i\notin S} x_i \prod_{j\in S} x_j+\sum_{i\in S} x_i \prod_{j\in S} x_j-\beta \prod_{j\in S} x_j\right)\\
		&=\sum_{S\in\binom{[n]}{k}} \left(\sum_{\substack{|T|=k+1\\T\supseteq S}} \prod_{j\in T}x_j +(k-\beta) \prod_{j\in S} x_j\right) \mod \vx^2-\vx\\
		&=(k+1)S_{n,k+1}+(k-\beta)S_{n,k}\mod \vx^2-\vx
		\;.
	\end{align*}
	Note that we used that each subset of $[n]$ of size $k+1$ contains exactly $k+1$ subsets of size $k$.

	Putting the above together, suppose for contradiction that $d<n$.  Then,
	\begin{align*}
		1
		&=f(\vx)\left(\sum_i x_i-\beta\right) \mod \vx^2-\vx\\
		&=\left(\sum_{k=0}^d \gamma_k S_{n,k}\right)\left(\sum_i x_i-\beta\right) \mod \vx^2-\vx\\
		&=\left(\sum_{k=0}^d \gamma_k \Big((k+1)S_{n,k+1}+(k-\beta)S_{n,k}\Big)\right) \mod \vx^2-\vx\\
		&=\gamma_d(d+1)S_{n,d+1} + (\text{degree $\le d$}) \mod \vx^2-\vx
	\end{align*}
	However, as $\gamma_d\ne 0$, $d+1\le n$ (so that $d+1\ne 0$ in $\F$ and $S_{n,d+1}$ is defined) this shows that $1$ (a multilinear degree 0 polynomial) equals $\gamma_d(d+1)S_{n,d+1} + (\text{degree $\le d$})$ (a multilinear degree $d+1$ polynomial) modulo $\vx^2-\vx$, which is a contradiction to the uniqueness of representation of multilinear polynomials modulo $\vx^2-\vx$.  Thus, we must have $d=n$.
\end{proof}

To paraphrase the above argument, it shows that for multilinear $f$ of $\deg f<n$ with $\ml(f(\vx)\cdot (\sum_i x_i-\beta))=1$ it holds that $\deg \ml(f(\vx)\cdot (\sum_i x_i-\beta))=\deg f+1$. This contradicts the fact that $\deg 1=0$, so that $\deg f=n$.  It is tempting to attempt to argue this claim without using that $\ml(f(\vx)\cdot (\sum_i x_i-\beta))=1$ in some way. That is, one could hope to argue that $\deg(\ml(f(\vx)\cdot(\sum_i x_i-\beta)))= \deg f+1$ directly.  Unfortunately this is false, as seen by the example $\ml((x+y)(x-y))=\ml(x^2-y^2)=x-y$.  However, one can make this approach work to obtain a degree lower bound of $\ceil{\nicefrac{n}{2}}+1$, as shown by Impagliazzo, \Pudlak, and Sgall~\cite{IPS99}.

Putting the above together with the fact that multilinearization is degree non-increasing (\autoref{fact:multilinearization}) we obtain that any polynomial agreeing with $\frac{1}{\sum_i x_i-\beta}$ on the boolean cube must be of degree $\ge n$.

\begin{corollary}\label{res:subsetsum:deg:ge}
	Let $n\ge 1$ and $\F$ be a field with $\chara(\F)>n$.  Suppose that $\beta\in \F\setminus\{0,\ldots,n\}$. Let $f\in\F[x_1,\ldots,x_n]$ be a polynomial such that
	\[
		f(\vx)\left(\sum_i x_i-\beta\right)=1 \mod \vx^2-\vx
		\;.
	\]
	Then $\deg f\ge n$.
\end{corollary}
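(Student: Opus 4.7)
The plan is to reduce to the multilinear case already handled by \autoref{res:subsetsum:deg}, using the multilinearization operator $\ml$ and its basic properties from \autoref{fact:multilinearization}. First I would apply $\ml$ to both sides of the given congruence. Since two polynomials that agree modulo $\vx^2-\vx$ have the same multilinearization, we get $\ml\big(f(\vx)(\tsum_i x_i - \beta)\big) = \ml(1) = 1$.

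Next, I would use the identity $\ml(gh) = \ml(\ml(g)\ml(h))$ together with the fact that $\sum_i x_i - \beta$ is already multilinear. This gives
\[
    \ml\!\Big(\ml(f)(\tsum_i x_i - \beta)\Big) = 1,
\]
which, rephrased modulo the boolean axioms, says $\ml(f)(\sum_i x_i - \beta) \equiv 1 \mod \vx^2 - \vx$. Since $\ml(f)$ is multilinear, \autoref{res:subsetsum:deg} applies and yields $\deg \ml(f) = n$.

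Finally, I would invoke the property that multilinearization is degree non-increasing, i.e. $\deg \ml(f) \le \deg f$ from \autoref{fact:multilinearization}. Combining this with $\deg \ml(f) = n$ immediately gives $\deg f \ge n$, as required.

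There is no real obstacle here: the entire argument is a direct lift from the multilinear case to the general case via the functorial properties of $\ml$. The only subtlety worth double-checking is that the hypothesis $\chara(\F) > n$ carries over (it does, since it is needed only to invoke \autoref{res:subsetsum:deg}, which has the same hypothesis), and that $\beta \notin \{0,\ldots,n\}$ guarantees unsatisfiability so that the congruence makes sense as a nontrivial statement.
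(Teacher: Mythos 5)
Your argument is correct and is essentially identical to the paper's proof: both multilinearize the congruence, use $\ml(gh)=\ml(\ml(g)\ml(h))$ to reduce to the multilinear polynomial $\ml(f)$, invoke \autoref{res:subsetsum:deg} to get $\deg\ml(f)=n$, and conclude via $\deg f\ge\deg\ml(f)$. No issues.
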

\begin{proof}
	Multilinearizing (\autoref{fact:multilinearization}) we see that $1=\ml\big(f(\vx)\cdot\left(\sum_i x_i-\beta\right)\big)=\ml\big(\ml(f)\cdot\left(\sum_i x_i-\beta\right)\big)$, so that $\ml(f)\cdot \left(\sum_i x_i-\beta\right)=1\mod \vx^2-\vx$.  Thus $\deg f\ge \deg \ml(f)$ (\autoref{fact:multilinearization}) and $\deg \ml(f)=n$ by the above \autoref{res:subsetsum:deg}, yielding the claim.
\end{proof}

The above proof shows that the unique multilinear polynomial $f$ agreeing with $\nicefrac{1}{\left(\sum_i x_i-\beta\right)}$ on the hypercube has degree $n$, but does so without actually specifying the coefficients of $f$.  In \autoref{res:subsetsum:multlin} we compute the coefficients of this polynomial, giving an alternate proof that it has degree $n$ (\autoref{res:subsetsum:multlin:deg-sparse}).  In particular, this computation yields a small algebraic circuit for $f$, expressing it as an explicit linear combination of elementary symmetric polynomials (which have small algebraic circuits).

\subsection{Sparse polynomials}\label{sec:h-ips:lbs:sparse}

We now use the above functional lower bounds for degree, along with random restrictions, to obtain functional lower bounds for sparsity.  We then apply this to obtain exponential lower bounds for sparse-\lIPS refutations of the subset-sum axiom.  Recall that sparse-\lIPS is equivalent to the Nullstellensatz proof system when we measure the size of the proof in terms of the number of monomials.  While we provide the proof here for completeness, we note that this result has already been obtained by Impagliazzo-\Pudlak-Sgall~\cite{IPS99}, who also gave such a lower bound for the stronger polynomial calculus proof system.

We first recall the random restrictions lemma.  This lemma shows that by randomly setting half of the variables to zero, sparse polynomials become sums of monomials involving few variables, which after multilinearization is a low-degree polynomial.

\begin{lemma}\label{res:random-restriction}
	Let $f\in\F[x_1,\ldots,x_n]$ be an $s$-sparse polynomial.  Let $\rho:\F[\vx]\to\F[\vx]$ be the homomorphism induced by randomly and independently setting each variable $x_i$ to $0$ with probability $\nicefrac{1}{2}$ and leaving $x_i$ intact with probability $\nicefrac{1}{2}$.  Then with probability $\ge \nicefrac{1}{2}$, each monomial in $\rho(f(\vx))$ involves $\le \lg s+1$ variables.  Thus, with probability $\ge \nicefrac{1}{2}$, $\deg\ml(\rho(f))\le \lg s+1$.
\end{lemma}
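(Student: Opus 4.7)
The plan is the standard random restriction argument: estimate the survival probability of ``bad'' monomials and apply a union bound. Write $f = \sum_{i=1}^s c_i \vx^{\va_i}$ with $c_i \ne 0$ and distinct exponent vectors $\va_i$. Each monomial $\vx^{\va_i}$ depends on $\ellzero{\va_i}$ many variables, and under the restriction $\rho$ the image $\rho(\vx^{\va_i})$ is the zero polynomial unless none of those $\ellzero{\va_i}$ variables is hit by the zero substitution. Since the variables are restricted independently, the probability that $\rho(\vx^{\va_i})$ is nonzero is exactly $2^{-\ellzero{\va_i}}$.

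I would call a monomial $\vx^{\va_i}$ \emph{bad} if $\ellzero{\va_i} > \lg s + 1$, and good otherwise. For a bad monomial, the survival probability is at most $2^{-(\lg s + 1)} = \frac{1}{2s}$. By a union bound over the (at most $s$) monomials of $f$, the probability that some bad monomial survives is at most $s \cdot \frac{1}{2s} = \frac{1}{2}$. Hence with probability $\ge \frac{1}{2}$ every monomial appearing in $\rho(f)$ is good, i.e., involves at most $\lg s + 1$ variables, which is the first conclusion.

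For the second conclusion, I would observe that the multilinearization operator $\ml$ preserves the variable support of each monomial: by \autoref{fact:multilinearization}, $\ml(x_1^{a_1}\cdots x_n^{a_n}) = \prod_i x_i^{\max\{a_i,1\}}$, so $\ml$ maps a monomial on $k$ variables to a multilinear monomial of degree exactly $k$. By linearity of $\ml$, applying $\ml$ to $\rho(f)$ just replaces each surviving monomial by its multilinearization, whose degree equals the number of variables it depends on. Thus on the good event above, every monomial of $\ml(\rho(f))$ has degree at most $\lg s + 1$, giving $\deg \ml(\rho(f)) \le \lg s + 1$ with probability $\ge \frac{1}{2}$.

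There is no real obstacle here; the only minor care needed is the (routine) handling of non-integer $\lg s$ in the threshold, which is absorbed by the factor $\frac{1}{2}$ slack in the union bound.
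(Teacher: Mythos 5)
Your proof is correct and follows essentially the same route as the paper: bound the survival probability of any monomial on more than $\lg s+1$ variables by $2^{-(\lg s+1)}=\nicefrac{1}{2s}$, union bound over the $\le s$ monomials, and note that multilinearization sends a surviving monomial on $k$ variables to a multilinear monomial of degree $k$. No gaps.
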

\begin{proof}
	Consider a monomial $\vx^\va$ involving $\ge t$ variables, $t\in\R$.  Then the probability that $\rho(\vx^\va)$ is nonzero is at most $2^{-t}$.  Now consider $f(\vx)=\sum_{j=1}^s \alpha_j \vx^{\va_j}$. By a union bound, the probability that any monomial $\vx^{\va_j}$ involving at least $t$ variables survives the random restriction is at most $s2^{-t}$. For $t=\lg s+1$ this is at most $\nicefrac{1}{2}$.  The claim about the multilinearization of $\rho(f(\vx))$ follows by observing that for a monomial $\vx^\va$ involving $\le \lg s+1$ variables it must be that $\deg\ml(\rho(\vx^\va))\le \lg s+1$ (\autoref{fact:multilinearization}).
\end{proof}

We now give our functional lower bound for sparsity.  This follows from taking any refutation of the subset-sum axiom and applying a random restriction.  The subset-sum axiom will be relatively unchanged, but any sparse polynomial will become (after multilinearization) low-degree, to which our degree lower bounds (\autoref{sec:lbs-fn:deg}) can then be applied.

\begin{proposition}
	Let $n\ge 8$ and $\F$ be a field with $\chara(\F)>n$.  Suppose that $\beta\in \F\setminus\{0,\ldots,n\}$.  Let $f\in\F[x_1,\ldots,x_n]$ be a polynomial such that
	\[
		f(\vx)=\frac{1}{\sum_i x_i-\beta}
		\;,
	\]
	for $\vx\in\bits^n$. Then $f$ requires $\ge 2^{\nicefrac{n}{4}-1}$ monomials.
\end{proposition}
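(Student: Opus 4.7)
The plan is to prove the contrapositive via random restrictions, reducing the sparsity lower bound to the degree lower bound of \autoref{res:subsetsum:deg:ge}. Suppose for contradiction that $f$ is $s$-sparse with $s < 2^{n/4-1}$, so $\lg s + 1 < n/4$. Apply the random restriction $\rho$ of \autoref{res:random-restriction}, which independently sets each $x_i$ to $0$ with probability $1/2$ and leaves it otherwise. Let $S \subseteq [n]$ denote the random set of surviving variables, and let $m = |S|$.

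I would then track two events. Let $E_1$ be the event that every monomial of $\rho(f)$ involves at most $\lg s + 1$ variables (equivalently, $\deg \ml(\rho(f)) \le \lg s + 1$); by \autoref{res:random-restriction}, $\Pr[E_1] \ge 1/2$. Let $E_2$ be the event that $m \ge n/4$. Since $m \sim \mathrm{Bin}(n, 1/2)$, a standard Chernoff (or direct) calculation shows $\Pr[E_2^c] = \Pr[m < n/4] < 1/2$ for $n \ge 8$; indeed for $n=8$ one has $\Pr[m \le 1] = 9/256$, and the tail only decreases. Thus $\Pr[E_1 \cap E_2] \ge \Pr[E_1] - \Pr[E_2^c] > 0$, so there is a fixed restriction $\rho$ realizing both events simultaneously.

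Fix such a $\rho$, so $|S| \ge n/4$ and $\deg \ml(\rho(f)) \le \lg s + 1 < n/4$. Viewing $\rho(f)$ as a polynomial in $\F[\{x_i\}_{i \in S}]$, the hypothesis gives
\[
\rho(f)(\vx) \;=\; \frac{1}{\sum_{i \in S} x_i - \beta}
\qquad \text{for all } \vx \in \bits^{S},
\]
since evaluations of $f$ where the non-$S$ coordinates are $0$ are exactly evaluations of $\rho(f)$ on $\bits^S$. Because $\beta \notin \{0,\ldots,n\} \supseteq \{0,\ldots,|S|\}$ and $\chara(\F) > n \ge |S|$, \autoref{res:subsetsum:deg:ge} applied in the variables $\{x_i\}_{i \in S}$ yields $\deg \rho(f) \ge |S| \ge n/4$, and since multilinearization is degree non-increasing (\autoref{fact:multilinearization}), $\deg \ml(\rho(f)) \ge n/4$ as well (here we use that the multilinear polynomial equal to $1/(\sum_{i\in S} x_i - \beta)$ on $\bits^S$ is the same as $\ml(\rho(f))$). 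This contradicts $\deg \ml(\rho(f)) < n/4$.

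The proof has no real obstacle beyond checking parameters; the conceptual step is simply that random restrictions trade sparsity for low degree while essentially preserving the structure of the subset-sum axiom, so the tight degree lower bound of \autoref{res:subsetsum:deg:ge} can be invoked on the restricted instance. The constants $n/4$ and $2^{n/4-1}$ are not optimized; a slightly tighter Chernoff bound would let one push towards $2^{n/2-o(n)}$, but the bound as stated is already sufficient.
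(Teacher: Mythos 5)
Your proof is correct and follows essentially the same route as the paper's: a random restriction killing each variable with probability $\nicefrac{1}{2}$, a Chernoff/union-bound argument to find a restriction that both keeps $\ge \nicefrac{n}{4}$ variables alive and makes $\ml(\rho(f))$ low-degree, and then the tight degree bound of \autoref{res:subsetsum:deg:ge} applied to the restricted subset-sum instance. One small wording issue: "multilinearization is degree non-increasing" gives the inequality in the wrong direction for concluding $\deg\ml(\rho(f))\ge \nicefrac{n}{4}$ --- the correct justification is the one in your parenthetical, namely that $\ml(\rho(f))$ is itself the (unique) multilinear polynomial satisfying the restricted functional equation, so the degree lower bound applies to it directly.
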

\begin{proof}
	Suppose that $f$ is $s$-sparse so that $f(\vx)=\sum_{j=1}^s \alpha_j\vx^{\va_j}$. Take a random restriction $\rho$ as in \autoref{res:random-restriction}, so that with probability at least $\nicefrac{1}{2}$ we have that $\deg \ml(\rho(f))\le \lg s+1$.  By the Chernoff bound,\footnote{For independent $[0,1]$-valued random variables $\rX_1,\ldots,\rX_n$, $\Pr\left[\sum_i \rX_i -\sum_i\E[\rX_i]\le-\eps n\right]\le \e^{-2\eps^2n}$.} we see that $\rho$ keeps alive at least $\nicefrac{n}{4}$ variables with probability at least $1-\e^{-2\cdot (\nicefrac{1}{4})^2\cdot n}$, which is $\ge 1-\e^{-1}$ for $n\ge 8$.
	Thus, by a union bound the probability that $\rho$ fails to have either that $\deg\ml(\rho(f))\le \lg s+1$ or that it keeps at least $\nicefrac{n}{4}$ variables alive is at most $\nicefrac{1}{2}+\e^{-1}<1$.  Thus a $\rho$ exists obeying both properties.

	Thus, the functional equation for $f$ implies that
	\[
		f(\vx)\left(\sum_i x_i-\beta\right)=1+\sum_i h_i(\vx) (x_i^2-x_i)
		\;,
	\]
	for some $h_i\in\F[\vx]$.  Applying the random restriction and multilinearization to both sizes of this equation, we obtain that
	\[
		\ml(\rho(f))\cdot \left(\sum_{\rho(x_i)\ne 0} x_i-\beta\right)\equiv 1 \mod \{x_i^2-x_i\}_{\rho(x_i)\ne 0}
		\;.
	\]
	Thus, by appealing to the degree lower bound for this functional equation (\autoref{res:subsetsum:deg}) we obtain that $\lg s+1\ge \deg\ml(\rho(f))$ is at least the number of variables which is $\ge \nicefrac{n}{4}$, so that $s\ge 2^{\nicefrac{n}{4}-1}$ as desired.
\end{proof}

We remark that one can actually improve the sparsity lower bound to the optimal ``$\ge 2^n$'' by computing the sparsity of the unique multilinear polynomial satisfying the above functional equation (\autoref{res:subsetsum:multlin:deg-sparse}). We now apply these functional lower bounds to obtain lower bounds for sparse-\lIPS refutations of $\sum_i x_i-\beta,\vx^2-\vx$ via our reduction (\autoref{res:lbs-fn_lbs-ips:lIPS}).

\begin{corollarywp}
	Let $n\ge 1$ and $\F$ be a field with $\chara(\F)>n$.  Suppose that $\beta\in \F\setminus\{0,\ldots,n\}$.  Then $\sum_{i=1}^n x_i-\beta,\vx^2-\vx$ is unsatisfiable and any sparse-\lIPS refutation requires size $\exp(\Omega(n))$.
\end{corollarywp}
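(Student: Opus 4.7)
The plan is to combine the functional sparsity lower bound (the preceding proposition) with the general reduction from functional lower bounds to \lIPS lower bounds (\autoref{res:lbs-fn_lbs-ips:lIPS}).

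First, I would verify unsatisfiability: for any $\vx\in\bits^n$ the value $\sum_i x_i$ lies in $\{0,1,\ldots,n\}$, and since $\beta\notin\{0,\ldots,n\}$ and $\chara(\F)>n$, the polynomial $\sum_i x_i-\beta$ is nonzero on $\bits^n$. Hence the system $\sum_i x_i-\beta,\vx^2-\vx$ has no common root in $\bits^n$, which makes it a legitimate target for an IPS refutation.

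Next, I would instantiate the reduction lemma (\autoref{res:lbs-fn_lbs-ips:lIPS}) with $\cC$ taken to be the class of $s$-sparse polynomials over $\F[\vx]$. Two things must be noted. (i) The class of $s$-sparse polynomials is closed under partial evaluation, since setting a subset of variables to scalars can only merge or kill monomials, never create new ones. (ii) By the preceding proposition, any polynomial $g\in\F[\vx]$ satisfying $g(\vx)=1/\bigl(\sum_i x_i-\beta\bigr)$ for all $\vx\in\bits^n$ requires at least $2^{n/4-1}$ monomials. Thus for $s<2^{n/4-1}$, the $\cC$-hypothesis of \autoref{res:lbs-fn_lbs-ips:lIPS} is met: no $s$-sparse polynomial can agree with $1/f$ on the boolean cube.

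Applying \autoref{res:lbs-fn_lbs-ips:lIPS}, we conclude that $\sum_i x_i-\beta,\vx^2-\vx$ admits no $\cC$-\lIPS refutation when $\cC$ consists of $s$-sparse polynomials with $s<2^{n/4-1}$. Since the size of a $\sumprod$ refutation (per \autoref{def:sparse}) is at least its number of monomials, this forces any sparse-\lIPS refutation to have size $\ge 2^{n/4-1}=\exp(\Omega(n))$, as claimed. The only ``obstacle'' is essentially bookkeeping: matching the hypothesis $n\ge 8$ of the sparsity lower bound (absorbed into $\Omega(n)$) and checking the closure of sparse polynomials under the partial evaluation used in the reduction; both are immediate.
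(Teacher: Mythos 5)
Your proposal is correct and is exactly the derivation the paper intends: the corollary is stated without proof precisely because it follows by instantiating the reduction lemma (\autoref{res:lbs-fn_lbs-ips:lIPS}) with $\cC$ the class of $s$-sparse polynomials (closed under partial evaluation) and invoking the preceding functional sparsity lower bound of $2^{\nicefrac{n}{4}-1}$. Your bookkeeping about the $n\ge 8$ hypothesis and the size-versus-sparsity comparison is also fine.
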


\subsection{Coefficient Dimension in a Fixed Partition}\label{sec:evaluation}

We now seek to prove functional circuit lower bounds for more powerful models of computation such as roABPs and multilinear formulas.  As recalled in \autoref{sec:background}, the coefficient dimension complexity measure can give lower bounds for such models.  However, by definition it is a \emph{syntactic} measure as it speaks about the coefficients of a polynomial.  Unfortunately, knowing that a polynomial $f\in\F[\vx]$ agrees with a function $\hat{f}:\bits^n\to\F$ on the boolean cube $\bits^n$ does not in general give enough information to determine its coefficients.  In contrast, the \emph{evaluation} dimension measure is concerned with evaluations of a polynomial (which is functional).  Obtaining lower bounds for evaluation dimension, and leveraging the fact that the evaluation dimension lower bounds coefficient dimension (\autoref{res:evals_eq-coeffs}) we can obtain the desired lower bounds for this complexity measure.

We now proceed to the lower bound.  It will follow from the degree lower bound for the subset-sum axiom (\autoref{res:subsetsum:deg:ge}).  That is, this degree bound shows that if $f(\vz)\cdot(\sum_i z_i-\beta)\equiv 1\mod \vz^2-\vz$ then $f$ must have degree $\ge n$.  We can then ``lift'' this lower bound by the use of a gadget, in particular by replacing $\vz\leftarrow \vx\circ \vy$, where `$\circ$' is the Hadamard (entry-wise) product.  Because the degree of $f$ is maximal, this gadget forces $\vx$ and $\vy$ to maximally ``interact'', and hence the evaluation dimension is large in the $\vx$ versus $\vy$ partition.

\begin{proposition}\label{res:lbs-fn:dim-eval}
	Let $n\ge 1$ and $\F$ be a field with $\chara(\F)>n$.  Suppose that $\beta\in\F\setminus\{0,\ldots,n\}$.  Let $f\in\F[x_1,\ldots,x_n,y_1,\ldots,y_n]$ be a polynomial such that
	\[
		f(\vx,\vy)=\frac{1}{\sum_i x_iy_i-\beta}
		\;,
	\]
	for $\vx,\vy\in\bits^n$. Then $\dim\coeffs{\vx|\vy} f\ge 2^n$.
\end{proposition}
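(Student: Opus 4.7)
Proof proposal. The plan is to exhibit $2^n$ linearly independent evaluations of $f$ in $\vy$, obtaining $\dim\evals{\vx|\vy,\bits}(f)\ge 2^n$, and then invoke \autoref{res:evals_eq-coeffs} (which gives $\dim\evals{\vx|\vy,\bits}(f)\le\dim\coeffs{\vx|\vy}(f)$) to conclude. For each $S\subseteq[n]$, let $\mathbf{1}_S\in\bits^n$ be the indicator vector of $S$ and set $g_S(\vx)\eqdef f(\vx,\mathbf{1}_S)\in\F[\vx]$. These $2^n$ polynomials lie in $\evals{\vx|\vy,\bits}(f)$ by definition, so it suffices to show they are $\F$-linearly independent.

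First I would lift the structure. Since $f(\vx,\vy)=\nicefrac{1}{(\sum_i x_iy_i-\beta)}$ on the full boolean cube, partially substituting $\vy\leftarrow\mathbf{1}_S$ gives $g_S(\vx)=\nicefrac{1}{(\sum_{i\in S}x_i-\beta)}$ for all $\vx\in\bits^n$ (the denominator is never zero because $\sum_{i\in S}x_i\in\{0,\ldots,|S|\}\subseteq\{0,\ldots,n\}$ while $\beta\notin\{0,\ldots,n\}$). Let $\hat g_S\eqdef\ml(g_S)$ be its multilinearization. Since the induced function on $\bits^n$ depends only on the coordinates indexed by $S$, the uniqueness of multilinear representatives (\autoref{fact:multilinearization}) forces $\hat g_S\in\F[\{x_i\}_{i\in S}]$. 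Moreover, $\hat g_S$ satisfies $\hat g_S(\vx)\cdot(\sum_{i\in S}x_i-\beta)\equiv 1\mod\{x_i^2-x_i\}_{i\in S}$, and since $\chara(\F)>n\ge|S|$ and $\beta\notin\{0,\ldots,|S|\}$, \autoref{res:subsetsum:deg} applies to give $\deg\hat g_S=|S|$.

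The key consequence is that the coefficient $\coeff{\vx^{\mathbf{1}_S}}(\hat g_S)$ is nonzero: $\hat g_S$ is a multilinear polynomial in the $|S|$ variables $\{x_i\}_{i\in S}$ of degree exactly $|S|$, and the only such multilinear monomial is $\prod_{i\in S}x_i$. On the other hand, if $S\not\subseteq S'$ then $\hat g_{S'}$ does not involve some variable in $S$, so $\coeff{\vx^{\mathbf{1}_S}}(\hat g_{S'})=0$.

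Now suppose $\sum_S\alpha_S g_S=0$ in $\F[\vx]$. Evaluating on $\bits^n$ and invoking uniqueness of multilinear representatives gives $\sum_S\alpha_S\hat g_S=0$ as a polynomial identity. Extracting the coefficient of $\vx^{\mathbf{1}_S}$ yields
\[
\alpha_S\cdot\coeff{\vx^{\mathbf{1}_S}}(\hat g_S)+\sum_{S'\supsetneq S}\alpha_{S'}\cdot\coeff{\vx^{\mathbf{1}_S}}(\hat g_{S'})=0,
\]
a triangular system in the partial order by reverse inclusion. Induction from $S=[n]$ downward (using that $\coeff{\vx^{\mathbf{1}_S}}(\hat g_S)\ne 0$ at each step) forces every $\alpha_S=0$, proving linear independence. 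The main technical step is the nonvanishing of $\coeff{\vx^{\mathbf{1}_S}}(\hat g_S)$, but this is exactly what the tight degree lower bound of \autoref{res:subsetsum:deg} delivers; everything else is bookkeeping.
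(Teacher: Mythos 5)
Your proof is correct and follows essentially the same route as the paper: lower bound coefficient dimension by boolean evaluation dimension (\autoref{res:evals_eq-coeffs}), restrict $\vy\leftarrow\ind{S}$, multilinearize, and use the tight degree bound of \autoref{res:subsetsum:deg} to conclude that the top monomial of $\ml(f(\vx,\ind{S}))$ is $\prod_{i\in S}x_i$. The only cosmetic difference is that you establish linear independence by an explicit triangular induction over reverse inclusion, whereas the paper simply counts the $2^n$ distinct leading monomials via \autoref{res:dim-eq-num-TM-spn}.
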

\begin{proof}
	By lower bounding coefficient dimension by the evaluation dimension over the boolean cube (\autoref{res:evals_eq-coeffs}), 
	\begin{align*}
		\dim\coeffs{\vx|\vy} f
		&\ge \dim\evals{\vx|\vy,\bits} f\\
		&=\dim \{f(\vx,\ind{S}) : S\subseteq[n]\}\\
		&\ge\dim \{\ml(f(\vx,\ind{S})) : S\subseteq[n]\}
		\;,
	\end{align*}
	where $\ind{S}\in\bits^n$ is the indicator vector for a set $S$, and $\ml$ is the multilinearization operator.  Note that we used that multilinearization is linear (\autoref{fact:multilinearization}) and that dimension is non-increasing under linear maps. Now note that for $\vx\in\bits^n$,
	\[
		f(\vx,\ind{S})=\frac{1}{\sum_{i\in S} x_i-\beta}
		\;,
	\]
	It follows then $\ml(f(\vx,\ind{S}))$ is a multilinear polynomial only depending on $\vx|_S$ (\autoref{fact:multilinearization}), and by its functional behavior it follows from \autoref{res:subsetsum:deg} that $\deg \ml(f(\vx,\ind{S}))=|S|$.  As $\ml(f(\vx,\ind{S}))$ is multilinear it thus follows that the leading monomial of $\ml(f(\vx,\ind{S}))$ is $\prod_{i\in S} x_i$, which is distinct for each distinct $S$.  This is also readily seen from the explicit description of $\ml(f(\vx,\ind{S}))$ given by \autoref{res:subsetsum:multlin}.  Thus, we can lower bound the dimension of this space by the number of leading monomials (\autoref{res:dim-eq-num-TM-spn}),
	\begin{align*}
		\dim\coeffs{\vx|\vy} f
		&\ge\dim \{\ml(f(\vx,\ind{S})) : S\subseteq[n]\}\\
		&\ge\left|\LM \Big(\{\ml(f(\vx,\ind{S})) : S\subseteq[n]\}\Big)\right|\\
		&=\left|\left\{\prod_{i\in S} x_i : S\subseteq[n]\right\}\right|\\
		&=2^n
		\;.
		\qedhere
	\end{align*}
\end{proof}

Note that in the above proof we crucially leveraged that the degree bound of \autoref{res:subsetsum:deg} is \emph{exactly} $n$, not just $\Omega(n)$.  This exact bound allows us to determine the leading monomials of these polynomials, which seems not to follow from degree lower bounds of $\Omega(n)$.

As coefficient dimension lower bounds roABP-width (\autoref{res:roABP-width_eq_dim-coeffs}) and depth-3 powering formulas can be computed by roABPs in any variable order (\autoref{res:sumpowsum:roABP}), we obtain as a corollary our functional lower bound for these models.

\begin{corollarywp}\label{res:lbs-fn:roABP}
	Let $n\ge 1$ and $\F$ be a field with $\chara(\F)>n$.  Suppose that $\beta\in\F\setminus\{0,\ldots,n\}$.  Let $f\in\F[x_1,\ldots,x_n,y_1,\ldots,y_n]$ be a polynomial such that
	\[
		f(\vx,\vy)=\frac{1}{\sum_i x_iy_i-\beta}
		\;,
	\]
	for $\vx,\vy\in\bits^n$.  Then $f$ requires width $\ge 2^n$ to be computed as an roABP in any variable order where $\vx$ precedes $\vy$.  In particular, $f$ requires $\exp(\Omega(n))$-size as a depth-$3$ powering formula.
\end{corollarywp}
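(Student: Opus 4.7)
The statement is a direct corollary of the dimension bound established in Proposition~\ref{res:lbs-fn:dim-eval}, which gives $\dim\coeffs{\vx|\vy} f \ge 2^n$ for any polynomial $f$ agreeing with $\nicefrac{1}{(\sum_i x_iy_i - \beta)}$ on $\bits^{2n}$. My plan is to convert this coefficient-dimension lower bound into width lower bounds for the two models in question, using already-developed machinery.

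First, for the roABP bound: fix any variable order in which all $\vx$-variables precede all $\vy$-variables. Then there is some index $i$ (namely $i = n$, after a relabeling induced by the order) for which the first $i$ variables in the order are exactly the set $\vx$ and the remaining $n$ variables are exactly $\vy$. By Lemma~\ref{res:roABP-width_eq_dim-coeffs}, which characterizes the minimum roABP width as the maximum coefficient dimension across cuts of the variable order, the width of any roABP computing $f$ in this order is at least $\dim\coeffs{\vx_{\le i}|\vx_{>i}}(f) = \dim\coeffs{\vx|\vy}(f) \ge 2^n$, which establishes the first claim.

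Second, for depth-$3$ powering formulas: apply Theorem~\ref{res:sumpowsum:roABP}, which states that any $\sumpowsum$ representation $f(\vw) = \sum_{i=1}^s \ell_i(\vw)^{d_i}$ can be simulated, in any variable order, by an roABP of width $r = \sum_{i=1}^s (d_i + 1)$. Picking an order in which $\vx$ precedes $\vy$ and combining with the roABP lower bound just established, we get $\sum_{i=1}^s (d_i+1) \ge 2^n$. Since the size of a $\sumpowsum$ formula is $2n \cdot \sum_{i=1}^s(d_i+1)$ (using that there are $2n$ variables here), the depth-$3$ powering size is at least $2n \cdot 2^n = \exp(\Omega(n))$, as desired.

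There is essentially no obstacle here: both steps are direct invocations of results already proved in the excerpt, once Proposition~\ref{res:lbs-fn:dim-eval} is in hand. The only minor point to verify is that the variable order hypothesis ``$\vx$ precedes $\vy$'' permits a clean cut of the form $(\vx_{\le i}, \vx_{>i}) = (\vx, \vy)$, which is immediate from the definition since the precedence condition places all $\vx$-indices before all $\vy$-indices in the order, so the cut between the last $\vx$-layer and the first $\vy$-layer yields exactly the partition for which the dimension bound applies.
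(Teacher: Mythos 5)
Your proposal is correct and matches the paper's intended argument exactly: the paper states this as a corollary without a written proof, justified by the one-sentence remark that coefficient dimension lower bounds roABP width (Lemma~\ref{res:roABP-width_eq_dim-coeffs}) and that depth-3 powering formulas are simulated by roABPs in any variable order (Theorem~\ref{res:sumpowsum:roABP}), which is precisely the two-step reduction you carry out. Your added care about the cut at position $n$ being exactly the $\vx|\vy$ partition (which is all that coefficient dimension depends on) is the right minor point to check.
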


We now conclude with a lower bound for linear-IPS over roABPs in certain variable orders, and thus also for depth-$3$ powering formulas, by appealing to our reduction to functional lower bounds (\autoref{res:lbs-fn_lbs-ips:lIPS}).

\begin{corollary}\label{res:lbs-fn:lbs-ips:fixed-order}
	Let $n\ge 1$ and $\F$ be a field with $\chara(\F)>n$.  Suppose that $\beta\in \F\setminus\{0,\ldots,n\}$.  Then $\sum_{i=1}^n x_iy_i-\beta,\vx^2-\vx,\vy^2-\vy$ is unsatisfiable and any roABP-\lIPS refutation, where the roABP reads $\vx$ before $\vy$, requires width $\ge \exp(\Omega(n))$.  In particular, any $\sumpowsum$-\lIPS refutation requires size $\ge\exp(\Omega(n))$.
\end{corollary}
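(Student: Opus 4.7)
\begin{proof-sketch}
The plan is to combine the functional lower bound of \autoref{res:lbs-fn:dim-eval} with the general reduction \autoref{res:lbs-fn_lbs-ips:lIPS} from functional lower bounds to \lIPS lower bounds, using roABP width (resp.\ $\sumpowsum$ size) as the target complexity class $\cC$.

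First I would verify unsatisfiability: for $\vx,\vy\in\bits^n$ the quantity $\sum_i x_iy_i$ lies in $\{0,1,\ldots,n\}$, so $\sum_i x_iy_i-\beta\ne 0$ whenever $\beta\notin\{0,\ldots,n\}$, and in particular this expression never vanishes on the boolean cube, so the system $\sum_i x_iy_i-\beta,\vx^2-\vx,\vy^2-\vy$ is unsatisfiable.

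Next, let $\cC$ denote the class of polynomials in $\F[\vx,\vy]$ computed by width-$r$ roABPs in any variable order where all of $\vx$ precedes all of $\vy$ (where $r$ is a parameter to be pushed below the desired bound). By \autoref{fact:roABP:closure}, $\cC$ is closed under partial evaluation, which is the hypothesis needed for \autoref{res:lbs-fn_lbs-ips:lIPS}. Applying that reduction, any $\cC$-\lIPS refutation yields a polynomial $g(\vx,\vy)\in\cC$ with $g(\vx,\vy)=1/(\sum_i x_iy_i-\beta)$ for all $\vx,\vy\in\bits^n$. But \autoref{res:lbs-fn:dim-eval} shows any such $g$ has $\dim\coeffs{\vx|\vy}g\ge 2^n$, and \autoref{res:roABP-width_eq_dim-coeffs} then forces any roABP for $g$ (in any order reading $\vx$ before $\vy$) to have width $\ge 2^n$. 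Taking $r$ just below $2^n$ gives the contradiction, yielding the desired roABP-\lIPS width lower bound of $\exp(\Omega(n))$.

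For the $\sumpowsum$-\lIPS statement, I would invoke \autoref{res:sumpowsum:roABP} which says that a size-$s$ $\sumpowsum$ formula can be simulated by an roABP of width $\poly(s)$ in \emph{any} variable order, so in particular in an order where $\vx$ precedes $\vy$. Hence a size-$s$ $\sumpowsum$-\lIPS refutation would give a width-$\poly(s)$ roABP-\lIPS refutation in such an order, forcing $\poly(s)\ge 2^n$, i.e., $s\ge \exp(\Omega(n))$. There is no real obstacle here: the entire corollary is an assembly of previously proved pieces, and the only thing to be slightly careful about is that the reduction lemma requires closure under partial evaluation, which roABPs (and hence their $\sumpowsum$ subclass via the simulation) satisfy uniformly across the variable orders under consideration.
\end{proof-sketch}
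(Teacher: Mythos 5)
Your proposal is correct and follows essentially the same route as the paper: unsatisfiability is immediate, the reduction of \autoref{res:lbs-fn_lbs-ips:lIPS} (using closure of roABPs under partial evaluation, \autoref{fact:roABP:closure}) hands the problem to the functional lower bound for $\nicefrac{1}{\left(\sum_i x_iy_i-\beta\right)}$, and the $\sumpowsum$ case follows from \autoref{res:sumpowsum:roABP}. The only cosmetic difference is that you unpack \autoref{res:lbs-fn:roABP} into its two ingredients (\autoref{res:lbs-fn:dim-eval} plus \autoref{res:roABP-width_eq_dim-coeffs}) rather than citing it directly.
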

\begin{proof}
	That this system is unsatisfiable is clear from construction.  The proof then follows from applying our functional lower bound (\autoref{res:lbs-fn:roABP}) to our reduction strategy (\autoref{res:lbs-fn_lbs-ips:lIPS}), where we use that partial evaluations of small roABPs yield small roABPs in the induced variable order (\autoref{fact:roABP:closure}), and that depth-3 powering formulas are a subclass of roABPs (in any order) (\autoref{res:sumpowsum:roABP}).
\end{proof}

The above result shows an roABP-\lIPS lower bound for variable orders where $\vx$ precedes $\vy$, and we complement this by giving an upper bound showing there \emph{are} small roABP-\lIPS upper bounds for variable orders where $\vx$ and $\vy$ are tightly interleaved.  This is achieved by taking the roABP-\lIPS upper bound of \autoref{res:ips-ubs:subset:roABP} for $\sum_i z_i-\beta,\vz^2-\vz$ under the substitution $z_i\leftarrow x_iy_i$, and observing that such substitutions preserve roABP width in the $x_1<y_1<\cdots<x_n<y_n$ order (\autoref{fact:roABP:closure}). In particular, as $\sumpowsum$ formulas are small roABPs in \emph{every} variable order, this allows us to achieve an exponential separation between $\sumpowsum$-\lIPS and roABP-\lIPS.

\begin{corollary}\label{res:lbs-fn:lbs-ips:order-sep}
	Let $n\ge 1$ and $\F$ be a field with $\chara(\F)>n$.  Suppose that $\beta\in \F\setminus\{0,\ldots,n\}$.  Then $\sum_{i=1}^n x_iy_i-\beta,\vx^2-\vx,\vy^2-\vy$ is unsatisfiable, has a $\poly(n)$-explicit $\poly(n)$-size roABP-\lIPS refutation in the variable order $x_1<y_1<\cdots<x_n<y_n$, and every $\sumpowsum$-\lIPS refutation requires size $\ge\exp(\Omega(n))$.
\end{corollary}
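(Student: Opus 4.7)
The plan is to combine the preceding lower bound corollary with a straightforward substitution argument, with essentially all the real work already done. First, unsatisfiability is immediate: for $\vx,\vy \in \bits^n$ the quantity $\sum_i x_iy_i$ lies in $\{0,\ldots,n\}$, so since $\beta \notin \{0,\ldots,n\}$ the polynomial $\sum_i x_iy_i - \beta$ is never zero on the boolean cube. The $\sumpowsum$-\lIPS lower bound is already established as part of \autoref{res:lbs-fn:lbs-ips:fixed-order} (and in fact the roABP-\lIPS lower bound in the variable order where $\vx$ precedes $\vy$). So the remaining task is to produce a matching roABP-\lIPS upper bound in the interleaved order $x_1 < y_1 < \cdots < x_n < y_n$.

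For the upper bound, I would invoke \autoref{res:ips-ubs:subset:roABP} with $\vaa = \vno$ on fresh variables $\vz = (z_1,\ldots,z_n)$, which yields a $\poly(n)$-explicit refutation of $\sum_i z_i - \beta, \vz^2 - \vz$ of the form
\[
g(\vz)\left(\sum_i z_i - \beta\right) + \sum_i h_i(\vz)(z_i^2 - z_i) = 1,
\]
with $g$ and each $h_i$ computable by $\poly(n)$-width roABPs in any variable order. I would then substitute $z_i \leftarrow x_iy_i$ and rewrite the unwanted axiom $(x_iy_i)^2 - x_iy_i$ in terms of the admissible boolean axioms via
\[
(x_iy_i)^2 - x_iy_i \;=\; y_i^2(x_i^2 - x_i) + x_i(y_i^2 - y_i).
\]
This produces a genuine linear-IPS refutation of $\sum_i x_iy_i - \beta, \vx^2 - \vx, \vy^2 - \vy$ whose coefficient polynomials are $g(\vx \circ \vy)$, $h_i(\vx \circ \vy) \cdot y_i^2$, and $h_i(\vx \circ \vy) \cdot x_i$, where $\vx \circ \vy$ denotes the coordinate-wise product.

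To close out the argument, I would bound the roABP width of these coefficients in the interleaved order. The key ingredient is the second closure property from \autoref{fact:roABP:closure}: the substitution $z_i \leftarrow x_iy_i$ applied to a width-$r$ roABP in the order $z_1 < \cdots < z_n$ yields a roABP of width $\poly(r, \ideg)$ in the interleaved order $x_1 < y_1 < \cdots < x_n < y_n$. Multiplying by the width-$1$ polynomials $y_i^2$ or $x_i$ preserves roABP width up to a constant factor by the first closure property. There are no real analytic obstacles here; the only thing one has to verify mechanically is that each step preserves $\poly(n)$-explicitness, which is routine given the explicitness guarantees already in \autoref{res:ips-ubs:subset:roABP} and the constructive nature of the roABP closure operations. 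Combining this upper bound with the $\sumpowsum$-\lIPS lower bound of \autoref{res:lbs-fn:lbs-ips:fixed-order} yields the desired exponential separation.
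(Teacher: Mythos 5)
Your proposal is correct, and for the upper bound it takes a route that is structurally different from the paper's. The paper substitutes $z_i\leftarrow x_iy_i$ only into the multilinear polynomial $f(\vz)$ of \autoref{res:ips-ubs:subset} satisfying the congruence $f\cdot(\sum_i z_i-\beta)\equiv 1\bmod \vz^2-\vz$, notes that the substituted congruence holds modulo $\vx^2-\vx,\vy^2-\vy$, and then re-derives the coefficients of the boolean axioms from scratch by invoking the roABP multilinearization machinery (\autoref{res:multilin:roABP-lIPS}) in the new variables. You instead transport the \emph{entire} certificate of \autoref{res:ips-ubs:subset:roABP}, making explicit and constructive the fact that the paper only uses implicitly — namely that $(x_iy_i)^2-x_iy_i$ lies in the ideal of the genuine boolean axioms, via the identity $(x_iy_i)^2-x_iy_i=y_i^2(x_i^2-x_i)+x_i(y_i^2-y_i)$. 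Your version is more self-contained (no second appeal to \autoref{res:multilin:roABP-lIPS}) at the cost of tracking the extra coefficient polynomials $h_i(\vx\circ\vy)y_i^2$ and $h_i(\vx\circ\vy)x_i$; the paper's version is shorter on the page because the multilinearization proposition absorbs that bookkeeping. Two small points you gloss over but which are routine: extracting $g$ and the $h_i$ from the single refutation polynomial of \autoref{res:ips-ubs:subset:roABP} is done by partial evaluation of the placeholder variables (width-preserving by \autoref{fact:roABP:closure}), and the bound $\poly(r,\ideg)$ in the substitution step is harmless because that refutation has individual degree $2$.
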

\begin{proof}
	\autoref{res:lbs-fn:lbs-ips:fixed-order} showed that this system is unsatisfiable and has the desired $\sumpowsum$-\lIPS lower bound, so that it remains to prove the roABP upper bound.

	By \autoref{res:ips-ubs:subset} the unique multilinear polynomial $f\in\F[\vz]$ such that $f(\vz)\cdot (\sum_{i=1}^nz_i-\beta)\equiv 1\mod \vz^2-\vz$ has a multilinear $\poly(n)$-size roABP in the variable order $z_1<\cdots<z_n$. Applying the variable substitution $z_i\leftarrow x_iy_i$, it follows that $f'(\vx,\vy)\eqdef f(x_1y_1,\ldots,x_ny_n)$ obeys $f'\cdot (\sum_{i=1}^nx_iy_i-\beta)\equiv 1 \mod \vx^2-\vx,\vy^2-\vy$ (as $z_i^2-z_i\equiv 0 \mod \vx^2-\vx,\vy^2-\vy$ under the substitution $z_i\leftarrow x_iy_i$) and that $f'$ is computable by a $\poly(n)$-size roABP in the variable order $x_1<y_1<\cdots<x_n<y_n$ (\autoref{fact:roABP:closure}, using that $f$ has individual degree 1).  Appealing to the efficient multilinearization of roABPs (\autoref{res:multilin:roABP-lIPS}) completes the claim as $\sum_i x_iy_i -\beta$ is computable by a $\poly(n)$-size roABP (in any order).
\end{proof}

\subsection{Coefficient Dimension in any Variable Partition}\label{sec:lbs-fn:every-partition}

The previous section gave functional lower bounds for coefficient dimension, and thus roABP width, in the $\vx|\vy$ variable partition. However, this lower bound fails for other variable orderings where $\vx$ and $\vy$ are interleaved because of corresponding upper bounds (\autoref{res:lbs-fn:lbs-ips:order-sep}). In this section we extend the lower bound to \emph{any} variable ordering by using suitable auxiliary variables to plant the previous lower bound into any partition we desire by suitably evaluating the auxiliary variables.

We begin by developing some preliminaries for how coefficient dimension works in the presence of auxiliary indicator variables.  That is, consider a polynomial $f(\vx,\vy,\vz)$ where we wish to study the coefficient dimension of $f$ in the $\vx|\vy$ partition. We can view this polynomial as lying in $\F[\vz][\vx,\vy]$ so that its coefficients are polynomials in $\vz$ and one studies the dimension of the coefficient space in the field of rational functions $\F(\vz)$.  Alternatively one can evaluate $\vz$ at some point $\vz\leftarrow\vaa$ so that $f(\vx,\vy,\vaa)\in\F[\vx,\vy]$ and study its coefficient dimension over $\F$.  The following straightforward lemma shows the first dimension over $\F(\vz)$ is lower-bounded by the second dimension over $\F$.

\begin{lemma}\label{res:coeff-dim:fraction-field}
	Let $f\in\F[\vx,\vy,\vz]$. Let $f_\vz$ denote $f$ as a polynomial in $\F[\vz][\vx,\vy]$, so that for any $\vaa\in\F^{|\vz|}$ we have that $f_\vaa\in\F[\vx,\vy]$.  Then for any such $\vaa$,
	\[
		\dim_{\F(\vz)}\coeffs{\vx|\vy} f_\vz(\vx,\vy) \ge \dim_\F \coeffs{\vx|\vy}f_\vaa(\vx,\vy)
		\;.
	\]
\end{lemma}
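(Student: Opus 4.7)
The plan is to reduce both sides of the inequality to matrix ranks via the coefficient matrix formulation, and then use the elementary fact that matrix rank is non-increasing under evaluation of polynomial entries.

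First I would expand $f$ in its monomial representation as $f(\vx,\vy,\vz)=\sum_{\va,\vb} g_{\va,\vb}(\vz)\vx^\va\vy^\vb$ with $g_{\va,\vb}(\vz)\in\F[\vz]$. Viewing $f_\vz\in\F[\vz][\vx,\vy]$, the coefficient of $\vy^\vb$ (as an element of $\F[\vz][\vx]$) is $\coeff{\vx,\vz|\vy^\vb}(f_\vz)=\sum_\va g_{\va,\vb}(\vz)\vx^\va$. As in the proof of \autoref{res:y-dim_eq-x-dim} (which is pure linear algebra and goes through verbatim over $\F(\vz)$), the dimension $\dim_{\F(\vz)}\coeffs{\vx|\vy}f_\vz$ equals the rank over $\F(\vz)$ of the matrix $M\in\F[\vz]^{\N^{|\vy|}\times\N^{|\vx|}}$ defined by $M_{\vb,\va}\eqdef g_{\va,\vb}(\vz)$. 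After the substitution $\vz\leftarrow\vaa$, the analogous argument for $f_\vaa$ shows that $\dim_\F\coeffs{\vx|\vy}f_\vaa$ equals the rank over $\F$ of the evaluated matrix $M_\vaa$, whose entries are $(M_\vaa)_{\vb,\va}=g_{\va,\vb}(\vaa)\in\F$.

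Next I would compare these two ranks. Let $r\eqdef\rank_{\F(\vz)}M$. Then every $(r+1)\times(r+1)$ submatrix of $M$ has determinant equal to zero in $\F(\vz)$. But each such determinant is a polynomial combination of the entries $g_{\va,\vb}(\vz)$, hence lies in $\F[\vz]$; since it vanishes in the field $\F(\vz)$, it must be the zero polynomial in $\F[\vz]$. Evaluating at any $\vaa\in\F^{|\vz|}$, every $(r+1)\times(r+1)$ minor of $M_\vaa$ is therefore zero, so $\rank_\F M_\vaa\le r$. Combining with the two rank identifications above yields the desired inequality.

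The argument is short and essentially a standard semicontinuity of rank; I expect no real obstacles. The only subtlety is the minor bookkeeping that both dimensions are equal to the appropriate ranks; because \autoref{res:y-dim_eq-x-dim} was stated for coefficient matrices over a field, I would briefly note that the same linear-algebra argument works with scalars replaced by $\F(\vz)$, since $g_{\va,\vb}(\vz)\in\F[\vz]\subseteq\F(\vz)$.
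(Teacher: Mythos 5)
Your proposal is correct and follows essentially the same route as the paper's proof: identify both dimensions with ranks of the coefficient matrix (over $\F(\vz)$ and over $\F$ after evaluation, noting the evaluated matrix is exactly the coefficient matrix of $f_\vaa$), and conclude by semicontinuity of rank, since minors are polynomials in $\vz$ that can only further vanish under the substitution $\vz\leftarrow\vaa$. The paper phrases the last step via non-vanishing minors rather than vanishing $(r+1)\times(r+1)$ minors, but the argument is the same.
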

\begin{proof}
	Let $f(\vx,\vy,\vz)$ be written in $\F[\vx,\vy,\vz]$ as $f=\sum_{\va,\vb} f_{\va,\vb} (\vz)\vx^\va\vy^\vb$. By \autoref{res:y-dim_eq-x-dim} we see that $\dim_{\F(\vz)}\coeffs{\vx|\vy} f_\vz(\vx,\vy)$ is equal to the rank (over $\F(\vz)$) of the coefficient matrix $C_{f_\vz}$, so that its entries $(C_{f_\vz})_{\va,\vb}=f_{\va,\vb}(\vz)$ are in $\F[\vz]$.  Similarly, $\dim_{\F}\coeffs{\vx|\vy} f_\vaa(\vx,\vy)$ is equal to the rank (over $\F$) of the coefficient matrix $C_{f_\vaa}$, so that as $f(\vx,\vy,\vaa)= \sum_{\va,\vb} f_{\va,\vb}(\vaa)\vx^\va\vy^\vb$ we have that $(C_{f_\vaa})_{\va,\vb}=f_{\va,\vb}(\vaa)$, which is in $\F$.  Thus, it follows that $C_{f_\vz}|_{\vz\leftarrow\vaa}=C_{f_\vaa}$.  
	
	The claim then follows by noting that for a matrix $M(\vw)\in\F[\vw]^{r\times r}$ it holds that $\rank_{\F(\vw)} M(\vw)\ge \rank_{\F} M(\vbb)$ for any $\vbb\in\F^{|\vw|}$. This follows as the rank of $M(\vw)$ is equal to the maximum size of a minor with a non-vanishing determinant.  As such determinants are polynomials in $\vw$, they can only further vanish when $\vw\leftarrow\vbb$.  
\end{proof}

We now use auxiliary variables to embed the coefficient dimension lower bound from \autoref{res:lbs-fn:dim-eval} into any variable order.  We do this by viewing the polynomial $\sum_i u_iv_i-\beta$ as using a matching between variables in $\vu$ and $\vv$.  We then wish to embed this matching graph-theoretically into a complete graph, where nodes are labelled with the variables $\vx$.  Any equipartition of this graph will induce many edges across this cut, and we can drop edges to find a large matching between the $\vx$ variables which we then identify as instance of $\sum_i u_iv_i-\beta$.  We introduce one new auxiliary variable $z_{i,j}$ per edge which, upon setting it to 0 or 1, allows us to have this edge (respectively) dropped from or kept in the desired matching.  This leads to the new (symmetrized) equation $\sum_{i<j} z_{i,j} x_ix_j-\beta$, for which we now give the desired lower bound.

\begin{proposition}\label{res:lbs-fn:any-order:coeff-dim}
	Let $n\ge 1$ and $\F$ be a field with $\chara(\F)>\binom{2n}{2}$.  Suppose that $\beta\in\F\setminus\{0,\ldots,\binom{2n}{2}\}$.  Let $f\in\F[x_1,\ldots,x_{2n},z_1,\ldots,z_{\binom{2n}{2}}]$ be a polynomial such that
	\[
		f(\vx,\vz)=\frac{1}{\sum_{i<j} z_{i,j}x_ix_j-\beta}
		\;,
	\]
	for $\vx\in\bits^{2n}$, $\vz\in\bits^{\binom{2n}{2}}$. Let $f_\vz$ denote $f$ as a polynomial in $\F[\vz][\vx]$.  Then for any partition $\vx=(\vu,\vv)$ with $|\vu|=|\vv|=n$,
	\[
		\dim_{\F(\vz)}\coeffs{\vu|\vv} f_\vz \ge 2^n
		\;.
	\]
\end{proposition}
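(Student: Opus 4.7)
The plan is to reduce this to the fixed-partition lower bound of \autoref{res:lbs-fn:dim-eval} by specializing the auxiliary variables $\vz$ to indicator values which ``carve out'' a perfect matching between $\vu$ and $\vv$ inside the complete graph on $\vx$. First I fix an arbitrary equipartition $\vx=(\vu,\vv)$ with $|\vu|=|\vv|=n$, and choose an arbitrary bijection $\pi$ from the $\vu$-indices to the $\vv$-indices. I then define $\vaa\in\bits^{\binom{2n}{2}}$ by setting $a_{i,j}=1$ exactly when $\{i,j\}$ is a matching edge, i.e.\ $x_i\in\vu$, $x_j\in\vv$ (or the reverse) and $\pi$ pairs them; all other coordinates of $\vaa$ are $0$.

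Next I specialize: let $g(\vx)\eqdef f(\vx,\vaa)\in\F[\vx]$. Because $\vaa\in\bits^{\binom{2n}{2}}$, the hypothesis on $f$ gives, for every $\vx\in\bits^{2n}$,
\[
g(\vx)=\frac{1}{\sum_{i<j} a_{i,j}x_ix_j-\beta}=\frac{1}{\sum_{k=1}^n u_kv_{\pi(k)}-\beta}.
\]
On $\vx\in\bits^{2n}$ the quantity $\sum_k u_kv_{\pi(k)}$ takes values in $\{0,\dots,n\}$, and since $\beta\notin\{0,\dots,\binom{2n}{2}\}\supseteq\{0,\dots,n\}$, the denominator never vanishes, so $g$ is a well-defined polynomial agreeing with this function on $\bits^{2n}$. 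After renaming the $\vv$ variables by $\pi$ (a permutation of variable names which does not affect the $\vu|\vv$ coefficient dimension), $g$ agrees on $\bits^{2n}$ with $1/(\sum_k u_kv_k-\beta)$, and the hypothesis $\chara(\F)>\binom{2n}{2}>n$ lets me invoke \autoref{res:lbs-fn:dim-eval} to conclude
\[
\dim_\F\coeffs{\vu|\vv}g\ge 2^n.
\]

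Finally I lift this back to $f_\vz$ via \autoref{res:coeff-dim:fraction-field}. Since $f_\vaa=g$, that lemma yields
\[
\dim_{\F(\vz)}\coeffs{\vu|\vv}f_\vz\ge\dim_\F\coeffs{\vu|\vv}g\ge 2^n,
\]
which is what was to be shown. I do not anticipate a significant obstacle: the one thing that requires a little care is checking that the specialization $\vaa$ is in the boolean cube (so the functional hypothesis on $f$ can be applied), and that the matching actually lies entirely across the cut $\vu|\vv$, which is true by construction since matching edges only connect a $\vu$-vertex to a $\vv$-vertex. The rest is a direct concatenation of the fixed-partition functional bound with the rank-under-specialization lemma.
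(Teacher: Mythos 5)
Your proposal is correct and follows essentially the same route as the paper: specialize $\vz$ to the indicator of a perfect matching across the cut $\vu|\vv$, apply the fixed-partition bound of \autoref{res:lbs-fn:dim-eval} to the specialized polynomial, and lift back to $\F(\vz)$ via \autoref{res:coeff-dim:fraction-field}. The only difference is that you spell out the bijection $\pi$ and the well-definedness checks explicitly, which the paper leaves implicit.
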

\begin{proof}
	We wish to embed $\sum_i u_iv_i-\beta$ in this instance via a restriction of $\vz$. Define the $\vz$-evaluation $\vaa\in\bits^{\binom{2n}{2}}$ to restrict $f$ to sum over those $x_ix_j$ in the natural matching between $\vu$ an $\vv$, so that
	\[
		\alpha_{i.j}
		=\begin{cases}
			1	&	x_i=u_k, x_j=v_k\\
			0	&	\text{else}
		\end{cases}
		\;.
	\]
	It follows then that $f(\vu,\vv,\vaa)=\frac{1}{\sum_{k=1}^n u_kv_k-\beta}$ for $\vu,\vv\in\bits^{n}$. Thus, by appealing to our lower bound for a fixed partition (\autoref{res:lbs-fn:dim-eval}) and the relation between the coefficient dimension in $f_\vz$ versus $f_\vaa$ (\autoref{res:coeff-dim:fraction-field}),
	\begin{align*}
		\dim_{\F(\vz)}\coeffs{\vu|\vv} f_\vz(\vu,\vv)
		&\ge \dim_\F \coeffs{\vu|\vv}f_\vaa(\vu,\vv)\\
		&\ge 2^n
		\;.
		\qedhere
	\end{align*}
\end{proof}

We remark that this lower bound is only $\exp(\Omega(\sqrt{m}))$ where $m=2n+\binom{2n}{2}$ is the number of total variables, while one could hope for an $\exp(\Omega(m))$ lower bound as $2^m$ is the trivial upper bound for multilinear polynomials.  One can achieve such a lower bound by replacing the above auxiliary variable scheme (which corresponds to a complete graph) with one derived from a constant-degree expander graph. That is because in such graphs any large partition of the vertices induces a large matching across that partition, where one can then embed the fixed-partition lower bounds of the previous section (\autoref{sec:evaluation}).  However, we omit the details as this would not qualitatively change the results.

We now obtain our desired functional lower bounds for roABPs and multilinear formulas.

\begin{corollary}\label{res:lbs-fn:any-order}
	Let $n\ge 1$ and $\F$ be a field with $\chara(\F)>\binom{2n}{2}$.  Suppose that $\beta\in\F\setminus\{0,\ldots,\binom{2n}{2}\}$.  Let $f\in\F[x_1,\ldots,x_{2n},z_1,\ldots,z_{\binom{2n}{2}}]$ be a polynomial such that
	\[
		f(\vx,\vz)=\frac{1}{\sum_{i<j} z_{i,j}x_ix_j-\beta}
		\;,
	\]
	for $\vx\in\bits^{2n}$, $\vz\in\bits^{\binom{2n}{2}}$. Then $f$ requires width $\ge 2^n$ to be computed by an roABP in any variable order.  Also, $f$ requires $n^{\Omega(\log n)}$-size to be computed as a multilinear formula.  For $d=o(\nicefrac{\log n}{\log\log n})$, $f$ requires $n^{\Omega((\nicefrac{n}{\log n})^{\nicefrac{1}{d}}/d^2)}$-size multilinear formulas of product-depth-$d$.
\end{corollary}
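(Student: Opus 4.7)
The plan is to extract both lower bounds from the coefficient-dimension bound already established in \autoref{res:lbs-fn:any-order:coeff-dim}, which says that every polynomial $f$ agreeing with the target function on the boolean cube satisfies $\dim_{\F(\vz)}\coeffs{\vu|\vv}f_\vz \ge 2^n$ for \emph{every} equipartition $\vx=(\vu,\vv)$. The multilinear-formula bounds are essentially immediate from \autoref{thm:full-rank-lb}: if $f$ is not multilinear, then no multilinear formula can compute it and the bounds are vacuous; otherwise, $f = \ml(f)$ still satisfies the functional equation, so \autoref{res:lbs-fn:any-order:coeff-dim} supplies the hypothesis of \autoref{thm:full-rank-lb}, which immediately yields $n^{\Omega(\log n)}$ in general and $n^{\Omega((n/\log n)^{1/d}/d^2)}$ at product-depth $d$.

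The roABP bound requires a small additional step because the ordering is over the combined variable set $\vx \cup \vz$ while our dimension lower bound is phrased in the $\vu|\vv$ partition over $\F(\vz)$. Suppose $f$ has a width-$w$ roABP in some variable order $\pi$ on $\vx \cup \vz$. Since the count of $\vx$-variables seen while scanning $\pi$ left-to-right goes from $0$ to $2n$, incrementing by $0$ or $1$ per step, there must be a prefix of $\pi$ containing exactly $n$ of the $\vx$-variables; call these $\vu$, the remaining $\vx$-variables $\vv$, and split $\vz = \vz_1 \sqcup \vz_2$ according to whether each $\vz$-variable lies in that prefix or its complement.

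At this cut, \autoref{res:roABP-width_eq_dim-coeffs} gives $w \ge \dim_{\F}\coeffs{(\vu,\vz_1)|(\vv,\vz_2)}f$, and \autoref{res:nisan_dim-eq-width} then yields a decomposition $f = \sum_{i=1}^w g_i(\vu,\vz_1)\,h_i(\vv,\vz_2)$. The key observation is that this decomposition, viewed in $\F(\vz)[\vu,\vv]$, becomes $f_\vz = \sum_{i=1}^{w} g_i \cdot h_i$ with $g_i \in \F(\vz)[\vu]$ and $h_i \in \F(\vz)[\vv]$, since $\vz_1,\vz_2 \subseteq \vz$ are absorbed into the scalar field. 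Applying \autoref{res:nisan_dim-eq-width} over $\F(\vz)$ gives $\dim_{\F(\vz)}\coeffs{\vu|\vv}f_\vz \le w$, and combining with \autoref{res:lbs-fn:any-order:coeff-dim} forces $w \ge 2^n$.

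The main obstacle is really only a bookkeeping one: making sure the cut-point argument applies uniformly to \emph{every} variable order on $\vx \cup \vz$ (not just orderings that cleanly separate $\vx$ from $\vz$), and that absorbing the $\vz$-variables into the scalar field $\F(\vz)$ correctly preserves the low-rank decomposition. Beyond that, no new technical ingredients are needed: the difficult content lives in \autoref{res:lbs-fn:any-order:coeff-dim} (which in turn rests on the tight degree lower bound of \autoref{res:subsetsum:deg}) and in the black-box invocations of \autoref{res:roABP-width_eq_dim-coeffs}, \autoref{res:nisan_dim-eq-width}, and \autoref{thm:full-rank-lb}.
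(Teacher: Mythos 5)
Your proposal is correct and follows essentially the same route as the paper: both parts reduce to \autoref{res:lbs-fn:any-order:coeff-dim}, with the multilinear-formula bounds read off from \autoref{thm:full-rank-lb} and the roABP bound obtained by pushing the $\vz$-variables into the scalar field $\F(\vz)$ and invoking \autoref{res:roABP-width_eq_dim-coeffs}. The only difference is bookkeeping order --- the paper first converts the roABP into a width-preserving roABP over $\F(\vz)$ in the induced $\vx$-order and then splits $\vx$ in half, whereas you cut the original order at the prefix containing exactly $n$ of the $\vx$-variables and absorb $\vz$ into $\F(\vz)$ at the level of the low-rank decomposition from \autoref{res:nisan_dim-eq-width}; both are valid and equivalent.
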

\begin{proof}
	\uline{roABPs:}  Suppose that $f(\vx,\vz)$ is computable by a width-$r$ roABP in some variable order.  By pushing the $\vz$ variables into the fraction field, it follows that $f_\vz$ ($f$ as a polynomial in $\F[\vz][\vx]$) is also computable by a width-$r$ roABP over $\F(\vz)$ in the induced variable order on $\vx$ (\autoref{fact:roABP:closure}).  By splitting $\vx$ in half along its variable order one obtains the lower bound by combining the coefficient dimension lower bound of \autoref{res:lbs-fn:any-order:coeff-dim} with its relation to roABPs (\autoref{res:roABP-width_eq_dim-coeffs}).

	\uline{multilinear formulas:} This follows immediately from our coefficient dimension lower bound (\autoref{res:lbs-fn:any-order:coeff-dim}) and the Raz~\cite{Raz09} and Raz-Yehudayoff~\cite{RazYehudayoff09} results (\autoref{thm:full-rank-lb}).
\end{proof}

As before, this immediately yields the desired roABP-\lIPS and multilinear-formula-IPS lower bounds.

\begin{corollary}\label{res:lbs-fn:lbs-ips:vary-order}
	Let $n\ge 1$ and $\F$ be a field with $\chara(\F)>\binom{2n}{2}$.  Suppose that $\beta\in\F\setminus\{0,\ldots,\binom{2n}{2}\}$.  Then $\sum_{i<j} z_{i,j}x_ix_j-\beta,\vx^2-\vx,\vz^2-\vz\in\F[x_1,\ldots,x_{2n},z_1,\ldots,z_{\binom{2n}{2}}]$ is unsatisfiable, and any roABP-\lIPS refutation (in any variable order) requires $\exp(\Omega(n))$-size. Further, any multilinear-formula-IPS refutation requires $n^{\Omega(\log n)}$-size, and any product-depth-$d$ multilinear-formula-IPS refutation requires $n^{\Omega((\nicefrac{n}{\log n})^{\nicefrac{1}{d}}/d^2)}$-size.
\end{corollary}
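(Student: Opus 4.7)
The plan is to combine the functional coefficient-dimension lower bound of \autoref{res:lbs-fn:any-order} with the general reduction lemmas \autoref{res:lbs-fn_lbs-ips:lIPS} and \autoref{res:lbs-fn_lbs-ips:lbIPS}. Unsatisfiability is immediate: for any boolean assignment $(\vx,\vz) \in \bits^{2n} \times \bits^{\binom{2n}{2}}$ we have $\sum_{i<j} z_{i,j} x_i x_j \in \{0,1,\ldots,\binom{2n}{2}\}$, and by assumption $\beta$ lies outside this set, so the polynomial $\sum_{i<j} z_{i,j} x_i x_j - \beta$ never vanishes on the cube.

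For the roABP-\lIPS bound, fix an arbitrary variable order on $(\vx,\vz)$ and let $\cC$ denote the class of polynomials computable by width-$w$ roABPs in that order. This class is closed under partial evaluation by \autoref{fact:roABP:closure}, so \autoref{res:lbs-fn_lbs-ips:lIPS} applies: any width-$w$ roABP-\lIPS refutation would yield some $g \in \cC$ satisfying $g(\vx,\vz) = 1/(\sum_{i<j} z_{i,j} x_i x_j - \beta)$ on the boolean cube. The functional lower bound \autoref{res:lbs-fn:any-order} then forces $w \ge 2^n$ regardless of the variable order, giving the claimed $\exp(\Omega(n))$-size bound.

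For the multilinear-formula-IPS bound, the first observation is that any multilinear-formula-IPS refutation $C$ of this system computes a multilinear polynomial, hence has individual degree at most $1$ in the placeholder for the non-boolean axiom, so it is automatically a multilinear-formula-\lbIPS refutation. Taking $\cC$ to be the class of multilinear formulas of the claimed size (and product-depth $d$ for the depth-bounded statement), one checks that $\cC$ is closed under partial evaluation, and that its class of differences $\cD$ is contained in multilinear formulas of comparable size and depth (costing only one extra addition gate, which does not change the product-depth). \autoref{res:lbs-fn_lbs-ips:lbIPS} then produces a multilinear formula in $\cD$ agreeing with $1/(\sum_{i<j} z_{i,j} x_i x_j - \beta)$ on the cube, which by \autoref{res:lbs-fn:any-order} requires $n^{\Omega(\log n)}$-size as a general multilinear formula and $n^{\Omega((n/\log n)^{1/d}/d^2)}$-size at product-depth $d$, contradicting the assumed upper bound.

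The real work is already done: the hard part sits in \autoref{res:lbs-fn:any-order:coeff-dim}, whose proof leverages the tight degree bound $\deg f = n$ of \autoref{res:subsetsum:deg} (not merely $\Omega(n)$) in order to identify the leading monomials of the multilinearized evaluations and thereby push the fixed-partition coefficient-dimension lower bound into every equipartition via the auxiliary variables $z_{i,j}$. At the level of this corollary, the only obstacle is routine bookkeeping: verifying the closure properties of roABPs and multilinear formulas under partial evaluation and subtraction with the stated parameter losses, and noting the automatic collapse of multilinear-formula-IPS to \lbIPS, both of which are mild.
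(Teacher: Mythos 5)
Your proposal is correct and follows essentially the same route as the paper's proof: unsatisfiability via the range of $\sum_{i<j} z_{i,j}x_ix_j$ on the cube, the roABP bound via \autoref{res:lbs-fn_lbs-ips:lIPS} with closure of roABPs under partial evaluation, and the multilinear-formula bound via the IPS-to-\lbIPS collapse in \autoref{res:lbs-fn_lbs-ips:lbIPS} together with closure under partial evaluation and differences, all feeding into the functional lower bound of \autoref{res:lbs-fn:any-order}. No gaps.
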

\begin{proof}
	The system is unsatisfiable as any setting of $\vx\in\bits^n$ yields a sum over at most $\binom{2n}{2}$ $z$-variables, which must be in $\{0,\ldots,\binom{2n}{2}\}$ which by hypothesis does not contain $\beta$.  
	
	The roABP-\lIPS lower bound follows immediately from the above functional lower bound (\autoref{res:lbs-fn:any-order}) along with our reduction (\autoref{res:lbs-fn_lbs-ips:lIPS}), just as in \autoref{res:lbs-fn:lbs-ips:fixed-order}.  

	The multilinear-formula-IPS lower bound also follows immediately from the above functional lower bound (\autoref{res:lbs-fn:any-order}) along with our reduction from IPS lower bounds to functional lower bounds for multilinear polynomials (\autoref{res:lbs-fn_lbs-ips:lbIPS}).  In particular, this application uses that multilinear formulas are closed under partial evaluations, and that taking the difference of two formulas will only double its size and does not change the product depth.
\end{proof}

\section{Lower Bounds for Multiples of Polynomials}\label{sec:lbs-mult}

In this section we consider the problem of finding explicit polynomials whose nonzero multiples are all hard. Such polynomials are natural to search for, as intuitively if $f$ is hard to compute then so should small modifications such as $x_1f^2+4f^3$. This intuition is buttressed by Kaltofen's~\cite{Kaltofen89} result that if a polynomial has a small algebraic circuit then so do all of its factors (up to some pathologies in small characteristic). Taken in a contrapositive, this says that if a polynomial $f$ requires super-polynomial size algebraic circuits, then so must all of its nonzero multiples.  Thus, for general circuits the question of lower bounds for multiples reduces to the standard lower bounds question.

Unfortunately, for many restricted classes of circuits where lower bounds are known (depth-3 powering formulas, sparse polynomials, roABPs) Kaltofen's~\cite{Kaltofen89} result produces circuits for the factors which do not fall into (possibly stronger) restricted classes of circuits where lower bounds are still known.\footnote{While some results (\cite{DvirSY09,Oliveira15}) can bound the depth of the factors in terms of the depth of the input circuit, there are only very weak lower bounds known for constant-depth algebraic circuits.}  As such, developing lower bounds for multiples against these restricted classes seems to require further work beyond the standard lower bound question.

We will begin by discussing the applications of this problem to the hardness versus randomness paradigm in algebraic complexity.  We then use existing derandomization results to show that multiples of the determinant are hard for certain restricted classes. However, this method is very rigidly tied to the determinant.  Thus, we also directly study existing lower bound techniques for restricted models of computation (depth-$3$ powering formulas, sparse polynomials, and roABPs) and extend these results to also apply to multiples.  We will show the applications of such polynomials to proof complexity in \autoref{sec:ips-mult}.

\subsection{Connections to Hardness versus Randomness and Factoring Circuits}\label{sec:lbs-mult:hard-v-rand}

To motivate the problem of finding polynomials with hard multiples, we begin by discussing the hardness versus randomness approach to derandomizing polynomial identity testing.  That is, Kabanets and Impagliazzo~\cite{KabanetsImpagliazzo04} extended the hardness versus randomness paradigm of Nisan and Wigderson~\cite{NisanWigderson94} to the algebraic setting, showing that sufficiently good algebraic circuit lower bounds for an explicit polynomial would qualitatively derandomize PIT\@.  While much of the construction is similar (using combinatorial designs, hybrid arguments, etc.) to the approach of Nisan and Wigderson~\cite{NisanWigderson94} for boolean derandomization, there is a key difference.  In the boolean setting, obtaining a hardness versus randomness connection requires converting \emph{worst-case} hardness (no small computation can compute the function everywhere) to \emph{average-case} hardness (no small computation can compute the function on most inputs).  Such a reduction (obtained by Impagliazzo and Wigderson~\cite{ImpagliazzoWigderson97}) can in fact be obtained using certain error-correcting codes based on multivariate polynomials (as shown by Sudan, Trevisan and Vadhan~\cite{SudanTV01}). 

Such a worst-case to average-case reduction is also needed in the algebraic setting, but as multivariate polynomials are one source of this reduction in the boolean regime, it is natural to expect it to be easier in the algebraic setting. Specifically, the notion of average-case hardness for a polynomial $f(\vx)$ used in Kabanets-Impagliazzo~\cite{KabanetsImpagliazzo04} is that for any $g(\vx,y)$ satisfying $g(\vx,f(\vx))=0$, it must be that $g$ then requires large algebraic circuits (by taking $g(\vx,y)\eqdef y-f(\vx)$ this implies $f$ itself requires large circuits).  This can be interpreted as average-case hardness because if such a $g$ existed with a small circuit, then for any value $\vaa$ we have that $g(\vaa,y)$ is a univariate polynomial that vanishes on $f(\vaa)$. By factoring this univariate (which can be done efficiently), we see that such $g$ give a small list (of size at most $\deg g$) of possible values for $f(\vaa)$. By picking a random element from this list, one can correctly compute $f(\vx)$ with noticeable probability, which by an averaging argument one can convert to a (non-uniform) deterministic procedure to compute $f(\vx)$ on most inputs (over any fixed finite set).  While this procedure (involving univariate factorization) is not an algebraic circuit, the above argument shows that the Kabanets-Impagliazzo~\cite{KabanetsImpagliazzo04} notion is a natural form of average case hardness.

To obtain this form of average-case hardness from worst-case hardness, Kabanets and Impagliazzo~\cite{KabanetsImpagliazzo04} used a result of Kaltofen~\cite{Kaltofen89}, who showed that (up to pathologies in low-characteristic fields), factors of small (general) circuits have small circuits.  As $g(\vx,f(\vx))=0$ iff $y-f(\vx)$ divides $g(\vx,y)$, it follows that if $g(\vx,y)$ has a small circuit then so does $y-f(\vx)$, and thus so does $f(\vx)$.  Taking the contrapositive, if $f$ requires large circuits (worst-case hardness) then any such $g(\vx,y)$ with $g(\vx,f(\vx))=0$ also requires large circuits (average-case hardness). Note that this says that \emph{any} worst-case hard polynomial is \emph{also} average-case hard.  In contrast, this is provably false for boolean functions, where such worst-case to average-case reductions thus necessarily modify the original function.

Unfortunately, Kaltofen's~\cite{Kaltofen89} factoring algorithm does not preserve structural restrictions (such as multilinearity, homogeneousness, low-depth, read-once-ness, etc.) of the original circuit, so that obtaining average-case hardness for restricted classes of circuits requires worst-case hardness for much stronger classes.  While follow-up work has reduced the complexity of the circuits resulting from Kaltofen's~\cite{Kaltofen89} algorithm (Dvir-Shpilka-Yehudayoff~\cite{DvirSY09} and Oliveira~\cite{Oliveira15} extended Kaltofen's~\cite{Kaltofen89} to roughly preserve the depth of the original computation) these works are limited to factoring polynomials of small individual degree and do not seem applicable to other types of computations such as roABPs. Indeed, it even remains an open question to show any non-trivial upper bounds on the complexity of the factors of sparse polynomials.  In fact, we actually have non-trivial \emph{lower} bounds. Specifically, von zur Gathen and Kaltofen~\cite{vzGathenKaltofen85} gave an explicit $s$-sparse polynomial (over any field) which has a factor with $s^{\Omega(\log s )}$ monomials, and Volkovich~\cite{Volkovich15} gave, for a prime $p$, an explicit $n$-variate $n$-sparse polynomial of degree-$p$ which in characteristic $p$ has a factor with $\binom{n+p-2}{n-1}$ monomials (an exponential separation for $p\ge\poly(n)$). We refer the reader to the survey of Forbes and Shpilka~\cite{ForbesShpilka15} for more on the challenges in factoring small algebraic circuits.

While showing the equivalence of worst-case and average-case hardness for restricted circuit classes seems difficult, to derandomize PIT via Kabanets-Impagliazzo~\cite{KabanetsImpagliazzo04} only requires a \emph{single} polynomial which is average case hard.  To facilitate obtaining such hard polynomials, we now record an easy lemma showing that polynomials with only hard multiples are average-case hard.

\begin{lemma}\label{res:lbs-mult_to_nonroot}
	Let $f(\vx)\in\F[\vx]$ and $g(\vx,y)\in\F[\vx,y]$ both be nonzero, where $g(\vx,0)\ne 0$ also. If $g(\vx,f(\vx))=0$ then $g(\vx,0)$ is a nonzero multiple of $f(\vx)$.
\end{lemma}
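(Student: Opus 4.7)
The plan is a short direct calculation. I would begin by writing $g(\vx,y) \in \F[\vx,y]$ as a polynomial in the single variable $y$ over the ring $\F[\vx]$: expand
\[
g(\vx,y) = g(\vx,0) + y\cdot h(\vx,y),
\]
where $h(\vx,y) \in \F[\vx,y]$ is defined by $h(\vx,y) \eqdef (g(\vx,y)-g(\vx,0))/y$; this is a genuine polynomial because $g(\vx,y)-g(\vx,0)$ has no constant term in $y$, so $y$ divides it.

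Next, I would substitute $y \leftarrow f(\vx)$ into this identity. By hypothesis $g(\vx,f(\vx)) = 0$, so
\[
0 = g(\vx,f(\vx)) = g(\vx,0) + f(\vx)\cdot h(\vx,f(\vx)),
\]
which rearranges to
\[
g(\vx,0) = -f(\vx)\cdot h(\vx,f(\vx)).
\]
This exhibits $g(\vx,0)$ as a multiple of $f(\vx)$ in $\F[\vx]$, and the hypothesis $g(\vx,0)\ne 0$ immediately forces this multiple to be nonzero (so in particular $h(\vx,f(\vx))\ne 0$ as well). This completes the proof.

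There is no real obstacle here — the only thing to be slightly careful about is that $h(\vx,y)$ is indeed a polynomial, which is immediate from the fact that $g(\vx,y)-g(\vx,0)$ is divisible by $y$ as a polynomial in $\F[\vx][y]$. No assumptions on characteristic, degree, or the structure of $\F$ are needed.
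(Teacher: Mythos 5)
Your proof is correct and is essentially the paper's own argument: the paper expands $g(\vx,y)=\sum_i g_i(\vx)y^i$ and factors out $f(\vx)$ from the terms with $i\ge 1$, which is exactly your decomposition $g(\vx,y)=g(\vx,0)+y\cdot h(\vx,y)$ followed by substituting $y\leftarrow f(\vx)$. Nothing further is needed.
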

\begin{proof}
	Let $g(\vx,y)=\sum_i g_i(\vx)y^i$ and $g_0(\vx)\eqdef g(\vx,0)$.  That $g(\vx,f(\vx))=0$ implies that 
	\begin{align*}
		0
		=g(\vx,f(\vx))
		=\sum_i g_i(\vx)(f(\vx))^i
		=g_0(\vx)+\sum_{i\ge 1} g_i(\vx)(f(\vx))^i
	\end{align*}
	so that $g_0(\vx)=f(\vx)\cdot \left(-\sum_{i\ge 1} g_i(\vx)(f(\vx))^{i-1}\right)$ as desired.
\end{proof}

That is, saying that $f(\vx)$ is \emph{not} average-case hard means that $g(\vx,f(\vx))=0$ for a nonzero $g(\vx,y)$.  One can assume that $g(\vx,0)\ne 0$, as otherwise one can replace $g$ by $\nicefrac{g}{y^i}$ for some $i\le \deg g$, as this only mildly increases the size for most measures of circuit size (see for example \autoref{sec:h-ips:ips=h-ips}).  As then the complexity of $g(\vx,0)$ is bounded by that of $g(\vx,y)$ (for natural measures), the lemma shows then that $f$ has a nonzero multiple of low-complexity.  Taken contrapositively, if $f$ only has hard nonzero multiples then it is average-case hard in the sense needed for Kabanets-Impagliazzo~\cite{KabanetsImpagliazzo04}. This shows that lower bounds for multiples is essentially the lower bound needed for algebraic hardness versus randomness.\footnote{However, it is not an exact equivalence between lower bounds for multiples and average case hardness, as the converse to \autoref{res:lbs-mult_to_nonroot} is false, as seen by considering $g(x,y)\eqdef y-x(x+1)$, so that $x|g(x,0)$ but $g(x,x)\ne 0$.}

While in the below sections we are able to give explicit polynomials with hard multiples for various restricted classes of algebraic circuits, some of these classes (such as sparse polynomials and roABPs) still do not have the required closure properties to use Kabanets-Impagliazzo~\cite{KabanetsImpagliazzo04} to obtain deterministic PIT algorithms.  Even for classes with the needed closure properties (such as $\sumpowc$ formulas, where the hard polynomial is the monomial), the resulting PIT algorithms are only worse than existing results (which for $\sumpowc$ formulas is the result of Forbes~\cite{Forbes15}).  However, it seems likely that future results establishing polynomials with hard multiples would imply new PIT algorithms.

\subsection{Lower Bounds for Multiples via PIT}\label{sec:lb-via-pit}

This above discussion shows that obtaining lower bounds for multiples is sufficient for instantiating the hardness versus randomness paradigm.  We now observe the converse, showing that one can obtain such polynomials with hard multiples via derandomizing (black-box) PIT, or equivalently, producing generators with small seed-length.  That is, Heintz-Schnorr~\cite{HeintzSchnorr80} and Agrawal~\cite{Agrawal05} showed that one can use explicit generators for small circuits to obtain hard polynomials, and we observe here that the resulting polynomials also have only hard multiples.

Thus the below claim shows that obtaining black-box PIT yields the existence of a polynomial with hard multiples, which yields average-case hardness, which (for general enough classes) will allow the Kabanets-Impagliazzo~\cite{KabanetsImpagliazzo04} reduction to again yield black-box PIT\@.  Thus, we see that obtaining such polynomials with hard multiples is essentially what is needed for this hardness versus randomness approach.

Note that we give the construction based on a non-trivial \emph{generator} for a class of circuits. While one can analogously prove the \emph{hitting-set} version of this claim, it is weaker.  That is, it is possible to consider classes $\cC$ of unbounded degree and still have generators with small seed-length (see for example \autoref{res:SVb-gen} below), but for such classes one must have hitting sets with infinite size (as hitting univariate polynomials of unbounded degree requires an infinite number of points).

\begin{lemma}\label{res:generator_to_lbs-mult}
	Let $\cC\subseteq\F[\vx]$ be a class of polynomials and let $\cvG:\F^\ell\to\F^\vx$ be a generator for $\cC$.  Suppose $0\ne h\in\F[\vx]$ has $h\circ \cvG=0$.  Then for any nonzero $g\in\F[\vx]$ we have that $g\cdot h\notin \cC$.
\end{lemma}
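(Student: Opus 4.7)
The plan is to prove this by contradiction, leveraging the multiplicativity of composition and the defining property of a generator. Suppose for contradiction that $g \cdot h \in \cC$ for some nonzero $g \in \F[\vx]$. I first want to argue that $g \cdot h$ is itself nonzero: since $\F[\vx]$ is an integral domain and both $g$ and $h$ are nonzero by hypothesis, their product $g \cdot h$ is nonzero.

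Next, I would apply the generator property in both directions. By \autoref{defn:generator}, a generator $\cvG$ for $\cC$ has the property that for any $f \in \cC$, $f \equiv 0$ iff $f \circ \cvG \equiv 0$. Applied to $f = g \cdot h \in \cC$ (which we are assuming lies in $\cC$), together with the fact that $g \cdot h \neq 0$, this forces $(g \cdot h) \circ \cvG \neq 0$.

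On the other hand, composition with $\cvG$ is a ring homomorphism from $\F[\vx]$ to $\F[\vy]$ (it is just simultaneous substitution of $\vy$-polynomials for the $\vx$-variables), so it is multiplicative. Thus
\[
   (g \cdot h) \circ \cvG \;=\; (g \circ \cvG) \cdot (h \circ \cvG) \;=\; (g \circ \cvG) \cdot 0 \;=\; 0,
\]
using the hypothesis $h \circ \cvG = 0$. This contradicts the previous paragraph, completing the proof.

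There is essentially no obstacle here; the only subtlety worth flagging is that we never need to assume $g \in \cC$ or even anything about $g$ beyond being nonzero — the multiplicativity of composition automatically propagates the vanishing of $h \circ \cvG$ to any multiple, and the generator property converts ``vanishes on the generator'' into ``is the zero polynomial'' for anything that claims to lie in $\cC$. The argument is the natural dual of the PIT-to-hardness direction of Heintz–Schnorr~\cite{HeintzSchnorr80} and Agrawal~\cite{Agrawal05}, strengthened to handle multiples by exploiting the homomorphism property of $f \mapsto f \circ \cvG$.
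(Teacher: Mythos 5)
Your proof is correct and follows essentially the same route as the paper's: both use that $g\cdot h\neq 0$ in the integral domain $\F[\vx]$, that composition with $\cvG$ is multiplicative so $(g\cdot h)\circ\cvG=(g\circ\cvG)\cdot(h\circ\cvG)=0$, and that the generator property then rules out $g\cdot h\in\cC$. The only difference is cosmetic — you phrase it as a contradiction while the paper argues directly.
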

\begin{proof}
	By definition of $\cvG$, for any $f\in\cC$, $f=0$ iff $f\circ \cvG=0$.  Then for any nonzero $g$, $g\cdot h\ne 0$ and $(g\cdot h)\circ \vG=(g\circ \cvG)\cdot (h\cdot \cvG)=(g\circ \cvG) \cdot 0=0$.  Thus, we must have that $g\cdot h\notin\cC$.
\end{proof}

That is, if $\ell<n$ then such an $h$ exists (as the coordinates of $\cvG$ are algebraically dependent) and such an $h$ can be found in exponential time by solving an exponentially-large linear system.  As such, $h$ is a weakly-explicit polynomial with only hard multiples, which is sufficient for instantiating hardness versus randomness.

While there are now a variety of restricted circuit classes with non-trivial (black-box) PIT results, it seems challenging to find for any given generator $\cG$ an \emph{explicit} nonzero polynomial $f$ with $f\circ \cG=0$.  Indeed, to the best of our knowledge no such examples have ever been furnished for interesting generators. Aside from the quest for polynomials with hard multiples, this question is independently interesting as it demonstrates the limits of the generator in question, especially for generators that are commonly used.  There is not even a consensus as to whether the generators currently constructed could suffice to derandomize PIT for general circuits. Agrawal~\cite{Agrawal05} has even conjectured that a certain generator for depth-2 circuits (sparse polynomials) would actually suffice for PIT of constant-depth circuits.  

We consider here the generator of Shpilka-Volkovich~\cite{ShpilkaVolkovich09}. This generator has a parameter $\ell$, and intuitively can be seen as an algebraic analogue of the boolean pseudorandomness notion of a (randomness efficient) $\ell$-wise independent hash function.  Just as $\ell$-wise independent hash functions are ubiquitous in boolean pseudorandomness, the Shpilka-Volkovich~\cite{ShpilkaVolkovich09} generator has likewise been used in a number of papers on black-box PIT (for example \cite{ShpilkaVolkovich09,AndersonvMV11,ForbesShpilka13a,ForbesSS14,Volkovich15,Forbes15} is a partial list).  As such, we believe it is important to understand the limits of this generator.  

However, $\ell$-wise independence is a \emph{property} of a hash function and likewise the Shpilka-Volkovich~\cite{ShpilkaVolkovich09} generator is really a family of generators that all share a certain property. Specifically, the map $\GSV{\ell,n}:\F^{r}\to\F^n$ has the property\,\footnote{The most obvious algebraic analogue of an $\ell$-wise independent hash function would require that for a generator $\vG:\F^r\to\F^n$ that any subset $S\subseteq[n]$ with $|S|\le \ell$ the output of $\vG$ restricted to $S$ is all of $\F^{S}$.  This property is implied by the Shpilka-Volkovich~\cite{ShpilkaVolkovich09} property, but is strictly weaker, and is in fact too weak to be useful for PIT\@.  That is, consider the map $(x_1,\ldots,x_n)\mapsto (x_1,\ldots,x_n,x_1+\cdots+x_n)$.  This map has this naive ``algebraic $n$-wise independence'' property, but does not even fool linear polynomials (which the Shpilka-Volkovich~\cite{ShpilkaVolkovich09} generator does).} that the image $\GSV{\ell,n}(\F^r)$ contains all $\ell$-sparse vectors in $\F^n$.  The most straightforward construction  of a randomness efficient generator with this property (via Lagrange interpolation, given by Shpilka-Volkovich~\cite{ShpilkaVolkovich09}) has that $r=2\ell$.  Even this construction is actually a family of possibly constructions, as there is significant freedom to choose the finite set of points where Lagrange interpolation will be performed.  As such, instead of studying a specific generator we seek to understand the power of the above \emph{property}, and thus we are free to construct another generator $\GSVb{\ell,n}$ with this property for which we can find an explicit nonzero $f$ where $f\circ \GSVb{\ell,n}=0$ for small $\ell$. We choose a variant of the original construction so that we can take $f$ as the determinant.

In the original Shpilka-Volkovich~\cite{ShpilkaVolkovich09} generator, one first obtains the $\ell=1$ construction by using two variables, the control variable $y$ and another variable $z$.  By using Lagrange polynomials to simulate indicator functions, the value of $y$ can be set to choose between the outputs $z\ve_1,\ldots,z\ve_n\in\F[z]^n$, where $\ve_i\in\F^n$ is the $i$-th standard basis vector.  By varying $z$ one obtains all $1$-sparse vectors in $\F^n$.  To obtain $\GSV{\ell,n}$ one can sum $\ell$ independent copies of $\GSV{1,n}$.  In contrast, our construction will simply use a different control mechanism, where instead of using univariate polynomials we use bivariates.

\begin{construction}\label{const:SVb}
	Let $n,\ell\ge 1$. Let $\F$ be a field of size $\ge n$. Let $\Omega\eqdef\{\omega_1,\ldots,\omega_n\}$ be distinct elements in $\F$. Define $\GSVb{1,n^2}:\F^3\to\F^{n\times n}$ by
	\[
		\left(\GSVb{1,n^2}(x,y,z)\right)_{i,j}=z\cdot \ind{\omega_i,\Omega}(x)\cdot\ind{\omega_j,\Omega}(y)
		\;.
	\]
	where $\ind{\omega_i,\Omega}(x)\in\F[x]$ is the unique univariate polynomial of degree $<n$ such that
	\[
		\ind{\omega_i,\Omega}(\omega_j)=
		\begin{cases}
			1	&	j=i\\
			0	&	\text{else}
		\end{cases}
		\;.
	\]
	Define $\GSVb{\ell,n^2}:\F^{3\ell}\to\F^{n\times n}$ by the polynomial map
	\[
		\GSVb{\ell,n^2}(x_1,y_1,z_1,\ldots,x_\ell,y_\ell,z_\ell)\eqdef\GSVb{1,n^2}(x_1,y_1,z_1)+\cdots+\GSVb{1,n^2}(x_\ell,y_\ell,z_\ell)
		\;,
	\]
	working in the ring $\F[\vx,\vy,\vz]$.
\end{construction}

Note that this map has $n^2$ outputs. Now observe that it is straightforward to see that this map has the desired property.

\begin{lemmawp}
	Assume the setup of \autoref{const:SVb}.  Then the image of the generator, $\GSVb{\ell,n^2}(\F^{3\ell})$, contains all $\ell$-sparse vectors in $\F^{n\times n}$.
\end{lemmawp}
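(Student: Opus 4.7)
The plan is to reduce the claim to the $\ell=1$ case and then use additivity of the construction. First I would verify the $\ell=1$ case: for any $p, q \in [n]$ and any $\gamma \in \F$, evaluating $\GSVb{1, n^2}$ at the point $(x,y,z) = (\omega_p, \omega_q, \gamma)$ yields a matrix whose $(i,j)$-entry equals $\gamma \cdot \ind{\omega_i, \Omega}(\omega_p) \cdot \ind{\omega_j, \Omega}(\omega_q)$. By the defining property of the Lagrange-style indicators $\ind{\omega_i, \Omega}$, this factor equals $1$ when $(i,j) = (p,q)$ and $0$ otherwise. Hence $\GSVb{1, n^2}(\omega_p, \omega_q, \gamma) = \gamma E_{p,q}$, where $E_{p,q}$ is the standard basis matrix. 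Consequently every matrix with at most one nonzero entry lies in the image of $\GSVb{1, n^2}$.

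Next I would handle general $\ell$. An $\ell$-sparse matrix $M \in \F^{n\times n}$ has, by definition, at most $\ell$ nonzero entries, say at positions $(p_1, q_1), \ldots, (p_t, q_t)$ with $t \le \ell$ and values $\gamma_1, \ldots, \gamma_t \in \F$. Write $M = \sum_{k=1}^{t} \gamma_k E_{p_k, q_k}$, and pad the sum with zero terms (taking $\gamma_k = 0$ for $t < k \le \ell$, with arbitrary $p_k, q_k$) so that there are exactly $\ell$ summands. By the $\ell=1$ analysis, each summand $\gamma_k E_{p_k, q_k}$ equals $\GSVb{1, n^2}(\omega_{p_k}, \omega_{q_k}, \gamma_k)$. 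Since $\GSVb{\ell, n^2}$ is defined as the sum of $\ell$ independent copies of $\GSVb{1, n^2}$ on disjoint triples of variables, setting $(x_k, y_k, z_k) := (\omega_{p_k}, \omega_{q_k}, \gamma_k)$ for each $k \in [\ell]$ gives
\[
\GSVb{\ell, n^2}(\omega_{p_1}, \omega_{q_1}, \gamma_1, \ldots, \omega_{p_\ell}, \omega_{q_\ell}, \gamma_\ell) = \sum_{k=1}^{\ell} \gamma_k E_{p_k, q_k} = M,
\]
which exhibits $M$ as a value of the generator.

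There is no genuine obstacle here; the only thing to be careful about is ensuring that the padding with zero terms does not accidentally introduce spurious contributions, which it does not since $\gamma_k = 0$ kills the $k$-th summand regardless of the choice of $(x_k, y_k)$. The argument only uses the defining property of $\ind{\omega_i, \Omega}$ on $\Omega$ and the additive structure of the generator, so the hypothesis $|\F| \ge n$ (needed to have $n$ distinct $\omega_i$'s) suffices.
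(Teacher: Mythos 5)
Your proof is correct and is exactly the straightforward argument the paper intends: the paper states this lemma without proof (noting only that it is "straightforward to see"), and your evaluation of each copy at $(\omega_p,\omega_q,\gamma)$ to produce $\gamma E_{p,q}$, followed by summing the $\ell$ independent copies (padding with $\gamma_k=0$ as needed), is the standard verification.
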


To the best of the authors knowledge, existing works using the Shpilka-Volkovich~\cite{ShpilkaVolkovich09} generator\,\footnote{Note that for black-box PIT it is important that we use a \emph{generator} that contains all sparse vectors, instead of just the \emph{set} of sparse vectors. As an example, the monomial $x_1\cdots x_n$ is zero on all $k$-sparse vectors for $k<n$, but is nonzero when evaluated on the Shpilka-Volkovich~\cite{ShpilkaVolkovich09} generator for any $\ell\ge 1$.} only use the above property (and occasionally also the fact that a coordinate-wise sum of constantly-many such generators is a generator of the original form with similar parameters (\cite{AgrawalSS13,ForbesSS14,GurjarKST15,Forbes15}), which our alternate construction also satisfies).  As such, we can replace the standard construction with our variant in known black-box PIT results (such as \cite{ShpilkaVolkovich09,AgrawalSS13,ForbesShpilka13a,ForbesSS14,GurjarKST15,Forbes15}), some of which we now state.

\begin{corollary}\label{res:SVb-gen}
	Assume the setup of \autoref{const:SVb}. Then $\GSVb{O(\log s),n^2}$ is a generator for the following classes of $n$-variate polynomials, of arbitrary degree.
	\begin{itemize}
		\item Polynomials of sparsity $s$ (\cite{ShpilkaVolkovich09,GurjarKST15,Forbes15}).
		\item Polynomials computable as a depth-$3$ powering formula of top-fan-in $s$ (\cite{AgrawalSS13,ForbesShpilka13a}).
		\item Polynomials computable as a \sumpowc formula of top-fan-in $s$ (\cite{Forbes15}), in characteristic zero.
		\item Polynomials computable by width-$s$ roABPs in every variable order, also known as commutative roABPs (\cite{AgrawalSS13,ForbesSS14}).
	\end{itemize}
\end{corollary}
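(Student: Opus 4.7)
The proof will proceed by a direct reduction to the cited black-box PIT results in the literature. The plan is to observe that each of the works \cite{ShpilkaVolkovich09,GurjarKST15,AgrawalSS13,ForbesShpilka13a,ForbesSS14,Forbes15} establishes their generator claim for the standard Shpilka-Volkovich generator $\GSV{\ell,n}$, but the proofs do not rely on the specific \emph{construction} of $\GSV{\ell,n}$ (Lagrange interpolation in a single control variable) beyond two abstract properties: (i) the image of $\GSV{\ell,n}$ contains all $\ell$-sparse vectors in $\F^n$, and (ii) the sum of a constant number of independent copies of the generator is itself a generator of the same form with suitably-updated parameters. Both of these properties are syntactic consequences of the sparsity-covering property, and nothing else about the structure of $\GSV{\ell,n}$ is invoked in the correctness arguments.

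First I would check the preceding (unstated) lemma asserting that $\GSVb{\ell,n^2}(\F^{3\ell})$ contains every $\ell$-sparse vector in $\F^{n \times n}$: for a single $\ell = 1$ copy, setting $x \from \omega_i$, $y \from \omega_j$, and varying $z$ produces exactly the one-sparse matrices supported at entry $(i,j)$, using $\ind{\omega_k,\Omega}(\omega_m) = \delta_{k,m}$; then the additive definition $\GSVb{\ell,n^2} = \sum_{k=1}^\ell \GSVb{1,n^2}$ immediately gives all $\ell$-sparse matrices. Since $\GSVb{\ell,n^2}$ is itself a sum of $\ell$ copies of $\GSVb{1,n^2}$, property (ii) is automatic by construction.

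Then I would go through each of the four circuit classes individually and point at the exact place in the referenced proof where only properties (i) and (ii) are used. For sparse polynomials, the argument in \cite{ShpilkaVolkovich09} (refined in \cite{GurjarKST15,Forbes15}) reduces PIT for $s$-sparse polynomials to hitting all $O(\log s)$-sparse assignments, which is all property (i) gives. For depth-3 powering formulas of top fan-in $s$, the argument in \cite{AgrawalSS13,ForbesShpilka13a} similarly uses a hybrid over the summands plus sparsity-covering. The same is true for $\sumpowc$ formulas via \cite{Forbes15}, and for commutative roABPs via \cite{AgrawalSS13,ForbesSS14}, where again the correctness is shown by rank-preservation arguments whose input is just the sparsity-covering property plus additivity.

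The step I expect to require the most care is verifying rigorously that no cited proof secretly uses more than properties (i) and (ii). In particular, some of the proofs in \cite{ForbesSS14,Forbes15} manipulate the generator in a coordinate-by-coordinate fashion, so one must confirm that they only ever instantiate the $z$-type parameters to specific field values (or treat them as free) and use indicator-like evaluations of the control variables --- operations that our variant $\GSVb{\ell,n^2}$ supports identically, since it still has three parameters per copy and its $z$-variable plays the same role as in the original construction. Once this verification is complete, the conclusion follows by substituting $\GSVb{\ell,n^2}$ for $\GSV{\ell,n^2}$ in each proof without modification.
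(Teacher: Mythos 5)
Your proposal is correct and takes essentially the same route as the paper: the paper also establishes that the image of $\GSVb{\ell,n^2}$ contains all $\ell$-sparse vectors and then argues that the cited black-box PIT analyses use only this property (together with the fact that a sum of constantly many copies is again a generator of the same form), so the variant can be substituted into those results without modification. The careful per-paper verification you flag as the delicate step is handled in the paper at the same level of rigor, namely by asserting that, to the authors' knowledge, no cited proof uses anything beyond these properties.
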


The above result shows the power of the $\GSVb{\ell,n^2}$ generator to hit restricted classes of circuits.  We now observe that it is also limited by its inability to hit the determinant.

\begin{proposition}
	Assume the setup of \autoref{const:SVb}.  The output of $\GSVb{\ell,n^2}$ is an $n\times n$ matrix of rank $\le \ell$, when viewed as a matrix over the ring $\F(\vx,\vy,\vz)$.  Thus, taking $\det_n\in\F[X]$ to be the $n\times n$ determinant, we have that $\det_n\circ \;\GSVb{\ell,n^2}=0$ for $\ell<n$.
\end{proposition}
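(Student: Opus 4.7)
The plan is to observe that each summand $\GSVb{1,n^2}(x_k,y_k,z_k)$ is, by construction, a rank-one matrix over $\F(x_k,y_k,z_k)$, and then to exploit the subadditivity of rank to conclude that $\GSVb{\ell,n^2}$ has rank at most $\ell$ over $\F(\vx,\vy,\vz)$.

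More concretely, first I would define the column vectors $\vu(x)\in\F[x]^n$ and $\vv(y)\in\F[y]^n$ by $(\vu(x))_i\eqdef \ind{\omega_i,\Omega}(x)$ and $(\vv(y))_j\eqdef \ind{\omega_j,\Omega}(y)$. From the construction, the $(i,j)$-entry of $\GSVb{1,n^2}(x,y,z)$ equals $z\cdot (\vu(x))_i\cdot (\vv(y))_j$, so that as a matrix equation we have
\[
\GSVb{1,n^2}(x,y,z)=z\cdot \vu(x)\cdot \vv(y)^{\T}.
\]
This is manifestly an outer product scaled by $z$, hence a rank-one matrix over the field $\F(x,y,z)$ (the rank is exactly one whenever $z\ne 0$, and zero otherwise; in either case it is $\le 1$).

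Next, I would observe that by definition
\[
\GSVb{\ell,n^2}(\vx,\vy,\vz)=\sum_{k=1}^\ell \GSVb{1,n^2}(x_k,y_k,z_k)=\sum_{k=1}^\ell z_k\cdot \vu(x_k)\cdot \vv(y_k)^{\T},
\]
which is a sum of $\ell$ rank-one matrices over the field $\F(\vx,\vy,\vz)$. By subadditivity of matrix rank, this matrix has rank at most $\ell$ over $\F(\vx,\vy,\vz)$.

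Finally, for $\ell<n$, any $n\times n$ matrix over a field of rank strictly less than $n$ has vanishing determinant. Applying this to $M\eqdef\GSVb{\ell,n^2}(\vx,\vy,\vz)$ viewed over $\F(\vx,\vy,\vz)$, we get $\det_n(M)=0$ as an element of $\F(\vx,\vy,\vz)$, which (since $\det_n\circ\,\GSVb{\ell,n^2}$ is a polynomial in $\F[\vx,\vy,\vz]$) means it is the zero polynomial, i.e.\ $\det_n\circ\,\GSVb{\ell,n^2}=0$. There is no real obstacle here — the construction was rigged precisely so that each block is a rank-one outer product — the only thing to be careful about is keeping track of the ring in which rank is being computed, since the entries are polynomials rather than scalars.
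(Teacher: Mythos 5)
Your proof is correct and follows essentially the same route as the paper's: write each $\GSVb{1,n^2}(x_k,y_k,z_k)$ as an outer product (hence rank $\le 1$ over the rational function field), apply subadditivity of rank to the sum of $\ell$ such blocks, and conclude that the determinant vanishes since rank $< n$. Your added care about rank $\le 1$ versus exactly $1$, and about the polynomial identity following from vanishing over $\F(\vx,\vy,\vz)$, is fine but does not change the argument.
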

\begin{proof}
	\uline{$\ell=1$:} For a field $\K$, a (nonzero) matrix $M\in \K^{n\times n}$ is rank-1 if it can be expressed as an outer-product, so that $(M)_{i,j}=\alpha_i\beta_j$ for $\vaa,\vbb\in \K^n$.  Taking $\vaa,\vbb\in\F(\vx,\vy,\vz)^n$ defined by $\alpha_i\eqdef z\ind{\omega_i,\Omega}(x)$ and $\beta_j\eqdef\ind{\omega_j,\Omega}(y)$ we immediately see that $\GSVb{1,n^2}(x,y,z)$ is rank-1.

	\uline{$\ell>1$:} This follows as rank is subadditive, and $\GSVb{\ell,n^2}$ is the sum of $\ell$ copies of $\GSVb{1,n^2}$.

	\uline{$\det_n$ vanishes:} This follows as the $n\times n$ determinant vanishes on matrices of rank $<n$.
\end{proof}

Note that in the above we could hope to find an $f$ such that $f\circ \GSVb{\ell,n^2}=0$ for all $\ell<n^2$, but we are only able to handle $\ell<n$. Given the above result, along with \autoref{res:generator_to_lbs-mult}, we obtain that the multiples of the determinant are hard.

\begin{corollary}\label{res:lbs-mult:pit:det}
	Let $\det_n\in\F[X]$ denote the $n\times n$ determinant.  Then any nonzero multiple $f\cdot \det_n$ of $\det_n$, for $0 \ne f\in\F[X]$, has the following lower bounds.
	\begin{itemize}
		\item $f\cdot \det_n$ involves $\exp(\Omega(n))$ monomials.
		\item $f\cdot \det_n$ requires size $\exp(\Omega(n))$ to be expressed as a depth-$3$ powering formula.
		\item $f\cdot \det_n$ requires size $\exp(\Omega(n))$ to be expressed as a \sumpowc formula, in characteristic zero.
		\item $f\cdot \det_n$ requires width-$\exp(\Omega(n))$ to be computed as an roABP in some variable order.
	\end{itemize}
\end{corollary}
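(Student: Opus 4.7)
The plan is to deduce all four lower bounds uniformly by chaining together the three preceding items: the abstract reduction \autoref{res:generator_to_lbs-mult} from generators to lower bounds for multiples, the proposition immediately above showing $\det_n \circ \GSVb{\ell, n^2} = 0$ for all $\ell < n$, and \autoref{res:SVb-gen} showing that $\GSVb{O(\log s), n^2}$ is a generator for each of the four circuit classes of size $s$. Essentially the corollary is a one-line synthesis, so the proof proposal is mostly a matter of organizing the parameters.

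I would begin by treating the sparsity lower bound as a template. Fix nonzero $f \in \F[X]$ and suppose, for contradiction, that $g \eqdef f \cdot \det_n$ is $s$-sparse. By \autoref{res:SVb-gen}, there is a constant $c$ (independent of $n$) such that $\cvG \eqdef \GSVb{c \lceil \log s \rceil, n^2}$ is a generator for the class $\cC$ of $s$-sparse polynomials in $\F[X]$. Set $h \eqdef \det_n$, which is nonzero. So long as $c \lceil \log s \rceil < n$, the preceding proposition guarantees $h \circ \cvG = 0$. Then \autoref{res:lbs-mult_to_nonroot}'s companion \autoref{res:generator_to_lbs-mult} applies and yields that $f \cdot h = g \notin \cC$, contradicting our assumption. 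Therefore we must have $c \lceil \log s \rceil \ge n$, i.e., $s \ge \exp(\Omega(n))$.

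The other three bullets follow by exactly the same three-line argument, replacing ``class of $s$-sparse polynomials'' by the class of size-$s$ depth-$3$ powering formulas, size-$s$ $\sumpowc$ formulas in characteristic zero, or width-$s$ commutative roABPs (respectively). In each case \autoref{res:SVb-gen} supplies the relevant generator with seed length $O(\log s)$ (with possibly a different constant $c$ per class), and the argument is otherwise unchanged. Note that since $\GSVb{\ell, n^2}$ is defined for all $\ell$ and is a generator for classes of \emph{arbitrary} degree, we do not need to separately account for the degree of $f$, which is important because $f$ is an arbitrary nonzero polynomial and could in principle be of very high degree.

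There is no substantive obstacle here, since all three ingredients are already in place; the only thing to verify is that the constant $c$ hidden inside ``$O(\log s)$'' in \autoref{res:SVb-gen} is indeed universal for each class, so that the threshold $c \lceil \log s \rceil \ge n$ genuinely forces $s = \exp(\Omega(n))$ with an implicit constant depending only on the class. Since the generator in each case is constructed uniformly (as $\GSVb{\ell, n^2}$ with a fixed relationship between $\ell$ and $\log s$), this is automatic, and the proof is essentially complete upon stating it.
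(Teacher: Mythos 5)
Your proposal is correct and follows essentially the same route as the paper: invoke \autoref{res:SVb-gen} to get the generator $\GSVb{O(\log s),n^2}$ for each size-$s$ class, use the preceding proposition to see that $\det_n$ (and hence any multiple of it) vanishes under $\GSVb{\ell,n^2}$ for $\ell<n$, and apply \autoref{res:generator_to_lbs-mult} to force $O(\log s)\ge n$. Your added remarks about the generator handling arbitrary degree and the uniformity of the hidden constant are accurate but not points the paper needed to belabor.
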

\begin{proof}
	By \autoref{res:SVb-gen}, $\GSVb{O(\log s),n^2}$ is a generator for the above size-$s$ computations in the above classes.  However, following \autoref{res:generator_to_lbs-mult}, $(f\cdot\det_n)\circ \left(\GSVb{\ell,n^2}\right)=0$ for $\ell<n$.  Thus, if $f\cdot \det_n$ (which is nonzero) is computable in size-$s$ it must be that $O(\log s)\ge n$, so that $s\ge\exp(\Omega(n))$.
\end{proof}

Note that we crucially leveraged that the determinant vanishes on low-rank matrices.  As such, the above results do not seem to imply similar results for the permanent, despite the fact that the permanent is a harder polynomial. That is, recall that because of $\VNP$-completeness of the permanent the determinant $\det_n(X)$ can be written as a projection of the permanent, so that $\det_n(X)=\perm_m(A(X))$ for an affine matrix $A(X)\in\F[X]^{m\times m}$ with $m\le \poly(n)$.  Then, given a multiple $g(Y)\cdot\perm_m(Y)$ one would like to use this projection to obtain $g(A(X))\perm_m( A(X))=g(A(X))\det_n X$, which is a multiple of $\det_n$.  Unfortunately this multiple may not be a \emph{nonzero multiple}: it could be that $g(A(X))=0$, from which no lower bounds for $g(A(X))\det_n(X)$ (and hence $g(Y)\perm_m (Y)$) can be derived.

\subsection{Lower Bounds for Multiples via Leading/Trailing Monomials}\label{sec:lbs-mult:LM}

We now use the theory of leading (and trailing) monomials to obtain explicit polynomials with hard multiples. We aim at finding as simple polynomials as possible so they will give rise to simple ``axioms'' with no small refutations for our proof complexity applications in \autoref{sec:ips-mult}. These results will essentially be immediate corollaries of previous work.

\subsubsection{Depth-3 Powering Formulas}

Kayal~\cite{Kayal08} observed that using the partial derivative method of Nisan and Wigderson~\cite{NisanWigderson96} one can show that these formulas require $\exp(\Omega(n))$ size to compute the monomial $x_1\cdots x_n$.  Forbes and Shpilka~\cite{ForbesShpilka13a} later observed that this result can be made \emph{robust} by modifying the \emph{hardness of representation} technique of Shpilka and Volkovich~\cite{ShpilkaVolkovich09}, in that similar lower bounds apply when the leading monomial involves many variables, as we now quote.

\begin{theoremwp}[Forbes-Shpilka~\cite{ForbesShpilka13a}]\label{res:lbs-mult:LM:sumpowsum}
	Let $f(\vx)\in\F[\vx]$ be computed a \sumpowsum formula of size $\le s$.  Then the leading monomial $\vx^\va=\LM(f)$ involves $\ellzero{\va}\le \lg s$ variables.
\end{theoremwp}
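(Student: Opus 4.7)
The plan is to apply the partial-derivative method of Nisan and Wigderson in a form that is sensitive to leading monomials, by working with \emph{Hasse derivatives}. Define $\hasses^\vb f \in \F[\vx]$ to be the coefficient of $\vy^\vb$ in the expansion $f(\vx + \vy) \in \F[\vx][\vy]$, so that on a monomial $\hasses^\vb \vx^\vc = \binom{\vc}{\vb}\vx^{\vc-\vb}$, and let $V(f) \eqdef \spn_\F\{\hasses^\vb f : \vb \in \N^n\}$. I would establish $\dim V(f) \le s$ from the $\sumpowsum$ structure of $f$, and $\dim V(f) \ge 2^{\ellzero{\LM(f)}}$ from the leading-monomial structure. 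Combining the two inequalities and taking logarithms yields the desired bound.

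For the upper bound, write $f = \sum_i \ell_i^{d_i}$ with each $\ell_i$ linear and $\sum_i(d_i+1) \le s$ (from the size definition). Expanding $\ell_i(\vx+\vy)^{d_i} = \sum_{k=0}^{d_i} \binom{d_i}{k}\,\ell_i(\vx)^{d_i-k}\,\ell_i(\vy)^k$ shows that every Hasse derivative of $\ell_i^{d_i}$ lies in $\spn\{\ell_i(\vx)^{d_i-k}\}_{k=0}^{d_i}$, a space of dimension at most $d_i+1$. Since $\hasses^\vb$ is $\F$-linear, $\dim V(f) \le \sum_i(d_i+1) \le s$.

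For the lower bound, let $\vx^\va = \LM(f)$ and $S = \supp(\va)$. I would exhibit $2^{|S|}$ Hasse derivatives with pairwise distinct leading monomials, forcing linear independence by \autoref{res:dim-eq-num-TM-spn}. From the identity $\hasses^\vb f = \sum_{\vc \ge \vb,\, \vc \in \supp(f)} \binom{\vc}{\vb} f_\vc\, \vx^{\vc-\vb}$, observe that for any $\vc \in \supp(f)$ with $\vc \ge \vb$, the monomial-order inequality $\vx^\vc \preceq \vx^\va$ implies $\vx^{\vc-\vb} \preceq \vx^{\va-\vb}$ by the cancellation property. Hence $\LM(\hasses^\vb f) = \vx^{\va-\vb}$ whenever the coefficient $\binom{\va}{\vb} f_\va$ is nonzero in $\F$. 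For each coordinate $i \in S$ we have $a_i \ge 1$, so by Lucas's theorem there are at least two values of $b_i \in \{0,\ldots,a_i\}$ with $\binom{a_i}{b_i} \not\equiv 0$ in $\F$ (the trivial choice $b_i = 0$ together with any $b_i > 0$ whose base-$\chara(\F)$ digits are dominated by those of $a_i$; in characteristic zero or $\chara(\F) > \max_i a_i$ one simply takes $b_i \in \{0,1\}$). Setting $b_i = 0$ for $i \notin S$ and ranging over admissible values for $i \in S$, we obtain $\ge 2^{|S|}$ vectors $\vb$, each yielding a Hasse derivative whose leading monomial is the distinct monomial $\vx^{\va-\vb}$. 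Combining, $2^{\ellzero{\va}} \le \dim V(f) \le s$, so $\ellzero{\va} \le \lg s$. The main subtlety is the characteristic-$p$ case, cleanly handled by the Lucas-theorem observation; everything else reduces to expansion-and-counting.
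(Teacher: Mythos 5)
The paper states this result without proof (it is a \emph{theoremwp} imported from Forbes--Shpilka~\cite{ForbesShpilka13a}), so there is no in-paper argument to compare against; your Hasse-derivative proof is correct and is essentially the partial-derivative method, made robust to leading monomials, that the paper attributes to that work. The only nitpick is that for affine $\ell_i$ the expansion of $\ell_i(\vx+\vy)^{d_i}$ should use the homogeneous part of $\ell_i$ in the $\vy$-factor, which does not affect the conclusion that each Hasse derivative of $\ell_i^{d_i}$ lies in $\spn\{\ell_i(\vx)^{d_i-k}\}_{k=0}^{d_i}$; your Lucas-theorem treatment of positive characteristic is also correct.
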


We now note that as the leading monomial is multiplicative (\autoref{res:hom_LM-TM_mult}) this lower bound automatically extends to multiples of the monomial.

\begin{corollary}\label{res:lbs-mult:sumpowsum}
	Any nonzero multiple of $x_1\cdots x_n$ requires size $\ge 2^n$ to be computed as a \sumpowsum formula.
\end{corollary}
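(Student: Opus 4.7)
The plan is to combine \autoref{res:lbs-mult:LM:sumpowsum} with the multiplicativity of leading monomials established in \autoref{res:hom_LM-TM_mult}. Specifically, let $g\in\F[\vx]$ be any nonzero polynomial and consider the product $g\cdot x_1\cdots x_n$, which is itself nonzero since $\F[\vx]$ is an integral domain. I would first invoke \autoref{res:hom_LM-TM_mult} to conclude that
\[
    \LM(g\cdot x_1\cdots x_n) = \LM(g)\cdot \LM(x_1\cdots x_n) = \LM(g)\cdot x_1\cdots x_n,
\]
so in particular the exponent vector $\va$ of this leading monomial satisfies $\va \ge \vno$, and hence $\ellzero{\va}=n$.

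Now suppose for contradiction that $g\cdot x_1\cdots x_n$ is computable by a $\sumpowsum$ formula of size $s$. Then \autoref{res:lbs-mult:LM:sumpowsum} gives $\ellzero{\va}\le \lg s$, and combining this with the bound $\ellzero{\va}=n$ from the previous paragraph yields $n\le \lg s$, that is, $s\ge 2^n$.

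There is essentially no obstacle here: once one has the robust lower bound on the number of variables appearing in the leading monomial (which does the real work) and the multiplicativity of $\LM$ under multiplication, the argument is a two-line corollary. The only mild point worth mentioning is that the statement is independent of which monomial order $\prec$ is fixed on $\F[\vx]$, since both \autoref{res:hom_LM-TM_mult} and \autoref{res:lbs-mult:LM:sumpowsum} hold for any monomial order.
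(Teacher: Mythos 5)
Your proof is correct and is essentially identical to the paper's: both invoke the multiplicativity of leading monomials (\autoref{res:hom_LM-TM_mult}) to see that $\LM(g\cdot x_1\cdots x_n)$ involves all $n$ variables, and then apply \autoref{res:lbs-mult:LM:sumpowsum} to conclude $n\le\lg s$. Nothing is missing.
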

\begin{proof}
	Consider any $0\ne g(\vx)\in\F[x_1,\ldots,x_n]$.  Then as the leading monomial is multiplicative (\autoref{res:hom_LM-TM_mult}) we have that $\LM(g\cdot x_1\cdots x_n)=\LM(g)\cdot x_1\cdots x_n$, so that $\LM(g\cdot x_1\cdots x_n)$ involves $n$ variables.  By the robust lower bound (\autoref{res:lbs-mult:LM:sumpowsum}) this implies $g(\vx)\cdot x_1\cdots x_n$ requires size $\ge 2^n$ as a $\sumpowsum$ formula.
\end{proof}

\subsubsection{\texorpdfstring{$\sumpowc$ Formulas}{Sums of Powers of Low-Degree Polynomials}}

Kayal~\cite{Kayal12} introduced the method of shifted partial derivatives, and Gupta-Kamath-Kayal-Saptharishi~\cite{GuptaKKS14} refined it to give exponential lower bounds for various sub-models of depth-4 formulas.  In particular, it was shown that the monomial $x_1\cdots x_n$ requires $\exp(\Omega(n))$-size to be computed as a \sumpowc formula.  Applying the hardness of representation approach of Shpilka and Volkovich~\cite{ShpilkaVolkovich09}, Mahajan-Rao-Sreenivasaiah~\cite{MahajanRS14} showed how to develop a deterministic black-box PIT algorithm for \emph{multilinear} polynomials computed by $\sumpowc$ formulas.  Independently, Forbes~\cite{Forbes15} (following Forbes-Shpilka~\cite{ForbesShpilka13a}) showed that this lower bound can again be made to apply to leading monomials\,\footnote{The result there is stated for trailing monomials, but the argument equally applies to leading monomials.} (which implies a deterministic black-box PIT algorithm for \emph{all} $\sumpowc$ formulas, with the same complexity as Mahajan-Rao-Sreenivasaiah~\cite{MahajanRS14}). This leading monomial lower bound, which we now state, is important for its applications to polynomials with hard multiples.

\begin{theoremwp}[Forbes~\cite{Forbes15}]\label{res:sumpowsumprodt_TM-ubs}
	Let $f(\vx)\in\F[\vx]$ be computed as a $\sumpow{t}$ formula of size $\le s$.  If $\chara(\F)\ge \ideg(\vx^\va)$, then the leading monomial $\vx^\va=\LM(f)$ involves $\ellzero{\va}\le O(t\lg s)$ variables.
\end{theoremwp}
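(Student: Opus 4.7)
The plan is to generalize the robust leading-monomial argument that underlies \autoref{res:lbs-mult:LM:sumpowsum} (the case $t=1$) by replacing the pure partial derivative measure with the shifted partial derivative measure of Kayal and Gupta--Kamath--Kayal--Saptharishi, following the strategy of Forbes~\cite{Forbes15}.

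First, I would reduce to the case $S \eqdef \supp(\va) = [n]$. Substituting $x_i \leftarrow 0$ for $i \notin S$ leaves the leading monomial $\vx^\va$ intact (since all other monomials in $f$ are $\prec \vx^\va$ and monomials outside $\vx^\va$'s support get killed), preserves the $\sumpow{t}$ structure with the same bound on size (each $f_i|_{\vx_{\overline{S}} = 0}$ still has degree $\le t$), and reduces the variable count to $|S| = \ellzero{\va}$. So from now on assume $a_i \ge 1$ for all $i \in [n]$; the goal becomes $n \le O(t \lg s)$.

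Next I would study the measure
\[
\mu_{k,\ell}(g) \eqdef \dim \spn_\F \set{\vx^\vbb \cdot \hasse^\vaa g : |\vaa| = k,\ |\vbb| \le \ell}\;,
\]
and prove matching upper and lower bounds. For the upper bound, using the chain rule one shows that for $\deg f_i \le t$, the order-$k$ derivative $\hasse^\vaa f_i^{d_i}$ lies in $f_i^{d_i - k}\cdot \F[\vx]_{\le k(t-1)}$ (at each differentiation a factor of $f_i$ is consumed and replaced by a polynomial of degree $\le t-1$). Multiplying by $\vx^\vbb$ with $|\vbb| \le \ell$ places the result in $f_i^{d_i - k} \cdot \F[\vx]_{\le \ell + k(t-1)}$, and summing the $s$ terms yields
\[
\mu_{k,\ell}(f) \le s \cdot \binom{n + \ell + k(t-1)}{\ell + k(t-1)}\;.
\]
For the lower bound I would exploit that $\vx^\va$ is the leading monomial \emph{and} has full support. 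For any $0$-$1$ vector $\vaa \le \vno \le \va$ with $|\vaa| = k$, the characteristic hypothesis $\chara(\F) \ge \ideg(\vx^\va)$ guarantees $\hasse^\vaa \vx^\va \ne 0$, so by monotonicity of $\prec$ under division we get $\LM(\hasse^\vaa f) = \vx^{\va - \vaa}$, and consequently $\LM(\vx^\vbb \hasse^\vaa f) = \vx^{\vbb + \va - \vaa}$. Via \autoref{res:dim-eq-num-TM-spn}, $\mu_{k,\ell}(f)$ is bounded below by the number of distinct such leading monomials, and a short combinatorial argument (choosing $\vaa$ to range over $k$-subsets of $[n]$ and $\vbb$ over monomials supported outside $\supp(\vaa)$, so the map $(\vaa,\vbb) \mapsto \vbb + \va - \vaa$ is injective) yields $\mu_{k,\ell}(f) \ge \binom{n}{k}\binom{n - k + \ell}{\ell}$.

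Combining the bounds gives
\[
\binom{n}{k}\binom{n-k+\ell}{\ell} \le s \cdot \binom{n+\ell+k(t-1)}{\ell+k(t-1)}\;,
\]
and choosing $k = \Theta(\lg s)$ together with $\ell$ large enough to dominate the $k(t-1)$ shift (e.g.\ $\ell \approx n$, so the ratio of binomial coefficients on the right is $\approx \binom{n}{k(t-1)}$) collapses the inequality to $\binom{n}{k} \lesssim s \cdot \binom{n}{k(t-1)}$, from which $n \le O(t \lg s)$ follows by standard binomial estimates. The main obstacle will be the careful choice of $\ell$ and the injectivity argument in the leading-monomial count: one must ensure that the combinatorial lower bound on $\mu_{k,\ell}$ is provably at least $\binom{n}{k}$ times the shift dimension, so that the $k(t-1)$ penalty on the right-hand side is absorbed, rather than letting it swamp the bound.
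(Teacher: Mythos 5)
This statement is a \emph{theoremwp} imported verbatim from Forbes~\cite{Forbes15}; the paper gives no proof of it, so there is no internal argument to compare against. Your reconstruction follows the route of the cited source: reduce to the case where $\LM(f)$ has full support (the restriction $x_i\leftarrow 0$ for $i\notin\supp(\va)$ indeed preserves the leading monomial, the $\sumpow{t}$ structure, and the size bound), upper-bound the shifted-partial measure of a single power via $\hasse^{\vaa}f_i^{d_i}\in f_i^{d_i-k}\cdot\F[\vx]_{\le k(t-1)}$, and lower-bound it for $f$ by counting leading monomials via \autoref{res:dim-eq-num-TM-spn}, \autoref{res:hom_LM-TM_mult}, and the compatibility of $\prec$ with division. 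The injectivity of $(\vaa,\vbb)\mapsto\vbb+\va-\vaa$ is correct and correctly uses both the full-support reduction ($a_i\ge 1$ for all $i$) and the disjointness $\supp(\vbb)\cap\supp(\vaa)=\emptyset$; the characteristic hypothesis is what keeps $\hasse^{\vaa}\vx^{\va}$ from vanishing (strictly one needs $\chara(\F)\nmid a_i$ for $i\in\supp(\vaa)$, an off-by-one inherited from the statement rather than from your argument). All of this is sound.

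The one step that fails as written is the final estimate. Taking ``$\ell\approx n$'' and collapsing to $\binom{n}{k}\lesssim s\cdot\binom{n}{k(t-1)}$ is vacuous: in the relevant regime $k(t-1)\le n/2$ one has $\binom{n}{k(t-1)}\ge\binom{n}{k}$, so that inequality holds for every $s\ge 1$ and yields no bound on $n$. The shift must be taken much larger, e.g.\ $\ell=tn$. Writing $m=k(t-1)$ and
\[
\frac{\binom{n-k+\ell}{\ell}}{\binom{n+\ell+m}{\ell+m}}
=\prod_{i=1}^{n-k}\frac{\ell+i}{\ell+m+i}\cdot\prod_{i=n-k+1}^{n}\frac{i}{\ell+m+i}
\ \ge\ \left(1-\tfrac{m}{\ell}\right)^{n}\cdot\left(\tfrac{n-k}{n+\ell+m}\right)^{k}
\ \ge\ e^{-2k}\cdot (6t)^{-k}
\]
for $k\le n/2$, one gets $s\ge\binom{n}{k}\,e^{-2k}(6t)^{-k}\ge\bigl(n/(6e^{2}tk)\bigr)^{k}$, so $n\ge\Omega(tk)$ forces $s\ge 2^{k}$; taking $k=\lceil\lg s\rceil+1$ gives $\ellzero{\va}=n\le O(t\lg s)$ as required. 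So the obstacle you flagged is real, but it is resolved by choosing $\ell=\Theta(tn)$ and estimating the ratio of binomials multiplicatively rather than by the heuristic collapse you proposed.
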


As for depth-$3$ powering formulas (\autoref{res:lbs-mult:sumpowsum}), this immediately yields that all multiples (of degree below the characteristic) of the monomial are hard. 

\begin{corollarywp}\label{res:lbs-mult:sumpowt}
	All nonzero multiples of $x_1\cdots x_n$ of degree $<\chara(\F)$ require size $\ge\exp(\Omega(\nicefrac{n}{t}))$ to be computed as $\sumpow{t}$ formula.
\end{corollarywp}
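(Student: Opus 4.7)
The plan is to mimic the proof of \autoref{res:lbs-mult:sumpowsum} essentially verbatim, swapping in the stronger leading-monomial lower bound for $\sumpow{t}$ formulas given by \autoref{res:sumpowsumprodt_TM-ubs}. The whole argument rests on two facts already in hand: the multiplicativity of leading monomials (\autoref{res:hom_LM-TM_mult}) and the fact that the leading monomial of any polynomial computed by a small $\sumpow{t}$ formula must involve few variables.

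Concretely, I would let $g \in \F[x_1,\ldots,x_n]$ be an arbitrary nonzero polynomial, and consider the multiple $g \cdot x_1\cdots x_n$, which is assumed to have degree $< \chara(\F)$. By \autoref{res:hom_LM-TM_mult}, the leading monomial is multiplicative, so
\[
\LM(g \cdot x_1\cdots x_n) \;=\; \LM(g)\cdot x_1\cdots x_n,
\]
which involves all $n$ variables (that is, $\ellzero{\va}=n$ where $\vx^\va = \LM(g\cdot x_1\cdots x_n)$). Moreover, since the total degree of $g\cdot x_1\cdots x_n$ is $< \chara(\F)$, the same bound holds for the individual degree of its leading monomial, so the hypothesis $\chara(\F)\ge \ideg(\vx^\va)$ of \autoref{res:sumpowsumprodt_TM-ubs} is satisfied.

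Now suppose that $g \cdot x_1 \cdots x_n$ is computable by a $\sumpow{t}$ formula of size $s$. Applying \autoref{res:sumpowsumprodt_TM-ubs} to this formula, the number of variables appearing in its leading monomial is at most $O(t \lg s)$. Combining this with $\ellzero{\va} = n$ yields $n \le O(t\lg s)$, which rearranges to $s \ge \exp(\Omega(n/t))$, as desired. The argument is routine once the ingredients are assembled, so there is no real obstacle — the only thing to be careful about is the degree condition, which is exactly what ensures the characteristic hypothesis in \autoref{res:sumpowsumprodt_TM-ubs} applies to the leading monomial of the multiple (and not merely to that of $x_1\cdots x_n$ itself).
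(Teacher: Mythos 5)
Your proof is correct and is exactly the argument the paper intends: the corollary is stated without proof precisely because it follows from \autoref{res:sumpowsumprodt_TM-ubs} by the same multiplicativity-of-leading-monomials argument used for \autoref{res:lbs-mult:sumpowsum}. Your added remark that the degree hypothesis guarantees $\ideg(\LM(g\cdot x_1\cdots x_n))<\chara(\F)$ correctly handles the one condition that needs checking.
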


\subsubsection{Sparse Polynomials}

While the above approaches for $\sumpowsum$ and $\sumpowc$ formulas focus on leading monomials, one cannot show that the leading monomials of sparse polynomials involve few variables as sparse polynomials can easily compute the monomial $x_1\cdots x_n$.  However, following the \emph{translation} idea of Agrawal-Saha-Saxena~\cite{AgrawalSS13}, Gurjar-Korwar-Saxena-Thierauf~\cite{GurjarKST15} showed that sparse polynomials under full-support translations have \emph{some} monomial involving few variables, and Forbes~\cite{Forbes15} (using different techniques) showed that in fact the \emph{trailing} monomial involving few variables (translations do not affect the leading monomial, so the switch to trailing monomials is necessary here).  

\begin{theoremwp}[Forbes~\cite{Forbes15}]\label{res:transdiff_TM-ub_sparse}
	Let $f(\vx)\in\F[\vx]$ be $(\le s)$-sparse, and let $\vaa\in(\F\setminus\{0\})^n$ so that $\vaa$ has full-support.  Then the trailing monomial $\vx^\va=\TM(f(\vx+\vaa))$ involves $\ellzero{\va}\le \lg s$ variables.
\end{theoremwp}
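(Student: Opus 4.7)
I would first reduce to $\vaa = \vno$ via the invertible rescaling $x_i \mapsto \alpha_i x_i$, which preserves sparsity of $f$ and the support size of every monomial (in particular, of the trailing monomial of $f(\vx+\vaa)$). Write $f = \sum_{i=1}^s c_i \vx^{\vd_i}$ with $c_i \ne 0$ and distinct $\vd_i$, and suppose for contradiction that $k := \ellzero{\vc} > \lg s$, where $\vx^\vc := \TM(f(\vx+\vno))$. By definition of the monomial order, every $\vx^\vb \prec \vx^\vc$ satisfies
\[
\coeff{\vx^\vb}(f(\vx+\vno)) = (\hasse^\vb f)(\vno) = \sum_{i=1}^s c_i \binom{\vd_i}{\vb} = 0.
\]

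The plan is to extract enough linear constraints from this vanishing to force $s \ge 2^k$. As a first pass, consider the $2^k-1$ ``corner'' monomials $\vx^{\vb^T}$ for $T \subsetneq S := \supp(\vc)$, defined by $b^T_j := c_j$ for $j \in T$ and $b^T_j := 0$ otherwise; these are all coordinate-wise (hence monomial-order) smaller than $\vx^\vc$. Setting $\mu_{i,j} := \binom{(\vd_i)_j}{c_j}$, the vanishing conditions become
\[
\sum_{i=1}^s c_i \prod_{j \in T}\mu_{i,j} = 0 \quad \text{for all } T \subsetneq S, \qquad \sum_{i=1}^s c_i \prod_{j \in S}\mu_{i,j} \ne 0,
\]
equivalently the vector $\sum_i c_i \bigotimes_{j \in S}(1,\mu_{i,j}) \in \F^{2^k}$ is supported only at the $T = S$ coordinate. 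Via an induction on $k$ that peels off one variable at a time by setting $y_{j_0} = 0$ in the associated multilinear polynomials $P_i(\vy) := \prod_{j \in S}(1 + \mu_{i,j}y_j)$ and using that $\prod_{j \in S} y_j$ is the unique (up to scaling) multilinear polynomial vanishing whenever some $y_j = 0$, one aims to show the $s$ tensors $\bigotimes_{j \in S}(1,\mu_{i,j})$ cannot produce this pattern when $s < 2^k$.

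The main obstacle is that the corner-only constraints do \emph{not} suffice on their own: for instance, $f = \tfrac12 - x_1x_2 + \tfrac12 x_1^2x_2^2$ is $3$-sparse with $k = 2$, yet the three tensors $(1,0,0,0)$, $(1,1,1,1)$, $(1,2,2,4)$ combine (with coefficients $\tfrac12,-1,\tfrac12$) to give the pattern $(0,0,0,1)$, so every corner condition for $\vc=(1,1)$ is satisfied despite $\ellzero{\vc} = 2 > \lg 3$. The resolution is that one must exploit \emph{all} vanishing constraints $\coeff{\vx^\vb}(f(\vx+\vno))=0$ for $\vx^\vb \prec \vx^\vc$, including monomials $\vx^\vb$ that are smaller than $\vx^\vc$ in the monomial order but not coordinate-wise below $\vc$. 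In the example, $x_2^2 \prec x_1x_2$ in lex with $x_1>x_2$, and the nonvanishing $\coeff{x_2^2}(f(\vx+\vno)) = \tfrac12$ precisely rules out $\TM(f(\vx+\vno)) = x_1x_2$ in that order. Following Forbes~\cite{Forbes15}, I would package these additional constraints via the $\transdiff$ operator (which combines Hasse differentiation with translation by $\vaa$) so that the trailing-monomial hypothesis yields a genuinely $2^k$-dimensional system of vanishing conditions on the $s$ coefficients $c_i$, and then run the induction sketched above to derive the contradiction $s \ge 2^k$.
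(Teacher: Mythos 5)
You are comparing against a proof the paper does not actually contain: this statement is imported verbatim from Forbes~\cite{Forbes15} (it is stated without proof), so the only meaningful comparison is with the argument of \cite{Forbes15} that the paper relies on. Your preliminary steps are correct: the rescaling $x_i\mapsto\alpha_i x_i$ multiplies each coefficient by a nonzero scalar, so it preserves sparsity, the monomial order, and the trailing monomial, and legitimately reduces to $\vaa=\vno$; the identity $\coeff{\vx^{\vb}}(f(\vx+\vno))=\sum_i c_i\binom{\vd_i}{\vb}$ is right; and your counterexample showing that the $2^k$ ``corner'' constraints alone cannot force $s\ge 2^k$ is correct and is a genuinely useful observation.

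The gap is that the write-up stops exactly where the theorem's content begins. The last paragraph --- ``package the remaining constraints via the $((\vx+\vaa)\circ\hasses_{\vx})$ operator and run the induction sketched above'' --- is a citation of \cite{Forbes15}, not an argument, and the induction you propose to rerun (peeling variables off the tensors $\bigotimes_j(1,\mu_{i,j})$) is not the mechanism that makes that operator work; it is unclear it can absorb the order-smaller-but-not-coordinatewise-smaller constraints at all. The actual proof is a dimension count with no induction on variables. Work with $g\eqdef f(\vx+\vaa)=\sum_{i=1}^s c_i(\vx+\vaa)^{\vd_i}$ and set $W\eqdef\spn\{(\vx+\vaa)^{\vc}\,\hasse_{\vx^{\vc}}(g):\vc\in\N^n\}$. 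Since $(\vx+\vaa)^{\vc}\hasse_{\vx^{\vc}}\bigl((\vx+\vaa)^{\vd}\bigr)=\binom{\vd}{\vc}(\vx+\vaa)^{\vd}$, every shifted monomial is a common eigenvector, so $\dim W\le s$. Conversely, write $\vx^{\vb}=\TM(g)$ with support $S$, $|S|=k$, and for $T\subseteq S$ apply the operator with exponent $\vb_T$ ($\vb$ restricted to $T$): by multiplicativity of trailing monomials (\autoref{res:hom_LM-TM_mult}) and $\TM\bigl((x_i+\alpha_i)^{b_i}\bigr)=1$ (this is where full support of $\vaa$ enters), together with the facts that $\hasse_{\vx^{\vb_T}}$ sends $\vx^{\vb}$ to $\vx^{\vb-\vb_T}$ with coefficient $\prod_{i\in T}\binom{b_i}{b_i}\cdot\coeff{\vx^{\vb}}(g)\neq 0$ (so no characteristic issues) and sends every other surviving monomial $\vx^{\vc}\succ\vx^{\vb}$ injectively to $\vx^{\vc-\vb_T}\succ\vx^{\vb-\vb_T}$, one gets $\TM\bigl((\vx+\vaa)^{\vb_T}\hasse_{\vx^{\vb_T}}(g)\bigr)=\vx^{\vb-\vb_T}$. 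These are $2^k$ distinct trailing monomials inside $W$, so $\dim W\ge 2^k$ by \autoref{res:dim-eq-num-TM-spn}, giving $2^k\le s$ as required. Until you supply this (or an equivalent) upper/lower bound on $\dim W$, the proposal does not prove the statement.
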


This result thus allows one to construct polynomials whose multiples are all non-sparse.

\begin{corollary}\label{res:lbs-mult:sparse-LM}
	All nonzero multiples of $(x_1+1)\cdots (x_n+1)\in\F[\vx]$ require sparsity $\ge 2^n$.  Similarly, all nonzero multiples of $(x_1+y_1)\cdots (x_n+y_n)\in\F[\vx,\vy]$ require sparsity $\ge 2^n$.
\end{corollary}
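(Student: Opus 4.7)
The plan is to directly combine Theorem~\ref{res:transdiff_TM-ub_sparse} (which bounds the number of variables appearing in the trailing monomial of a translated sparse polynomial) with the multiplicativity of trailing monomials (\autoref{res:hom_LM-TM_mult}), following the template of how \autoref{res:lbs-mult:sumpowsum} was deduced from \autoref{res:lbs-mult:LM:sumpowsum}. The key insight is that translation by a full-support vector can be chosen to turn $(x_1+1)\cdots(x_n+1)$ (or $\prod_i(x_i+y_i)$) into a polynomial whose trailing monomial uses all $n$ ``essential'' variables, and this large trailing monomial will be preserved in any nonzero multiple.

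For the first claim, let $f = g \cdot \prod_{i=1}^n (x_i+1)$ for some nonzero $g \in \F[\vx]$, and suppose $f$ is $(\le s)$-sparse. Apply Theorem~\ref{res:transdiff_TM-ub_sparse} with the full-support shift $\vaa = -\vno$: this yields $\ellzero{\va}\le \lg s$ where $\vx^\va = \TM(f(\vx-\vno))$. On the other hand,
\[
    f(\vx - \vno) \;=\; g(\vx - \vno) \cdot \prod_{i=1}^n \big((x_i - 1) + 1\big) \;=\; g(\vx - \vno) \cdot x_1 \cdots x_n,
\]
and since $g(\vx - \vno)$ is nonzero, multiplicativity of the trailing monomial gives $\TM(f(\vx-\vno)) = \TM(g(\vx-\vno)) \cdot x_1\cdots x_n$, which involves all $n$ variables. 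Therefore $\lg s \ge n$, that is, $s \ge 2^n$.

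For the second claim, let $f(\vx,\vy) = g(\vx,\vy) \cdot \prod_{i=1}^n (x_i + y_i)$ for some nonzero $g$, and suppose $f$ is $(\le s)$-sparse. Now apply Theorem~\ref{res:transdiff_TM-ub_sparse} with the full-support shift $\vaa = (\vno, -\vno)\in(\F\setminus\{0\})^{2n}$: each factor becomes $(x_i + 1) + (y_i - 1) = x_i + y_i$, so
\[
    f(\vx+\vno,\vy-\vno) \;=\; g(\vx+\vno,\vy-\vno)\cdot \prod_{i=1}^n (x_i + y_i).
\]
Under any fixed monomial order, $\TM(x_i+y_i)$ is one of $x_i$ or $y_i$, hence $\TM\big(\prod_i(x_i+y_i)\big)$ is a product of $n$ distinct variables (one from each pair $\{x_i,y_i\}$). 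By multiplicativity, $\TM(f(\vx+\vno,\vy-\vno))$ is divisible by this product and so involves at least $n$ distinct variables. Theorem~\ref{res:transdiff_TM-ub_sparse} then gives $n \le \lg s$, i.e., $s \ge 2^n$, as desired.

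There is no real obstacle here: both claims are immediate consequences of Theorem~\ref{res:transdiff_TM-ub_sparse} once one observes the appropriate full-support translation that makes the designated factor either monomial (for the first claim) or invariant with ``monomial-like'' trailing behavior under any monomial order (for the second). The only mild subtlety in the second part is that a field-element shift cannot turn $\prod_i(x_i+y_i)$ into an actual monomial, but the shift $(\vno,-\vno)$ preserves this polynomial exactly, and its trailing monomial already has full support among the $n$ pairs, which suffices.
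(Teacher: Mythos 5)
Your proof is correct. The first claim is handled exactly as in the paper: shift by $-\vno$, observe the product becomes the monomial $x_1\cdots x_n$, and invoke \autoref{res:transdiff_TM-ub_sparse} together with multiplicativity of trailing monomials. For the second claim your route genuinely differs from the paper's. The paper views the multiple as an element of $\F(\vy)[\vx]$ (which can only decrease sparsity), shifts $\vx\mapsto\vx-\vy$ so that the designated factor again becomes the monomial $\prod_i x_i$, and reduces to the first argument over the larger field. You instead stay in $\F[\vx,\vy]$, pick the scalar full-support shift $(\vno,-\vno)$ that fixes $\prod_i(x_i+y_i)$ exactly, and note that this polynomial's trailing monomial already involves one variable from each of the $n$ pairs, so any nonzero multiple has a trailing monomial supported on at least $n$ variables. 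Both arguments are sound; yours is slightly more elementary in that it avoids the fraction-field detour (and the attendant observation that sparsity over $\F(\vy)[\vx]$ lower-bounds sparsity over $\F[\vx,\vy]$), at the mild cost of needing the observation that the trailing monomial of a non-monomial factor can still have full support across the pairs. You also correctly avoid the trap, flagged in the paper's discussion after the corollary, of trying to deduce the second claim from the first via the substitution $\vy\leftarrow\vno$, which could kill the multiplier $g$.
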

\begin{proof}
	Define $f(\vx)=\prod_{i=1}^n (x_i+1)$. For any $0\ne g(\vx)\in\F[\vx]$ the multiple $g(\vx)f(\vx)$ under the translation $\vx\mapsto \vx-\vno$ is equal to $g(\vx-\vno)\prod_i x_i$.  Then all monomials (in particular the trailing monomial) involves $n$ variables (as $g(\vx)\ne 0$ implies $g(\vx-\vno)\ne 0$).  Thus, by \autoref{res:transdiff_TM-ub_sparse} it must be that $g(\vx)f(\vx)$ requires $\ge 2^n$ monomials.

	The second part of the claim follows as the first, where we now work over the fraction field $\F(\vy)[\vx]$, noting that this can only decrease sparsity.  Thus, using the translation $\vx\mapsto\vx-\vy$ the above trailing monomial argument implies that the sparsity of nonzero multiples $\prod_i (x_i+y_i)$ are $\ge 2^n$ over $\F(\vy)[\vx]$ and hence also over $\F[\vx,\vy]$.
\end{proof}

Note that it is tempting to derive the second part of the above corollary from the first, using that the substitution $\vy\leftarrow\vno$ does not increase sparsity.  However, this substitution can convert nonzero multiples of $\prod_i (x_i+y_i)$ to zero multiples of $\prod_i (x_i+1)$, which ruins such a reduction, as argued in the discussion after \autoref{res:lbs-mult:pit:det}.

\subsection{Lower Bounds for Multiples of Sparse Multilinear Polynomials}

While the previous section established that all multiples of $(x_1+1)\cdots (x_n+1)$ are non-sparse, the argument was somewhat specific to that polynomial and fails to obtain an analogous result for $(x_1+1)\cdots (x_n+1)+1$.  Further, while that argument can show for example that all multiples of the $n\times n$ determinant or permanent require sparsity $\ge \exp(\Omega(n))$, this is the best sparsity lower bound obtainable for these polynomials with this method.\footnote{Specifically, as the determinant and permanent are degree $n$ multilinear polynomials, and thus so are their translations, their monomials always involve $\le n$ variables so no sparsity bound better than $2^n$ can be obtained by using \autoref{res:transdiff_TM-ub_sparse}.} In particular, one cannot establish a sparsity lower bound of ``$n!$'' for the determinant or permanent (which would be tight) via this method.

We now give a different argument, due to Oliveira~\cite{Oliveira15b} that establishes a much more general result showing that multiples of \emph{any} multilinear polynomial have at least the sparsity of the original polynomial.  While Oliveira~\cite{Oliveira15b} gave a proof using Newton polytopes, we give a more compact proof here using induction on variables (loosely inspired by a similar result of Volkovich~\cite{Volkovich15b} on the sparsity of factors of multi-quadratic polynomials).

\begin{proposition}[Oliveira~\cite{Oliveira15b}]\label{res:lbs-mult:sparse:newton}
	Let $f(\vx)\in\F[x_1,\ldots,x_n]$ be a nonzero multilinear polynomial with sparsity exactly $s$.  Then any nonzero multiple of $f$ has sparsity $\ge s$.
\end{proposition}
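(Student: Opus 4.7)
The plan is to prove the statement by induction on the number of variables $n$. The base case $n=0$ is trivial: $f$ is a nonzero scalar, so $s=1$, and any nonzero multiple (also a scalar) has sparsity $1$.

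For the inductive step, I would exploit the multilinearity of $f$ by splitting off the variable $x_n$: write
\[
f(\vx) = x_n \cdot f_1(\vx_{<n}) + f_0(\vx_{<n})\;,
\]
where $f_0, f_1 \in \F[x_1,\ldots,x_{n-1}]$ are multilinear and $|\supp(f)| = |\supp(f_0)| + |\supp(f_1)|$. Given a nonzero multiple $gf$, decompose $g$ by powers of $x_n$ as
\[
g(\vx) = \sum_{i=k}^{\ell} x_n^i \cdot g_i(\vx_{<n})\;,
\]
where $k \le \ell$ are the smallest and largest exponents of $x_n$ for which the corresponding $g_i \in \F[\vx_{<n}]$ is nonzero. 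Note that $g$ need not be multilinear, but each $g_i$ has one fewer variable. Expanding yields
\[
gf = \sum_{i=k}^{\ell} x_n^i \cdot g_i f_0 \;+\; \sum_{i=k}^{\ell} x_n^{i+1} \cdot g_i f_1\;,
\]
so the coefficient of the lowest surviving power $x_n^k$ is exactly $g_k f_0$, and the coefficient of the highest surviving power $x_n^{\ell+1}$ is exactly $g_\ell f_1$.

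The key observation is that since $k \le \ell < \ell+1$, these two extreme $x_n$-layers are disjoint, so their contributions to $|\supp(gf)|$ add. In the generic case $f_0,f_1 \ne 0$, both $g_k f_0$ and $g_\ell f_1$ are nonzero multiples (of $f_0$ and $f_1$ respectively) living in $\F[\vx_{<n}]$, so the inductive hypothesis applied in $n-1$ variables gives $|\supp(g_k f_0)| \ge |\supp(f_0)|$ and $|\supp(g_\ell f_1)| \ge |\supp(f_1)|$, summing to at least $s$. The degenerate cases where $f_0=0$ or $f_1=0$ are handled separately but are even easier, since then $f$ already lives in $n-1$ variables and one applies the inductive hypothesis to $g_i f$ within any single $x_n$-layer where $g_i\ne 0$.

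There isn't really a hard obstacle here, but the subtlety to pin down is why the two extreme-layer coefficients of $gf$ localize cleanly to $g_k f_0$ and $g_\ell f_1$; this is precisely where the multilinearity of $f$ (ensuring $\deg_{x_n} f \le 1$) is used, for otherwise the lowest $x_n$-layer of $gf$ could involve both $f_0$ and $f_1$ and the induction would fail. The choice of $k$ and $\ell$ as the extreme $x_n$-exponents of $g$ also guarantees that $g_k$ and $g_\ell$ are themselves nonzero, which is what allows the inductive hypothesis to fire on the products $g_k f_0$ and $g_\ell f_1$.
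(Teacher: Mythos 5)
Your proof is correct and follows essentially the same route as the paper's: induction on the number of variables, writing $f=x_nf_1+f_0$, isolating the lowest and highest $x_n$-layers of $g\cdot f$ (with coefficients $g_kf_0$ and $g_\ell f_1$), and adding the two disjoint contributions via the inductive hypothesis. The only cosmetic difference is that you treat $f_0=0$ or $f_1=0$ as a separate case, whereas the paper absorbs it by noting the corresponding bound is trivially $\ge s_i=0$; both handlings are fine.
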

\begin{proof}
	By induction on variables.

	\uline{$n=0$:} Then $f$ is a constant, so that $s=1$ as $f\ne 0$. All nonzero multiples are nonzero polynomials so have sparsity $\ge 1$.

	\uline{$n\ge 1$:} Partition the variables $\vx=(\vy,z)$, so that $f(\vy,z)=f_1(\vy)z+f_0(\vy)$, where $f_i(\vy)$ has sparsity $s_i$ and $s=s_1+s_0$.  Consider any nonzero $g(\vy,z)=\sum_{i=d_0}^{d_1} g_i(\vy)z^i$ with $g_{d_0}(\vy),g_{d_1}(\vy)\ne 0$ (possibly with $d_0=d_1$). Then
	\begin{align*}
		g(\vy,z) f(\vy,z)
		&=\Big(f_1(\vy)z+f_0(\vy)\Big)\cdot \left(\sum_{i=d_0}^{d_1} g_i(\vy)z^i\right)\\
		&=f_1(\vy)g_{d_1}(\vy)z^{d_1+1}+\left[\sum_{d_0<i\le d_1} \Big(f_1(\vy)g_{i-1}(\vy)+f_0(\vy)g_i(\vy)\Big)z^i\right]+f_0(\vy)g_{d_0}(\vy)z^{d_0}
		\;.
	\end{align*}
	By partitioning this sum by powers of $z$ so that there is no cancellation, and then discarding the middle terms,
	\begin{align*}
		\left|\supp\Big(g(\vy,z) f(\vy,z)\Big)\right|
		&\ge\left|\supp\Big(f_1(\vy)g_{d_1}(\vy)\Big)\right|+\left|\supp\Big(f_0(\vy)g_{d_0}(\vy)\Big)\right|
		\intertext{so that appealing to the induction hypothesis, as $f_0$ and $f_1$ are multilinear polynomials of sparsity $s_0$ and $s_1$ respectively,}
		&\ge s_1+s_0=s
		\;.
		\qedhere
	\end{align*}
\end{proof}

We note that multilinearity is essential in the above lemma, even for univariates.  This is seen by noting that the 2-sparse polynomial $x^n-1$ is a multiple of $x^{n-1}+\cdots+x+1$. 

Thus, the above not only gives a different proof of the non-sparsity of multiples of $\prod_i (x_i+1)$ (\autoref{res:lbs-mult:sparse-LM}), but also establishes that nonzero multiples of $\prod_i (x_i+1)+1$ are $\ge 2^n$ sparse, and nonzero multiples of the determinant or permanent are $n!$ sparse, which is tight.  Note further that this lower bound proof is ``monotone'' in that it applies to any polynomial with the same support, whereas the proof of \autoref{res:lbs-mult:sparse-LM} is seemingly not monotone as seen by contrasting $\prod_i (x_i+1)$ and $\prod_i (x_i+1)+1$.

\subsection{Lower Bounds for Multiples by Leading/Trailing Diagonals}

In the previous sections we obtained polynomials with hard multiples for various circuit classes by appealing to the fact that lower bounds for these classes can be reduced to studying the number of variables in leading or trailing monomials.  Unfortunately this approach is restricted to circuit classes where monomials (or translations of monomials) are hard to compute, which in particular rules out this approach for roABPs.  Thus, to develop polynomials with hard multiples for roABPs we need to develop a different notion of a ``leading part'' of a polynomial.  In this section, we define such a notion called a \emph{leading diagonal}, establish its basic properties, and obtain the desired polynomials with hard multiples.  The ideas of this section are a cleaner version of the techniques used in the PIT algorithm of Forbes and Shpilka~\cite{ForbesShpilka12} for commutative roABPs.

\subsubsection{Leading and Trailing Diagonals}

We begin with the definition of a leading diagonal, which is a generalization of a leading monomial.

\begin{definition}
	Let $f\in\F[x_1,\ldots,x_n,y_1,\ldots,y_n]$ be nonzero.  The \demph{leading diagonal of $f$}, denoted $\LD(f)$, is the leading coefficient of $f(\vx\circ\vz,\vy\circ\vz)$ when this polynomial is considered in the ring $\F[\vx,\vy][z_1,\ldots,z_n]$, and where $\vx\circ\vz$ denotes the Hadamard product $(x_1z_1,\ldots,x_nz_n)$.  The \demph{trailing diagonal of $f$} is defined analogously. The zero polynomial has no leading or trailing diagonal.
\end{definition}

As this notion has not explicitly appeared prior in the literature, we now establish several straightforward properties. The first is that extremal diagonals are homomorphic with respect to multiplication.

\begin{lemma}\label{res:diagonals:mult-hom}
	Let $f,g\in\F[x_1,\ldots,x_n,y_1,\ldots,y_n]$ be nonzero.  Then $\LD(fg)=\LD(f)\LD(g)$ and $\TD(fg)=\TD(f)\TD(g)$.
\end{lemma}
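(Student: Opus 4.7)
The plan is to reduce this homomorphism statement to the already established multiplicativity of leading (resp.\ trailing) monomials from \autoref{res:hom_LM-TM_mult}, applied not over $\F$ but over the integral domain $\F[\vx,\vy]$, and then to use the fact that the ``diagonal substitution'' $\vx \mapsto \vx\circ\vz$, $\vy\mapsto \vy\circ\vz$ is an injective ring homomorphism $\F[\vx,\vy]\hookrightarrow \F[\vx,\vy,\vz]$. I will only write up the leading-diagonal case, since the trailing-diagonal case is entirely symmetric.

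First I would observe that the diagonal substitution $\sigma:\F[\vx,\vy]\to \F[\vx,\vy][\vz]$ sending $x_i\mapsto x_iz_i$ and $y_i\mapsto y_iz_i$ is a ring homomorphism, and is injective because the composition $\F[\vx,\vy]\xrightarrow{\sigma}\F[\vx,\vy][\vz]\xrightarrow{\vz\leftarrow\vno}\F[\vx,\vy]$ is the identity. In particular, since $f,g$ are nonzero, so are $\sigma(f)$ and $\sigma(g)$, and $\sigma(fg)=\sigma(f)\sigma(g)$. By definition, $\LD(f)$ is the leading $\vz$-coefficient of $\sigma(f)$ viewed in $\F[\vx,\vy][\vz]$, and similarly for $g$ and $fg$.

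Next I would verify that the proof of \autoref{res:hom_LM-TM_mult} goes through verbatim in the ring $R[\vz]$ for any integral domain $R$ — the only property of $\F$ used in that proof is that the product of two nonzero scalars is nonzero, which holds in $R=\F[\vx,\vy]$ since $\F[\vx,\vy]$ is an integral domain. Consequently, for any two nonzero elements $p,q\in\F[\vx,\vy][\vz]$, we have $\LM_\vz(pq)=\LM_\vz(p)\LM_\vz(q)$ and $\LC_\vz(pq)=\LC_\vz(p)\LC_\vz(q)$, where the leading monomial and leading coefficient are taken with respect to the fixed monomial order on $\vz$ (and the coefficients lie in $\F[\vx,\vy]$).

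Applying this to $p=\sigma(f)$ and $q=\sigma(g)$, together with $\sigma(fg)=\sigma(f)\sigma(g)$, we obtain
\[
\LD(fg)=\LC_\vz(\sigma(fg))=\LC_\vz(\sigma(f))\cdot \LC_\vz(\sigma(g))=\LD(f)\cdot \LD(g)\;,
\]
as desired. The trailing-diagonal case is identical, replacing leading monomials/coefficients with trailing monomials/coefficients throughout. The only conceptual step that is nonroutine is noting that the $\vz$-leading-coefficient map is multiplicative over the polynomial ring in $\vz$ with coefficients in the integral domain $\F[\vx,\vy]$; the main obstacle is therefore simply making sure one does not conflate the scalar field with the ring of coefficients when quoting \autoref{res:hom_LM-TM_mult}.
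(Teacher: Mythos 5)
Your proof is correct and follows essentially the same route as the paper's: the paper likewise observes that $\LD(f)$ is the leading coefficient $\LC_{\vx,\vy|\vz}$ of $f(\vx\circ\vz,\vy\circ\vz)$ in $\F[\vx,\vy][\vz]$ and invokes the multiplicativity of leading/trailing coefficients from \autoref{res:hom_LM-TM_mult}. You merely spell out what the paper treats as automatic, namely that the diagonal substitution is a (injective) ring homomorphism and that the cited lemma's argument works over the integral domain $\F[\vx,\vy]$ rather than just over a field.
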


\begin{proof}
	As $\LD(f)=\LC_{\vx,\vy|\vz}(f(\vx\circ\vz,\vy\circ\vz))$, where this leading coefficient is taken in the ring $\F[\vx,\vy][\vz]$, this automatically follows from the fact that leading coefficients are homomorphic with respect to multiplication (\autoref{res:hom_LM-TM_mult}).  The result for trailing diagonals is symmetric.
\end{proof}

We now show how to relate the leading monomials of the coefficient space of $f$ to the respective monomials associated to the leading diagonal of $f$.

\begin{proposition}\label{res:diagonal:tm}
	Let $f\in\F[x_1,\ldots,x_n,y_1,\ldots,y_n]$.  For any $\vb$, if $\coeff{\vx|\vy^\vb}(\LD(f))\ne 0$, then 
	\[
		\LMp{\coeff{\vx|\vy^\vb}(\LD(f))}=\LMp{\coeff{\vx|\vy^\vb}(f)}
		\;.
	\]
	The respective trailing statement also holds.
\end{proposition}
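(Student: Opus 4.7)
The plan is to expand $f$ in its monomial basis and compute both sides directly, using the translation-invariance of the monomial order as the one non-trivial ingredient. I focus on the leading case; the trailing case is symmetric.

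Write $f(\vx,\vy)=\sum_{\va,\vb} c_{\va,\vb}\vx^\va\vy^\vb$. Then substituting and grouping by the power of $\vz$,
\[
f(\vx\circ\vz,\vy\circ\vz)=\sum_{\vd}\left(\sum_{\va+\vb=\vd} c_{\va,\vb}\vx^\va\vy^\vb\right)\vz^{\vd}
\]
in $\F[\vx,\vy][\vz]$. Let $\vd^\star$ be the largest $\vd$ (with respect to the fixed monomial order on $\N^n$) for which the inner sum is a nonzero element of $\F[\vx,\vy]$. By definition of the leading diagonal,
\[
\LD(f)=\sum_{\va+\vb=\vd^\star} c_{\va,\vb}\vx^\va\vy^\vb.
\]

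Next, I extract the $\vy^\vb$-coefficient of $\LD(f)$. Since the constraint $\va+\vb=\vd^\star$ forces $\va=\vd^\star-\vb$,
\[
\coeff{\vx\mid \vy^\vb}(\LD(f))=\begin{cases} c_{\vd^\star-\vb,\vb}\,\vx^{\vd^\star-\vb} & \vb\le\vd^\star,\\ 0 & \text{else.}\end{cases}
\]
In particular, when this is nonzero it is a single monomial, so its leading monomial is $\vx^{\vd^\star-\vb}$. It therefore remains to show that $\LM(\coeff{\vx\mid\vy^\vb}(f))=\vx^{\vd^\star-\vb}$ as well. We have $\coeff{\vx\mid\vy^\vb}(f)=\sum_{\va}c_{\va,\vb}\vx^\va$, and we already know the term $\va=\vd^\star-\vb$ appears with nonzero coefficient by hypothesis.

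The key and only subtle step is to rule out any $\va'$ with $c_{\va',\vb}\neq 0$ and $\vx^{\va'}\succ \vx^{\vd^\star-\vb}$. By translation-invariance of the monomial order (applied with shift $\vb$), such an $\va'$ would satisfy $\vz^{\va'+\vb}\succ \vz^{\vd^\star}$; but then $\vd=\va'+\vb$ would give a $\vz$-monomial strictly larger than $\vd^\star$ whose coefficient in $f(\vx\circ\vz,\vy\circ\vz)$ is nonzero (it contains $c_{\va',\vb}\vx^{\va'}\vy^\vb$, which cannot be cancelled by any other term since all pairs summing to $\va'+\vb$ contribute monomials in distinct $(\vx,\vy)$-bidegrees). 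This contradicts the maximality of $\vd^\star$. Thus $\vx^{\vd^\star-\vb}$ is indeed the leading $\vx$-monomial of $\coeff{\vx\mid\vy^\vb}(f)$, completing the proof.

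I expect no real obstacle beyond being careful with the distinction between the monomial order on $\vz$-monomials (used to define $\LD$) and on $\vx$-monomials (used to define $\LM$): both are instances of the same fixed order on $\N^n$, and the axiom $\vx^\va\prec\vx^\vb\Rightarrow\vx^{\va+\vc}\prec\vx^{\vb+\vc}$ is precisely what makes them interchangeable under translation by $\vb$.
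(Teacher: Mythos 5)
Your proof is correct and follows essentially the same route as the paper's: expand $f$ in the monomial basis, identify $\LD(f)$ as the slice $\va+\vb=\vd^\star$, observe that $\coeff{\vx|\vy^\vb}(\LD(f))$ is the single monomial $\vx^{\vd^\star-\vb}$, and then use translation-invariance of the monomial order together with the fact that distinct bidegrees cannot cancel to conclude that no larger $\vx$-monomial survives in $\coeff{\vx|\vy^\vb}(f)$. The only cosmetic difference is that you phrase the last step as a contradiction with the maximality of $\vd^\star$, whereas the paper states directly that $\alpha_{\va,\vb}=0$ for all $\va\succ\vc_0-\vb$; the content is identical.
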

\begin{proof}
	We prove the leading statement, the trailing version is symmetric.
	Let $f=\sum_{\va,\vb} \alpha_{\va,\vb}\vx^\va\vy^\vb$.  We can then expand $f(\vx\circ\vz,\vy\circ\vz)$ as follows.
	\begin{align*}
		f(\vx\circ\vz,\vy\circ\vz)
		&=\textstyle\sum_{\vc}\left(\sum_{\va+\vb=\vc} \alpha_{\va,\vb}\vx^\va\vy^\vb\right)\vz^\vc\\
		\intertext{choose $\vc_0$ so that $\LC_{\vx,\vy|\vz}(f)=\coeff{\vx,\vy|\vz^{\vc_0}}(f)$, we get that}
		&=\textstyle\left(\sum_{\va+\vb=\vc_0} \alpha_{\va,\vb}\vx^\va\vy^\vb\right)\vz^{\vc_0}+\sum_{\vc\prec \vc_0}\left(\sum_{\va+\vb=\vc} \alpha_{\va,\vb}\vx^\va\vy^\vb\right)\vz^\vc
		\;,
	\end{align*}
	where $\LD(f)=\sum_{\va+\vb=\vc_0} \alpha_{\va,\vb}\vx^\va\vy^\vb$ and $\sum_{\va+\vb=\vc} \alpha_{\va,\vb}\vx^\va\vy^\vb=0$ for $\vc\succ\vc_0$.  In particular, this means that for any $\vb$ we have that $\alpha_{\va,\vb}=0$ for $\va\succ\vc_0-\vb$.

	Thus we have that
	\begin{align*}
		\LMp{\coeff{\vx|\vy^\vb}(\LD(f))}
		&=\textstyle\LMp{\coeff{\vx|\vy^\vb}\left(\sum_{\va+\vb=\vc_0} \alpha_{\va,\vb}\vx^\va\vy^\vb\right)}\\
		&=\LMp{\alpha_{\vc_0-\vb,\vb}\vx^{\vc_0-\vb}}\\
		&=\vx^{\vc_0-\vb}
		\;,
	\end{align*}
	as we assume this leading monomial exists, which is equivalent here to $\alpha_{\vc_0-\vb,\vb}\ne 0$.

	In comparison,
	\begin{align*}
		\LMp{\coeff{\vx|\vy^\vb}(f)}
		&=\textstyle\LMp{\coeff{\vx|\vy^\vb}\left(\sum_{\va,\vb} \alpha_{\va,\vb}\vx^\va\vy^\vb\right)}\\
		&=\textstyle\LMp{\sum_{\va} \alpha_{\va,\vb}\vx^\va}\\
		&=\textstyle\LMp{\sum_{\va\succ \vc_0-\vb} \alpha_{\va,\vb}\vx^\va + \alpha_{\vc_0-\vb,\vb}\vx^{\vc_0-\vb}+\sum_{\va\prec \vc_0-\vb} \alpha_{\va,\vb}\vx^\va}
		\intertext{as $\alpha_{\va,\vb}=0$ for $\va\succ\vc_0-\vb$,}
		&=\textstyle\LMp{\alpha_{\vc_0-\vb,\vb}\vx^{\vc_0-\vb}+\sum_{\va\prec \vc_0-\vb} \alpha_{\va,\vb}\vx^\va}\\
		&=\vx^{\vc_0-\vb}
		\;,
	\end{align*}
	where in the last step we again used that $\alpha_{\vc_0-\vb,\vb}\ne0$.  This establishes the desired equality.
\end{proof}

We now relate the extremal monomials of the coefficient space of $f$ to the monomials of the coefficient space of the extremal diagonals of $f$.

\begin{corollary}\label{res:diagonal:lm-coeffs}
	Let $f\in\F[x_1,\ldots,x_n,y_1,\ldots,y_n]$.  Then
	\[
		\LM(\coeffs{\vx|\vy}(f))\supseteq \LM(\coeffs{\vx|\vy}(\LD(f)))
		\;.
	\]
	The respective trailing statement also holds.
\end{corollary}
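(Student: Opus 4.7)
The plan is to prove the corollary as an immediate consequence of the preceding Proposition \ref{res:diagonal:tm}, which already establishes the pointwise matching of leading monomials of corresponding $\vy$-coefficients.

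Specifically, I would take an arbitrary element $\vx^\va \in \LM(\coeffs{\vx|\vy}(\LD(f)))$. By definition of $\LM(\coeffs{\vx|\vy}(\cdot))$, there exists some $\vb$ such that $\coeff{\vx|\vy^\vb}(\LD(f)) \neq 0$ and $\LMp{\coeff{\vx|\vy^\vb}(\LD(f))} = \vx^\va$. Then Proposition \ref{res:diagonal:tm} applies (its hypothesis is exactly that $\coeff{\vx|\vy^\vb}(\LD(f))$ is nonzero), yielding $\LMp{\coeff{\vx|\vy^\vb}(f)} = \vx^\va$. To be sure this membership makes sense, I would note that the equation in Proposition \ref{res:diagonal:tm} implicitly asserts $\coeff{\vx|\vy^\vb}(f) \neq 0$ as well (since the left-hand side leading monomial exists and equals the right-hand side); indeed, inspecting the proof of that proposition, the monomial $\alpha_{\vc_0-\vb,\vb}\vx^{\vc_0-\vb}$ appears with nonzero coefficient in $\coeff{\vx|\vy^\vb}(f)$. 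Hence $\vx^\va \in \LM(\coeffs{\vx|\vy}(f))$, establishing the containment.

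The trailing version follows by the symmetric argument, using the trailing half of Proposition \ref{res:diagonal:tm}.

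There is essentially no obstacle here; the substantive content lives in Proposition \ref{res:diagonal:tm}, and this corollary is just a direct repackaging in set-theoretic form. The only subtlety worth flagging in the write-up is confirming that the nonvanishing hypothesis of Proposition \ref{res:diagonal:tm} is automatic: having a leading monomial in $\LM(\coeffs{\vx|\vy}(\LD(f)))$ by definition means the corresponding coefficient is nonzero, so the proposition applies without additional assumptions.
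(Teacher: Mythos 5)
Your proof is correct and follows essentially the same route as the paper: both arguments apply \autoref{res:diagonal:tm} to each nonzero coefficient $\coeff{\vx|\vy^\vb}(\LD(f))$ to identify its leading monomial with that of $\coeff{\vx|\vy^\vb}(f)$, and then observe the resulting set is contained in $\LM(\coeffs{\vx|\vy}(f))$. Your extra remark that the nonvanishing of $\coeff{\vx|\vy^\vb}(f)$ is implicit in the proposition's conclusion is a reasonable point of care but changes nothing substantive.
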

\begin{proof}
	This follows as $\LM(\coeffs{\vx|\vy}(\LD(f)))$ is equal to 
	\[
		\left\{\left.\LM\left(\coeff{\vx|\vy^\vb}\big(\LD(f)\big)\right) \, \right|\,  \coeff{\vx|\vy^\vb}\big(\LD(f)\big)\ne 0\right\}
		\;,
	\]	
	but by \autoref{res:diagonal:tm} this set equals 
	\[
		\left\{\left.\LM\left(\coeff{\vx|\vy^\vb}(f)\right) \, \right|\,  \coeff{\vx|\vy^\vb}\big(\LD(f)\big)\ne 0\right\}
		\;,
	\]	
	which is clearly contained in $\LM(\coeffs{\vx|\vy}(f))$.
\end{proof}

We now observe that the number of leading monomials of the coefficient space of a leading diagonal is equal to its sparsity.

\begin{lemma}\label{lem:LM-by-LD}
	Let $f\in\F[x_1,\ldots,x_n,y_1,\ldots,y_n]$. For a polynomial $g$, let $\ellzero{g}$ denotes its sparsity. Then
	\[
		\left|\LM\left(\coeffs{\vx|\vy}\big(\LD(f)\big)\right)\right|
		=\ellzero{\LD(f)}
		\;.
	\]
	The respective trailing statement also holds.
\end{lemma}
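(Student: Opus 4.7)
The plan is to unpack the definition of $\LD(f)$ and directly count. Recall from the proof of \autoref{res:diagonal:tm} that if we write $f=\sum_{\va,\vb}\alpha_{\va,\vb}\vx^\va\vy^\vb$ and let $\vc_0$ be such that $\LC_{\vx,\vy|\vz}(f(\vx\circ\vz,\vy\circ\vz))=\coeff{\vx,\vy|\vz^{\vc_0}}(f(\vx\circ\vz,\vy\circ\vz))$, then
\[
\LD(f)=\sum_{\va+\vb=\vc_0}\alpha_{\va,\vb}\vx^\va\vy^\vb.
\]
So $\LD(f)$ is a polynomial supported \emph{only} on the ``antidiagonal'' $\{(\va,\vb):\va+\vb=\vc_0\}$.

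The key observation is that this support structure forces each coefficient in $\coeffs{\vx|\vy}(\LD(f))$ to be a single monomial. First I would note that for each $\vb$,
\[
\coeff{\vx|\vy^\vb}(\LD(f))=\alpha_{\vc_0-\vb,\vb}\,\vx^{\vc_0-\vb},
\]
which is zero unless $\alpha_{\vc_0-\vb,\vb}\ne 0$, and otherwise is a single nonzero monomial. Hence its leading monomial (which in this case is its only monomial) is $\vx^{\vc_0-\vb}$. Consequently,
\[
\LM\left(\coeffs{\vx|\vy}(\LD(f))\right)=\left\{\vx^{\vc_0-\vb}\;:\;\alpha_{\vc_0-\vb,\vb}\ne 0\right\}.
\]

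Next, I would observe that the map $\vb\mapsto\vx^{\vc_0-\vb}$ is injective on the set of $\vb$ with $\alpha_{\vc_0-\vb,\vb}\ne 0$: distinct $\vb$ give distinct exponent vectors $\vc_0-\vb$, hence distinct monomials. Therefore the cardinality of the above set of leading monomials is exactly the number of pairs $(\va,\vb)$ with $\va+\vb=\vc_0$ and $\alpha_{\va,\vb}\ne 0$, which is precisely the sparsity $\ellzero{\LD(f)}$. This gives the desired equality, and the trailing statement follows by the symmetric argument (replacing ``leading'' with ``trailing'' throughout and using the trailing coefficient in $\F[\vx,\vy][\vz]$). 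There is no serious obstacle here; the only thing to keep in mind is that the antidiagonal support of $\LD(f)$ is what collapses each coefficient in $\vy$ to a single $\vx$-monomial, making the count trivial.
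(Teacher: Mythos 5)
Your proposal is correct and follows essentially the same route as the paper: unpack $\LD(f)$ as the antidiagonal $\sum_{\vb}\alpha_{\vc_0-\vb,\vb}\vx^{\vc_0-\vb}\vy^\vb$, observe each coefficient $\coeff{\vx|\vy^\vb}(\LD(f))$ is the single monomial $\alpha_{\vc_0-\vb,\vb}\vx^{\vc_0-\vb}$, and count using distinctness of the $\vx^{\vc_0-\vb}$ across $\vb$. The only cosmetic difference is that the paper phrases the count via the dimension of the coefficient space (and separately notes the vacuous case $f=0$), while you count the leading monomials directly, which matches the statement just as well.
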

\begin{proof}
	We prove the claim for the leading diagonal, the trailing statement is symmetric. Note that the claim is a vacuous ``0=0'' if $f$ is zero.  For nonzero $f$, express it as $f=\sum_{\va,\vb} \alpha_{\va,\vb}\vx^\va\vy^\vb$ so that $\LD(f)=\sum_{\va+\vb=\vc_0}\alpha_{\va,\vb}\vx^\va\vy^\vb=\sum_{\vb}\alpha_{\vc_0-\vb,\vb}\vx^{\vc_0-\vb}\vy^\vb$ for some $\vc_0\in\N^n$.  Then $\coeff{\vx|\vy^\vb}(\LD(f))=\alpha_{\vc_0-\vb,\vb}\vx^{\vc_0-\vb}$.  As the monomials $\vx^{\vc_0-\vb}$ are distinct and hence linearly independent for distinct $\vb$, it follows that $\dim\coeffs{\vx|\vy}(\LD(f))=|\{\vb| \alpha_{\vc_0-\vb,\vb}\ne 0\}|$, which is equal the sparsity $\ellzero{\LD(f)}$.
\end{proof}

Finally, we now lower bound the coefficient dimension of a polynomial by the sparsity of its extremal diagonals.

\begin{corollarywp}\label{res:diagonal:coeff-dim-lb}
	Let $f\in\F[x_1,\ldots,x_n,y_1,\ldots,y_n]$.  Then
	\[
		\dim\coeffs{\vx|\vy}(f)\ge \ellzero{\LD(f)},\ellzero{\TD(f)}
		\;,
	\]
	where for a polynomial $g$, $\ellzero{g}$ denotes its sparsity.
\end{corollarywp}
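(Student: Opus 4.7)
The plan is to chain together the three preceding results that characterize coefficient dimension via leading/trailing monomials and relate them to the extremal diagonals. The statement is essentially a corollary bookkeeping exercise, so I expect no real obstacle -- the hard work was done in establishing \autoref{res:diagonal:lm-coeffs} and \autoref{lem:LM-by-LD}.

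First, I would reduce dimension to a count of leading monomials. By \autoref{res:dim-eq-num-TM-spn} applied to the set $S = \coeffs{\vx|\vy}(f)$, we have
\[
  \dim\coeffs{\vx|\vy}(f) \;=\; \dim \spn \coeffs{\vx|\vy}(f) \;\ge\; \left|\LM(\coeffs{\vx|\vy}(f))\right|\,.
\]
The same holds with $\TM$ in place of $\LM$, which will handle the trailing diagonal case symmetrically.

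Second, I would pass to the leading diagonal of $f$ using the containment from \autoref{res:diagonal:lm-coeffs},
\[
  \LM(\coeffs{\vx|\vy}(f)) \;\supseteq\; \LM(\coeffs{\vx|\vy}(\LD(f)))\,,
\]
so the above count only drops (or stays the same) when restricted to leading monomials coming from $\LD(f)$.

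Finally, I would invoke \autoref{lem:LM-by-LD}, which identifies $|\LM(\coeffs{\vx|\vy}(\LD(f)))|$ with the sparsity $\ellzero{\LD(f)}$. Concatenating the three inequalities gives $\dim\coeffs{\vx|\vy}(f) \ge \ellzero{\LD(f)}$, and the trailing version follows by running the identical argument with $\TM$ and $\TD$ in place of $\LM$ and $\LD$ (all three ingredient lemmas are stated with symmetric trailing versions). No separate handling of the zero case is needed since $\LD(f),\TD(f)$ are undefined only when $f=0$, in which case $\coeffs{\vx|\vy}(f)=\{0\}$ and both sides are zero by convention.
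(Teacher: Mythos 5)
Your proof is correct and follows exactly the paper's argument: bound the dimension below by the number of leading monomials (\autoref{res:dim-eq-num-TM-spn}), restrict to the leading monomials coming from the leading diagonal (\autoref{res:diagonal:lm-coeffs}), and identify that count with the sparsity of $\LD(f)$ (\autoref{lem:LM-by-LD}), with the trailing case handled symmetrically. No gaps to report.
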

\begin{proof}
	We give the proof for the leading diagonal, the trailing diagonal is symmetric.  By the above,
	\[
		\dim\coeffs{\vx|\vy}(f)
		\ge \left|\LM\left(\coeffs{\vx|\vy}(f)\right)\right|
		\ge \left|\LM\left(\coeffs{\vx|\vy}\big(\LD(f)\big)\right)\right|
		=  \ellzero{\LD(f)}
		\;,
	\]
	where we passed from span to number of leading monomials (\autoref{res:dim-eq-num-TM-spn}), and then passed to the leading monomials of the leading diagonal (\autoref{res:diagonal:lm-coeffs}), and then passed to sparsity of the leading diagonal (\autoref{lem:LM-by-LD}).
\end{proof}

\subsubsection{Lower Bounds for Multiples for Read-Once and Read-Twice ABPs}

Having developed the theory of leading diagonals in the previous section, we now turn to using this theory to obtain explicit polynomials whose nonzero multiples all require large roABPs. We also generalize this to read-$O(1)$ oblivious ABPs, but only state the results for $k=2$ as this has a natural application to proof complexity (\autoref{sec:ips-mult}).  As the restricted computations considered above (\sumpowsum formulas and sparse polynomials) have small roABPs, the hard polynomials in this section will also have multiples requiring large complexity in these models as well and thus qualitatively reprove some of the above results. However, we included the previous sections as the hard polynomials there are simpler (being monomials or translations of monomials), and more importantly we will need those results for the proofs below.

The proofs will use the characterization of roABPs by their coefficient dimension (\autoref{res:roABP-width_eq_dim-coeffs}), the lower bound for coefficient dimension in terms of the sparsity of the extremal diagonals (\autoref{res:diagonal:coeff-dim-lb}), and polynomials whose multiples are all non-sparse (\autoref{res:lbs-mult:sparse-LM}).

\begin{proposition}\label{prop:good-f-roabp}
	Let $f(\vx,\vy)\eqdef\prod_{i=1}^n (x_i+y_i+\alpha_i)\in\F[x_1,\ldots,x_n,y_1,\ldots,y_n]$, for $\alpha_i\in\F$.  Then for any $0\ne g\in\F[\vx,\vy]$,
	\[
		\dim\coeffs{\vx|\vy}(g\cdot f)\ge 2^n
		\;.
	\]
	In particular, all nonzero multiples of $f$ require width at least $2^n$ to be computed by an roABP in any variable order where $\vx\prec \vy$.
\end{proposition}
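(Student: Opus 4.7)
The plan is to apply the leading diagonal machinery developed earlier in the section. By \autoref{res:diagonal:coeff-dim-lb}, it suffices to lower bound the sparsity of $\LD(g\cdot f)$, and by \autoref{res:diagonals:mult-hom} we have $\LD(g\cdot f)=\LD(g)\cdot \LD(f)$.

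First I would compute $\LD(f)$ explicitly. Since the leading diagonal is multiplicative (\autoref{res:diagonals:mult-hom}),
\[
\LD(f)=\prod_{i=1}^n \LD(x_i+y_i+\alpha_i).
\]
For each factor, substituting $x_i\mapsto x_iz_i$ and $y_i\mapsto y_iz_i$ yields $(x_i+y_i)z_i+\alpha_i$ in $\F[\vx,\vy][\vz]$; the leading coefficient in $\vz$ is $x_i+y_i$, since $z_i\succ 1$ in any monomial order. Hence $\LD(f)=\prod_{i=1}^n(x_i+y_i)$.

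Next, for any nonzero $g$, the leading diagonal $\LD(g)$ is nonzero (as a leading coefficient of a nonzero polynomial), so $\LD(g)\cdot\prod_i(x_i+y_i)$ is a nonzero multiple of $\prod_i(x_i+y_i)$. By \autoref{res:lbs-mult:sparse-LM} applied to this multiple, its sparsity is at least $2^n$, so
\[
\ellzero{\LD(g\cdot f)}=\ellzero{\LD(g)\cdot\textstyle\prod_i(x_i+y_i)}\ge 2^n.
\]
Invoking \autoref{res:diagonal:coeff-dim-lb} then yields $\dim\coeffs{\vx|\vy}(g\cdot f)\ge 2^n$.

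Finally, for the roABP statement, in any variable order where every $x_i$ precedes every $y_j$, there is a cut separating $\vx$ from $\vy$, so \autoref{res:roABP-width_eq_dim-coeffs} gives the roABP-width lower bound of $2^n$. There is no real obstacle here---the proof is essentially an application of the tools assembled in the preceding subsection, and the only subtlety is verifying that the leading diagonal of each linear factor $x_i+y_i+\alpha_i$ correctly strips off the constant term (which follows immediately from $z_i\succ 1$).
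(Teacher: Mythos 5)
Your proof is correct and follows essentially the same route as the paper: compute $\LD(f)=\prod_i(x_i+y_i)$ via multiplicativity of the leading diagonal, invoke the sparsity lower bound for nonzero multiples of $\prod_i(x_i+y_i)$ (\autoref{res:lbs-mult:sparse-LM}), and pass to coefficient dimension and roABP width via \autoref{res:diagonal:coeff-dim-lb} and \autoref{res:roABP-width_eq_dim-coeffs}. The only difference is that you spell out the per-factor computation of the leading diagonal and the nonvanishing of $\LD(g)$ slightly more explicitly than the paper does.
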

\begin{proof}
	Observe that leading diagonal of $f$ is insensitive to the $\alpha_i$.  That is, $\LD(x_i+y_i+\alpha_i)=x_i+y_i$, so by multiplicativity of the leading diagonal (\autoref{res:diagonals:mult-hom}) we have that $\LD(f)=\prod_i (x_i+y_i)$.  Thus, appealing to \autoref{res:diagonal:coeff-dim-lb} and \autoref{res:lbs-mult:sparse-LM},
	\begin{align*}
		\dim\coeffs{\vx|\vy}(g\cdot f)
		&\ge \ellzero{\LD(g\cdot f)}\\
		&=\ellzero{\LD(g)\cdot \LD(f)}\\
		&=\textstyle\ellzero{\LD(g)\cdot \prod_i (x_i+y_i)}\\
		&\ge 2^n
		\;.
	\end{align*}

	The claim about roABP width follows from \autoref{res:roABP-width_eq_dim-coeffs}.
\end{proof}

Note that this lower bound actually works in the ``monotone'' setting (if we replace \autoref{res:lbs-mult:sparse-LM} with the monotone \autoref{res:lbs-mult:sparse:newton}), as the result only uses the zero/nonzero pattern of the coefficients.

The above result gives lower bounds for coefficient dimension in a \emph{fixed} variable partition. We now symmetrize this construction to get lower bounds for coefficient dimension in \emph{any} variable partition.  We proceed as in \autoref{sec:lbs-fn:every-partition}, where we plant the fixed-partition lower bound into an arbitrary partition.  Note that unlike that construction, we will not need auxiliary variables here.

\begin{corollary}\label{cor:lbs-mult:every-partition}
	Let $f\in\F[x_1,\ldots,x_n]$ be defined by $f(\vx)\eqdef\prod_{i<j}(x_i+x_j+\alpha_{i,j})$ for $\alpha_{i,j}\in\F$.  Then for any partition $\vx=(\vu,\vv,\vw)$ with $m\eqdef|\vu|=|\vv|$, and any $0\ne g\in\F[\vx]$,
	\[
		\dim_{\F(\vw)}\coeffs{\vu|\vv}(g\cdot f)\ge 2^m
		\;,
	\]
	where we treat $g\cdot f$ as a polynomial in $\F(\vw)[\vu,\vv]$.  In particular, all nonzero multiples of $f$ require width $\ge 2^n$ to be computed by an roABP in any variable order.
\end{corollary}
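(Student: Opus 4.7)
The plan is to reduce the any-partition statement to the fixed-partition bound of Proposition~\ref{prop:good-f-roabp} by evaluating the ``extra'' variables $\vw$, using the general relationship between coefficient dimension over $\F(\vw)$ and over $\F$ after specialization given in Lemma~\ref{res:coeff-dim:fraction-field}. Fix the partition $\vx=(\vu,\vv,\vw)$ with $|\vu|=|\vv|=m$, and match the coordinates of $\vu$ with those of $\vv$ by pairing $u_k$ with $v_k$. Since $f=\prod_{i<j}(x_i+x_j+\alpha_{i,j})$ ranges over \emph{all} pairs, it factors as
\[
    f = f_0 \cdot h, \qquad f_0 \eqdef \prod_{k=1}^m (u_k+v_k+\alpha_{u_k,v_k}),
\]
where $h$ collects the remaining factors (those involving only $\vu$, only $\vv$, only $\vw$, or a mixture with $\vw$, or cross terms $u_i+v_j+\alpha$ with $i\ne j$). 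Crucially, $f_0$ does not involve $\vw$, so for any specialization $\vw\leftarrow\vaa$ we have $f|_{\vw\leftarrow\vaa}=f_0\cdot h|_{\vw\leftarrow\vaa}$.

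The next step is to choose $\vaa$ so that $(gh)|_{\vw\leftarrow\vaa}\ne 0$: since $gh$ is a nonzero polynomial, this holds for generic $\vaa$. If $|\F|$ is too small to guarantee such an $\vaa$, I would pass to a sufficiently large extension field $\F'\supseteq\F$; this is harmless because the coefficient matrix $C_{gf}$ of $gf$ (as a polynomial in $\F[\vw][\vu,\vv]$) has entries in $\F[\vw]$, so its rank over $\F(\vw)$ and over $\F'(\vw)$ coincide. For such an $\vaa$ the specialization $(gf)|_{\vw\leftarrow\vaa} = (gh)|_{\vw\leftarrow\vaa}\cdot f_0$ is a nonzero multiple of $f_0$, and $f_0$ is exactly of the form treated in Proposition~\ref{prop:good-f-roabp} (a product of linear forms $u_k+v_k+\beta_k$ for scalars $\beta_k=\alpha_{u_k,v_k}\in\F$). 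Hence
\[
    \dim_{\F'}\coeffs{\vu|\vv}\bigl((gf)|_{\vw\leftarrow\vaa}\bigr)\ge 2^m.
\]
Combining with Lemma~\ref{res:coeff-dim:fraction-field} yields $\dim_{\F'(\vw)}\coeffs{\vu|\vv}(gf)\ge 2^m$, which equals $\dim_{\F(\vw)}\coeffs{\vu|\vv}(gf)$ by the field-extension invariance noted above.

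For the roABP conclusion, I would invoke the characterization of roABP width by coefficient dimension (Lemma~\ref{res:roABP-width_eq_dim-coeffs}): if $gf$ is computed by a width-$r$ roABP in some variable order $x_{\pi(1)}<\cdots<x_{\pi(n)}$, then splitting this order in half gives a partition $(\vu,\vv,\vw)$ with $|\vu|=|\vv|=\lfloor n/2\rfloor$ (and $\vw$ of size $0$ or $1$), so that $r\ge \dim_{\F(\vw)}\coeffs{\vu|\vv}(gf)\ge 2^{\lfloor n/2\rfloor}=\exp(\Omega(n))$. The only real subtlety I anticipate is bookkeeping around small fields: one must either confirm that Lemma~\ref{res:coeff-dim:fraction-field} is applied with an appropriate $\vaa$ in an extension, or verify directly that the coefficient matrix argument is insensitive to field size. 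Beyond this point the reduction to Proposition~\ref{prop:good-f-roabp} is immediate, since the matching pair $u_k\leftrightarrow v_k$ is already embedded in $f$ as a product of linear factors, making the ``gadget planting'' automatic rather than something requiring auxiliary variables as in Section~\ref{sec:lbs-fn:every-partition}.
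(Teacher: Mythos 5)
Your proof is correct, and the core of it—factoring $f$ as $\prod_{k=1}^m(u_k+v_k+\beta_k)$ times the remaining factors and reducing to \autoref{prop:good-f-roabp}—is exactly the paper's decomposition. The one place you diverge is in how the $\vw$-variables are disposed of: you specialize $\vw\leftarrow\vaa$ at a generic point (passing to an extension field if $\F$ is small) and then lift the bound back to $\F(\vw)$ via \autoref{res:coeff-dim:fraction-field}. This works, and your handling of the field-size issue via rank-invariance under field extension is sound, but it is a detour: the paper instead observes that since $g$ and the cofactor $f'$ are nonzero in $\F[\vu,\vv,\vw]$ they remain nonzero in $\F(\vw)[\vu,\vv]$, so $g\cdot f$ is directly a nonzero multiple of $\prod_k(u_k+v_k+\beta_k)$ over the field $\F(\vw)$, and \autoref{prop:good-f-roabp} (whose proof is field-independent) applies there verbatim. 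This avoids both the genericity argument and the extension-field bookkeeping entirely, and is the cleaner route; your version buys nothing extra but costs nothing in correctness.
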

\begin{proof}
	We can factor $f$ into a copy of the hard polynomial from \autoref{prop:good-f-roabp}, and the rest.  That is, 
	\begin{align*}
		f(\vx)
		=\prod_{i<j} (x_i+x_j+\alpha_{i,j})
		=\prod_{i=1}^m(u_i+v_i+\beta_i)\cdot f'(\vu,\vv,\vw)
		\;,
	\end{align*}
	for some $\beta_i\in\F$ and nonzero $f'(\vu,\vv,\vw)\in\F[\vu,\vv,\vw]$.  Thus,
	\[
		g\cdot f
		=\left(g(\vu,\vv,\vw)\cdot f'(\vu,\vv,\vw)\right)\cdot \prod_{i=1}^m(u_i+v_i+\beta_i)
		\;.
	\]
	Noting that $g,f'$ are nonzero in $\F[\vu,\vv,\vw]$, they are also nonzero in $\F(\vw)[\vu,\vv]$, so that $g\cdot f$ is nonzero multiple of $\prod_{i=1}^m(u_i+v_i+\beta_i)$ in $\F(\vw)[\vu,\vv]$.  Appealing to our lower bound for (nonzero) multiples of coefficient dimension (\autoref{prop:good-f-roabp}), we have that
	\[
		\dim_{\F(\vw)}\coeffs{\vu|\vv}(g\cdot f)
		=\dim_{\F(\vw)}\coeffs{\vu|\vv}\left(g\cdot f'\cdot \prod_{i=1}^m(u_i+v_i+\beta_i)\right)
		\ge 2^m
		\;.
	\]
	The statement about roABPs follows from \autoref{res:roABP-width_eq_dim-coeffs}.
\end{proof}

We briefly remark that the above bound does not match the naive bound achieved by writing the polynomial $\prod_{i<j}(x_i+x_j+\alpha_{i,j})$ in its sparse representation, which has $2^{\Theta(n^2)}$ terms.  The gap between the lower bound ($2^{\Omega(n)}$) and the upper bound ($2^{O(n^2)}$) is explained by our use of a complete graph to embed the lower bounds of \autoref{prop:good-f-roabp} into an arbitrary partition.  As discussed after \autoref{res:lbs-fn:any-order:coeff-dim} one can use expander graphs to essentially close this gap.

We now observe that the above lower bounds for coefficient dimension suffices to obtain lower bounds for read-twice oblivious ABPs, as we can appeal to the structural result of Anderson, Forbes, Saptharishi, Shpilka and Volk~\cite{AndersonFSSV16} (\autoref{thm:read-k-eval-dim}).  This result shows that for any read-twice oblivious ABP that (after discarding some variables) there is a partition of the variables across which has small coefficient dimension, which is in contrast to the above lower bound.

\begin{corollary}\label{cor:r2abp-multiples}
	Let $f\in\F[x_1,\ldots,x_n]$ be defined by $f(\vx)\eqdef\prod_{i<j}(x_i+x_j+\alpha_{i,j})$ for $\alpha_{i,j}\in\F$.  Then for any $0\ne g\in\F[\vx]$, $g\cdot f$ requires width-$2^{\Omega(n)}$ as a read-twice oblivious ABP.
\end{corollary}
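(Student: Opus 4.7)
The plan is to combine the structural theorem for read-$k$ oblivious ABPs (\autoref{thm:read-k-eval-dim}) with the lower bound on coefficient dimension of multiples in an arbitrary partition (\autoref{cor:lbs-mult:every-partition}), specialized to $k=2$.

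Suppose that $g \cdot f$ is computed by a width-$w$ read-twice oblivious ABP. Applying \autoref{thm:read-k-eval-dim} with $k=2$ to the polynomial $g \cdot f$, we obtain a partition $\vx = (\vu, \vv, \vw)$ satisfying $|\vu|, |\vv| \ge \Omega(n)$, $|\vw| \le n/10$, and
\[
  \dim_{\F(\vw)} \coeffs{\vu|\vv}\!\bigl((g\cdot f)_\vw\bigr) \le w^{4}.
\]
In particular $(g \cdot f)_\vw$ is nonzero. The next step is to balance this partition so that we may feed it into \autoref{cor:lbs-mult:every-partition}, which requires $|\vu| = |\vv|$. Without loss of generality assume $|\vu| \ge |\vv|$, and let $\vu' \subseteq \vu$ be any subset of size $|\vv|$, placing the remaining variables of $\vu\setminus\vu'$ into an enlarged $\vw' \eqdef \vw \cup (\vu \setminus \vu')$. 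Crucially, the coefficient dimension can only decrease under this refinement, since any expression $g\cdot f = \sum_{i=1}^{D} p_i(\vu,\vw)\,q_i(\vv,\vw)$ of length $D$ over $\F(\vw)$ is also an expression of length $D$ of the form $\sum_i p_i(\vu',\vu\setminus\vu',\vw)\,q_i(\vv,\vw)$ over $\F(\vw')$. Hence
\[
  \dim_{\F(\vw')} \coeffs{\vu'|\vv}(g \cdot f) \;\le\; \dim_{\F(\vw)} \coeffs{\vu|\vv}(g \cdot f) \;\le\; w^{4}.
\]

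Now apply \autoref{cor:lbs-mult:every-partition} to the balanced partition $(\vu', \vv, \vw')$ of $\vx$, with $m \eqdef |\vu'| = |\vv| \ge \Omega(n)$. Since $g$ is a nonzero multiplier of $f = \prod_{i<j}(x_i+x_j+\alpha_{i,j})$, the corollary yields
\[
  \dim_{\F(\vw')} \coeffs{\vu'|\vv}(g \cdot f) \;\ge\; 2^{m} \;=\; 2^{\Omega(n)}.
\]
Chaining the two inequalities gives $w^{4} \ge 2^{\Omega(n)}$, and thus $w \ge 2^{\Omega(n)}$, which is the claimed lower bound.

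The only mild subtlety — and the one place where a sanity check is warranted — is the balancing step: we must verify that pushing variables from $\vu$ into the ``scalar'' block $\vw$ preserves the coefficient-dimension upper bound. This is the monotonicity argument sketched above (a low-rank decomposition over $\F(\vw)$ remains a low-rank decomposition over the larger field $\F(\vw')$). No other obstacle arises, since \autoref{cor:lbs-mult:every-partition} places no constraint on $|\vw'|$, only on the equality $|\vu'| = |\vv|$.
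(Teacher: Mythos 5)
Your proposal is correct and follows essentially the same route as the paper's proof: apply the read-$k$ structural theorem (\autoref{thm:read-k-eval-dim}) with $k=2$, balance the partition by pushing surplus variables of $\vu$ into $\vw$ (the paper cites closure/monotonicity of coefficient dimension for this step, which you instead verify directly via the low-rank decomposition characterization), and then invoke \autoref{cor:lbs-mult:every-partition} to get $w^4\ge 2^{\Omega(n)}$. Nothing is missing.
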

\begin{proof}
	Suppose that $g\cdot f$ has a read-twice oblivious ABP of width-$w$. By the lower-bound of Anderson, Forbes, Saptharishi, Shpilka and Volk~\cite{AndersonFSSV16} (\autoref{thm:read-k-eval-dim}), there exists a partition $\vx=(\vu,\vv,\vw)$ where $|\vu|,|\vv|\ge\Omega(n)$, and such that $\dim_{\F(\vw)}\coeffs{\vu|\vv}(g\cdot f)\le w^4$ (where we treat $g\cdot f$ as a polynomial in $\F(\vw)[\vu,\vv]$).  Note that we can take enforce that the partition obeys $m\eqdef|\vu|=|\vv|\ge\Omega(n)$, as we can balance $\vu$ and $\vv$ by pushing variables into $\vw$, as this cannot increase the coefficient dimension (\autoref{fact:roABP:closure}). However, appealing to our coefficient dimension bound (\autoref{cor:lbs-mult:every-partition})
	\[
		w^4\ge\dim_{\F(\vw)}\coeffs{\vu|\vv}(g\cdot f)\ge 2^m\ge 2^{\Omega(n)}
		\;,
	\]
	so that $w\ge2^{\Omega(n)}$ as desired.
\end{proof}

\section{IPS Lower Bounds via Lower Bounds for Multiples}\label{sec:ips-mult}

In this section we use the lower bounds for multiples of \autoref{sec:lbs-mult} to derive lower bounds for $\cC$-IPS proofs for various restricted algebraic circuit classes $\cC$.  The advantage of this approach over the functional lower bounds strategy of \autoref{sec:lbs-fn} is that we derive lower bounds for the general IPS system, not just its subclass linear-IPS\@.  While our equivalence (\autoref{res:h-ips_v_ips}) of $\cC$-IPS and $\cC$-\lIPS holds for any strong-enough class $\cC$, the restricted classes we consider here (depth-3 powering formulas and roABPs)\,\footnote{As in \autoref{sec:lbs-mult}, we will not treat multilinear formulas in this section as they are less natural for the techniques under consideration.  Further,  IPS lower bounds for multilinear formulas \emph{can} be obtained via functional lower bounds (\autoref{res:lbs-fn:lbs-ips:vary-order}).} are not strong enough to use \autoref{res:h-ips_v_ips} to lift the results of \autoref{sec:lbs-fn} to lower bounds for the full IPS system.  However, as discussed in the introduction, the techniques of this section can only yield lower bounds for $\cC$-IPS refutations of systems of equations which are hard to compute within $\cC$ (though our examples are computable by small (general) circuits).

We begin by first detailing the relation between IPS refutations and multiples.  We then use our lower bounds for multiples (\autoref{sec:lbs-mult}) to derive as corollaries lower bounds for \sumpowsum-IPS and roABP-IPS refutations.

\begin{lemma}\label{res:lbs-mult:strategy}
	Let $f,\vg,\vx^2-\vx\in\F[x_1,\ldots,x_n]$ be an unsatisfiable system of equations, where $\vg,\vx^2-\vx$ is satisfiable.  Let $C\in\F[\vx,y,\vz,\vw]$ be an IPS refutation of $f,\vg,\vx^2-\vx$.  Then
	\[
		1-C(\vx,0,\vg,\vx^2-\vx)
	\]
	is a nonzero multiple of $f$.
\end{lemma}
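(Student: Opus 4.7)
The plan is to decompose $C(\vx, y, \vz, \vw)$ as a polynomial in the single variable $y$ over the ring $\F[\vx, \vz, \vw]$. Writing
\[
C(\vx, y, \vz, \vw) = C(\vx, 0, \vz, \vw) + y \cdot \widetilde{C}(\vx, y, \vz, \vw)
\]
for some $\widetilde{C} \in \F[\vx, y, \vz, \vw]$ (this is just isolating the constant-in-$y$ term, which equals $C(\vx, 0, \vz, \vw)$), and then substituting $y \leftarrow f$, $\vz \leftarrow \vg$, $\vw \leftarrow \vx^2 - \vx$, the defining property $C(\vx, f, \vg, \vx^2 - \vx) = 1$ of the IPS refutation immediately yields
\[
1 - C(\vx, 0, \vg, \vx^2 - \vx) = f \cdot \widetilde{C}(\vx, f, \vg, \vx^2 - \vx),
\]
exhibiting $1 - C(\vx, 0, \vg, \vx^2 - \vx)$ as a multiple of $f$ in $\F[\vx]$.

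The only nontrivial step is to show that this multiple is nonzero, and this is where the hypothesis that $\vg, \vx^2-\vx$ is satisfiable enters. I would pick $\vaa \in \F^n$ with $g_j(\vaa) = 0$ for every coordinate of $\vg$ and $\alpha_i^2 - \alpha_i = 0$ for every $i$ (the latter forcing $\vaa \in \bits^n$, so in particular $\vaa^2 - \vaa = \vnz$). Suppose for contradiction that $1 - C(\vx, 0, \vg, \vx^2 - \vx) \equiv 0$ as a polynomial in $\F[\vx]$. Evaluating at $\vx \leftarrow \vaa$ gives
\[
1 = C(\vaa, 0, \vg(\vaa), \vaa^2 - \vaa) = C(\vaa, 0, \vnz, \vnz),
\]
but the first defining property of an IPS refutation is precisely $C(\vx, 0, \vnz, \vnz) = 0$ (identically), producing the contradiction $1 = 0$. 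Hence $1 - C(\vx, 0, \vg, \vx^2 - \vx)$ is a nonzero multiple of $f$, completing the proof. The key conceptual point—worth flagging explicitly when writing up—is that the two defining conditions of an IPS refutation play opposing roles here: the condition $C(\vx, f, \vg, \vx^2-\vx) = 1$ forces the multiple to be a correction from $1$, while the condition $C(\vx, 0, \vnz, \vnz) = 0$ combined with satisfiability of the remaining axioms forces this correction to actually be nontrivial.
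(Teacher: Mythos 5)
Your proposal is correct and follows essentially the same route as the paper: both decompose $C$ as a polynomial in $y$ to peel off the constant-in-$y$ term (your $\widetilde{C}$ is just the paper's $\sum_{i\ge 1}C_i\,y^{i-1}$), and both certify nonzeroness by evaluating at a satisfying assignment of $\vg,\vx^2-\vx$ and invoking $C(\vx,0,\vnz,\vnz)=0$. No gaps; nothing further needed.
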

\begin{proof}
	That $C$ is an IPS refutation means that
	\[
		C(\vx,f,\vg,\vx^2-\vx)=1,\qquad
		C(\vx,0,\vnz,\vnz)=0
		\;.
	\]
	We first show that $1-C(\vx,0,\vg,\vx^2-\vx)$ is a multiple of $f$, using the first condition on $C$. Expand $C(\vx,y,\vz,\vw)$ as a univariate in $y$, so that
	\[
		C(\vx,y,\vz,\vw)=\sum_{i\ge 0} C_i(\vx,\vz,\vw)y^i
		\;,
	\]
	for $C_i\in\F[\vx,\vz,\vw]$. In particular, $C_0(\vx,\vz,\vw)=C(\vx,0,\vz,\vw)$. Thus,
	\begin{align*}
		1-C(\vx,0,\vg,\vx^2-\vx)
		&=C(\vx,f,\vg,\vx^2-\vx)-C(\vx,0,\vg,\vx^2-\vx)\\
		&=\textstyle\left(\sum_{i\ge 0} C_i(\vx,\vg,\vx^2-\vx)f^i\right)-C_0(\vx,\vg,\vx^2-\vx)\\
		&=\textstyle\sum_{i\ge 1} C_i(\vx,\vg,\vx^2-\vx)f^i\\
		&=\textstyle\left(\sum_{i\ge 1} C_i(\vx,\vg,\vx^2-\vx)f^{i-1}\right)\cdot f
		\;.
	\end{align*}
	Thus, $1-C(\vx,0,\vg,\vx^2-\vx)$ is a multiple of $f$ as desired.

	We now show that this is a \emph{nonzero} multiple, using the second condition on $C$ and the satisfiability of $\vg,\vx^2-\vx$.  That is, the second condition implies that $0=C(\vx,0,\vnz,\vnz)=C_0(\vx,\vnz,\vnz)$.  If $1-C(\vx,0,\vg,\vx^2-\vx)$ is zero, then by the above we have that $C_0(\vx,\vg,\vx^2-\vx)=1$, so that $C_0(\vx,\vz,\vw)$ is an IPS refutation of $\vg,\vx^2-\vx$, which contradicts the satisfiability of $\vg,\vx^2-\vx$ as IPS is a sound proof system.  So it must then be that $1-C(\vx,0,\vg,\vx^2-\vx)$ is nonzero.  
	
	That is, take an $\vaa$ satisfying $\vg,\vx^2-\vx$ so that $\vg(\vaa)=\vnz,\vaa^2-\vaa=\vnz$. Substituting this $\vaa$ into $C_0(\vx,\vg,\vx^2-\vx)$, we have that $C_0(\vx,\vg,\vx^2-\vx)|_{\vx\leftarrow\vaa}=C_0(\vaa,\vnz,\vnz)$, and because $C_0(\vx,\vnz,\vnz)\equiv0$ in $\F[\vx]$ via the above we have that $C_0(\vaa,\vnz,\vnz)=0$. Thus, we have that $1-C(\vx,0,\vg,\vx^2-\vx)=1-C_0(\vx,\vg,\vx^2-\vx)$ is a nonzero polynomial as its evaluation at $\vx\leftarrow\vaa$ is $1$.
\end{proof}

The above lemma thus gives a template for obtaining lower bounds for IPS\@.  First, obtain a ``hard'' polynomial $f$ whose nonzero multiples are hard for $\cC$, where $f$ is hopefully also computable by small (general) circuits.  Then find additional (simple) polynomials $\vg$ such that $\vg,\vx^2-\vx$ is satisfiable yet $f,\vg,\vx^2-\vx$ is unsatisfiable.  By the above lemma one then has the desired IPS lower bound for refuting $f,\vg,\vx^2-\vx$, assuming that $\cC$ is sufficiently general.  However, for our results we need to more careful as even though $C(\vx,y,\vz,\vw)$ is from the restricted class $\cC$, the derived polynomial $C(\vx,0,\vg,\vx^2-\vx)$ may not be, and as such we will need to appeal to lower bounds for stronger classes.

We now instantiate this template, first for depth-3 powering formulas, where we use lower bounds for multiples of the stronger \sumpow{2} model.

\begin{corollary}\label{res:lbs-ips:mult:sumpowsum}
	Let $\F$ be a field with $\chara(\F)=0$.  Let $f\eqdef x_1\cdots x_n$ and $g\eqdef x_1+\cdots+x_n-n$ with $f,g\in\F[x_1,\ldots,x_n]$.  Then $f,g,\vx^2-\vx$ are unsatisfiable and any $\sumpowsum$-IPS refutation requires size at least $\exp(\Omega(n))$.
\end{corollary}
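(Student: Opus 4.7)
The plan is to instantiate the template of \autoref{res:lbs-mult:strategy} with $f=x_1\cdots x_n$ and $g=x_1+\cdots+x_n-n$. First I would verify the basic satisfiability conditions: $g,\vx^2-\vx$ is satisfiable (take $\vx=\vno$), while $f,g,\vx^2-\vx$ is unsatisfiable since $\vx\in\bits^n$ together with $\sum_i x_i=n$ forces $\vx=\vno$, which makes $f=1\ne 0$. These are the hypotheses needed to apply \autoref{res:lbs-mult:strategy}.

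Next, suppose $C(\vx,y,\vz,\vw)\in\F[\vx,y,\vz,\vw]$ is a $\sumpowsum$-IPS refutation of $f,g,\vx^2-\vx$ of size $s$; I would use \autoref{res:lbs-mult:strategy} to conclude that $h(\vx)\eqdef 1-C(\vx,0,g(\vx),\vx^2-\vx)$ is a \emph{nonzero} multiple of $x_1\cdots x_n$. The key step is then to control the complexity of $h$ in a class where we have strong lower bounds for multiples of the monomial. Writing $C=\sum_i \ell_i(\vx,y,\vz,\vw)^{d_i}$ with each $\ell_i$ linear, the substitution $y\leftarrow 0$, $\vz\leftarrow g(\vx)$, $\vw\leftarrow \vx^2-\vx$ turns each $\ell_i$ into a polynomial in $\vx$ of degree at most $2$, since $g$ is linear in $\vx$ and each $x_i^2-x_i$ is quadratic. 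Therefore $C(\vx,0,g(\vx),\vx^2-\vx)$, and hence $h(\vx)$, is computable by a $\sumpow{2}$ formula of size $\poly(n,s)$ (each degree-$\le 2$ polynomial is written in its sparse representation at size $O(n^2)$, and the $-1$ constant is absorbed into the top sum).

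Finally, since $\chara(\F)=0$, the degree restriction in \autoref{res:lbs-mult:sumpowt} is vacuous, so any nonzero multiple of $x_1\cdots x_n$ requires $\sumpow{2}$-formula size at least $\exp(\Omega(n/2))=\exp(\Omega(n))$. Applying this to $h$ gives $\poly(n,s)\ge\exp(\Omega(n))$, yielding $s\ge\exp(\Omega(n))$ as claimed.

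The main obstacle is not in any single step---each piece is essentially routine given the framework already built---but in recognizing that the right measure for the substituted refutation is $\sumpow{2}$ rather than $\sumpowsum$: the quadratic boolean axioms $\vx^2-\vx$ inherently push a $\sumpowsum$ refutation up one level in the $\sumpow{t}$ hierarchy, which is exactly why we developed the $t=2$ lower bound for multiples via shifted partial derivatives in \autoref{res:lbs-mult:sumpowt}. If one tried to analyze the substituted polynomial as a $\sumpowsum$ formula directly (e.g.\ by expanding powers of degree-2 forms), the blowup would be exponential in the degree and would kill the lower bound; staying in $\sumpow{2}$ is therefore essential.
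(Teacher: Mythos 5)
Your proposal is correct and follows essentially the same route as the paper: verify that $g,\vx^2-\vx$ is satisfiable only at $\vno$ while $f(\vno)=1$, apply \autoref{res:lbs-mult:strategy} to get that $1-C(\vx,0,g,\vx^2-\vx)$ is a nonzero multiple of $x_1\cdots x_n$, observe that substituting the linear $g$ and quadratic boolean axioms turns the $\sumpowsum$ refutation into a $\poly(s)$-size $\sumpow{2}$ formula, and invoke \autoref{res:lbs-mult:sumpowt}. Your closing remark about why one must measure the substituted polynomial in $\sumpow{2}$ rather than $\sumpowsum$ is exactly the point the paper's argument relies on.
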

\begin{proof}
	The hypothesis $\chara(\F)=0$ implies that $\{0,\ldots,n\}$ are distinct numbers.  In particular, the system $g(\vx)=0$ and $\vx^2-\vx=\vnz$ is satisfiable and has the unique satisfying assignment $\vno$.  However, this single assignment does not satisfy $f$ as $f(\vno)=\prod_{i=1}^n 1=1\ne 0$, so the entire system is unsatisfiable.  Thus, applying our strategy (\autoref{res:lbs-mult:strategy}), we see that for any \sumpowsum-IPS refutation $C(\vx,y,z,\vw)$ of $f,g,\vx^2-\vx$ that $1-C(\vx,0,g,\vx^2-\vx)$ is a nonzero multiple of $f$.  
	
	Let $s$ be the size of $C$ as a $\sumpowsum$ formula.  As $g$ is linear and the boolean axioms $\vx^2-\vx$ are quadratic, it follows that $1-C(\vx,0,g,\vx^2-\vx)$ is a sum of powers of quadratics ($\sumpow{2}$) of size $\poly(s)$.  As nonzero multiples of $f$ requires $\exp(\Omega(n))$-size as a \sumpow{2} formula (\autoref{res:lbs-mult:sumpowt}) it follows that $\poly(s)\ge\exp(\Omega(n))$, so that $s\ge\exp(\Omega(n))$ as desired.
\end{proof}

We similarly obtain a lower bound for roABP-IPS, where here we use lower bounds for multiples of read-\emph{twice} oblivious ABPs.

\begin{corollary}\label{res:lbs-ips:mult:roABP}
	Let $\F$ be a field with $\chara(\F)>n$. Let $f\eqdef \prod_{i<j}(x_i+x_j-1)$ and $g\eqdef x_1+\cdots+x_n-n$ with $f,g\in\F[x_1,\ldots,x_n]$.  Then $f,g,\vx^2-\vx$ are unsatisfiable and any roABP-IPS refutation (in any variable order) requires size $\ge\exp(\Omega(n))$.
\end{corollary}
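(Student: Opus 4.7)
The plan is to apply Lemma \ref{res:lbs-mult:strategy} and then analyze the complexity of the resulting nonzero multiple of $f$, paralleling the proof of Corollary \ref{res:lbs-ips:mult:sumpowsum} but with the ``every-order'' roABP-multiples bound from Corollary \ref{cor:r2abp-multiples} (suitably extended to read-$3$) replacing the $\sumpow{2}$ bound. First I verify unsatisfiability. Since $\chara(\F) > n$, the values $0, 1, \ldots, n$ are distinct in $\F$, so the boolean solutions of $g(\vx) = 0$ reduce to $\vx = \vno$; however $f(\vno) = \prod_{i<j}(1 + 1 - 1) = 1 \ne 0$, so $f, g, \vx^2-\vx$ is unsatisfiable while $g, \vx^2-\vx$ is (uniquely) satisfied by $\vno$. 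Thus Lemma \ref{res:lbs-mult:strategy} applies: for any roABP-IPS refutation $C(\vx, y, z, \vw)$ of $f, g, \vx^2-\vx$, the polynomial $h(\vx) := 1 - C(\vx, 0, g(\vx), \vx^2-\vx)$ is a nonzero multiple of $f$.

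Next I would bound the complexity of $h$. Assume $C$ has a width-$s$ roABP (in some order on $\vx, y, z, \vw$) of individual degree $d \le \poly(s)$. The substitution $y \leftarrow 0$ preserves width by Fact \ref{fact:roABP:closure}. The substitution $w_i \leftarrow x_i^2 - x_i$ turns the $w_i$-layer into a matrix of degree-$2$ univariates in $x_i$, contributing a second read of $x_i$. The delicate substitution is $z \leftarrow g(\vx) = x_1 + \cdots + x_n - n$: the width-$s$ matrix $M(z)$ of the $z$-layer becomes $M(g(\vx))$. Since $g$ is a linear form, a matrix-valued analog of Forbes--Shpilka duality (Theorem \ref{res:sumpowsum:roABP}, applied entrywise and assembled by maintaining state $(\text{current row of }M) \otimes (1, S, S^2, \ldots, S^d)$ where $S$ is the running partial sum $x_1 + \cdots + x_k$) shows that $M(g(\vx))$ is computable by a $\poly(s, d)$-width roABP on $\vx$ in \emph{any} order, contributing at most one further read of each $x_i$. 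Splicing this replacement into the original computation for $C$ exhibits $h$ as a read-$3$ oblivious ABP of width $\poly(s, n)$.

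Finally I invoke the read-$k$ structure theorem. By Theorem \ref{thm:read-k-eval-dim} with $k = 3$, exactly as in the proof of Corollary \ref{cor:r2abp-multiples} (and balancing the partition as done there by pushing variables into $\vw'$, which only decreases the coefficient dimension), there is a partition $\vx = (\vu, \vv, \vw')$ with $m := |\vu| = |\vv| \ge n/3^{O(3)} = \Omega(n)$ such that $\dim_{\F(\vw')} \coeffs{\vu|\vv}(h) \le (\poly(s, n))^{6}$. On the other hand, Corollary \ref{cor:lbs-mult:every-partition} applied to $f = \prod_{i<j}(x_i + x_j - 1)$ yields $\dim_{\F(\vw')} \coeffs{\vu|\vv}(h) \ge 2^{m} \ge 2^{\Omega(n)}$, since $h$ is a nonzero multiple of $f$ in $\F(\vw')[\vu,\vv]$. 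Combining the two bounds forces $s \ge 2^{\Omega(n)}$, as required.

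The main obstacle will be the matrix-valued duality step — making precise that the single $z$-layer, after the substitution $z \leftarrow g(\vx)$, can be realized by a width-efficient roABP on $\vx$ (in \emph{any} order, so that it cleanly splices into the original ABP for $C$ regardless of how the order on $\vx, y, z, \vw$ interleaves). The scalar case is exactly Theorem \ref{res:sumpowsum:roABP}, so the work is to check that the underlying state-tracking construction there respects matrix multiplication; this should follow from the explicit description of that construction but requires a short technical lemma. A secondary but routine point is extending Corollary \ref{cor:r2abp-multiples} from read-twice to read-three, which is immediate from the $k=3$ instance of Theorem \ref{thm:read-k-eval-dim} (losing only constants in the exponent).
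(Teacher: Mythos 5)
Your proposal is correct and its first two steps (unsatisfiability of the system, and invoking \autoref{res:lbs-mult:strategy} to get a nonzero multiple of $f$) coincide with the paper's; where you diverge is in how the substitution $z\leftarrow g(\vx)$ is absorbed into the branching program. The paper works at the level of the polynomial $C(\vx,0,z,\vw)$: it expands it as $\sum_{0\le i\le s}C_i(\vx,\vw)z^i$, recovers each coefficient $C_i$ as a small roABP by interpolation over $z$ (\autoref{fact:roABP:closure}), multiplies by the small any-order roABP for the power $g^i$ of the linear form (\autoref{res:sumpowsum:roABP}), and sums. This yields a genuine roABP over $\vx,\vw$ in the order induced by $C$, so that the final substitution $\vw\leftarrow\vx^2-\vx$ produces a read-\emph{twice} oblivious ABP and \autoref{cor:r2abp-multiples} applies exactly as stated. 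You instead splice a matrix-valued roABP gadget for the $z$-layer directly into the ABP, which costs one extra read of every $x_i$ and lands you in the read-\emph{three} model; this forces you both to prove a new (if routine) matrix-valued duality lemma and to re-derive \autoref{cor:r2abp-multiples} for $k=3$. Both routes are sound: your state-tracking gadget $(\text{row of }M)\otimes(1,S,S^2,\ldots,S^d)$ is precisely the right construction, and the $k=3$ instance of \autoref{thm:read-k-eval-dim} loses only constants in the exponent; your balancing of the partition by pushing variables into $\vw'$ also matches what the paper already does inside \autoref{cor:r2abp-multiples}. What the paper's interpolation trick buys is that it stays entirely within closure properties already on the books and reaches the read-twice model, whereas your splicing argument is more "structural" and would generalize more readily to axioms $g$ that are themselves given by roABPs rather than by linear forms.
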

\begin{proof}
	The hypothesis $\chara(\F)>n$ implies that $\{0,\ldots,n\}$ are distinct numbers.  In particular, the system $g(\vx)=0$ and $\vx^2-\vx=\vnz$ is satisfiable and has the unique satisfying assignment $\vno$.  However, this single assignment does not satisfy $f$ as $f(\vno)=\prod_{i<j}(1+1-1)=1\ne0$, so the entire system is unsatisfiable.  Thus, applying our strategy (\autoref{res:lbs-mult:strategy}), we see that for any roABP-IPS refutation $C(\vx,y,z,\vw)$ of $f,g,\vx^2-\vx$ that $1-C(\vx,0,g,\vx^2-\vx)$ is a nonzero multiple of $f$.  
	
	Let $s$ be the size of $C$ as an roABP, and we now argue that $1-C(\vx,0,g,\vx^2-\vx)$ has a small read-\emph{twice} oblivious ABP\@.  First, note that we can expand $C(\vx,0,z,\vw)$ into powers of $z$, so that $C(\vx,0,z,\vw)=\sum_{0\le i\le s} C_i(\vx,\vw)z^i$ (where we use that $s$ bounds the width and \emph{degree} of the roABP $C$).  Each $C_i(\vx,\vw)$ has a $\poly(s)$-size roABP (in the variable order of $C$ where $z$ is omitted) as we can compute $C_i$ via interpolation over $z$, using that each evaluation preserves roABP size (\autoref{fact:roABP:closure}).  Further, as $g$ is linear, for any $i$ we see that $g^i$ can be computed by a $\poly(n,i)$-size roABP (in any variable order) (\autoref{res:sumpowsum:roABP}).  Combining these facts using closure properties of roABPs under addition and multiplication (\autoref{fact:roABP:closure}), we see that $C(\vx,0,g,\vw)$, and hence $1-C(\vx,0,g,\vw)$, has a $\poly(s,n)$-size roABP in the variable order that $C$ induces on $\vx,\vw$.  Next observe, that as each boolean axiom $x_i^2-x_i$ only refers to a single variable, substituting $\vw\leftarrow\vx^2-\vx$ in the roABP for $1-C(\vx,0,g,\vw)$ will preserve obliviousness of the ABP, but now each variable will be read twice, so that $1-C(\vx,0,g,\vx^2-\vx)$ has a $\poly(s,n)$-size read-twice oblivious ABP.
	
	Now, using that nonzero multiples of $f$ requires $\exp(\Omega(n))$-size to be computed as read-twice oblivious ABPs (\autoref{cor:r2abp-multiples}) it follows that $\poly(s,n)\ge\exp(\Omega(n))$, so that $s\ge\exp(\Omega(n))$ as desired.
\end{proof}

We note that the above lower bound is for the \emph{size} of the roABP\@. One can also obtain the stronger result (for similar but less natural axioms) showing that the \emph{width} (and hence also the size) of the roABP must be large, but we do not pursue this as it does not qualitatively change the result.

\section{Discussion}\label{sec:open-problems}

In this work we proved new lower bounds for various natural restricted versions of the Ideal Proof System (IPS) of Grochow and Pitassi~\cite{GrochowPitassi14}. While existing work in algebraic proof complexity showed limitations of weak measures of complexity such as the degree and sparsity of a polynomial, our lower bounds are for stronger measures of circuit size that match many of the frontier lower bounds in algebraic circuit complexity.  However, our work leaves several open questions and directions for further study, which we now list.

\begin{enumerate}
	\item Can one obtain proof complexity lower bounds from the recent techniques for lower bounds for depth-4 circuits, such as the results of Gupta, Kamath, Kayal and Saptharishi~\cite{GuptaKKS14}? Neither of our approaches (functional lower bounds or lower bounds for multiples) currently extend to their techniques.

	\item Many proof complexity lower bounds are for refuting unsatisfiable $k$-CNFs, where $k=O(1)$, which can be encoded as systems of polynomial equations where each equation involves $O(1)$ variables.  Can one obtain interesting IPS lower bounds for such systems? Our techniques only establish exponential lower bounds where there is at least one axiom involving $\Omega(n)$ variables.

	\item Given an equation $f(\vx)=0$ where $f$ has a size-$s$ circuit, there is a natural way to convert this equation to $\poly(s)$-many equations on $O(1)$ \emph{extension variables} by tracing through the computation of $f$.  Can one understand how introducing extension variables affects the complexity of refuting polynomial systems of equations? This seems a viable approach to the previous question when applied to our technique of using lower bounds for multiples.

	\item We have shown various lower bounds for multiples by invoking the hardness of the determinant (\autoref{res:lbs-mult:pit:det}), but this does not lead to satisfactory proof lower bounds as the axioms are complicated.  Can one \emph{implicitly} invoke the hardness of the determinant?  For example, consider the hard matrix identities suggested by Cook and Rackoff (see for example the survey of Beame and Pitassi~\cite{BeamePitassi98}) and later studied by Soltys and Cook~\cite{SoltysCook04}.  That is, consider unsatisfiable equations such as $XY-\I_n,YX-2\cdot \I_n$, where $X$ and $Y$ are symbolic $n\times n$ matrices and $\I_n$ is the $n\times n$ identity matrix.  The simplest refutations known involve the determinant (see \Hrubes-Tzameret~\cite{HrubesTzameret15}, and the discussion in Grochow-Pitassi~\cite{GrochowPitassi14}), can one provide evidence that computing the determinant is intrinsic to such refutations?

	\item The lower bounds of this paper are for the \emph{static} IPS system, where one cannot simplify intermediate computations.  There are also \emph{dynamic} algebraic proof systems (see \autoref{sec:alg-proofs}), can one extend our techniques to that setting?
\end{enumerate}

\section*{Acknowledgments}

We would like to thank Rafael Oliviera for telling us of \autoref{res:lbs-mult:sparse:newton}, Mrinal Kumar and Ramprasad Saptharishi for conversations~\cite{ForbesKS16} clarifying the roles of functional lower bounds in this work, as well as Avishay Tal for pointing out how \autoref{res:subsetsum:multlin} implies an optimal functional lower bound for sparsity (\autoref{res:subsetsum:multlin:deg-sparse}).  We would also like to thank Joshua Grochow for helpful discussions regarding this work. We are grateful for the anonymous reviewers for their careful read of the paper and for their comments.

{\footnotesize
\bibliographystyle{customurlbst/alphaurlpp}
\ifthenelse{\equal{\me}{miforbes}}
{
	\bibliography{lbs-ips}
}
{
	\bibliography{PrfCmplx-Bakoma}
}
}

\appendix 

\newpage

\section{Relating IPS to Other Proof Systems}\label{sec:alg-proofs}

In this section we summarize some existing work on algebraic proof systems and how these other proof systems compare to IPS\@. In particular, we define the (dynamic) \emph{Polynomial Calculus} refutation system over circuits (related to but slightly different than the system of Grigoriev and Hirsch~\cite{GrigorievHirsch03}) and relate it to the (static) IPS system (\cite{Pitassi97,GrochowPitassi14}) considered in this paper.  We then examine the roABP-PC system, essentially considered by Tzameret~\cite{Tzameret11}, and its separations from sparse-PC\@.  Finally, we consider multilinear-formula-PC as studied by Raz and Tzameret~\cite{RazTzameret08a,RazTzameret08b} and show that its tree-like version simulates multilinear-formula-IPS, and is hence separated from sparse-PC.

\subsection{Polynomial Calculus Refutations}

A substantial body of prior work considers \emph{dynamic} proof systems, which are systems that allow simplification of intermediate polynomials in the proof.  In contrast, IPS is a \emph{static} system where the proof is single object with no ``intermediate'' computations to simplify.  We now define the principle dynamic system of interest, the \emph{Polynomial Calculus} system.  We give a definition over an arbitrary circuit class, which generalizes the definition of the system as introduced by Clegg, Edmonds, and Impagliazzo~\cite{CleggEI96}.

\begin{definition}\label{defn:pc}
	Let $f_1(\vx),\ldots,f_m(\vx)\in\F[x_1,\ldots,x_n]$ be a system of polynomials. A \demph{Polynomial Calculus (PC) proof} for showing that $p\in\F[\vx]$ is in the ideal generated by $\vf,\vx^2-\vx$ is a directed acyclic graph with a single sink, where
	\begin{itemize}
		\item Leaves are labelled with an equation from $\vf,\vx^2-\vx$. 
		\item An internal node $v$ with children $u_1,\ldots,u_k$ for $k>1$ is labelled with a linear combination $v=\alpha_1 u_1+\cdots +\alpha_ku_k$ for $\alpha_i\in \F$. 
		\item An internal node $v$ with a single child $u$ is labelled with the product $g\cdot u$ for some $g\in\F[\vx]$.
	\end{itemize}
	The \demph{value} of a node in the proof is defined inductively via the above labels interpreted as equations, and the value of the output node is required to be the desired polynomial $p$.  The proof is \demph{tree-like} if the underlying graph is a tree, and is otherwise \demph{dag-like}.  A \demph{PC refutation} of $\vf,\vx^2-\vx$ is a proof that 1 is in the ideal of $\vf,\vx^2-\vx$ so that $\vf,\vx^2-\vx$ is unsatisfiable.
	
	The \demph{size} of each node is defined inductively as follows.
	\begin{itemize}
		\item The size of a leaf $v$ is the size of the minimal circuit agreeing with the value of $v$.
		\item The size of an addition node $v=\alpha_1 u_1+\cdots +\alpha_ku_k$ is $k$ plus the size of each child $u_i$, plus the size of the minimal circuit agreeing with the value of $v$.
		\item The size of a product node $v=g\cdot u$ is the size of the child $u$ plus the size of the minimal circuit agreeing with the value of $v$.
	\end{itemize}
	The size of the proof is the sum of the sizes of each node in the proof. For a restricted algebraic circuit class $\cC$, a \demph{$\cC$-PC proof} is a PC proof where the circuits are measured as their size coming from the restricted class $\cC$.
\end{definition}

As with IPS, one can show this is a sound and complete proof system for unsatisfiability of equations.  Also as with IPS, in our definition of PC we included the boolean axioms $\vx^2-\vx$ as this in the most common regime.

An important aspect of the above proof system is that it is \emph{semantic}, as the polynomials derived in the proof are simplified to their smallest equivalent algebraic circuit.  This is a valid in that such simplifications can be efficiently verified (with randomness) using polynomial identity testing (which can sometimes be derandomized, see \autoref{sec:background}). In contrast, one could instead require a \emph{syntactic} proof system, which would have to provide a proof via syntactic manipulation of algebraic circuits that such simplifications are valid.  We will focus on semantic systems as they more naturally compare with IPS, which also requires polynomial identity testing for verification.

While many priors work (\cite{Pitassi97,RazTzameret08a,RazTzameret08b,Tzameret11,GrochowPitassi14}) considered algebraic proof systems whose verification relied on polynomial identity testing (because of semantic simplification or otherwise), we note that the system of Grigoriev and Hirsch~\cite{GrigorievHirsch03} (which they called ``formula-$\mathcal{PC}$'') is actually a \emph{syntactic} system and as such is deterministically checkable. Despite their system being restricted to being syntactic, it is still strong enough to simulate Frege and obtain low-depth refutations of the subset-sum axiom, the pigeonhole principle, and Tseitin tautologies.

\begin{remark}
	Note that our definition here varies slightly from the definition of Clegg, Edmonds, and Impagliazzo~\cite{CleggEI96}, in that we allow products by an arbitrary polynomial $g$ instead of only allowing products of a single variable $x_i$.  For some circuit classes $\cC$ these two definitions are polynomially equivalent (see for example the discussion in Raz and Tzameret~\cite{RazTzameret08a}).  In general however, using the product rule $f\vdash x_i\cdot f$ in a \emph{tree-like} proof can only yield $g\cdot f$ where $g$ is a small formula. However, we will be interested in algebraic circuit classes not known to be simulated by small formulas (such as roABPs, which can compute iterated matrix products which are believed to require super-polynomial-size formulas) and as such will consider this stronger product rule.
\end{remark}

We now observe that tree-like $\cC$-PC can simulate $\cC$-\lIPS for natural restricted circuit classes $\cC$.

\begin{lemma}\label{res:ips-vs-pc}
	Let $\cC$ be a restricted class of circuits computing polynomials in $\F[x_1,\ldots,x_n]$, and suppose that $\cC$-circuits grow polynomially in size under multiplication and addition, that is,
	\begin{itemize}
		\item $\size_\cC(f\cdot g)\le \poly(\size_\cC(f),\size_\cC(g))$.
		\item $\size_\cC(f+g)\le \poly(\size_\cC(f))+\poly(\size_\cC(g))$.
	\end{itemize}
	In particular, one can take $\cC$ to be sparse polynomials, depth-3 powering formulas (in characteristic zero), or roABPs.

	Then if $\vf,\vx^2-\vx$ are computable by size-$t$ $\cC$-circuits and have a $\cC$-\lIPS refutation of size-$s$, then $\vf,\vx^2-\vx$ have a tree-like $\cC$-PC refutation of size-$\poly(s,t,n)$, which is $\poly(s,t,n)$-explicit given the IPS refutation.
\end{lemma}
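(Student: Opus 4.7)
The plan is to pass from a $\cC$-\lIPS refutation $C(\vx,\vy,\vz)$ directly to a tree-like $\cC$-PC refutation by reading off its coefficients on the axioms. Since $C$ has individual degree $\le 1$ in each $y_j$ and $z_i$ and satisfies $C(\vx,\vnz,\vnz)=0$, after grouping monomials in $\vw\eqdef(\vy,\vz)$ by their smallest index one obtains the decomposition
\[
C(\vx,\vy,\vz) = \sum_{j=1}^m g_j(\vx,\vy_{>j},\vz)\,y_j + \sum_{i=1}^n h_i(\vx,\vz_{>i})\,z_i,
\]
exactly as in the proof of \autoref{res:h-ips_v_ips}. Each $g_j$ and $h_i$ is itself a difference of two partial evaluations of $C$, hence (using that the listed classes $\cC$ are preserved under partial evaluation and that additions do not blow up size by the hypothesis) each has $\cC$-size $\poly(s)$.

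Next I would substitute the axioms back in to obtain scalars (in $\F[\vx]$) multiplying each axiom. Let $\vr\eqdef(\vf,\vx^2-\vx)$ and define $G_\alpha(\vx)\eqdef g_\alpha(\vx,\vr_{>\alpha})$ (and similarly for $h_i$). From $C(\vx,\vf,\vx^2-\vx)=1$ it follows that
\[
\sum_{j=1}^m G_j(\vx)\,f_j(\vx) \;+\; \sum_{i=1}^n H_i(\vx)\,(x_i^2-x_i) \;=\; 1.
\]
Using the polynomial-multiplication closure of $\cC$ assumed in the statement, each $G_\alpha$ has $\cC$-size $\poly(s,t,n)$, since it is obtained by substituting $\cC$-circuits of size $\le t$ (the axioms $\vr$) for the $\vw$-variables in a $\cC$-circuit of size $\poly(s)$, performing at most $m+n-1$ substitutions and propagating the polynomial blow-up.

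With the $G_j, H_i$ in hand, the tree-like PC refutation is then assembled bottom-up: for each $j$ (resp.\ $i$), take the leaf $f_j$ (resp.\ $x_i^2-x_i$), apply the product rule to multiply by $G_j$ (resp.\ $H_i$), yielding internal nodes of $\cC$-size $\poly(s,t,n)$ by the multiplication closure; then combine all $m+n$ product nodes via a balanced binary tree of addition nodes, again incurring only polynomial blow-up, and whose final output is $1$. Summing the contributions over all nodes gives total proof size $\poly(s,t,n)$, with the whole tree being $\poly(s,t,n)$-explicit given the IPS refutation $C$ and the axiom circuits.

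The main obstacle is verifying that the $\poly(s)$-size bound on $g_j,h_i$ survives the substitution of the axioms $\vr$ into the $\vw$-slots with only polynomial blow-up, for each of the claimed classes. For sparse polynomials and for roABPs (in compatible variable orders) this is immediate from \autoref{fact:roABP:closure} and the elementary sparsity bounds, since each $w_\ell$ appears with degree at most $1$ in $g_j,h_i$, so substitution reduces to a bounded number of multiplications and additions of $\cC$-circuits of size $\poly(s,t)$. For depth-$3$ powering formulas in characteristic zero the analogous closure requires the Waring/polarization identity $uv=\tfrac{1}{4}((u+v)^2-(u-v)^2)$ and its powered generalizations to keep products of $\sumpowsum$ expressions within $\sumpowsum$ with polynomial blow-up; this is the delicate step, and is the reason the characteristic-zero hypothesis is needed for that class.
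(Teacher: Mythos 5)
Your final construction --- multiply each axiom leaf by its coefficient polynomial via the product rule, then combine the $m+n$ resulting nodes with a balanced tree of additions whose value is $1$ --- is exactly the paper's proof, and that part is fine. The problem is the middle step, where you bound the $\cC$-size of $G_\alpha(\vx)= g_\alpha(\vx,\vr_{>\alpha})$. First, the detour through the decomposition of \autoref{res:h-ips_v_ips} is unnecessary here: the paper uses \lIPS in the Grochow--Pitassi ``Hilbert IPS'' sense, where the refutation is literally $C(\vx,\vy,\vz)=\sum_j g_j(\vx)y_j+\sum_i h_i(\vx)z_i$ (see how \lIPS refutations are written in \autoref{res:lbs-fn_lbs-ips:lIPS} and \autoref{res:multilin:roABP-lIPS}). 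Then $g_j=C(\vx,\ve_j,\vnz)$ already depends only on $\vx$ and lies in $\cC$ by closure under partial evaluation; one multiplication $g_j\cdot f_j$ per axiom plus one sum finishes the argument, which is all the paper does.

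Second, if you insist on the more general reading (individual degree $\le 1$ in each $y_j,z_i$, allowing cross-terms such as $y_1y_2$), the size bound you assert for $G_\alpha$ does not follow from the stated hypotheses. ``Performing at most $m+n-1$ substitutions and propagating the polynomial blow-up'' compounds: if each substitution maps size $N$ to $N^c$, then $\Theta(n)$ of them give $N^{c^{\Theta(n)}}$, which is not $\poly(s,t,n)$. And the blow-up is real, not an artifact of the accounting: for $\cC$ the class of sparse polynomials, a single monomial $y_{j_1}\cdots y_{j_k}$ of $C$ becomes, after substitution, a product of $k$ axioms each of sparsity up to $t$, hence sparsity up to $t^k$ --- exponential when $k=\Theta(n)$. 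So the multilinear-in-$(\vy,\vz)$ case genuinely does not reduce to closure under $+$ and $\times$, and your argument should simply restrict to the linear form, as the paper's does.
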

\begin{proof}	
	That the relevant classes obey these closure properties is mostly immediate. See for example \autoref{fact:roABP:closure} for roABPs.  For depth-3 powering formulas, the closure under addition is immediate and for multiplication it follows from Fischer~\cite{Fischer94}.
		
	Turning to the simulation, such an IPS refutation is an equation of the form $\sum_j g_jf_j+\sum_i h_i\cdot (x_i^2-x_i)=1$. Using the closure properties of $\cC$, one can compute the expression $\sum_j g_jf_j+\sum_i h_i\cdot (x_i^2-x_i)$ in the desired size, which yields the required (explicit) derivation of $1$.
\end{proof}

Note that the above claim does \emph{not} work for multilinear formulas, as multilinear polynomials are not closed under multiplication.  That tree-like multilinear-formula-PC simulates multilinear-formula-\lIPS is more intricate, and is given in \autoref{res:mult-form-PC:mult-form-lbIPS}.

The Polynomial Calculus proof system has received substantial attention since its introduction by Clegg, Edmonds, and Impagliazzo~\cite{CleggEI96}, typically when the complexity of the proofs are measured in terms of the number of monomials. In particular, Impagliazzo, \Pudlak and Sgall~\cite{IPS99} showed an exponential lower bound for the subset-sum axiom.

\begin{theoremwp}[Impagliazzo, \Pudlak and Sgall~\cite{IPS99}]\label{thm:IPS99}
	Let $\F$ be a field of characteristic zero. Let $\vaa\in\F^n$, $\beta\in\F$ and $A\eqdef\{\sum_{i=1}^n \alpha_i x_i : \vx\in\bits^n\}$ be so that $\beta\notin A$. Then $\alpha_1x_1+\cdots+\alpha_nx_n-\beta,\vx^2-\vx$ is unsatisfiable and any PC refutation requires degree $\ge\ceil{\nicefrac{n}{2}}+1$ and $\exp(\Omega(n))$-many monomials.
\end{theoremwp}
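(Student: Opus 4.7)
The plan is to reduce both parts of the theorem to a single semantic characterization of PC and then attack each bound separately. For a class of axioms $\vf,\vx^2-\vx$, a PC refutation of degree $\le d$ exists if and only if $1$ lies in the truncated ideal
\[
I_d \;\eqdef\; \spn\Big\{g\cdot f_j : \deg(g f_j)\le d\Big\} + \spn\Big\{h\cdot (x_i^2-x_i) : \deg(h(x_i^2-x_i))\le d\Big\}.
\]
Since the boolean axioms are present, one can reduce everything to the quotient $R_d \eqdef \F[\vx]/\langle \vx^2-\vx\rangle$ truncated to multilinear degree $\le d$, and it suffices to produce a linear functional $L:R_d\to\F$ with $L(1)\ne 0$ that vanishes on every element $g\cdot(\sum_i\alpha_ix_i-\beta)$ with multilinear $g$ of degree $\le d-1$ (the boolean-axiom condition is automatic after multilinearization).

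For the degree bound, I would set $d=\lceil n/2\rceil$ and construct the functional $L$ by backwards recursion on $|S|$. Testing against $g=x_S$ and multilinearizing $(\sum_i\alpha_ix_i-\beta)\cdot x_S$ gives the defining recurrence
\[
\Big(\beta-\tsum_{i\in S}\alpha_i\Big)L(x_S) \;=\; \tsum_{i\notin S}\alpha_i \,L(x_{S\cup\{i\}}), \qquad |S|\le d-1.
\]
Because $\sum_{i\in S}\alpha_i\in A$ and $\beta\notin A$, the leading coefficient $\beta-\sum_{i\in S}\alpha_i$ never vanishes, so $L$ on levels $\le d-1$ is uniquely determined by its values on level $d$. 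The task reduces to showing that there is \emph{some} choice of free values $\{L(x_S):|S|=d\}$ making $L(1)\ne 0$; equivalently, that the linear map $\Psi:\F^{\binom{n}{d}}\to\F$ sending the level-$d$ values to the propagated $L(1)$ is not identically zero. This is the main obstacle: one must exploit the asymmetry of the recurrence together with $d\le\lceil n/2\rceil$. I would argue it by producing an explicit witness, for instance by taking $L(x_S)=\prod_{i\in S}\mu_i$ for parameters $\mu_i$ solving a suitable system, or by a dimension/duality count showing that the codimension of the linear constraints in the multilinear degree-$\le d$ space is strictly less than $\binom{n}{\le d}$ exactly when $d\le\lceil n/2\rceil$; the threshold $d=\lceil n/2\rceil$ arises naturally because at this point the number of equations $\binom{n}{\le d-1}$ just fails to match the number of unknowns $\binom{n}{\le d}$.

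For the monomial bound I would use random restrictions in the spirit of \autoref{res:random-restriction}. Given a PC refutation of total sparsity $s$, apply a restriction $\rho$ that independently sets each $x_i$ to $0$ with probability $1/2$. A union bound shows that with probability $\ge 1/2$ every monomial surviving in every line of the proof involves $\le \lg s+1$ variables; a Chernoff bound shows that with probability $\ge 1-\e^{-1}$ at least $n/4$ variables remain alive. Picking a $\rho$ satisfying both events, the restricted proof is a PC refutation (in the live variables) of $\sum_{i\text{ live}}\alpha_i x_i-\beta$ together with the boolean axioms, using polynomials each of which is multilinearly of degree $\le\lg s+1$ (after reducing by the boolean axioms).

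The restricted axiom is still unsatisfiable, because the set $A'=\{\sum_{i\text{ live}}\alpha_i x_i : \vx\in\bits^{\text{live}}\}$ is contained in $A$, and so does not contain $\beta$. The degree lower bound from the first part, applied to the restricted system on $n'\ge n/4$ variables, yields $\lg s+1 \ge \lceil n'/2\rceil +1 \ge \lceil n/8\rceil +1$, and hence $s\ge \exp(\Omega(n))$. The main obstacle in this second step is purely bookkeeping: verifying that the degree of the \emph{restricted} proof is really controlled by the sparsity, which requires that the product rule $v=g\cdot u$ in $\cC$-PC only increases degree in a controlled way once one works modulo the boolean axioms; this is handled by the standard reduction to the static ideal membership formulation above.
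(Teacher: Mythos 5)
First, note that the paper does not prove \autoref{thm:IPS99}: it is imported verbatim from Impagliazzo, \Pudlak and Sgall~\cite{IPS99} and stated without proof, so there is no in-paper argument to compare against. Judged on its own terms, your proposal for the monomial bound follows the standard (and correct) restriction route, but your degree argument has a fatal gap at its very first step.

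The claimed equivalence ``a PC refutation of degree $\le d$ exists iff $1\in I_d$'' characterizes \emph{Nullstellensatz} degree, not PC degree. The true relation is only the containment $I_d\subseteq\{\text{polynomials PC-derivable in degree}\le d\}$, and it is strict: a dynamic proof can derive a low-degree polynomial through cancellations in intermediate lines even when every static representation $\sum_j g_jf_j$ of it forces some $\deg(g_jf_j)>d$. Hence a functional $L$ annihilating your $V_d$ certifies nothing about PC. The paper itself exhibits the failure for $\vaa=\vno$: by \autoref{res:subsetsum:deg:ge} the multilinear inverse of $\sum_i x_i-\beta$ has degree exactly $n$, so your $L$ with $L(1)\ne 0$ exists for \emph{every} $d\le n$ and your argument would ``prove'' a PC degree bound of $n+1$; yet, as recorded in the footnote in \autoref{sec:lbs-fn:deg}, \cite{IPS99} also give a matching PC \emph{upper} bound of $\ceil{\nicefrac{n}{2}}+1$ for $\vaa=\vno$. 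The missing ingredient is Razborov-style closure under the product rule: the annihilated space must satisfy that whenever $p$ lies in it and $\deg p<d$, so does $\ml(x_ip)$. Your $V_d$ fails this once $d>\ceil{\nicefrac{n}{2}}$ (which is exactly why PC beats Nullstellensatz here), and establishing the closure for $d\le\ceil{\nicefrac{n}{2}}$ is the actual content of the IPS99 proof --- not the nonvanishing of your map $\Psi$, which holds for all $d\le n$ and therefore cannot be the source of the threshold. Relatedly, your dimension count $\binom{n}{\le d-1}$ versus $\binom{n}{\le d}$ produces no transition at $n/2$.

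The second half (restrict randomly, observe that surviving monomials touch $O(\log s)$ variables, multilinearize, and invoke the degree bound on the $\Omega(n)$ live variables) is the right argument and matches how the sparsity bound is rederived in \autoref{sec:h-ips:lbs:sparse}, but as written it bottoms out in your degree bound and so inherits the gap; moreover the ``bookkeeping'' you defer for handling the product rule after restriction again appeals to the false static characterization. To repair the proof you must replace the truncated-ideal characterization with a linear operator (or subspace) that is verified to respect the PC multiplication rule up to degree $\ceil{\nicefrac{n}{2}}$.
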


\subsection{roABP-PC}

The class of roABPs are a natural restricted class of algebraic computation that non-trivially goes beyond sparse polynomials.  In proof complexity, roABP-PC was explored by Tzameret~\cite{Tzameret11} (under the name of \emph{ordered formulas}, a formula-variant of roABPs, but the results there apply to roABPs as well).  In particular, Tzameret~\cite{Tzameret11} observed that roABP-PC can be deterministically checked using the efficient PIT algorithm for roABPs due to Raz and Shpilka~\cite{RazShpilka05}.

Given the Impagliazzo, \Pudlak and Sgall~\cite{IPS99} lower bound for the subset-sum axiom (\autoref{thm:IPS99}), our roABP-IPS upper bound for this axiom (\autoref{res:ips-ubs:subset:roABP}), and the relation between \lIPS and tree-like PC (\autoref{res:ips-vs-pc}), we can conclude the following exponential separation.

\begin{corollarywp}
	Let $\F$ be a field of characteristic zero. Then $x_1+\cdots+x_n+1,\vx^2-\vx$ is unsatisfiable, requires sparse-PC refutations of size-$\exp(\Omega(n))$, but has $\poly(n)$-explicit $\poly(n)$-size roABP-\lIPS and tree-like roABP-PC refutations.
\end{corollarywp}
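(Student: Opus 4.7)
\begin{proofof}{the Corollary}
The plan is to verify each of the three claims by direct appeal to results established earlier in the paper, so no new technical work is needed; the proposal is mainly to check that the parameters align.

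First, to see that the system is unsatisfiable and to position it within the framework of the paper, I would rewrite $x_1+\cdots+x_n+1$ as $\sum_i x_i-\beta$ with $\vaa=\vno$ and $\beta=-1$, so that $A\eqdef\{\sum_i \alpha_ix_i:\vx\in\bits^n\}=\{0,1,\ldots,n\}$. Since $\chara(\F)=0$, the elements $0,1,\ldots,n$ are genuinely distinct in $\F$ and $\beta=-1\notin A$, so for every $\vx\in\bits^n$ the value $\sum_i x_i+1$ lies in $\{1,\ldots,n+1\}$ and is nonzero. Hence $\vf,\vx^2-\vx$ is unsatisfiable.

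For the sparse-PC lower bound, I would invoke the Impagliazzo--\Pudlak--Sgall theorem (\autoref{thm:IPS99}) directly with the same choice of $\vaa,\beta$: it guarantees that any PC refutation of $\sum_i x_i-\beta,\vx^2-\vx$ requires $\exp(\Omega(n))$ many monomials. Since the size measure of a sparse-PC proof is (up to polynomial factors) the total number of monomials appearing in the polynomials at its nodes, this yields the desired $\exp(\Omega(n))$ lower bound for sparse-PC refutations.

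For the two upper bounds, I would first apply \autoref{res:ips-ubs:subset:roABP} with $\vaa=\vno$ and $\beta=-1$. Here $|A|=n+1$, so that corollary produces a $\poly(|A|,n)=\poly(n)$-explicit roABP-\lIPS refutation of width $\poly(n)$, in any variable order. This handles the roABP-\lIPS claim. To convert it to a tree-like roABP-PC refutation, I would feed this into \autoref{res:ips-vs-pc}, after noting that roABPs satisfy the required closure under addition and multiplication (\autoref{fact:roABP:closure}) and that the axioms $\sum_i x_i+1$ and $x_i^2-x_i$ themselves are trivially computed by $\poly(n)$-size roABPs. The lemma then yields a $\poly(n)$-explicit tree-like roABP-PC refutation of $\poly(n)$ size, completing the proof.

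There is no real obstacle here: each of the three items is an immediate consequence of a theorem or corollary already proved, and the only thing to verify is that the parameter $\beta=-1$ satisfies the hypothesis $\beta\notin\{0,\ldots,n\}$ in characteristic zero, which is obvious.
\end{proofof}
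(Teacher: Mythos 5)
Your proposal is correct and follows exactly the route the paper intends: unsatisfiability and the sparse-PC lower bound via \autoref{thm:IPS99} with $\vaa=\vno$, $\beta=-1$ (so $A=\{0,\ldots,n\}$), the roABP-\lIPS upper bound via \autoref{res:ips-ubs:subset:roABP} with $|A|=n+1$, and the tree-like roABP-PC refutation via the simulation in \autoref{res:ips-vs-pc}. The parameter checks you perform are the only content needed, so nothing further is required.
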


This strengthens a result of Tzameret~\cite{Tzameret11}, who separated \emph{dag}-like roABP-PC from sparse-PC\@. However, we note that it is not clear whether sparse-PC can be efficiently simulated by roABP-\lIPS.

\subsection{Multilinear Formula PC}

We now proceed to study algebraic proofs defined in terms of multilinear formulas, as explored by Raz and Tzameret~\cite{RazTzameret08a,RazTzameret08b}.  We seek to show that the tree-like version of this system can simulate multilinear-formula-\lIPS.  While tree-like $\cC$-PC can naturally simulate $\cC$-\lIPS if $\cC$ is closed under multiplication (\autoref{res:ips-vs-pc}), the product of two multilinear polynomials may not multilinear.  As such, the simulation we derive is more intricate, and is similar to the efficient multilinearization results for multilinear formulas from \autoref{sec:multilinearization:mult-form}. We first define the Raz-Tzameret~\cite{RazTzameret08a,RazTzameret08b}  system (which they called \emph{fMC}).

\begin{definition}\label{defn:pc:multlin}
	Let $f_1(\vx),\ldots,f_m(\vx)\in\F[x_1,\ldots,x_n]$ be a system of polynomials. A \demph{multilinear-formula-PC$\,^\neg$ refutation} for showing that the system $\vf,\vx^2-\vx$ is unsatisfiable is a multilinear-formula-PC refutation of $\vf(\vx),\vx^2-\vx,\vnx^2-\vnx,\vx\circ\vnx$ in the ring $\F[x_1,\ldots,x_n,\nx_1,\ldots,\nx_n]$, where `$\circ$' denotes the entry-wise product so that $\vx\circ\vnx=(x_1\nx_1,\ldots,x_n\nx_n)$.
\end{definition}

That is, a multilinear-formula-PC$^\neg$ refutation of $\vf,\vx^2-\vx$ is a multilinear-formula-PC refutation with the additional variables $\vnx\eqdef(\nx_1,\ldots,\nx_n)$ which are constrained so that $\nx_i=1-x_i$ (so that `$\neg$' here is simply a modifier of the symbol `$x$' as opposed to being imbued with mathematical meaning).  These additional variables are important, as without them the system is not complete. For example, in attempting to refute the subset-sum axiom $\sum_i x_i+1,\vx^2-\vx$ in multilinear-formula-PC alone, one can never multiply the axiom $\sum_i x_i+1$ by another (non-constant) polynomial as it would ruin multilinearity.  However, in multilinear-formula-PC$^\neg$ one can instead multiply by polynomials in $\vnx$ and appropriately simplify.  We now formalize this by showing that tree-like multilinear-formula-PC$^\neg$ can simulate multilinear-formula-\lbIPS (which is complete (\autoref{res:multilin:simulate-sparse})).  

We begin by proving a lemma on how the $\vnx$ variables can help multilinearize products.  In particular, if we have a monomial $(\vno-\vnx)^\va$ (which is meant to be equal to $\vx^\va$) and multiply by $\vx^\vno$ we should be able to prove that this product equals $\vx^\vno$ modulo the axioms.

\begin{lemma}\label{res:mult-form-PC:mon-mon}
	Working in the ring $\F[x_1,\ldots,x_d,\nx_1,\ldots,\nx_d]$, and for $\vnz\le\va\le \vno$,
	\[
		(\vno-\vnx)^\va\vx^\vno-\vx^\vno
		=C(\vx,\vx\circ\vnx)
		\;,
	\]
	for $C(\vx,\vz)\in\F[\vx,\vz]$ where $C(\vx,\vx\circ\vnx)$ can be $\poly(2^d)$-explicitly derived from the axioms $\vx\circ\vnx$ in $\poly(2^d)$ steps using tree-like multilinear-formula-PC.
\end{lemma}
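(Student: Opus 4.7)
\begin{proof-sketch}
The plan is to expand $(\vno-\vnx)^\va\vx^\vno-\vx^\vno$ algebraically into a sum of at most $2^d$ monomials each of which is visibly a multiple of some axiom $x_i\nx_i$, and then to derive each such monomial by a single product application to the corresponding axiom leaf, finally combining them with one addition node.

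Let $S\eqdef\{i : a_i = 1\}$, so that $(\vno-\vnx)^\va = \prod_{i\in S}(1-\nx_i)$. The key identity is $(1-\nx_i)x_i = x_i - x_i\nx_i$, which lets us rewrite
\[
(\vno-\vnx)^\va\vx^\vno
= \Big(\prod_{i\in S}(x_i - x_i\nx_i)\Big) \cdot \prod_{j\notin S} x_j
= \sum_{T\subseteq S} (-1)^{|T|}\prod_{i\in T}(x_i\nx_i)\prod_{j\in[d]\setminus T}x_j.
\]
The $T=\emptyset$ summand is exactly $\vx^\vno$. Hence $(\vno-\vnx)^\va\vx^\vno - \vx^\vno = C(\vx,\vx\circ\vnx)$, where I would define
\[
C(\vx,\vz) \eqdef \sum_{\emptyset\ne T\subseteq S} (-1)^{|T|}\prod_{i\in T}z_i\prod_{j\in[d]\setminus T} x_j.
\]

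For each nonempty $T\subseteq S$, pick any $i_T\in T$, and let the leaf be the axiom $x_{i_T}\nx_{i_T}$. I would then apply the product rule with the (multilinear) monomial
\[
g_T \eqdef (-1)^{|T|}\prod_{i\in T\setminus\{i_T\}}(x_i\nx_i)\prod_{j\in[d]\setminus T}x_j,
\]
producing the node with value $(-1)^{|T|}\prod_{i\in T}(x_i\nx_i)\prod_{j\in[d]\setminus T}x_j$. Since $g_T$ contains neither $x_{i_T}$ nor $\nx_{i_T}$, the product with $x_{i_T}\nx_{i_T}$ remains multilinear, and both $g_T$ and the resulting polynomial are monomials with $O(d)$-size multilinear formulas. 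Finally a single addition node of fan-in $\le 2^{|S|}-1$ summing all of these derivations with coefficient $+1$ yields $C(\vx,\vx\circ\vnx)$.

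The proof is tree-like by construction (each axiom appearance is a distinct leaf, and no intermediate polynomial is reused), and the total size is at most $2^d$ chains each of constant depth and $O(d)$-size intermediate formulas, giving $\poly(2^d)$ overall. Enumerating the subsets $T\subseteq S$ takes $\poly(2^d)$ time so the proof is $\poly(2^d)$-explicit. The only subtlety to check is that multilinearity is preserved at every node, which follows directly from the fact that the axiom variables $x_{i_T},\nx_{i_T}$ are disjoint from the variables in $g_T$; there is no genuine obstacle beyond bookkeeping.
\end{proof-sketch}
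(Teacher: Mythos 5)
Your proof is correct and follows essentially the same route as the paper's: both expand $(\vno-\vnx)^\va\vx^\vno$ via the identity $(1-\nx_i)x_i = x_i - x_i\nx_i$ and the binomial theorem into $2^{|S|}$ variable-disjoint multilinear monomials, isolate the $T=\emptyset$ term as $\vx^\vno$, and derive each remaining term by a single product applied to an axiom $x_i\nx_i$ before summing. The only difference is that you make explicit the choice of axiom leaf $x_{i_T}\nx_{i_T}$ for each term, a detail the paper leaves implicit.
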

\begin{proof}
	\begin{align*}
		(\vno-\vnx)^\va\vx^\vno
		&=\vx^{\vno-\va}\cdot (\vx-\vx\circ\vnx)^\va\\
		&=\vx^{\vno-\va}\cdot \left(\sum_{\vnz\le\vb\le\va} \vx^{\va-\vb}(-\vx\circ\vnx)^{\vb}\right)\\
		&=\vx^{\vno-\va}\cdot \left(\vx^\va+\sum_{\vnz<\vb\le\va} \vx^{\va-\vb}(-\vx\circ\vnx)^{\vb}\right)\\
		&=\vx^\vno+\sum_{\vnz<\vb\le\va} \vx^{\vno-\vb}(-\vx\circ\vnx)^{\vb}\\
		&=\vx^\vno+C(\vx,\vx\circ\vnx)
		\;,
	\end{align*}
	where $C(\vx,\vz)$ is defined by
	\[
		C(\vx,\vz)\eqdef\sum_{\vnz<\vb\le\va} \vx^{\vno-\vb}(-\vz)^{\vb}
		\;.
	\]
	Now note that $C(\vx,\vx\circ\vnx)$ can be derived by tree-like multilinear-formula-PC\@.  That is, the expression $\vx^{\vno-\vb}(-\vx\circ\vnx)^{\vb}$ is multilinear (as the product is variable disjoint) and in the ideal of $\vx\circ\vnx$ as $\vb>\vnz$, and is clearly a $\poly(d)$-size explicit multilinear formula. Summing over the $2^d-1$ relevant $\vb$ gives the result.
\end{proof}

We now show how to prove the equivalence of $g(\vx)$ and $g(\vno-\vnx)$ modulo $\vx+\vnx-\vno$, if $g$ is computable by a small multilinear formula, where we proceed variable by variable.

\begin{lemma}\label{res:mult-form-PC:form-form}
	Working in the ring $\F[x_1,\ldots,x_n,\nx_1,\ldots,\nx_n]$, if $g\in\F[\vx]$ is computable by a size-$t$ multilinear formula, than
	\[
		g(\vx)-g(\vno-\vnx)=C(\vx,\vx+\vnx-\vno)
		\;,
	\]
	for $C(\vx,\vz)\in\F[\vx,\vz]$ where $C(\vx,\vx+\vnx-\vno)$ is derivable from $\vx+\vnx-\vno$ in size-$\poly(t,n)$ tree-like multilinear-formula-PC, which is $\poly(t,n)$-explicit given the formula for $g$.
\end{lemma}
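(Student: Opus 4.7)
\begin{proofof}{\autoref{res:mult-form-PC:form-form} (Proposal)}
The plan is to proceed by a telescoping argument that multilinearizes the substitution $\vx \mapsto \vno - \vnx$ one coordinate at a time. For $0 \le i \le n$ define the ``hybrid'' polynomial
\[
	g_i(\vx, \vnx) \eqdef g(1-\nx_1, \ldots, 1-\nx_i, x_{i+1}, \ldots, x_n),
\]
so that $g_0 = g(\vx)$ and $g_n = g(\vno - \vnx)$. Then
\[
	g(\vx) - g(\vno - \vnx) = \sum_{i=1}^n (g_{i-1} - g_i),
\]
so it suffices to derive each telescoping difference $g_{i-1} - g_i$ from the axiom $x_i + \nx_i - 1$ in small tree-like multilinear-formula-PC and then sum the derivations.

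For a fixed $i$, the key observation is that $g$ is multilinear, and hence the hybrid $g_{i-1}$ is a multilinear polynomial whose dependence on $x_i$ is affine. Concretely, by interpolation in the $x_i$ slot,
\[
	g_{i-1}(\vx, \vnx) = x_i \cdot A_i(\vx_{>i}, \vnx_{<i}) + B_i(\vx_{>i}, \vnx_{<i}),
\]
where $A_i$ and $B_i$ are obtained from $g$ by partial evaluations in the variables $x_1, \ldots, x_{i-1}, x_i$ (substituting in $1-\nx_j$ or a constant). Each is a multilinear polynomial in variables disjoint from $\{x_i, \nx_i\}$, and each has a multilinear formula of size $O(t)$, explicitly constructible from the given formula for $g$ in $\poly(t,n)$ time. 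Since $g_i$ is obtained from $g_{i-1}$ by the substitution $x_i \leftarrow 1-\nx_i$, a short calculation yields
\[
	g_{i-1} - g_i = (x_i + \nx_i - 1) \cdot A_i(\vx_{>i}, \vnx_{<i}),
\]
and crucially the product on the right is multilinear, because $A_i$ involves neither $x_i$ nor $\nx_i$.

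The PC derivation is then immediate: start from the axiom $x_i + \nx_i - 1$ (a leaf), apply the product rule by $A_i$ (which is supplied as a $\poly(t)$-size multilinear formula), and observe the resulting polynomial $(x_i+\nx_i-1)A_i$ is multilinear, hence a legal line. Finally take a linear combination of these $n$ derivations (tree-like sum with $n$ children, or iterated binary additions), producing the required $C(\vx, \vx+\vnx-\vno) = \sum_i (x_i+\nx_i-1)A_i$. The total size is $\poly(t,n)$, and the construction is $\poly(t,n)$-explicit given the formula for $g$.

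I expect no serious obstacle here: the only subtlety is checking that every intermediate line stays multilinear, which is exactly why we chose $A_i$ to depend on neither $x_i$ nor $\nx_i$, and why the telescoping runs in the order it does (replacing $x_i$ by $1-\nx_i$ at the moment the variable $\nx_i$ is first introduced). The potential pitfall to guard against is inadvertently producing an intermediate polynomial containing a forbidden product like $x_i \nx_i$; the variable-disjoint structure of $A_i$ rules this out.
\end{proofof}
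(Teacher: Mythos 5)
Your proposal is correct and follows essentially the same route as the paper's proof: telescope the substitution one coordinate at a time, use multilinearity to write each hybrid as affine in the variable being replaced so that the telescoping difference factors as $(x_i+\nx_i-1)$ times a polynomial (your $A_i$, the paper's difference of partial evaluations at $x_i\in\{0,1\}$) that avoids both $x_i$ and $\nx_i$, and then derive each term by one product step from the axiom and sum. The only difference is the (immaterial) direction in which the coordinates are swapped.
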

\begin{proof}
	We proceed to replace $\vno-\vnx$ with $\vx$ one variable at a time.  Using $(\vx_{\le i},(\vno-\vnx)_{>i})$ to denote $(x_1,\ldots,x_i,1-\nx_{i+1},\ldots,1-\nx_{n})$, we see that via telescoping that
	\begin{align*}
		g(\vx)-g(\vno-\vnx)
		&=g(\vx_{\le n},(\vno-\vnx)_{>n})-g(\vx_{<1},(\vno-\vnx)_{\ge 1})\\
		&=\sum_{i=1}^n \Big (g(\vx_{\le i},(\vno-\vnx)_{>i}) - g(\vx_{<i},(\vno-\vnx)_{\ge i})\Big)\\
		&=\sum_{i=1}^n \Big (g(\vx_{<i},x_i,(\vno-\vnx)_{>i}) - g(\vx_{<i},1-\nx_i,(\vno-\vnx)_{> i})\Big)
		\;.
	\end{align*}
	Now note that $g(\vx_{<i},y,(\vno-\vnx)_{>i})$ is a multilinear polynomial, which as it is linear in $y$ can be written as
	\[
		g(\vx_{<i},y,(\vno-\vnx)_{>i})=\Big(g(\vx_{<i},1,(\vno-\vnx)_{>i})-g(\vx_{<i},0,(\vno-\vnx)_{>i})\Big)y+g(\vx_{<i},0,(\vno-\vnx)_{>i})
		\;.
	\]
	Thus, plugging in $x_i$ and $1-\nx_i$,
	\begin{align*}
		g(\vx_{<i},x_i&,(\vno-\vnx)_{>i})-g(\vx_{<i},1-\nx_i,(\vno-\vnx)_{>i})\\
		&=\left(\Big(g(\vx_{<i},1,(\vno-\vnx)_{>i})-g(\vx_{<i},0,(\vno-\vnx)_{>i})\Big)x_i+g(\vx_{<i},0,(\vno-\vnx)_{>i})\right)\\
		&\hspace{.3in}-\left(\Big(g(\vx_{<i},1,(\vno-\vnx)_{>i})-g(\vx_{<i},0,(\vno-\vnx)_{>i})\Big)(1-\nx_i)+g(\vx_{<i},0,(\vno-\vnx)_{>i})\right)\\
		&=\Big(g(\vx_{<i},1,(\vno-\vnx)_{>i})-g(\vx_{<i},0,(\vno-\vnx)_{>i})\Big)(x_i+\nx_i-1)
		\;.
	\end{align*}
	Plugging this into the above telescoping equation,
	\begin{align*}
		g(\vx)-g(\vno-\vnx)
		&=\sum_{i=1}^n \Big (g(\vx_{<i},x_i,(\vno-\vnx)_{>i}) - g(\vx_{<i},1-\nx_i,(\vno-\vnx)_{> i})\Big)\\
		&=\sum_{i=1}^n \Big(g(\vx_{<i},1,(\vno-\vnx)_{>i})-g(\vx_{<i},0,(\vno-\vnx)_{>i})\Big)(x_i+\nx_i-1)\\
		&\defeq C(\vx,\vx+\vnx-\vno)
		\;.
	\end{align*}
	Clearly each $g(\vx_{<i},b,(\vno-\vnx)_{>i})$ for $b\in\bits$ has a $\poly(t)$-size multilinear algebraic formula, so the entire expression $C(\vx,\vx+\vnx-\vno)$ can be computed by tree-like multilinear-formula-PC from $\vx+\vnx-\vno$ explicitly in $\poly(t,n)$ steps.
\end{proof}

Using the above lemma, we now show how to multilinearize a multilinear-formula times a low-degree multilinear monomial.

\begin{lemma}\label{res:mult-form-PC:form-mon}
	Let $g,f\in\F[x_1,\ldots,x_n,y_1,\ldots,y_d]$, where $g$ is computable by a size-$t$ multilinear formula and $y=\prod_{i=1}^d y_i$.  Then
	\[
		g(\vno-\vnx,\vno-\vny)\vy^\vno - \ml(g(\vx,\vy)\vy^\vno)
		=C(\vx,\vy,\vx+\vnx-\vno,\vy\circ\vny)
		\;,
	\]
	where $C(\vx,\vy,\vx+\vnx-\vno,\vy\circ\vny)$ can be derived from the axioms $\vx+\vnx-\vno,\vy\circ\vny$ in $\poly(2^d,t,n)$ steps using tree-like multilinear-formula-PC.
\end{lemma}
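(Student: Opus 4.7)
The plan follows \autoref{res:multilin:mon_formula} in spirit, combined with tree-like PC derivations supplied by the two preceding lemmas. I will first expand $g\in\F[\vx][\vy]$ as $g(\vx,\vy)=\sum_{\vnz\le\va\le\vno}g_\va(\vx)\vy^\va$ with each $g_\va$ multilinear in $\vx$. Since $y_j^{a_j+1}\equiv y_j\pmod{y_j^2-y_j}$ for $a_j\in\{0,1\}$, this yields the identity $\ml(g(\vx,\vy)\vy^\vno)=g(\vx,\vno)\vy^\vno$. The target difference then decomposes via
\[
g(\vno-\vnx,\vno-\vny)\vy^\vno - g(\vx,\vno)\vy^\vno
= \sum_{\va}g_\va(\vno-\vnx)\Big((\vno-\vny)^\va\vy^\vno-\vy^\vno\Big)
+\Big(g(\vno-\vnx,\vno)-g(\vx,\vno)\Big)\vy^\vno,
\]
where the two groups will be derived independently---the first from $\vy\circ\vny$ and the second from $\vx+\vnx-\vno$---and then combined at a single addition node.

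For the first group I will expand $(\vno-\vny)^\va=\sum_{\vb\le\va}(-1)^{\ellone\vb}\vny^\vb$, multiply by $\vy^\vno$, and cancel the $\vb=\vnz$ term to obtain $\sum_{\vnz<\vb\le\va}(-1)^{\ellone\vb}\vny^\vb\vy^\vno$. Since $\vb,\va\le\vno$, each summand factors as $\pm\prod_{b_j=1}(y_j\ny_j)\cdot\prod_{b_j=0}y_j$, a variable-disjoint and hence multilinear product that is derivable in one PC product step from any axiom $y_{j_0}\ny_{j_0}$ with $b_{j_0}=1$. I will then aggregate these at most $2^d$ terms at one linear-combination node, multiply by $g_\va(\vno-\vnx)\in\F[\vnx]$ (which has a $\poly(t,2^d)$-size multilinear formula computed by interpolation $g_\va(\vx)=\sum_{\vsss\in\bits^d}\gamma_{\va,\vsss}g(\vx,\vsss)$), and sum over the at most $2^d$ relevant $\va$; disjointness of $\{\vnx\}$ from $\{\vy,\vny\}$ guarantees each multiplication preserves multilinearity. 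For the second group, I will apply \autoref{res:mult-form-PC:form-form} to $\tilde g(\vx)\eqdef g(\vx,\vno)$, which inherits a $\poly(t)$-size multilinear formula from $g$ via the substitution $\vy\leftarrow\vno$, obtaining a derivation of $g(\vno-\vnx,\vno)-g(\vx,\vno)=C'(\vx,\vx+\vnx-\vno)$ from $\vx+\vnx-\vno$ in $\poly(t,n)$ tree-like steps; multiplying by the variable-disjoint $\vy^\vno$ preserves multilinearity at $\poly(d)$ extra cost.

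The main obstacle will be the bookkeeping required to preserve multilinearity at every gate of the combined derivation. This succeeds precisely because the negation variable blocks $\{\vx,\vnx\}$ and $\{\vy,\vny\}$ are disjoint, which forces every multiplication I introduce to split across disjoint variable sets and thus remain multilinear. A secondary technical point is that the $2^d$ blowup in the complexity appears unavoidable in this generic setting: extracting the individual $\vy$-coefficients $g_\va$ from a multilinear formula (rather than from a sparse representation) requires interpolation over $\bits^d$, which then propagates into the final size bound of $\poly(2^d,t,n)$.
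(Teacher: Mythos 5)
Your proposal is correct, and it follows the same skeleton as the paper's proof (expand $g$ into its $\vy$-coefficients $g_\va$, handle the $\vy$-side via the monomial lemma \autoref{res:mult-form-PC:mon-mon} and the $\vx$-side via \autoref{res:mult-form-PC:form-form}, recover small multilinear formulas for the $g_\va$ by interpolation over $\bits^d$, and accept the $2^d$ blowup), but your bookkeeping differs in a way worth noting. The paper substitutes both corrections into every term, writing $g_\va(\vno-\vnx)(\vno-\vny)^\va\vy^\vno=(g_\va(\vx)+B_\va)(\vy^\vno+C_\va)$ and collecting the cross terms $B_\va\vy^\vno+g_\va C_\va+B_\va C_\va$; in particular it invokes \autoref{res:mult-form-PC:form-form} once per $\va$, i.e.\ $2^d$ times. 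Your decomposition instead keeps the coefficients at $\vno-\vnx$ throughout the $\vy$-correction group and routes the entire $\vnx\!\to\!\vx$ correction through the single polynomial $g(\vx,\vno)$ (using the clean identity $\ml(g(\vx,\vy)\vy^\vno)=g(\vx,\vno)\vy^\vno$), so \autoref{res:mult-form-PC:form-form} is applied only once and no cross terms arise; the telescoping check $\sum_\va g_\va(\vno-\vnx)\big((\vno-\vny)^\va\vy^\vno-\vy^\vno\big)+\big(g(\vno-\vnx,\vno)-g(\vx,\vno)\big)\vy^\vno$ indeed reproduces the left-hand side. All the PC steps you describe are legal: each introduced product splits across the disjoint blocks $\{\vnx\}$, $\{\vx\}$, $\{\vy,\vny\}$, so every node's value is multilinear with a $\poly(2^d,t,n)$-size formula, and the whole derivation is tree-like. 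The only loose ends are cosmetic: \autoref{res:mult-form-PC:form-form} derives $\tilde g(\vx)-\tilde g(\vno-\vnx)$ rather than its negative (absorb the sign into a linear-combination node), and the substitution $\vy\leftarrow\vno$, $\vx\leftarrow\vno-\vnx$ in the formula for $g$ should be observed to preserve multilinearity and size up to $\poly(t)$, which it does. Both routes give the same $\poly(2^d,t,n)$ bound; yours is slightly more economical in its use of \autoref{res:mult-form-PC:form-form}.
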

\begin{proof}
	Express $g$ as $g(\vx,\vy)=\sum_{\vnz\le\va\le \vno}g_\va(\vx)\vy^\va$ in the ring $\F[\vx][\vy]$, so that each $g_\va$ is multilinear. Then,
	\begin{align*}
		g(\vno-\vnx,\vno-\vny)\cdot \vy^\vno
		&=\sum_{\vnz\le\va\le\vno} g_\va(\vno-\vnx)(\vno-\vny)^\va \cdot \vy^\vno\\
		\intertext{appealing to \autoref{res:mult-form-PC:mon-mon} to obtain $(\vno-\vny)^\va \vy^\vno=\vy^\vno+C_\va(\vy,\vy\circ\vny)$ for $C_\va(\vy,\vy\circ\vny)$ derivable in tree-like multilinear-formula-PC from $\vy\circ\vny$ in $\poly(2^d)$ steps,}
		&=\sum_{\va} g_\va(\vno-\vnx)\left(\vy^\vno+C_\va(\vy,\vy\circ\vny)\right)
		\intertext{appealing to \autoref{res:mult-form-PC:form-form} to obtain $g_\va(\vno-\vnx)=g_\va(\vx)+B_\va(\vx,\vx+\vnx-\vno)$ for $B_\va(\vx,\vx+\vnx-\vno)$ derivable in tree-like multilinear-formula-PC from $\vx+\vnx-\vno$ in $\poly(t,n)$ steps,}
		&=\sum_{\va} \left(g_\va(\vx)+B_\va(\vx,\vx+\vnx-\vno)\right)\cdot \left(\vy^\vno+C_\va(\vy,\vy\circ\vny)\right)\\
		&=\sum_{\va} g_\va(\vx)\vy^\vno+\sum_\va \Big( B_\va(\vx,\vx+\vnx-\vno)\vy^\vno + g_\va(\vx)C_\va(\vy,\vy\circ\vny)\\
		&\hspace{1in}+B_\va(\vx,\vx+\vnx-\vno)C_\va(\vy,\vy\circ\vny)\Big)\\
		&=\ml(g(\vx,\vy)\vy^\vno)+C(\vx,\vy,\vx+\vnx-\vno,\vy\circ\vny)
		\;,
	\end{align*}
	by defining $C$ appropriately, and as
	\begin{align*}
		\ml(g(\vx,\vy)\vy^\vno)
		&=\textstyle\ml\left(\sum_{\vnz\le\va\le \vno}g_\va(\vx)\vy^\va\cdot \vy^\vno\right)\\
		&=\textstyle\ml\left(\sum_{\vnz\le\va\le \vno}g_\va(\vx)\vy^{\va+\vno}\right)\\
		&=\textstyle\sum_{\vnz\le\va\le \vno}g_\va(\vx)\vy^{\vno}
		\;.
	\end{align*}
	By interpolation, it follows that for each exponent $\va$ there are constants $\vaa_{\va,\vbb}$ such that $g_\va(\vx)=\sum_{\vbb\in\bits^d} \alpha_{\va,\vbb}g(\vx,\vbb)$.  From this it follows that $g_\va$ is computable by a multilinear formula of size-$\poly(t,2^d)$.  It thus follows that $C(\vx,\vy,\vx+\vnx-\vno,\vy\circ\vny)$ is a sum of $2^d$ terms, each of which is explicitly derivable in $\poly(2^d,t,n)$ steps in tree-like multilinear-formula-PC from $\vx+\vnx-\vno,\vy\circ\vny$ (as the multiplications are variable-disjoint), and thus $C(\vx,\vy,\vx+\vnx-\vno,\vy\circ\vny)$ is similar derived by tree-like multilinear-formula-PC\@.
\end{proof}

By linearity we can extend the above to multilinearization of a multilinear-formula times a sparse low-degree multilinear polynomial.

\begin{corollarywp}\label{res:mult-form-PC:form-sparse}
	Let $g,f\in\F[x_1,\ldots,x_n]$ be multilinear, where $g$ is computable by a size-$t$ multilinear formula and $f$ is $s$-sparse and $\deg f\le d$.  Then
	\[
		g(\vno-\vnx)\cdot f(\vx) - \ml(g(\vx)\cdot f(\vx))
		=C(\vx,\vx+\vnx-\vno,\vx\circ\vnx)
		\;,
	\]
	where $C(\vx,\vx+\vnx-\vno,\vx\circ\vnx)$ can be derived from the axioms $\vx+\vnx-\vno,\vx\circ\vnx$ in $\poly(2^d,s,t,n)$ steps using tree-like multilinear-formula-PC\@.
\end{corollarywp}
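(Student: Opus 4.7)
The plan is to reduce this to the monomial-times-formula case already handled by \autoref{res:mult-form-PC:form-mon} via linearity, running one application of that lemma per monomial of $f$ and then summing.

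First, I would write $f$ in its sparse representation $f(\vx)=\sum_{i=1}^s \alpha_i\vx^{\va_i}$ with $\va_i\in\bits^n$ (using multilinearity of $f$) and $\ellzero{\va_i}\le \ellone{\va_i}\le d$. For each $i$, let $S_i=\supp(\va_i)\subseteq[n]$, so $|S_i|\le d$ and $\vx^{\va_i}=\prod_{j\in S_i}x_j$. Now I relabel variables: identify the ``$\vy$'' of \autoref{res:mult-form-PC:form-mon} with the block $\vx|_{S_i}$ (and correspondingly $\vny$ with $\vnx|_{S_i}$), and identify the ``$\vx$'' of that lemma with the remaining block $\vx|_{[n]\setminus S_i}$ (with $\vnx|_{[n]\setminus S_i}$). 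Under this relabeling the lemma's hypothesis is satisfied (the formula $g$, viewed in the two blocks, is still a size-$t$ multilinear formula), and its conclusion becomes
\[
    g(\vno-\vnx)\cdot\vx^{\va_i}-\ml\!\big(g(\vx)\cdot\vx^{\va_i}\big)
    \;=\;C_i(\vx,\vx+\vnx-\vno,\vx\circ\vnx),
\]
where $C_i$ only uses the sub-axioms $(\vx+\vnx-\vno)|_{[n]\setminus S_i}$ and $(\vx\circ\vnx)|_{S_i}$, and is derivable in tree-like multilinear-formula-PC in $\poly(2^d,t,n)$ steps (noting $|S_i|\le d$).

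Next, I multiply the $i$-th equation by the scalar $\alpha_i$ and sum over $i=1,\ldots,s$. Using linearity of multilinearization (\autoref{fact:multilinearization}), $\sum_i\alpha_i\ml(g\cdot\vx^{\va_i})=\ml(g\cdot f)$, and clearly $\sum_i\alpha_i g(\vno-\vnx)\vx^{\va_i}=g(\vno-\vnx)\cdot f(\vx)$, so setting $C\eqdef\sum_i\alpha_i C_i$ gives the desired identity. On the PC side, each $C_i$ is already produced by a tree-like derivation from the axioms (the unused sub-axioms are simply never referenced, which is harmless), and we combine the $s$ trees with a single addition node scaled by the $\alpha_i$'s. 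The total size is $s\cdot\poly(2^d,t,n)=\poly(2^d,s,t,n)$, and tree-likeness is preserved since we only join the $s$ subtrees once at the top.

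The main things to get right, rather than any real obstacle, are the bookkeeping of the relabeling (each application of \autoref{res:mult-form-PC:form-mon} uses a different split $(\vx|_{[n]\setminus S_i},\vx|_{S_i})$, and one must check that the ``mixed'' axioms $\vx+\vnx-\vno$ and $\vx\circ\vnx$ restrict to the sub-axioms the lemma actually needs) and explicitness (the formula used for $g$ in each application is literally the same size-$t$ formula, only viewed through a different partition of leaves, so the resulting proof is $\poly(2^d,s,t,n)$-explicit given the formulas for $f$ and $g$). No new algebraic identity is required beyond what \autoref{res:mult-form-PC:form-mon} already supplies.
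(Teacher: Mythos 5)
Your proposal is correct and matches the paper's intended argument: the corollary is stated without proof, preceded only by the remark that it follows ``by linearity'' from \autoref{res:mult-form-PC:form-mon}, and your monomial-by-monomial application of that lemma (with the relabeled variable split per support set, scalar weights $\alpha_i$, linearity of $\ml$, and a single top linear-combination node) is precisely that argument carried out in detail. The bookkeeping points you flag (sub-axioms restricting correctly, size $s\cdot\poly(2^d,t,n)$, tree-likeness) all check out against the definitions in the paper.
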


We now conclude by showing that tree-like multilinear-formula-PC$^\neg$ can efficiently simulate multilinear-formula-\lbIPS.  Recall that this proof system simply requires an IPS refutation that is linear in the non-boolean axioms, so that in particular $\sum_j g_jf_j\equiv 1\mod \vx^2-\vx$ for efficiently computable $g_j$.

\begin{corollary}\label{res:mult-form-PC:mult-form-lbIPS}
	Let $f_1,\ldots,f_m\in\F[x_1,\ldots,x_n]$ be degree at most $d$ multilinear $s$-sparse polynomials which are unsatisfiable over $\vx\in\bits^n$.  Suppose that there are multilinear $g_j\in\F[\vx]$ computable by size-$t$ multilinear formula such that
	\[
		\sum_{i=1}^m g_j(\vx)f_j(\vx)\equiv 1\mod\vx^2-\vx
		\;.
	\]
	Then there is a tree-like multilinear-formula-PC$\,^\neg$ refutation of $\vf,\vx^2-\vx$ of size $\poly(2^d,s,t,n,m)$, which is $\poly(2^d,s,t,n,m)$-explicit given the formulas for the $f_j,g_j$.

	In particular, if there is a size-$t$ multilinear-formula-\lbIPS refutation of $\vf,\vx^2-\vx$, then there is a tree-like multilinear-formula-PC$\,^\neg$ refutation of $\vf,\vx^2-\vx$ of size $\poly(2^d,s,t,n,m)$ which is $\poly(2^d,s,t,n,m)$-explicit given the refutation of $\vf,\vx^2-\vx$ and formulas for the $f_j$.
\end{corollary}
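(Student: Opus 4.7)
The plan is to leverage the multilinearization lemma \autoref{res:mult-form-PC:form-sparse} together with the key observation that if we multiply $f_j(\vx)$ by $g_j(\vno-\vnx)$ (rather than $g_j(\vx)$), the resulting product is automatically multilinear because the two factors are variable-disjoint. This sidesteps the obstacle illustrated in \autoref{ex:multi-form:incomplete} that prevented direct multiplication within the multilinear formula class.

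Concretely, I would proceed as follows. For each $j$, start from the leaf axiom $f_j(\vx)$ and apply the product rule once to multiply by $g_j(\vno-\vnx)$, obtaining $g_j(\vno-\vnx)\cdot f_j(\vx)$, which is a multilinear polynomial computable by a multilinear formula of size $\poly(t,n)$. Next, apply \autoref{res:mult-form-PC:form-sparse} to each such pair to obtain a tree-like derivation (from the axioms $\vx+\vnx-\vno$ and $\vx\circ\vnx$) of a polynomial $C_j(\vx,\vx+\vnx-\vno,\vx\circ\vnx)$ of size $\poly(2^d,s,t,n)$ satisfying
\[
g_j(\vno-\vnx)\cdot f_j(\vx) - \ml\!\left(g_j(\vx)\cdot f_j(\vx)\right) = C_j(\vx,\vx+\vnx-\vno,\vx\circ\vnx).
\]
Subtracting $C_j$ from $g_j(\vno-\vnx)f_j(\vx)$ yields $\ml(g_j f_j)$, again by a tree-like derivation.

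Finally, take the linear combination (sum) of these $m$ derivations. Using linearity of multilinearization (\autoref{fact:multilinearization}) together with the assumption $\sum_j g_j f_j \equiv 1 \mod \vx^2-\vx$, we have
\[
\sum_{j=1}^m \ml(g_j f_j) \;=\; \ml\!\left(\sum_{j=1}^m g_j f_j\right) \;=\; \ml(1) \;=\; 1,
\]
producing the desired refutation. The total size is $\poly(2^d,s,t,n,m)$, and the construction is explicit given the input formulas. The second sentence of the corollary follows immediately, since any size-$t$ multilinear-formula-\lbIPS refutation yields multilinear formulas $g_j$ of size $\poly(t)$ with $\sum_j g_j f_j \equiv 1 \mod \vx^2-\vx$.

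The main subtlety --- and what I expect to be the chief point to verify carefully --- is that the whole construction really is \emph{tree-like}, not merely dag-like. Each $C_j$-derivation is tree-like by \autoref{res:mult-form-PC:form-sparse}; each multiplication $f_j \mapsto g_j(\vno-\vnx)f_j$ uses a single application of the product rule on a fresh copy of the axiom $f_j$; and the final combination is one addition node gathering these $m$ branches. Since tree-like PC permits duplicating axiom leaves across independent branches, no sharing is required, so the proof graph remains a tree. Verifying that each intermediate polynomial on the main spine is indeed multilinear (so that it is a legal node in multilinear-formula-PC$^{\neg}$) is the other point requiring attention, but this follows from the variable-disjointness of $g_j(\vno-\vnx)$ and $f_j(\vx)$ together with the multilinearity guarantees built into \autoref{res:mult-form-PC:form-sparse}.
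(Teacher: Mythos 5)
Your proposal is correct and follows essentially the same route as the paper's proof: multiply each axiom $f_j$ by $g_j(\vno-\vnx)$ (legal since the product is variable-disjoint and hence multilinear), invoke \autoref{res:mult-form-PC:form-sparse} to derive the correction terms $C_j$ from $\vx+\vnx-\vno,\vx\circ\vnx$, and take the linear combination $\sum_j\bigl(g_j(\vno-\vnx)f_j - C_j\bigr)=\sum_j\ml(g_jf_j)=1$ using linearity of multilinearization. Your added attention to tree-likeness and to the multilinearity of the intermediate nodes matches the intended argument, and the extraction of the $g_j$ from a \lbIPS refutation in the second part is handled just as in the paper.
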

\begin{proof}
	By the above multilinearization (\autoref{res:mult-form-PC:form-sparse}), there are $C_j\in\F[\vx,\vz,\vw]$ such that
	\[
		g_j(\vno-\vnx)f_j(\vx)=\ml(g_j(\vx)f_j(\vx))+C_j(\vx,\vx+\vnx-\vno,\vx\circ\vnx)
		\;.
	\]
	where $C_j(\vx,\vx+\vnx-\vno,\vx\circ\vnx)$ is derivable from $\vx+\vnx-\vno,\vx\circ\vnx$ in $\poly(2^d,s,t,n)$ steps of tree-like multilinear-formula-PC\@. Thus, as $g_j(\vno-\vnx)$ has a $\poly(t)$-size multilinear formula, in $\poly(2^d,s,t,n,m)$ steps we can derive from $\vf(\vx),\vx+\vnx-\vno,\vx\circ\vnx$,
	\begin{align*}
		\sum_{j=1}^m \left (g_j(\vno-\vnx)f_j(\vx) - C_j(\vx,\vx+\vnx-\vno,\vx\circ\vnx)\right)
		&=\sum_{j=1}^m \ml(g_j(\vx)f_j(\vx))\\
		\intertext{as $\sum_{i=1}^m g_j(\vx)f_j(\vx)\equiv 1\mod \vx^2-\vx$ we have that 
			\[
				\sum_{j=1}^m \ml(g_j(\vx)f_j(\vx))
				=
				\ml\left(\sum_{i=1}^m g_j(\vx)f_j(\vx)\right)=1
				\;,
			\]
			where we appealed to linearity of multilinearization (\autoref{fact:multilinearization}), so that
		}
		\sum_{j=1}^m \left (g_j(\vno-\vnx)f_j(\vx) - C_j(\vx,\vx+\vnx-\vno,\vx\circ\vnx)\right)
		&=1
		\;,
	\end{align*}
	yielding the desired refutation, where the explicitness is clear.

	The claim about multilinear-formula-\lbIPS follows, as such a refutation induces the equation $\sum_{i=1}^m g_j(\vx)f_j(\vx)\equiv 1\mod\vx^2-\vx$ with the appropriate size bounds.
\end{proof}

Given this efficient simulation of multilinear-formula-\lbIPS by tree-like multilinear-formula-PC$^\neg$ (\autoref{res:mult-form-PC:mult-form-lbIPS}), our multilinear-formula-\lIPS refutation of the subset-sum axiom (\autoref{res:ips-ubs:subset:mult-form}), and the lower bound for sparse-PC of the subset-sum axiom (\autoref{thm:IPS99}), we obtain the following separation result.

\begin{corollarywp}
	Let $\F$ be a field of characteristic zero. Then $x_1+\cdots+x_n+1,\vx^2-\vx$ is unsatisfiable, requires sparse-PC refutations of size-$\exp(\Omega(n))$, but has $\poly(n)$-explicit $\poly(n)$-size multilinear-formula-\lIPS and tree-like multilinear-formula-PC$\,^\neg$ refutations.
\end{corollarywp}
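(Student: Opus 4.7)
\begin{proofof}{the Corollary}
The plan is to assemble the claim from four already-established ingredients, so the proof will be essentially a checklist rather than a new argument. The main (minor) subtlety will be checking that the parameters in \autoref{res:ips-ubs:subset:mult-form} and \autoref{res:mult-form-PC:mult-form-lbIPS} are applicable, and in particular that the characteristic-zero hypothesis supplies all the arithmetic and field-size requirements needed downstream.

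First, I would verify \emph{unsatisfiability}: for any $\vx \in \bits^n$ we have $\sum_i x_i \in \{0,1,\ldots,n\}$, so $\sum_i x_i + 1 \in \{1,\ldots,n+1\}$, which in characteristic zero is a set of nonzero field elements. Hence the system has no common root. Next, for the \emph{sparse-PC lower bound}, I would invoke \autoref{thm:IPS99} with $\vaa = \vno$ and $\beta = -1$, so that $A = \{0,\ldots,n\}$ and $\beta \notin A$; the theorem directly gives an $\exp(\Omega(n))$ lower bound on the monomial count of any PC refutation, and in particular of any sparse-PC refutation.

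For the \emph{multilinear-formula-\lIPS upper bound}, I would apply \autoref{res:ips-ubs:subset:mult-form} with the same choice $\vaa = \vno$, $\beta = -1$, so that $|A| = n+1$. Since $\F$ has characteristic zero (hence is infinite), the hypothesis $|\F| \ge \poly(|A|, n)$ is automatic, and the result produces a $\poly(n)$-explicit $\poly(n)$-size depth-3 multilinear-formula-\lIPS refutation. Finally, for the \emph{tree-like multilinear-formula-PC$^\neg$ upper bound}, I would feed this \lIPS refutation, which is a fortiori a multilinear-formula-\lbIPS refutation, into \autoref{res:mult-form-PC:mult-form-lbIPS}. The axiom $\sum_i x_i + 1$ is $s = (n{+}1)$-sparse and has degree $d = 1$, so the blowup factor $\poly(2^d, s, t, n, m) = \poly(n)$, yielding a tree-like multilinear-formula-PC$^\neg$ refutation of the desired polynomial size and explicitness.

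The only place any thought is required is the degree bookkeeping in the last step: the simulation of \autoref{res:mult-form-PC:mult-form-lbIPS} pays an exponential factor $2^d$ in the degree of the non-boolean axioms, which would be fatal in general but is harmless here because the subset-sum axiom is linear. Beyond checking this, the proof is simply a matter of citing the four results and ensuring the hypotheses line up, which the characteristic-zero assumption guarantees.
\end{proofof}
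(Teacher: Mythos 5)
Your proposal is correct and follows exactly the route the paper intends: the corollary is stated without an explicit proof, but the sentence preceding it cites precisely the three ingredients you use (\autoref{thm:IPS99} for the sparse-PC lower bound, \autoref{res:ips-ubs:subset:mult-form} for the multilinear-formula-\lIPS upper bound, and \autoref{res:mult-form-PC:mult-form-lbIPS} for the simulation by tree-like multilinear-formula-PC$^\neg$). Your parameter checks (taking $\beta=-1$ so $|A|=n+1$, noting the axiom is linear so the $2^d$ factor is harmless, and observing that a \lIPS refutation is a fortiori a \lbIPS refutation) are exactly the bookkeeping the paper leaves implicit.
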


This strengthens a results of Raz and Tzameret~\cite{RazTzameret08a,RazTzameret08b}, who separated \emph{dag}-like multilinear-formula-PC$^\neg$ from sparse-PC\@. However, we note that it is not clear whether sparse-PC can be efficiently simulated by multilinear-formula-\lbIPS.

\section{Explicit Multilinear Polynomial Satisfying a Functional Equation}\label{sec:appendix}

In \autoref{sec:lbs-fn:deg} we showed that any polynomial that agrees with the function $\vx\mapsto \nicefrac{1}{\left(\sum_i x_i-\beta\right)}$ on the boolean cube must have degree $\ge n$.  However, as there is a unique multilinear polynomial obeying this functional equation it is natural to ask for an explicit description of this polynomial, which we now give.

\begin{proposition}\label{res:subsetsum:multlin}
	Let $n\ge 1$ and $\F$ be a field with $\chara(\F)>n$.  Suppose that $\beta\in \F\setminus\{0,\ldots,n\}$. Let $f\in\F[x_1,\ldots,x_n]$ be the unique multilinear polynomial such that
        \[
                f(\vx)=\frac{1}{\sum_i x_i-\beta}
                \;,
        \]
        for $\vx\in\bits^n$.  Then
        \[
		f(\vx)=-\sum_{k=0}^n \frac{k!}{\prod_{j=0}^{k}(\beta-j)} S_{n,k}
                \;,
        \]
	where $S_{n,k}\eqdef \sum_{S\subseteq\binom{[n]}{k}}\prod_{i\in S} x_i$ is the $k$-th elementary symmetric polynomial.
\end{proposition}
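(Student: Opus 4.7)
\begin{proofof}{\autoref{res:subsetsum:multlin}}
The plan is to first reduce to a recurrence on the coefficients of $f$ in the basis of elementary symmetric polynomials, and then solve the recurrence in closed form. As in the proof of \autoref{res:subsetsum:deg}, uniqueness of the multilinear representative modulo $\vx^2-\vx$ together with the invariance of the functional equation under permuting coordinates forces $f$ to be symmetric, so we may write
\[
f(\vx) = \sum_{k=0}^{n} \gamma_k S_{n,k}(\vx)
\]
for unique scalars $\gamma_k \in \F$, and it suffices to determine the $\gamma_k$.

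The next step is to use the identity
\[
\left(\tsum_i x_i - \beta\right) S_{n,k} \equiv (k+1) S_{n,k+1} + (k - \beta) S_{n,k} \mod \vx^2-\vx,
\]
which was established in the proof of \autoref{res:subsetsum:deg} (valid for $0 \le k < n$, with the convention $S_{n,n+1} = 0$). Multiplying the expansion of $f$ by $\sum_i x_i - \beta$, reducing modulo the boolean axioms, reindexing, and using that $1 = f(\vx)(\sum_i x_i - \beta) \mod \vx^2-\vx$ must equal a constant polynomial, I would equate coefficients in the linearly independent family $\{S_{n,k}\}_{k=0}^n$ to obtain
\[
-\beta \gamma_0 = 1, \qquad k \gamma_{k-1} + (k - \beta) \gamma_k = 0 \text{ for } 1 \le k \le n.
\]
Here the hypothesis $\chara(\F) > n$ and $\beta \notin \{0,\ldots,n\}$ ensures every divisor appearing is nonzero.

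Finally, the recurrence $\gamma_k = \tfrac{k}{\beta - k} \gamma_{k-1}$ with base case $\gamma_0 = -1/\beta$ telescopes to
\[
\gamma_k = -\frac{k!}{\prod_{j=0}^{k}(\beta - j)},
\]
which is exactly the claimed expression. There is no real obstacle: the symmetrization argument is already in place from \autoref{res:subsetsum:deg}, the key multiplication identity for $S_{n,k}$ is already derived there, and the remaining work is the linear algebra of reading off the recurrence and telescoping it. The only point to watch is the boundary term $k = n$: since $S_{n,n+1} = 0$, the equation at $k = n$ reads $n \gamma_{n-1} + (n - \beta)\gamma_n = 0$, which is automatically consistent with the telescoped formula, so no overdetermination arises.
\end{proofof}
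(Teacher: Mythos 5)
Your proof is correct, but it takes a genuinely different route from the paper. The paper interpolates directly over the boolean cube, writing $f(\vx)=\sum_{T\subseteq[n]} f(\ind{T})\prod_{i\in T}x_i\prod_{i\notin T}(1-x_i)$ and extracting the coefficient of $\prod_{i\in S}x_i$, which reduces the computation to the alternating binomial identity $\sum_{j=0}^k\binom{k}{j}\frac{(-1)^{k-j}}{j-\beta}=-\frac{k!}{\prod_{j=0}^k(\beta-j)}$, proved in a separate subclaim by clearing denominators and interpolating in $\beta$. You instead recycle the machinery already set up in the proof of \autoref{res:subsetsum:deg} (symmetrization of $f$, which needs $\chara(\F)>n$, and the multiplication rule for $S_{n,k}$ modulo $\vx^2-\vx$), turn the functional equation into a triangular linear system $-\beta\gamma_0=1$, $k\gamma_{k-1}+(k-\beta)\gamma_k=0$, and telescope; the hypotheses guarantee each $\beta-k\neq 0$, and the telescoped value indeed matches the paper's coefficient. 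Your approach buys economy (no new combinatorial identity, full reuse of the earlier lemma and a manifestly unique solution since the system is triangular with $n+1$ equations in $n+1$ unknowns, so your worry about overdetermination at $k=n$ never actually arises); the paper's approach buys independence from the symmetry argument and yields each coefficient directly. One small point to make explicit: the paper proves the identity $(\tsum_i x_i-\beta)S_{n,k}\equiv (k+1)S_{n,k+1}+(k-\beta)S_{n,k}$ only for $k<n$, and your argument also uses it at $k=n$ (with $S_{n,n+1}\eqdef 0$); this boundary case needs the one-line observation that $(\tsum_i x_i)\cdot x_1\cdots x_n\equiv n\,x_1\cdots x_n \mod \vx^2-\vx$, which is immediate but should be stated rather than attributed to \autoref{res:subsetsum:deg}.
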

\begin{proof}
	It follows from the uniqueness of the evaluations of multilinear polynomials over the boolean cube that
	\begin{align*}
		f(\vx)
		&=\sum_{T\subseteq[n]} f(\ind{T})\prod_{i\in T} x_i \prod_{i\notin T} (1-x_i)
		\intertext{where $\ind{T}\in\bits^n$ is the indicator vector of the set $T$, so that}
		&=\sum_{T\subseteq[n]} \frac{1}{|T|-\beta}\prod_{i\in T} x_i \prod_{i\notin T} (1-x_i)
		\;.
	\end{align*}
	Using this, let us determine the coefficient of $\prod_{i\in S} x_i$ in $f(\vx)$, for $S\subseteq[n]$ with $|S|=k$. First observe that setting $x_i=0$ for $i\notin S$ preserves this coefficient, so that
	\begin{align*}
		\coeff{\prod_{i\in S} x_i}\Big(f(\vx)\Big)
		&=\coeff{\prod_{i\in S} x_i}\left.\left(\sum_{T\subseteq[n]} \frac{1}{|T|-\beta}\prod_{i\in T} x_i \prod_{i\notin T} (1-x_i)\right)\right|_{x_i\leftarrow 0, i\in S}\\
		\intertext{and thus those sets $T$ with $T\not\subseteq S$ are zeroed out,}
		&=\coeff{\prod_{i\in S} x_i}\left(\sum_{T\subseteq S} \frac{1}{|T|-\beta}\prod_{i\in T} x_i \prod_{i\in S\setminus T} (1-x_i)\right)\\
		&=\sum_{T\subseteq S} \frac{1}{|T|-\beta}\coeff{\prod_{i\in S} x_i}\left(\prod_{i\in T} x_i \prod_{i\in S\setminus T} (1-x_i)\right)\\
		&=\sum_{T\subseteq S} \frac{1}{|T|-\beta} (-1)^{k-|T|}\\
		&=\sum_{j=0}^k \binom{k}{j} \frac{1}{j-\beta} (-1)^{k-j}\\
		&=-\frac{k!}{\prod_{j=0}^k (\beta-j)}
		\;,
	\end{align*}
	where the last step uses the below subclaim.
	\begin{subclaim}
		\[
			\sum_{j=0}^k \binom{k}{j} \frac{1}{j-\beta} (-1)^{k-j}
			=-\frac{k!}{\prod_{j=0}^k (\beta-j)}
			\;.
		\]
	\end{subclaim}
	\begin{subproof}
		Clearing denominators,
		\begin{align*}
			\prod_{i=0}^k (i-\beta)\cdot \sum_{j=0}^k \binom{k}{j} \frac{1}{j-\beta} (-1)^{k-j}
			&=\sum_{j=0}^k \binom{k}{j}(-1)^{k-j} \prod_{i\ne j} (i-\beta)
			\;.
		\end{align*}
		Note that the right hand side is a univariate degree $\le k$ polynomial in $\beta$, so it is determined by its value on $\ell\in\{0,\ldots,k\}$ (that $\F$ has large characteristic implies that these values are distinct). Note that on these values,
		\begin{align*}
			\sum_{j=0}^k \binom{k}{j}(-1)^{k-j} \prod_{i\ne j} (i-\ell)
			&=\binom{k}{\ell}(-1)^{k-\ell} \prod_{0\le i<\ell} (i-\ell) \cdot  \prod_{\ell<i\le k} (i-\ell)\\
			&=\binom{k}{\ell}(-1)^{k-\ell} \cdot (-1)^{\ell} \ell! \cdot  (k-\ell)!\\
			&=(-1)^k k!
			\;.
		\end{align*}
		Thus by interpolation $\sum_{j=0}^k \binom{k}{j}(-1)^{k-j} \prod_{i\ne j} (i-\beta)=(-1)^k k!$ for all $\beta$, and thus dividing by $\prod_{i=0}^k (i-\beta)$ and clearing $-1$'s yields the claim.
	\end{subproof}
	This then gives the claim as the coefficient of $\prod_{i\in S}x_i$ only depends on $|S|=k$.
\end{proof}

Noting that the above coefficients are all nonzero because $\chara(\F)>n$.  Thus, we obtain the following corollary by observing that degree and sparsity are non-increasing under multilinearization (\autoref{fact:multilinearization}).

\begin{corollarywp}\label{res:subsetsum:multlin:deg-sparse}
	Let $n\ge 1$ and $\F$ be a field with $\chara(\F)>n$.  Suppose that $\beta\in \F\setminus\{0,\ldots,n\}$. Let $f\in\F[x_1,\ldots,x_n]$ be a polynomial such that
	\[
		f(\vx)\left(\sum_i x_i-\beta\right)=1 \mod \vx^2-\vx
		\;.
	\]
	Then $\deg f\ge n$, and $f$ requires $\ge 2^n$ monomials.
\end{corollarywp}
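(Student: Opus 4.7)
The plan is to deduce this corollary directly from the explicit formula established in \autoref{res:subsetsum:multlin}, using the fact that multilinearization can only decrease both degree and sparsity.

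First, I would apply the multilinearization operator $\ml$ to the functional equation $f(\vx)\left(\sum_i x_i - \beta\right) \equiv 1 \bmod \vx^2-\vx$. Using the identity $\ml(gh) = \ml(\ml(g)\ml(h))$ from \autoref{fact:multilinearization}, and noting that $\sum_i x_i - \beta$ is already multilinear, this yields $\ml(f)\cdot\left(\sum_i x_i - \beta\right) \equiv 1 \bmod \vx^2-\vx$. Thus $\ml(f)$ is a multilinear polynomial satisfying the same functional equation on the boolean cube, so by the uniqueness of multilinear representation (\autoref{fact:multilinearization}), $\ml(f)$ must equal the explicit polynomial computed in \autoref{res:subsetsum:multlin}, namely
\[
    \ml(f) = -\sum_{k=0}^n \frac{k!}{\prod_{j=0}^{k}(\beta-j)}\, S_{n,k}\;.
\]

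Next I would observe that under the hypothesis $\chara(\F) > n$ and $\beta \in \F \setminus \{0,\ldots,n\}$, every coefficient $-\frac{k!}{\prod_{j=0}^{k}(\beta-j)}$ is nonzero: the numerator $k!$ is nonzero since $k \le n < \chara(\F)$, and each factor $\beta - j$ in the denominator is nonzero since $\beta \notin \{0,\ldots,n\}$. Consequently, the top-degree term $-\frac{n!}{\prod_{j=0}^n(\beta-j)} S_{n,n} = -\frac{n!}{\prod_{j=0}^n(\beta-j)} x_1\cdots x_n$ appears with nonzero coefficient, so $\deg \ml(f) = n$. Moreover, since the elementary symmetric polynomials $\{S_{n,k}\}_{k=0}^n$ involve pairwise disjoint monomial supports, the sparsity of $\ml(f)$ is exactly $\sum_{k=0}^n \binom{n}{k} = 2^n$.

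Finally, to pass from $\ml(f)$ back to $f$, I would invoke the relevant parts of \autoref{fact:multilinearization}: multilinearization does not increase degree ($\deg \ml(f) \le \deg f$) and does not increase sparsity (if $f$ is $s$-sparse then so is $\ml(f)$). Combining these with the bounds on $\ml(f)$ above gives $\deg f \ge n$ and sparsity at least $2^n$, completing the proof. There is no real obstacle here since the heavy lifting — computing the coefficients of $\ml(f)$ via inclusion–exclusion on the hypercube and evaluating a binomial sum — has already been done in \autoref{res:subsetsum:multlin}; the corollary is essentially a bookkeeping exercise that reads off degree and sparsity from the explicit expansion.
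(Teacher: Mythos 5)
Your proposal is correct and follows exactly the paper's intended argument: the paper likewise derives the corollary by noting that all coefficients $-\frac{k!}{\prod_{j=0}^{k}(\beta-j)}$ in \autoref{res:subsetsum:multlin} are nonzero (since $\chara(\F)>n$ and $\beta\notin\{0,\ldots,n\}$), so the unique multilinear solution has degree $n$ and sparsity $2^n$, and then invokes the fact that degree and sparsity are non-increasing under multilinearization (\autoref{fact:multilinearization}). Your additional bookkeeping identifying $\ml(f)$ with that explicit polynomial via $\ml(gh)=\ml(\ml(g)\ml(h))$ is exactly the step the paper carries out in \autoref{res:subsetsum:deg:ge}.
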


\end{document}